\newtheorem{definition}{Definition}
\newtheorem{theorem}{Theorem}
\newtheorem{lemma}{Lemma}
\newtheorem{corollary}{Corollary}
\newtheorem{remark}{Remark}
\newtheorem{proposition}{Proposition}
\begin{document}
\title{Does $\ell_p$-minimization outperform $\ell_1$-minimization?}
\author{Le Zheng$^1$, Arian Maleki$^2$, Haolei Weng$^2$, Xiaodong Wang$^3$, Teng Long$^1$
\thanks{1. School of Information and Electronics, Beijing Institute of Technology, Beijing, China. 2. Department of Statistics, Columbia University, NY, USA. 3. Department of Electrical Engineering, Columbia University, NY, USA. This paper is presented in part at Signal Processing with Adaptive Sparse Structured Representations workshop, and will be presented at IEEE International Symposium on Information Theory.}}
\maketitle
\begin{abstract}

In many application areas ranging from bioinformatics to imaging, we are faced with the following question: can we recover a sparse vector $x_o \in \mathbb{R}^N$ from its undersampled set of noisy observations $y \in \mathbb{R}^n$, $y= A x_o+ w$. The last decade has witnessed a surge of algorithms and theoretical results to address this question. One of the most popular algorithms is the $\ell_p$-regularized least squares given by the following formulation:
\begin{equation*}
\hat x(\gamma,p ) \in \mathop {\arg \min }\limits_x \frac{1}{2}\left\| {y - Ax} \right\|_2^2 + \gamma {\| x \|_p^p},
\end{equation*}
where $p \in [0,1]$. Among these optimization problems, the case $p=1$, also known as LASSO, is the best accepted in practice, for the following two reasons: (i) thanks to the extensive studies performed in the fields of high-dimensional statistics and compressed sensing, we have a clear picture of LASSO's performance. (ii) it is convex and efficient algorithms exist for finding its global minima.  

Unfortunately, neither of the above two properties hold for $0  \leq p<1$. However, they are still appealing because of the following folklores in the high-dimensional statistics: 
(i) $\hat x(\gamma,p )$ is closer to $x_o$ than $\hat{x}(\gamma,1)$. (ii) If we employ iterative methods that aim to converge to a local minima of $ {\arg \min }_x \frac{1}{2}\left\| {y - Ax} \right\|_2^2 + \gamma {\| x \|_p^p}$, then under good initialization, these algorithms converge to a solution that is still closer to $x_o$ than $\hat{x}(\gamma,1)$. In spite of the existence of plenty of empirical results that support these folklore theorems, the theoretical progress to establish them has been very limited.

This paper aims to study the above folklore theorems and establish their scope of validity. Starting with approximate message passing (AMP) algorithm as a heuristic method for solving $\ell_p$-regularized least squares, we study the following questions: (i) what is the impact of initialization on the performance of the algorithm? (ii) when does the algorithm recover the sparse signal $x_o$ under a ``good'' initialization? (iii) when does the algorithm converge to the sparse signal regardless of the initialization? Studying these questions will not only shed light on the second folklore theorem, but also lead us to the answer of the first one, i.e., the performance of the global optima $\hat x(\gamma,p )$. For that purpose, we employ the  replica analysis\footnote{Replica method is a widely accepted heuristic method in statistical physics for analyzing large disordered systems.} to show the connection between the solution of AMP and $\hat{x}(\gamma, p)$ in the asymptotic settings. This enables us to compare the accuracy of $\hat x(\gamma,p )$ and $\hat x(\gamma,1 )$. In particular, we will present an accurate characterization of the phase transition and noise sensitivity of $\ell_p$-regularized least squares for every $0 \leq p \leq 1$. Our results in the noiseless setting confirm that $\ell_p$-regularized least squares (if $\gamma$ is tuned optimally) exhibits the same phase transition for every $0 \leq p< 1$ and this phase transition is much better than that of LASSO. Furthermore, we show that in the noisy setting, there is a major difference between the performance of $\ell_p$-regularized least squares with different values of $p$. For instance, we will show that for very small and very large measurement noises, $p=0$ and $p=1$ outperform the other values of $p$, respectively.

\end{abstract}
\begin{IEEEkeywords}
Compressed sensing, $\ell_p$-regularized least squares, LASSO, non-convex penalties, approximate message passing, state evolution, replica analysis. 
\end{IEEEkeywords}
\IEEEpeerreviewmaketitle

\section{Introduction}
\label{sec:intro}
\subsection{Problem statement}
Recovering a sparse signal ${x_o} \in {\mathbb{R}^N}$ from an undersampled set of random linear measurements $y = A{x_o} + w$ is the main problem of interest in compressed sensing (CS) \cite{maleki2013asymptotic,baraniuk2007compressive}.  Among various schemes proposed for estimating $x_o$, $\ell_p$-regularized least squares (LPLS) has received attention for its proximity to the ``intuitively optimal'' $\ell_0$-minimization. LPLS estimates $x_o$ by solving 
\begin{equation}\label{eq:ell0minimization}
\hat x(\gamma,p ) \in \mathop {\arg \min }\limits_x \frac{1}{2}\left\| {y - Ax} \right\|_2^2 + \gamma {\| x \|_p^p},
\end{equation}
where ${\left\|  \cdot  \right\|_p}$ ($0 \leq p \leq 1$) denotes the $\ell_p$-norm \footnote{The $\ell_p$-norm of a vector $x= [x_1, x_2, \ldots, x_N]^T$ is defined as $\| x\|_p^p \triangleq \sum_{i=1}^N |x_i|^p$.}, $\gamma \in (0, \infty)$ is a fixed number, and  $\gamma {\left\| x \right\|_p^p}$ is a regularizer that promotes sparsity. The convexity of this optimization problem for $p=1$ has made it the most accepted and the best studied scheme among all LPLSs. However, it has always been in the folklore of compressed sensing community that solving \eqref{eq:ell0minimization} for $p<1$ leads to more accurate solutions than the $\ell_1$-regularized least squares, also known as LASSO, since ${\| x \|_p^p}$ models the sparsity better \cite{chartrand2007exact, stojnic2013lifting, chartrand2008restricted, saab2008stable, foucart2009sparsest, saab2010sparse, ge2011note, kabashima2009typical, lai2011unconstrained, sun2012recovery, trzasko2007sparse, aldroubi2011stability, rangan2009asymptotic, mazumder2011sparsenet}. Inspired by this folklore theorem, many researchers have proposed iterative algorithms to obtain a local minima of the non-convex optimization problem \eqref{eq:ell0minimization} with $p \in [0, 1)$\cite{chartrand2007exact, candes2008enhancing}. 

The performance of such schemes is highly affected by their initialization; better initialization increases the chance of converging to the global optima. One popular choice of initialization is the solution of LASSO \cite{candes2008enhancing}. This initialization has been motivated by the following heuristic: The solution of LASSO is closer to the global minima of \eqref{eq:ell0minimization} than a random initialization. Hence it helps the iterative schemes to avoid stationary points that are not the global minima of LPLS. Ignoring the computational issues, one can extend this approach to the following  initialization scheme: Suppose that our goal is to solve \eqref{eq:ell0minimization} for $p=p_0$. Define an increasing sequence of numbers $p_0<p_1< \ldots < p_q=1$ for some $q$. Start with solving LASSO and then use its solution as an initialization for the iterative algorithm that attempts to solves \eqref{eq:ell0minimization} with $p_{q-1}$. Once the algorithm converges, its estimate is employed as an initialization for $p_{q-2}$. The process continues until the algorithm reaches $p_0$.  We call this  approach {\em $p$-continuation}.

Here is a heuristic motivation of the $p$-continuation. Let $\hat{x}(\gamma, p_i)$ denote the global minimizer of  $\frac{1}{2}\left\| {y - Ax} \right\|_2^2 + \gamma {\| x \|_{p_i}^{p_i}}$. Since $p_i$ and $p_{i+1}$ are close, we expect $\hat x(\gamma,p_i )$ and $\hat x(\gamma,p_{i+1})$ to be ``close'' as well. Hence if the algorithm that is solving for $\hat{x}(\gamma, p_i)$ is initialized with $\hat{x}(\gamma, p_{i+1})$, then it may avoid all the local minima and converge to $\hat x(\gamma,p_i )$. Simulation results presented elsewhere confirm the efficiency of such initialization algorithms \cite{pant2014new, mazumder2011sparsenet}.  \\

We can summarize our discussions in the following three {\em folklore theorems of compressed sensing}:
\begin{itemize}
\item[(i)] The global minima of \eqref{eq:ell0minimization} for $p<1$ outperforms the solution of LASSO. Furthermore, smaller values of $p$ lead to more accurate estimates. 
\item[(ii)] There exist iterative algorithms (ITLP) capable of converging to the global minima of \eqref{eq:ell0minimization} under ``good'' initialization.
\item[(iii)] $p$-continuation provides a ``good'' initialization for ITLP. 
\end{itemize}

Our paper aims to evaluate the scope of validity of the above folklore beliefs in the asymptotic settings.\footnote{Parts of our results that are presented in Section \ref{sec:discussion} are based on the replica analysis \cite{rangan2009asymptotic}. Replica method is a non-rigorous but widely accepted technique from statistical physics for studying large disordered systems. Hence the results we will present in Section \ref{sec:discussion} are not fully rigorous. } Toward this goal, we first study a family of message passing algorithms that aim to solve \eqref{eq:ell0minimization}; we characterize the accuracy of the estimates generated from the message passing algorithm under various initializations, including the best initialization obtained by $p$-continuation. We finally connect our results for the message passing algorithm estimates to the analysis of global minima $\hat{x}(\gamma, p)$ of \eqref{eq:ell0minimization} by Replica method. Here is a summary of our results explained informally:

   \begin{itemize}
\item[(i)] If the measurement noise $w$ is zero or small, then the global minima of \eqref{eq:ell0minimization} for $p<1$ (when $\gamma$ is optimally picked) outperforms the solution of LASSO with optimal $\gamma$. Furthermore, all values of $p<1$ have the same performance when $w=0$. When $w$ is small, LPLS with the value of $p$ closer to $0$ has a better performance. However, as the variance of the measurement noise increases beyond a certain level, this folklore theorem is not correct any more. In other words, for large measurement noise, the solution of LASSO outperforms the solution of LPLS for every $0 \leq p<1$. 

\item[(ii)] We introduce approximate message passing algorithms that are capable of converging to the global minima of \eqref{eq:ell0minimization} under ``good'' initialization (in the asymptotic settings). We call these algorithms $\ell_p$-AMP. 
\item[(iii)] The ``performance'' of the message passing algorithm under $p$-continuation is equivalent to the ``performance'' of message passing algorithm for solving \eqref{eq:ell0minimization} with the best value of $p$. As a particular conclusion of this result, we note that $p$-continuation can only slightly improve the phase transition of $\ell_1$-AMP. $p$-continuation is mainly useful when the noise is low and $x_o$ has very few non-zero coefficients.  
\end{itemize}


\vspace{.2cm}

 There has been recent efforts to formally prove some of the above folklore theorems. We briefly review some of these studies and their similarities and differences with our work below. Among the three folklore results we have discussed so far, the first one is the best studied. In particular, many researchers have tried to confirm that at least in the noiseless settings ($w=0$), the global minima of \eqref{eq:ell0minimization} for $p<1$ outperforms the solution of LASSO. Toward this goal, \cite{saab2010sparse, saab2008stable, chartrand2008restricted, foucart2009sparsest, shen2012restricted, davies2009restricted} have employed some popular analysis tools such as the well-known restricted isometry property and derived the conditions under which \eqref{eq:ell0minimization} recovers $x_o$ accurately. We briefly mention the results of \cite{chartrand2008restricted} to emphasize on the strengths and weaknesses of this approach. Let the elements of $A$ be iid $N(0,1)$ and $y=Ax_o$, where $x_o$ is $k$-sparse, meaning it has only $k$ nonzero elements. If $n > C_1(p)k+pC_2(p) k \log \frac{N}{k}$, then the optimization problem 
 \[
 \min_x \|x\|_p \ \ \ {\rm subject \ to } \ \ y=Ax
 \]
 recovers $x_o$ with high probability. Furthermore, $C_1(p)$ and $pC_2(p)$ are increasing functions of $p$. The lower bound derived for the required number of measurements decreases as $p$ decreases. This may be an indication of the fact that smaller values of $p$ lead to better recovery algorithms. However, note that this result only offers a sufficient condition for recovery and hence any conclusion drawn from such results on the strengths of these algorithms may be misleading.\footnote{ Note that even though we have mentioned the results for iid Gaussian matrices, it can be easily extended to many other measurement matrix ensembles.}
 
 To provide more reliable comparison among different algorithms, many researchers have analyzed these algorithms in the asymptotic setting $N \rightarrow \infty$ (while $\epsilon \triangleq k/N$ and $\delta \triangleq n/N$ are fixed) \cite{stojnic2013lifting, rangan2009asymptotic, kabashima2009typical, wang2011performance}. This is the framework that we adopt in our analysis too.  We review these four papers in more details and compare them with our work. Stojnic and Wang et al. \cite{stojnic2013lifting, wang2011performance} consider the noiseless setting and try to characterize the boundary between the success region (in which \eqref{eq:ell0minimization} recovers $x_o$ exactly with probability one) and the failure region. This boundary is known as the phase transition curve (PTC).\footnote{There are some subtle discrepancies between our definition of the phase transition curve and these two papers'. However, our results are more aligned and comparable with the results of \cite{stojnic2013lifting}. } The characterization of PTC in \cite{stojnic2013lifting} is only accurate for the case $p=0$. Also, the analysis of \cite{wang2011performance} is sharp only for $\delta \rightarrow 1$. Our paper derives the exact value of PTC for any value of $0 \leq p <1$ and any value of $\delta$. Furthermore, we present accurate calculation of the risk of $\hat{x}(\gamma, p)$ in the presence of noise and compare the accuracy of $\hat{x} (\gamma,p)$ for different values of $p$.  However, unlike \cite{stojnic2013lifting} and \cite{wang2011performance}, part of our analysis, presented in Section \ref{sec:discussion}, is based on the Replica method and they are not fully rigorous yet. Note that all the results we present for approximate message passing are rigorous and we only employ Replica method to show the connection between the solution of AMP and $\hat{x} (\gamma, p)$. 
 
 Replica method has been employed for studying \eqref{eq:ell0minimization} in \cite{rangan2009asymptotic, kabashima2009typical} to derive the fixed point equations that describe the performance of $\hat{x} (\gamma, p)$ (under the asymptotic settings). These equations are discussed in Section \ref{sec:discussion}. To provide fair comparison of the performance of $\hat{x} (\gamma,p)$ among different $p$, one should analyze the fixed points of these equations under the optimal tuning of the parameter $\gamma$.  Such analysis is missing in both papers. In this paper, by employing the minimax framework, we are able to analyze the fixed points and provide sharp characterization of the phase transition of \eqref{eq:ell0minimization} and its noise sensitivity for the first time. In addition, we present algorithms whose asymptotic behavior can be characterized by the same fixed point equations as the ones derived from Replica method. The minimax framework enables us to analyze the stationary points at which the algorithm may be trapped and derive conditions under which the algorithm can converge to the global minimizer of \eqref{eq:ell0minimization}. 
 
As a final remark, we should emphasize that, to the best of our knowledge, the second and third folklore results have never been studied before and our results may be considered as the first contribution in this direction.

\subsection{Message passing and approximate message passing}\label{sec:firstgaussianity}

One of the main building blocks of our analysis is the approximate message passing (AMP) algorithm.
AMP  is a fast iterative algorithm proposed originally for solving LASSO \cite{donoho2009message}. Starting from $z^0= y$ and $x^0= 0$, the algorithm employs the following iteration:
\begin{eqnarray}
\label{eq:2-4}
{ x^t} &=& \eta_1 ({A^T}{z^{t - 1}} + {x^{t - 1}};{\lambda _t}), \nonumber \\
{z^t} &=& y - A{x^t} + {z^{t - 1}}\frac{1}{\delta }\left\langle {\eta_1' }({A^T}{z^{t - 1}} + {{x}^{t - 1}};{\lambda _t}) \right\rangle,
\end{eqnarray}
where $x^t$ is the estimation of $x_o$ at iteration $t$ and  $\delta  = \frac{n}{N}$. Furthermore, for a vector $u = {\left[ {u_1,...,u_N} \right]^T}$, $\left\langle u \right\rangle  = {{\sum\limits_{i = 1}^N {{u_i}} }}/{N}$. $\eta_1(u; \lambda)$ is the soft thresholding function defined as ${\eta _1}(u;\lambda ) = \left( {\left| u \right| - \lambda } \right) {\rm sign}(u) {\mathbb{I}}\left( {\left| u \right| > \lambda } \right)$ with $\mathbb{I}\left(  \cdot  \right)$ denoting the indicator function. $\lambda$ is called the threshold parameter. ${\eta '_1}$ denotes the derivative of $\eta _1$, i.e., ${\eta '_1}(u;{\lambda}) = \frac{{\partial {\eta _1}(u;{\lambda})}}{{\partial u}}$. When $u$ is a vector, $\eta_1(u;\lambda)$ and $\eta'_1(u;\lambda)$ operate component-wise. In the rest of the paper, we call this algorithm $\ell_1$-AMP. It has been proved that if the entries of $A$ are iid Gaussian, then in the asymptotic settings, the limit of $x^t$ corresponds to the solution of LASSO for a certain value of $\lambda$ \cite{bayati2012lasso, donoho2011noise}. 

First, we extend $\ell_1$-AMP to solve LPLS defined in \eqref{eq:ell0minimization}. Iteration of $\ell_1$-AMP have been derived  from the first order approximation of the $\ell_1$-message passing algorithm \cite{maleki2010approximate} given by
\begin{eqnarray}\label{eq:ell_1messagepassingcomp}
x_{i \rightarrow a}^t &=& \eta_1 \Big(\sum_{b \neq a} A_{bi} z^t_{b \rightarrow i} ; \lambda_t \Big), \nonumber \\
z_{a \rightarrow i}^t &=& y_a - \sum_{j\neq i} A_{aj} x_{j \rightarrow a}^{t-1},
 \end{eqnarray}
 where $x_{i \rightarrow a}^t$ and $z_{a \rightarrow i}^t$ ($i \in \{1,2 \ldots, N\}$ and $a \in \{1,2,\ldots, n\}$) are $2nN$ variables that must be updated at every iteration of the message passing algorithm. Compared to this (full) message passing, AMP is computationally less demanding since it only has to update $n+N$ variables  at each iteration. It is straightforward to replicate the calculations of \cite{maleki2010approximate} for a generic version of LPLS to obtain the following message passing algorithm:
\begin{eqnarray}\label{eq:mpell_p}
x_{i \rightarrow a}^t &=& \eta_p \Big(\sum_{b \neq a} A_{bi} z^t_{b \rightarrow i} ; \lambda_t \Big), \nonumber \\
z_{a \rightarrow i}^t &=& y_a - \sum_{j\neq i} A_{aj} x_{j \rightarrow a}^{t-1}.
 \end{eqnarray}
Here $\eta_p(u; \lambda) \triangleq \mathop {\arg \min }_x \frac{1}{2} \| {u- x} \|_2^2 + \lambda \| x \|_p^p$ is known as the proximal function for $\lambda \|x\|_p^p$. It is worth noting that for $p=1$, $\eta_1(u; \lambda)$ is the soft thresholding function introduced in $\ell_1$-AMP, and for $p=0$, ${\eta _0}(u;\lambda) = u \cdot \mathbb{I}( {\left| u \right| > \sqrt {2\lambda}  } )$ is known as the {\em hard thresholding} function. For the other values of $p \in (0,1)$, $\eta_p(u;\lambda)$ does not have a simple explicit form, but it can be calculated numerically. Figure \ref{fig:ell_pprox} exhibits $\eta_p$ for different values of $p$. Note that all these proximal functions map small values of $u$ to zero and hence promote sparsity. Because of the specific shape of these functions, we may interchangeably call them threshold functions.

Note that iterations of \eqref{eq:mpell_p} are computationally demanding since they update $2nN$ messages at every iteration. Therefore, simplification of this algorithm is vital for practical purposes. One simplification that is proposed in \cite{maleki2010approximate} (and has led to AMP) argues that $z_{b \rightarrow i}^t = z_b^t + \zeta_{b \rightarrow i}^t + O(1/N)$ and $x_{i \rightarrow b}^t = x_i^t + \chi_{i \rightarrow b}^t + O(1/N)$, where $\zeta_{b \rightarrow i}^t, \chi_{i \rightarrow b}^t = O(1/\sqrt{n})$. Under this assumption, one may use a Taylor expansion of $\eta_1$ in \eqref{eq:ell_1messagepassingcomp} and obtain \eqref{eq:2-4}. 

If $\eta_p(\cdot)$ were weakly differentiable, the same simplification could be applied to \eqref{eq:mpell_p}. However, according to Figure \ref{fig:ell_pprox}, $\eta_p(\cdot)$ is discontinuous for $p<1$.  This problem can be resolved by one more approximation of the message passing algorithm. In this process, we not only approximate $x_{i \rightarrow a}^t$ and $z_{a \rightarrow i}^t$, but also approximate $\eta_p(\cdot)$ by a smooth function $\tilde{\eta}_{p,h}$ constructed in the following way. We first decompose $\eta_p(u;\lambda)$ to
\begin{equation}
{\eta _p}(u;\lambda ) = S_p(u;\lambda ) + D_p(u;\lambda ),
\end{equation}
where
\begin{equation}\label{eq:defS_p}
S_p{(u;\lambda )} = \left\{ \begin{gathered}
  {\eta _p}({u};\lambda ) - \eta _p^-(-\tilde{\lambda};\lambda ),{\text{ if }}{u} <  - \tilde \lambda,  \hfill \\
  0,{\text{ if }} - \tilde \lambda  \leqslant {u} \leqslant   \tilde \lambda,  \hfill \\
  {\eta _p}({u};\lambda ) - \eta _p^+(\tilde{\lambda};\lambda ),{\text{ if }}{u} >   \tilde \lambda,  \hfill \\
\end{gathered}  \right.
\hspace{1cm}
{D_p}{(u;\lambda )} = \left\{ \begin{gathered}
    {\eta _p^-}(- \tilde \lambda ;\lambda ),{\text{ if }}{u} <  - \tilde \lambda,  \hfill \\
  0,{\text{ if }} - \tilde \lambda  \leqslant {u} \leqslant   \tilde \lambda,  \hfill \\
  {\eta _p^+}(\tilde \lambda ;\lambda ),{\text{ if }}{u} >  \tilde \lambda.  \hfill \\
\end{gathered}  \right.
\end{equation}
Here $\tilde \lambda$ represents the threshold below which ${\eta _p}(u,\lambda )= 0$. The exact form of $\tilde{\lambda}$ will be derived in Lemma \ref{lem:threshform}.  Furthermore,
\begin{eqnarray}
\eta _p^-(-\tilde{\lambda};\lambda ) &\triangleq& \lim_{u \nearrow -\tilde{\lambda}} \eta _p(u;\lambda ), \nonumber \\
\eta _p^+(\tilde{\lambda};\lambda ) &\triangleq& \lim_{u \searrow \tilde{\lambda}} \eta _p(u;\lambda ), \nonumber 
\end{eqnarray}
where $\nearrow$ and $\searrow$ denote convergence from left and right respectively.  $S_p{(u;\lambda )}$ is a weakly differentiable function, while ${D_p}{(u;\lambda )} $ is not continuous. Let $G_h$ denote the Gaussian kernel with variance $h^2>0$. We construct the following smoothed version of $\eta_p$: 
\begin{equation}\label{eq:smoothetap}
\tilde{\eta}_{p,h} (u;\lambda ) = S_p{(u;\lambda )} +{\tilde{D}_{p,h}}{(u;\lambda )} , 
\end{equation}
 where $\tilde{D}_{p,h}{(u;\lambda )} \triangleq {{D}_p}{(u;\lambda )} * G_h(u)$.\footnote{This smoothing is also proposed for the hard thresholding function in \cite{deledalle2013stein}  for a different purpose. } Here $*$ denotes the convolution operator. If we replace $\eta_p(\cdot)$ with $\tilde{\eta}_{p,h}(\cdot)$ in \eqref{eq:mpell_p}, we obtain a new message passing algorithm:
  \begin{eqnarray}\label{eq:mpell_psmooth}
x_{i \rightarrow a}^t &=& \tilde{\eta}_{p,h} \Big(\sum_{b \neq a} A_{bi} z^t_{b \rightarrow i} ; \lambda_t \Big), \nonumber \\
z_{a \rightarrow i}^t &=& y_a - \sum_{j\neq i} A_{aj} x_{j \rightarrow a}^{t-1},
 \end{eqnarray}
 where $h$ is assumed to be ``small'' to ensure that replacing $\eta_p$ with $\tilde{\eta}_{p,h}$  does not incur major loss to the performance of the message passing algorithm. We discuss practical methods for setting $h$ in the simulation section. Since $\tilde{\eta}_{p,h}(\cdot)$ is smooth, we may apply the approximation technique proposed in \cite{maleki2010approximate} to obtain the following approximate message passing algorithm:
\begin{eqnarray}
\label{eq:ellpamp}
{ x^t} &=& \tilde{\eta}_{p,h} ({A^T}{z^{t - 1}} + {x^{t - 1}};{\lambda _t}), \nonumber \\
{z^t} &=& y - A{x^t} + {z^{t - 1}}\frac{1}{\delta }\left\langle {\tilde{\eta}_{p,h}' }({A^T}{z^{t - 1}} + {{x}^{t - 1}};{\lambda _t}) \right\rangle.
\end{eqnarray}

\begin{figure}[htbp]
  \centering
  \includegraphics[width=3.2in]{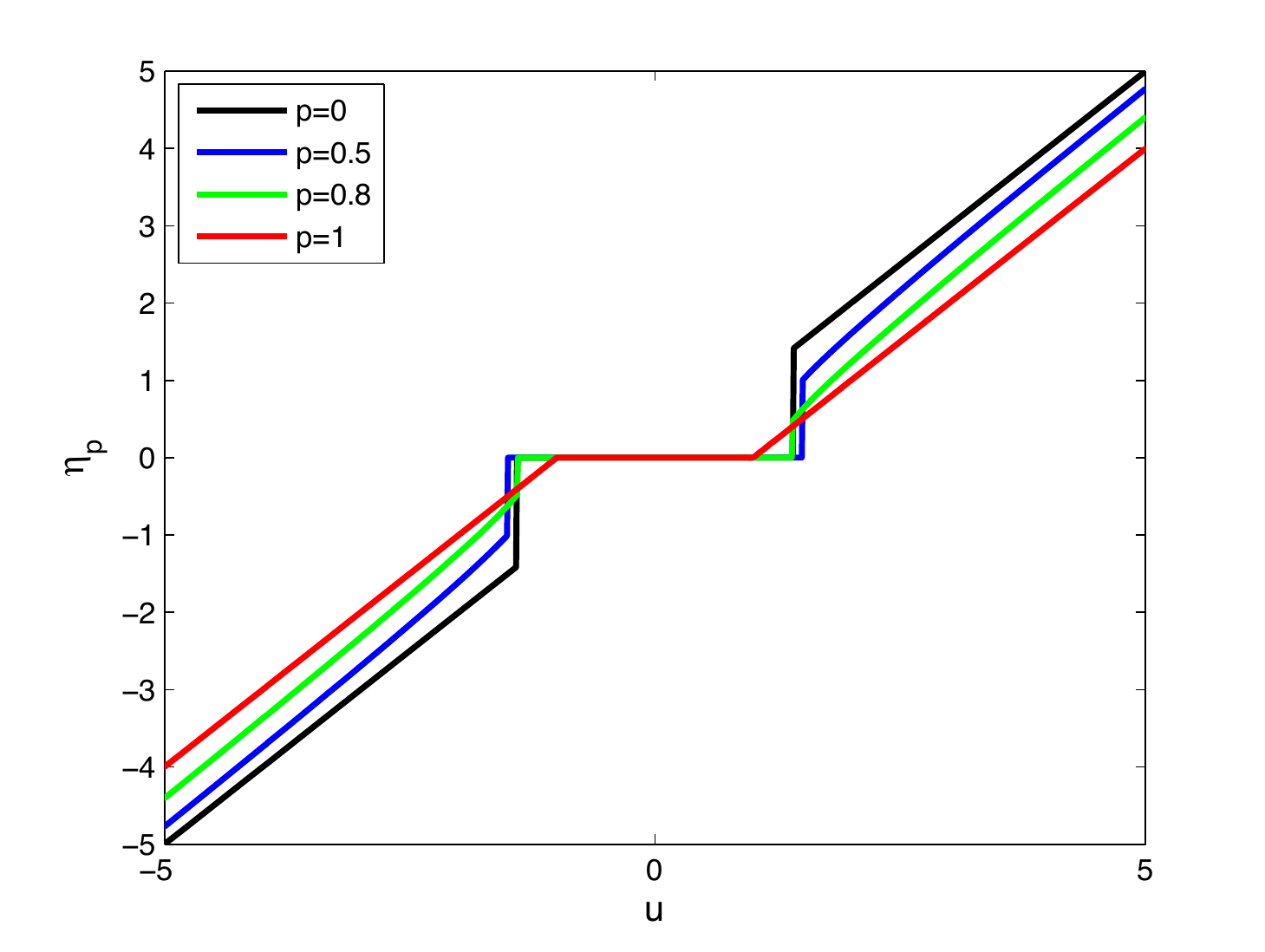}
  \caption{${\eta _p}\left( {u;\lambda } \right)$ for 4 different values of $p$. $\lambda$ is set to $1$.}
  \label{fig:ell_pprox}
\end{figure}
%
We call this algorithm $\ell_p$-AMP. If we define $v^t \triangleq A^T z^t+x^t-x_o$, then we can write $x^{t+1} = \tilde{\eta}_{p,h} (x_o+ v^t; \lambda)$. One of the main features of AMP that has led to its popularity is that for large values of $n$ and $N$, $v^t$ looks like a zero mean iid Gaussian noise. This property has been observed and characterized for different denoisers in \cite{donoho2009message, donoho2011noise, maleki2013asymptotic, rangan2011generalized, donoho2011accurate, schniter2010turbo, metzler2014denoising, maleki2010approximate} and has also been proved for some special cases in \cite{bayati2011dynamics, rangan2011generalized}. Since this key feature plays an important role in our paper, we start by formalizing this statement.

Let $n,N \rightarrow \infty$ while $\delta = \frac{n}{N}$ is fixed. In the rest of this section only, we write the vectors and matrices as $x_o(N), A(N), y(N)$, and $w(N)$ to emphasize dependence on the dimensions of $x_o$. Clearly, matrix $A$ has $\delta N$ rows, but since we assume that $\delta$ is fixed, we do not include $n$ in our notation for $A$.  The same argument is applied to $y(N)$ and $w(N)$. The following definition adopted from \cite{bayati2011dynamics} formalizes the asymptotic setting in which $\ell_p$-AMP is studied. 
\vspace{.3cm}

\begin{definition}\label{def:convseq}
A sequence of instances $\{x_o(N), A(N), w(N)\}$ is called a converging sequence if the following conditions hold:
\begin{itemize}
\item[-] The empirical distribution of $x_o(N) \in \mathbb{R}^N$ converges weakly to a probability measure $p_{X}$ with bounded second moment. Further, $\frac{1}{N} \|x_o(N)\|_2^2$ converges to the second moment of $p_{X}$.
\item[-] The empirical distribution of $w(N) \in \mathbb{R}^n$ ($n = \delta N$) converges weakly to a probability measure $N(0, \sigma_w^2)$. Furthermore, $\frac{1}{n} \|w(N)\|_2^2$ converges to $ \sigma_w^2$.
\item[-] $A_{ij} \sim N(0, 1/n)$.
\end{itemize}
\end{definition}

\vspace{.3cm}

The following theorem not only formalizes the ``Gaussianity'' of $v^t$, but also provides a simple way to characterize its variance. 

\vspace{.3cm}

\begin{theorem}\label{conj:se}
Let $\{x_o(N), A(N), w(N)\}$ denote a converging sequence of instances. Let $x^t(N,h)$ denote the estimates provided by $\ell_p$-AMP according to \eqref{eq:ellpamp}. Let $h_1, h_2, \ldots$ denote a decreasing sequence of numbers that satisfy $h_i>0$ and $h_i \rightarrow 0$ as $i \rightarrow \infty$.  Then,
\[
\lim_{i \rightarrow \infty} \lim_{N \rightarrow \infty}  \frac{\|x^{t+1}(N,h_i)-x_o(N)\|_2^2}{N} \overset{\rm a.s.}{=} \mathbb{E} \left( {{{\left| {\eta_{p} (X + {\sigma _t}Z;{\lambda _t}) - X} \right|}^2}} \right),
\]
where $\sigma_t$ satisfies the following iteration:
\begin{equation}
\label{eq:2-6}
\sigma _{t + 1}^2 = {\sigma_w^2} + \frac{1}{\delta }\mathbb{E}\left( {{{\left| {{\eta}_{p} (X + {\sigma _t}Z;{\lambda _t}) - X} \right|}^2}} \right).
\end{equation}
Here the expected value is with respect to two independent random variables $Z \sim N(0,1)$ and $X\sim p_X$.\footnote{ $\sigma_0^2$ depends on the initialization of the algorithm. If $\ell_p$-AMP is initialized at zero then $\sigma_0^2 = \frac{\mathbb{E} (X^2)}{\delta}$. }
\end{theorem}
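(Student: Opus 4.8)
The plan is to decouple the two limits in the statement and treat them separately, exploiting the fact that the order $\lim_{i\to\infty}\lim_{N\to\infty}$ lets us first send $N\to\infty$ for each \emph{fixed} smoothing level $h_i$ and only afterwards send $h_i\to0$. For fixed $h>0$ the denoiser $\tilde\eta_{p,h}(\cdot;\lambda)$ is smooth by construction, so the inner limit is governed by the rigorous state evolution of Bayati and Montanari \cite{bayati2011dynamics}, while the outer limit becomes a purely deterministic convergence question about the state-evolution recursion as $h\downarrow0$, with no interchange of limits required.

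First I would verify that for each fixed $h>0$ the map $u\mapsto\tilde\eta_{p,h}(u;\lambda)$ is Lipschitz, which is the hypothesis needed to invoke \cite{bayati2011dynamics}. Writing $\tilde\eta_{p,h}=S_p+\tilde D_{p,h}$, the component $\tilde D_{p,h}=D_p*G_h$ is the convolution of a bounded step function with a Gaussian kernel and is therefore infinitely differentiable with bounded derivative. For $S_p$ the only delicate point is the behaviour near $u=\pm\tilde\lambda$: using the stationarity characterization of $\eta_p$ (see Lemma \ref{lem:threshform}), the nonzero branch attained at the threshold is a \emph{strict} local minimizer of the proximal objective, whence the slope of $\eta_p$ stays finite (rather than blowing up) as $u\searrow\tilde\lambda$, and tends to $1$ for large $|u|$. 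Thus the discontinuity has been absorbed entirely into $D_p$, $S_p$ is Lipschitz, and so is $\tilde\eta_{p,h}$ with an $h$-dependent constant. Invoking state evolution with the degree-two pseudo-Lipschitz loss $(a,b)\mapsto|a-b|^2$ then yields, almost surely,
\[
\lim_{N\to\infty}\frac{\|x^{t+1}(N,h)-x_o(N)\|_2^2}{N}=\mathbb{E}\big(|\tilde\eta_{p,h}(X+\sigma_{t,h}Z;\lambda_t)-X|^2\big),
\]
where $\sigma_{t,h}$ obeys the recursion \eqref{eq:2-6} with $\eta_p$ replaced by $\tilde\eta_{p,h}$.

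It then remains to send $h\downarrow0$ in this deterministic expression. The key lemma I would establish is that $\tilde\eta_{p,h}(u;\lambda)\to\eta_p(u;\lambda)$ as $h\to0$ for every $u\neq\pm\tilde\lambda$, which holds because $\tilde D_{p,h}=D_p*G_h\to D_p$ at every continuity point of $D_p$ while $S_p$ is independent of $h$. Since $\sigma_t>0$ at each of the finitely many steps (it is bounded below by $\sigma_w$ and initialized at $\mathbb{E}(X^2)/\delta>0$), the variable $X+\sigma_tZ$ is absolutely continuous, so the two exceptional points $\pm\tilde\lambda$ carry no mass. Combining this pointwise-a.e. convergence with an $h$-uniform dominating function coming from the at-most-linear growth of $\eta_p$ and $\tilde\eta_{p,h}$ and the bounded-second-moment tails of $X+\sigma Z$, dominated convergence gives $\mathbb{E}|\tilde\eta_{p,h}(X+\sigma Z;\lambda)-X|^2\to\mathbb{E}|\eta_p(X+\sigma Z;\lambda)-X|^2$, locally uniformly in $\sigma$. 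Propagating this through the fixed number of recursion steps by induction yields $\sigma_{t,h}\to\sigma_t$ with $\sigma_t$ solving \eqref{eq:2-6}; substituting back into the inner-limit formula and applying the same dominated-convergence argument once more delivers the claimed right-hand side $\mathbb{E}(|\eta_p(X+\sigma_tZ;\lambda_t)-X|^2)$.

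I expect the main obstacle to be the $h\downarrow0$ analysis rather than the state-evolution step, which is essentially a citation. Concretely, the difficulties are (a) producing the $h$-uniform domination so that the discontinuity of the limiting $\eta_p$ does not obstruct dominated convergence, and (b) showing that $\sigma_{t,h}\to\sigma_t$ is stable under iteration, i.e.\ that the small denoiser perturbation does not accumulate across the $t$ steps; the latter is controlled by the local uniformity of the state-evolution map in $(\sigma,h)$, for which the absolute continuity of $X+\sigma_tZ$ (hence $\sigma_t>0$) is essential and would require separate care in the noiseless case $\sigma_w=0$, where one must argue that the iteration stays away from a degenerate exact-recovery fixed point at finite $t$.
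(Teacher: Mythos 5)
Your proposal is correct and follows essentially the same route as the paper's proof: establish Lipschitz continuity of $\tilde\eta_{p,h}$ for fixed $h$ (bounded derivative of $S_p$ from the properties of $\eta_p$, bounded derivative of $\tilde D_{p,h}=D_p*G_h$ from the Gaussian convolution), invoke the Bayati--Montanari state evolution for the inner limit, and then handle $h\downarrow 0$ via pointwise convergence $\tilde\eta_{p,h}\to\eta_p$ away from $\pm\tilde\lambda_p$, dominated convergence with an $h$-uniform linear-growth bound, and induction on $t$ to propagate $\sigma_{t,h_i}\to\sigma_t$. Your explicit remark that $X+\sigma_tZ$ is absolutely continuous (so the two discontinuity points of $\eta_p$ carry no mass) is a point the paper uses only implicitly, but it does not change the argument.
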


The proof of this statement is presented in Section \ref{proof:thm1}. Note that $\sigma_t^2$ only depends on $\sigma_{t-1}^2$ and the selected threshold value at iteration $t-1$. This important feature of AMP will be used later in our paper. $\sigma_t$ and the relation between $\sigma_t$ and $\sigma_{t-1}$ are called {\em state} of $\ell_p$-AMP and {\em state evolution}, respectively.

\subsection{Summary and organization of the paper}
In this paper, we consider $\ell_p$-AMP as a heuristic algorithm for solving $\ell_p$-minimization and analyze its performance through the state evolution. We then use Replica method to connect the $\ell_p$-AMP estimates to the solution of \eqref{eq:ell0minimization}. Our analysis examines the correctness of all folklore theorems discussed in Section \ref{sec:intro}. The remainder of this paper is organized as follows: Section \ref{sec:optAMP} introduces the optimally tuned $\ell_p$-AMP algorithm and the optimal $p$-continuation strategy. Sections \ref{sec:maincontribution} and \ref{sec:analysisnoisy} formally present our main contributions. Section \ref{sec:discussion} discusses our results and their connection with the $\ell_p$-regularized least squares problem defined  in \eqref{eq:ell0minimization}. Section \ref{sec:proof} is devoted to the proof of our main contributions. Section \ref{sec:simulation} demonstrates how we can implement the optimally tuned $\ell_p$-AMP in practice and studies some of the properties of this algorithms. Section \ref{sec:conclusion} concludes the paper.

\section{Optimal $\ell_p$-AMP}\label{sec:optAMP}
\subsection{Roadmap}
The performance of $\ell_p$-AMP depends on the choice of the threshold parameters $\lambda_t$. Any fair comparison between $\ell_p$-AMP for different values of $p$ must take this fact into account. In this section we start by explaining how we set the parameters $\lambda_t$. Then in Section \ref{sec:maincontribution} we analyze $\ell_p$-AMP.

\subsection{Fixed points of state evolution}\label{sec:fpthreshpolicy}
According to the state evolution in \eqref{eq:2-6}, the only difference among different iterations of $\ell_p$-AMP is the standard deviation $\sigma_t$. In the rest of the paper, for notational simplicity, instead of discussing a sequence of threshold values for $\ell_p$-AMP, we consider a \textit{thresholding policy}  that is defined as a function $\lambda( \sigma)$ for $\sigma \geq 0$. Given a thresholding policy, we can run $\ell_p$-AMP in the following way:
\begin{eqnarray}
\label{eq:ellpampthresholdpolicy}
{ x^t} &=& \tilde{\eta}_{p,h} ({A^T}{z^{t - 1}} + {x^{t - 1}};\lambda ({\sigma}_{t-1})), \nonumber \\
{z^t} &=& y - A{x^t} + {z^{t - 1}}\frac{1}{\delta }\left\langle {\tilde{\eta}_{p,h}' }({A^T}{z^{t - 1}} + {{x}^{t - 1}};\lambda({\sigma}_{t-1})) \right\rangle.
\end{eqnarray}
In practice, $\sigma_t^2$ is not known, but can be estimated accurately \cite{maleki2010approximate}. We will mention an estimate of $\sigma_t^2$ in the simulation section.  Note that by making this assumption, we have imposed a constraint on the threshold values. In Section \ref{ssec:discussionwhypolicy} we will show that for the purpose of this paper considering thresholding policies only, does not degrade the performance of $\ell_p$-AMP.

According to Theorem \ref{conj:se}, the performance of $\ell_p$-AMP in \eqref{eq:ellpampthresholdpolicy} (in the limit $h_i \rightarrow 0$) can be predicted by the following state evolution:
\begin{eqnarray}\label{eq:seiterthreshpolicy}
\sigma_{t+1}^2 =  {\sigma^2_w} + \frac{1}{\delta }\mathbb{E}\left( {{{\left| {\eta_p (X + {\sigma_t }Z;{\lambda(\sigma_t)}) - X} \right|}^2}} \right).
\end{eqnarray}

Inspired by the state evolution equation, we define the following function: 
\begin{eqnarray}\label{eq:defininPsi1}
\Psi_{\lambda, p} (\sigma^2) &\triangleq&  \sigma_w^2+ \frac{1}{\delta }\mathbb{E}\left( {{{\left| {\eta_p (X + {\sigma}Z;{\lambda(\sigma)}) - X} \right|}^2}} \right).
\end{eqnarray}

It is straightforward to confirm that the iterations of \eqref{eq:seiterthreshpolicy} converge to a fixed point of $\Psi_{\lambda, p}(\sigma^2)$. There are a few points that we would like to highlight about the fixed points of $\Psi_{\lambda,p}(\sigma^2)$:

\begin{itemize}

\item[(i)] $\Psi_{\lambda, p}(\sigma^2)$ usually has more than one fixed point. If so, the fixed point $\ell_p$-AMP converges to depends on the initialization of the algorithm. This is depicted in Figure \ref{fig:fixedpoints}(a). 

\item[(ii)]Lower fixed points correspond to better recoveries. To see this, consider the two fixed points $\sigma_{f_1}$ and $\sigma_{f_2}$ in Figure \ref{fig:fixedpoints}(a). Call the corresponding estimates of AMP $x^{\infty, 1}$ and $x^{\infty,2}$. According to Theorem \ref{conj:se}, the mean square errors of these two estimates (as $N \rightarrow \infty$ and $h \rightarrow 0$) converge to $\mathbb{E} \left( {{{\left| {\eta_p (X + {\sigma_{f_1}}Z;{\lambda(\sigma_{f_1})}) - X} \right|}^2}} \right) $ and $ \mathbb{E} \left( {{{\left| {\eta_p (X + {\sigma_{f_1}}Z;{\lambda(\sigma_{f_2})}) - X} \right|}^2}} \right)$, respectively. Furthermore, note that since both of them are fixed points we have
\begin{eqnarray*}
\sigma_w^2 + \frac{1}{\delta }\mathbb{E}\left( {{{\left| {\eta_p (X + {\sigma_{f_1}}Z;{\lambda(\sigma_{f_1})}) - X} \right|}^2}} \right)= \sigma_{f_1}^2 < \sigma_{f_2}^2 =  \sigma_w^2 + \frac{1}{\delta }\mathbb{E}\left( {{{\left| {\eta_p (X + {\sigma_{f_2}}Z;{\lambda(\sigma_{f_2})}) - X} \right|}^2}} \right).
\end{eqnarray*}
Hence
\[
\mathbb{E}\left( {{{\left| {\eta_p (X + {\sigma_{f_1}}Z;{\lambda(\sigma_{f_1})}) - X} \right|}^2}} \right) <\mathbb{E}\left( {{{\left| {\eta_p (X + {\sigma_{f_2}}Z;{\lambda(\sigma_{f_2})}) - X} \right|}^2}} \right).
\]
Therefore, the lower fixed points lead to smaller mean square reconstruction errors.

 \item[ (iii)] Two of the fixed points of $\Psi_{\lambda, p}(\sigma^2)$ are of particular interest in this paper: (1) The lowest fixed point: this fixed point indicates the performance one can achieve from $\ell_p$-AMP under the best initialization. As we will discuss later this fixed point is also related to the solution of LPLS. (2) The highest fixed point: the performance $\ell_p$-AMP exhibits under the worst initialization. 
\begin{figure}
\begin{center}
\includegraphics[width= 15cm]{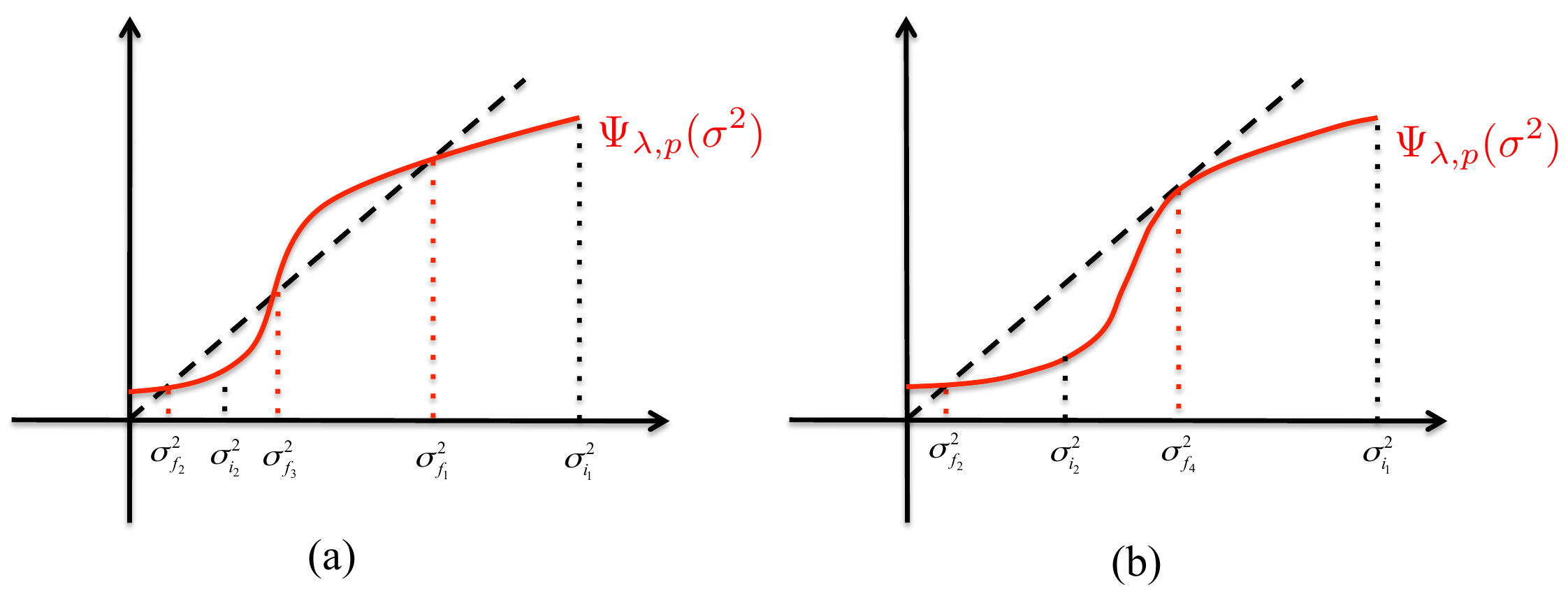}
\caption{The shapes of $\Psi_{\lambda, p} (\sigma^2)$ and its fixed points. (a) If  AMP is initialized at $ \sigma_0^2 =\sigma_{i_1}^2$, then $\lim_{t \rightarrow \infty} \sigma_t^2 = \sigma_{f_1}^2$. However, if $\sigma_0^2 = \sigma_{i_2}^2$, then $\lim_{t \rightarrow \infty} \sigma_t^2 = \sigma_{f_2}^2$. According to Definitions \ref{def:stablefp} and \ref{def:unstablefp}$,\sigma_{f_1}$ and $\sigma_{f_2}$ are stable fixed points, while $\sigma_{f_3}$ is the unstable fixed point. (b)  $\sigma_{f_4}$ is a half-stable fixed point: The algorithm will converge to this fixed point, if it starts in its right neighborhood. Here $\sigma_{f_2}$ is again a stable fixed point. }
\label{fig:fixedpoints}
\end{center}
\end{figure}

\item[(iv)] The shape of $\Psi_{\lambda,p}$ and its fixed points depend on the distribution $p_X$. In this work we study $p_X \in \mathcal{F}_\epsilon$, where $\mathcal{F}_\epsilon$ denotes the set of distributions whose mass at zero is greater than or equal to $1- \epsilon$. In other words, $X \sim p_X$ implies that $P(X\neq 0) \leq \epsilon$. This class of distributions has been studied in many other papers \cite{donoho2009message, rangan2009asymptotic, krzakala2012statistical, maleki2010optimally} and is considered as a good model for exactly sparse signals. 
\end{itemize}

Before discussing the optimal thresholding policy, we should distinguish between three types of fixed points: (i) stable, (ii) unstable, (iii) half-stable. The following definitions can be used for any function of $\sigma^2$, but we introduce them for $\Psi_{\lambda, p}(\sigma^2)$ to avoid introducing new notations.  

\vspace{.3cm}

\begin{definition}\label{def:stablefp}
$\sigma_f$ is called a stable fixed point of $\Psi_{\lambda, p}(\sigma^2)$ if and only if there exists an open interval $I$, with $\sigma_f \in I$, such that for every $\sigma> \sigma_f$ in $I$, $ \Psi_{\lambda,p}(\sigma^2) < \sigma^2$ and for every $\sigma< \sigma_f$ in $I$, $\Psi_{\lambda,p}(\sigma^2)> \sigma^2$. We call $0$ a stable fixed point of $\Psi_{\lambda,p}(\sigma^2)$ if and only if $\Psi_{\lambda, p}(0)=0$ and there exists $\sigma_i>0$ such that for every $0<\sigma <\sigma_i$, $ \Psi_{\lambda,p}(\sigma^2) < \sigma^2$.
\end{definition}

\vspace{.3cm}

In Figure \ref{fig:fixedpoints}(a), both $\sigma_{f_1}$ and $\sigma_{f_2}$ are stable fixed points, while $\sigma_{f_3}$ is not stable. The main feature of a stable fixed point is the following: There exists a neighborhood of $\sigma_f$ in which if we initialize $\ell_p$-AMP, it will converge to $\sigma_f$.\footnote{Note that all the statements we make about AMP are concerned with the asymptotic settings.}

\vspace{.3cm}

\begin{definition}\label{def:unstablefp}
$\sigma_f$ is called an unstable fixed point of $\Psi_{\lambda,p}(\sigma^2)$ if and only if there exists an open interval $I$, with $\sigma_f \in I$ such that for every $\sigma> \sigma_f$ in $I$, $ \Psi_{\lambda,p}(\sigma^2) > \sigma^2$ and for every $\sigma< \sigma_f$  in $I$, $\Psi_{\lambda,p}(\sigma^2)< \sigma^2$. We call $0$ an ustable fixed point of $\Psi_{\lambda,p}(\sigma^2)$ if and only if $\Psi_{\lambda,p}(0)=0$ and there exists $\sigma_i$ such that for every $0<\sigma <\sigma_i$, $ \Psi_{\lambda,p}(\sigma^2) > \sigma^2$.
\end{definition}

\vspace{.2cm}

Note that the state evolution equation will not converge to an unstable fixed point unless it is exactly initialized at that point. Hence, in realistic situations, $\ell_p$-AMP will not converge to unstable fixed points.  

\vspace{.3cm}

\begin{definition}
 A fixed point is called {\em half-stable} if it is neither stable nor unstable.  
 \end{definition}

\vspace{.3cm}

 See Figure \ref{fig:fixedpoints}(b) for an example of a half-stable fixed point. Half stable fixed points occur in very rare situations and for very specific noise levels $\sigma_w^2$.

\subsection{Optimal-$\lambda$ $\ell_p$-AMP}\label{sec:optlambda}

In the last section, we discussed the role of the fixed points of $\Psi_{\lambda,p}(\sigma^2)$ on the performance of $\ell_p$-AMP. Note that the locations of the fixed points of $\Psi_{\lambda,p}(\sigma^2)$ depend on the thresholding policy $\lambda(\sigma)$. Hence, it is important to pick $\lambda(\sigma)$ optimally. Consider the following oracle thresholding policy:
\begin{equation}\label{eq:optlambdadef}
\lambda_*(\sigma) \in \arg \min_{\lambda} \mathbb{E} \left( {{{\left| {\eta_p (X + {\sigma }Z;{\lambda}) - X} \right|}^2}} \right),
\end{equation}
where the expected value is with respect to two independent random variables $X \sim p_X$ and $Z \sim N(0,1)$. $\lambda_*(\sigma)$ is called oracle thresholding policy, since it depends on $p_X$ that is not available in practice. In Section \ref{sec:simulation}, we explain how this thresholding policy can be implemented in practice.  The following lemma is a simple corollary of our definition. 

\vspace{.3cm}

\begin{lemma}\label{lem:optimalityoracle}
For every thresholding policy $\lambda(\sigma)$, we have
\[
\Psi_{\lambda,p}(\sigma^2) \geq \Psi_{\lambda_*,p}(\sigma^2).
\]
Hence, both the lowest and highest stable fixed points of $\Psi_{\lambda_*,p}(\sigma^2)$ are below the corresponding fixed points of $\Psi_{\lambda,p}(\sigma^2)$. 
\end{lemma}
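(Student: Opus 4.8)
The plan is to split the statement into the pointwise inequality and the comparison of fixed points, since only the latter requires real work.

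The pointwise inequality is immediate from the definition of the oracle policy, which is why it is billed as a corollary. Fix any $\sigma \geq 0$. By \eqref{eq:optlambdadef}, $\lambda_*(\sigma)$ minimizes $\mathbb{E}(|\eta_p(X+\sigma Z;\lambda)-X|^2)$ over the scalar $\lambda$. Hence for an arbitrary thresholding policy $\lambda(\cdot)$, evaluating the risk at $\lambda(\sigma)$ cannot beat this minimum, so $\mathbb{E}(|\eta_p(X+\sigma Z;\lambda(\sigma))-X|^2) \geq \mathbb{E}(|\eta_p(X+\sigma Z;\lambda_*(\sigma))-X|^2)$. Adding $\sigma_w^2$ and multiplying by $1/\delta$ preserves the inequality, which by \eqref{eq:defininPsi1} is precisely $\Psi_{\lambda,p}(\sigma^2) \geq \Psi_{\lambda_*,p}(\sigma^2)$ for every $\sigma^2 \geq 0$.

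For the fixed-point comparison I would convert the statement about stable fixed points into one about level sets of $\sigma^2 \mapsto \Psi_{\cdot,p}(\sigma^2)-\sigma^2$, using continuity of $\Psi$ in $\sigma^2$ and the fact that, by Definition \ref{def:stablefp}, a stable fixed point is a downward crossing of the diagonal. In the regime studied, the curve lies strictly above the diagonal for small $\sigma^2$ (because $\Psi_{\cdot,p}(0) \geq \sigma_w^2 > 0$) and strictly below it for large $\sigma^2$ (the oracle denoiser risk is at most $\mathbb{E}(X^2)$, so $\Psi_{\lambda_*,p}$ stays bounded while $\sigma^2 \to \infty$, and we work with policies for which the highest stable fixed point exists, so the same holds for $\Psi_{\lambda,p}$). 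With these boundary facts the lowest stable fixed point is $\sigma_{\mathrm{low}}^2 = \inf\{\sigma^2 > 0 : \Psi_{\cdot,p}(\sigma^2) \leq \sigma^2\}$ and the highest is $\sigma_{\mathrm{high}}^2 = \sup\{\sigma^2 : \Psi_{\cdot,p}(\sigma^2) \geq \sigma^2\}$, since continuity forces equality at the extremal crossing and the boundary behaviour forces that crossing to be downward, hence stable. The comparison then drops out of monotonicity of $\inf$ and $\sup$ under set inclusion: because $\Psi_{\lambda_*,p} \leq \Psi_{\lambda,p}$ pointwise, $\Psi_{\lambda,p}(\sigma^2)\leq\sigma^2$ implies $\Psi_{\lambda_*,p}(\sigma^2)\leq\sigma^2$, so the sublevel set for $\lambda$ is contained in that for $\lambda_*$ and taking infima gives $\sigma_{\mathrm{low}}(\lambda_*)\leq\sigma_{\mathrm{low}}(\lambda)$; dually, $\Psi_{\lambda_*,p}(\sigma^2)\geq\sigma^2$ implies $\Psi_{\lambda,p}(\sigma^2)\geq\sigma^2$, so the superlevel set for $\lambda_*$ sits inside that for $\lambda$ and taking suprema gives $\sigma_{\mathrm{high}}(\lambda_*)\leq\sigma_{\mathrm{high}}(\lambda)$.

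The main obstacle is justifying the level-set characterization rather than the algebra: one must rule out that the extremal crossing picked by the $\inf$/$\sup$ is unstable or merely half-stable (a tangency), and one must handle the degenerate noiseless case $\sigma_w=0$, where $0$ itself may be a fixed point and the ``above the diagonal for small $\sigma^2$'' claim needs a separate check. The paper's observation that half-stable points occur only for exceptional noise levels, combined with the continuity of $\Psi$ in $\sigma^2$, is exactly what legitimizes treating the first and last diagonal crossings as the lowest and highest stable fixed points; the remaining steps are routine.
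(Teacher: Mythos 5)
Your proposal is correct, and in fact more complete than the paper, which skips the proof entirely with the remark that it is ``a simple implication of the definition of oracle thresholding policy.'' Your first paragraph is exactly that intended one-line argument: pointwise minimization of the risk over $\lambda$ at each fixed $\sigma$ dominates any particular choice $\lambda(\sigma)$, and adding $\sigma_w^2$ and dividing by $\delta$ preserves the inequality. Your second paragraph supplies the fixed-point comparison that the paper leaves implicit, and the level-set argument is sound: the inclusion $\{\sigma^2 : \Psi_{\lambda,p}(\sigma^2)\leq\sigma^2\}\subseteq\{\sigma^2 : \Psi_{\lambda_*,p}(\sigma^2)\leq\sigma^2\}$ and its dual give the ordering of the extremal crossings directly, and the boundary behaviour ($\Psi(0)\geq\sigma_w^2>0$ and boundedness of the oracle risk for large $\sigma^2$, as in the proof of Proposition~\ref{proof:existsncestable}) guarantees those crossings exist for the oracle policy. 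The one genuine soft spot is the identification of the extremal diagonal crossing with a \emph{stable} fixed point in the sense of Definition~\ref{def:stablefp}: a tangency from above (resp.\ below) at the infimum (resp.\ supremum) would make that crossing half-stable rather than stable, so the level-set characterization can fail in degenerate cases. You flag this explicitly, and the paper itself dismisses half-stable points as occurring only for exceptional noise levels, so your treatment is at least as rigorous as the source; no correction is needed.
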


\vspace{.3cm}

The proof of the above lemma is a simple implication of the definition of oracle thresholding policy in \eqref{eq:optlambdadef} and is hence skipped here. According to this lemma, the oracle thresholding policy is an optimal thresholding policy since it leads to the lowest fixed point possible. In the rest of the paper, we call $\lambda_*(\sigma)$ the {\em optimal thresholding policy}. Also, the $\ell_p$-AMP algorithm that employs the optimal thresholding policy is called \textit{optimal-$\lambda$ $\ell_p$}-AMP. The optimal thresholding policy can be calculated numerically. Figure \ref{fig:3} exhibits $\Psi_{\lambda_*,p}(\sigma^2)$ for $p=0, 0.3, 0.6, 0.9, 1$ when the nonzero entries of the sparse vector $x_o$ are $\pm 1$ with probability $0.5$. It turns out that $\Psi_{\lambda_*, p}(\sigma^2)$ has at least one stable fixed point. The following proposition proves this claim. 

\begin{figure}
  \centering
  \includegraphics[width=3.2in]{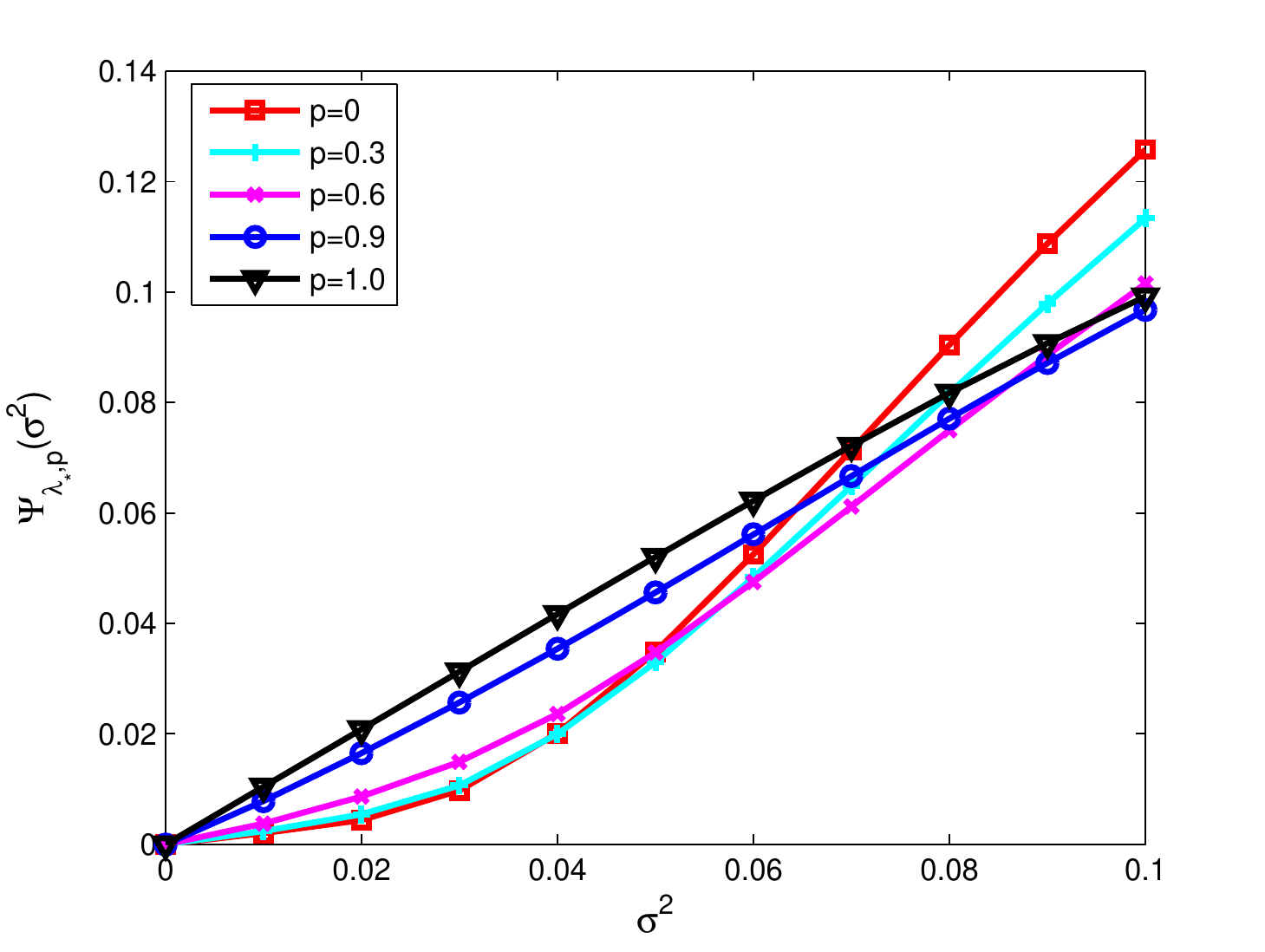}
   \caption{Comparison of $\Psi_{\lambda_*, p} (\sigma^2)$ for $p=0, 0.3, 0.6, 0.9, 1$. The undersampling factor and sparsity are $\delta=0.1$ and $\epsilon=0.02$, respectively.  The non-zero entries are $\pm 1$ equiprobable.  }
   \label{fig:3}
\end{figure}

\vspace{.2cm}

\begin{proposition}\label{proof:existsncestable}
$\Psi_{\lambda_*,p}(\sigma^2)$ have at least one stable fixed point.
\end{proposition}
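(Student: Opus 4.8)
The plan is to treat $\Psi_{\lambda_*,p}(\sigma^2)$ as a scalar self-map and locate a stable fixed point by studying the sign of $g(\sigma^2) \triangleq \Psi_{\lambda_*,p}(\sigma^2) - \sigma^2$ at the two ends of $[0,\infty)$ together with its crossing structure in between. First I would record the two boundary facts. At $\sigma = 0$ we have $X + \sigma Z = X$, and since $\eta_p(u;\lambda) \to u$ as $\lambda \to 0^+$ while $|\eta_p(u;\lambda)| \le |u|$, dominated convergence (using $\mathbb{E}(X^2) < \infty$) gives $\min_\lambda \mathbb{E}|\eta_p(X;\lambda) - X|^2 = 0$; hence $\Psi_{\lambda_*,p}(0) = \sigma_w^2$ and $g(0) = \sigma_w^2 \ge 0$. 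At the other end, taking $\lambda \to \infty$ forces $\eta_p \equiv 0$, so for every $\sigma$ the inner expectation in \eqref{eq:optlambdadef} is at most $\mathbb{E}(X^2)$; therefore $\Psi_{\lambda_*,p}(\sigma^2) \le M \triangleq \sigma_w^2 + \mathbb{E}(X^2)/\delta$ uniformly, and $g(\sigma^2) < 0$ for every $\sigma^2 > M$.

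Next I would establish that $\Psi_{\lambda_*,p}$ is continuous in $\sigma^2$. Writing $F(\sigma,\lambda) \triangleq \mathbb{E}|\eta_p(X + \sigma Z;\lambda) - X|^2$, the map $\sigma \mapsto \min_\lambda F(\sigma,\lambda)$ is continuous by Berge's maximum theorem, since $F$ is jointly continuous and coercive in $\lambda$ (indeed $F \to \mathbb{E}(X^2)$ as $\lambda \to \infty$, so the minimizer $\lambda_*(\sigma)$ stays in a compact set). Granting continuity, the fixed-point set of $\Psi_{\lambda_*,p}$ is nonempty and compact in $[0,M]$: it contains a zero of $g$ in the noisy case (since $g(0) = \sigma_w^2 > 0$ and $g < 0$ beyond $M$) and contains $0$ in the noiseless case. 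Letting $0 \le t_1 < \cdots < t_m$ denote the fixed points in $[0,M]$ (which, as discussed below, I would argue are isolated), the interval $(t_m,\infty)$ carries no fixed point, so $g$ has constant sign there, and being negative past $M$ it is negative on all of $(t_m,\infty)$. The signs of $g$ on the successive intervals thus run from $+$ on $(0,t_1)$, in the noisy case where $g(0) = \sigma_w^2 > 0$, to $-$ on $(t_m,\infty)$; the first adjacency where the sign passes from $+$ to $-$ occurs at some $t_i$ with $g > 0$ on a left neighborhood and $g < 0$ on a right neighborhood, which is exactly a stable fixed point in the sense of Definition \ref{def:stablefp}. The noiseless case $\sigma_w = 0$ splits in two: either $g < 0$ just to the right of $0$, in which case $0$ is itself stable by the second clause of Definition \ref{def:stablefp}, or $g > 0$ there, and then the same $+\!\to\!-$ bookkeeping applied on $(0,\infty)$ produces a stable fixed point.

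The main obstacle is the step from ``a sign change exists'' to ``a genuine downward crossing exists.'' A merely continuous $g$ that is nonnegative at $0$ and negative for large $\sigma^2$ could in principle meet the diagonal only tangentially from below at every zero (a half-stable contact, as in Figure \ref{fig:fixedpoints}(b)), with no clean $+\!\to\!-$ transition, so some regularity of $\Psi_{\lambda_*,p}$ beyond continuity is needed. I would resolve this by showing the fixed points are isolated, for instance by arguing that $\sigma^2 \mapsto \Psi_{\lambda_*,p}(\sigma^2)$ is real-analytic, so that $g$ is analytic and, being negative for large argument, not identically zero, whence its zeros are isolated and the interval-sign argument above becomes rigorous. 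A cleaner alternative, if available, is to prove the stronger structural fact that $\Psi_{\lambda_*,p}$ is nondecreasing in $\sigma^2$: this forces the highest fixed point to be stable and lets the state-evolution recursion \eqref{eq:seiterthreshpolicy}, started above $M$, decrease monotonically to it. I expect the monotonicity (or analyticity) of the optimally tuned risk $\min_\lambda F(\sigma,\lambda)$ in $\sigma$ to be the genuinely delicate point, since it does not follow from a data-processing inequality once the estimators are restricted to the thresholding family $\eta_p(\cdot;\lambda)$; the natural tools here are the optimality in Lemma \ref{lem:optimalityoracle} and the scaling identity $\eta_p(au;a^{2-p}\lambda) = a\,\eta_p(u;\lambda)$ for $a > 0$.
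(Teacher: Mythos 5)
Your argument is essentially the paper's own proof: the paper likewise observes that $\Psi_{\lambda_*,p}(0)=\sigma_w^2$ (it treats $\sigma_w=0$, so either $0$ is already stable or $\Psi_{\lambda_*,p}(\sigma_u^2)>\sigma_u^2$ for some $\sigma_u$), obtains the uniform upper bound $\Psi_{\lambda_*,p}(\sigma^2)\le \sigma_w^2+(\mathbb{E}X^2+1)/\delta$ by sending $\lambda\to\infty$ and applying dominated convergence, invokes the continuity established in Theorem \ref{final:smooth} (where you would use Berge's theorem), and concludes by a sign-change argument on $[\sigma_u^2,\,\tfrac{1}{\delta}(\mathbb{E}X^2+2)]$. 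The tangential-contact obstacle you isolate is real under the strict inequalities of Definition \ref{def:stablefp}, but the paper does not resolve it either — it passes directly from continuity plus the two boundary signs to the existence of a stable fixed point — so the analyticity or monotonicity machinery you propose goes beyond what the paper actually carries out.
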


\vspace{.3cm}
The proof of this statement is presented in Section \ref{sec:proofLemmaexistence}.

\subsection{Optimal-($p,\lambda$) $\ell_p$-AMP}\label{sec:optimalpandlambda}

In the last section, we fixed $p$ and optimized over the threshold parameter $\lambda_t$. However, one can also consider $p \in [0,1]$ as a free parameter that can be tuned at every iteration. This extra degree of freedom, if employed optimally, can potentially improve the performance of $\ell_p$-AMP. To derive the optimal choice of $p$, we first extend the notion of thresholding policy to adaptation policy.  The {\em adaptation policy}  is defined as a tuple $(\lambda(\sigma), p(\sigma))$ where $\lambda: [0, \infty) \rightarrow [0, \infty)$ and $p: [0, \infty) \rightarrow [0,1]$.

Given an adaptation policy, one can run the $\ell_p$-AMP algorithm whose performance in the asymptotic setting can be predicted by the following state evolution equation:
\begin{equation}\label{eq:pcontse}
\sigma_{t+1}^2 =  {\sigma^2_w} + \frac{1}{\delta }\mathbb{E}\left( {{{\left| {\eta_{p(\sigma_t) }(X + {\sigma_t }Z;{\lambda(\sigma_t)}) - X} \right|}^2}} \right).
\end{equation}
Hence the state evolution converges to one of the fixed points of $\Psi_{\lambda(\sigma),p(\sigma)}(\sigma^2)$. Adaptation policy can potentially improve the performance of the $\ell_p$-AMP algorithm. In this paper, we consider the following oracle adaptation policy:
\begin{equation}\label{eq:optadaptpolicy}
(\lambda_*(\sigma),p_*(\sigma) ) \in \arg \min_{\lambda, p} \mathbb{E}\left( {{{\left| {\eta_{p(\sigma)} (X + {\sigma }Z;{\lambda(\sigma)}) - X} \right|}^2}} \right).
\end{equation}
Note that obtaining $(\lambda_*(\sigma),p_*(\sigma) )$ requires the knowledge of $p_X$. We show how a good estimate of $(\lambda_*(\sigma),p_*(\sigma) )$ can be obtained without any knowledge of $p_X$ in Section \ref{sec:simulation}.  The $\ell_p$-AMP algorithm that employs $(\lambda_*(\sigma),p_*(\sigma) )$ is called {\em optimal-$(p,\lambda)$ $\ell_p$}-AMP.
The following lemma  clarifies this terminology:
\vspace{.2cm}

\begin{theorem}\label{thm:optadapt}
For any adaptation policy $(\lambda(\sigma), p(\sigma))$, we have
\[
\Psi_{\lambda_*,p_*} ({\sigma^2}) \leq \Psi_{\lambda, p}  ({\sigma^2}).
\]
\end{theorem}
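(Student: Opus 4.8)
The plan is to observe that this statement is an immediate consequence of the pointwise definition of the oracle adaptation policy in \eqref{eq:optadaptpolicy}, in exactly the same way that Lemma \ref{lem:optimalityoracle} follows from \eqref{eq:optlambdadef}. The crucial structural feature to exploit is that the value of $\Psi_{\lambda, p}(\sigma^2)$ at a given argument $\sigma^2$ depends on the adaptation policy only through the single pair $(\lambda(\sigma), p(\sigma))$ evaluated at that same $\sigma$; there is no coupling across different values of $\sigma$. This decoupling is what reduces the optimization over an entire policy (a pair of functions) to a separate, independent minimization at each fixed $\sigma$.

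First I would fix an arbitrary $\sigma \ge 0$ and write out, for an arbitrary adaptation policy $(\lambda(\sigma), p(\sigma))$,
\[
\Psi_{\lambda, p}(\sigma^2) = \sigma_w^2 + \frac{1}{\delta} \mathbb{E}\left( \left| \eta_{p(\sigma)}(X + \sigma Z; \lambda(\sigma)) - X \right|^2 \right).
\]
Since the constant $\sigma_w^2$ and the factor $1/\delta$ do not depend on the policy, minimizing $\Psi$ at this $\sigma$ is equivalent to minimizing the risk $\mathbb{E}(|\eta_{p}(X + \sigma Z; \lambda) - X|^2)$ over the pair $(\lambda, p) \in [0, \infty) \times [0,1]$, which is precisely the problem whose minimizer defines $(\lambda_*(\sigma), p_*(\sigma))$ in \eqref{eq:optadaptpolicy}. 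Hence, for every $\sigma$,
\[
\mathbb{E}\left( \left| \eta_{p_*(\sigma)}(X + \sigma Z; \lambda_*(\sigma)) - X \right|^2 \right) \le \mathbb{E}\left( \left| \eta_{p(\sigma)}(X + \sigma Z; \lambda(\sigma)) - X \right|^2 \right),
\]
because the particular choice $(\lambda(\sigma), p(\sigma))$ is among the candidates over which $(\lambda_*(\sigma), p_*(\sigma))$ attains the minimum. Adding $\sigma_w^2$ and scaling by $1/\delta$ preserves the inequality and yields $\Psi_{\lambda_*, p_*}(\sigma^2) \le \Psi_{\lambda, p}(\sigma^2)$ at the arbitrary point $\sigma^2$, which is the claim.

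Given this structure there is essentially no hard analytic step; the argument is a direct unrolling of the definitions, and I do not anticipate a substantive obstacle. The only place where care is warranted is the implicit assumption that the minimum in \eqref{eq:optadaptpolicy} is actually attained, so that $(\lambda_*(\sigma), p_*(\sigma))$ exists as a genuine minimizer rather than merely an infimizing sequence. I would either invoke the $\in \arg\min$ convention directly, or, if attainment needs justification, argue that for each fixed $\sigma$ the risk is continuous in $(\lambda, p)$ and that the effective search region can be compactified, since as $\lambda \to \infty$ the denoiser output tends to $0$ and the risk tends to the finite value $\mathbb{E}(X^2)$, ruling out escape of minimizers to infinity. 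With the inequality established for every policy, it also follows that the lowest and highest stable fixed points of $\Psi_{\lambda_*, p_*}$ lie below the corresponding fixed points of any $\Psi_{\lambda, p}$, mirroring the conclusion of Lemma \ref{lem:optimalityoracle}.
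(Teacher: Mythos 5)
Your proposal is correct and follows exactly the route the paper intends: the paper explicitly skips this proof as "a simple implication of \eqref{eq:optadaptpolicy}," and your pointwise unrolling of the definition, mirroring Lemma \ref{lem:optimalityoracle}, is precisely that argument. The extra remark about attainment of the minimizer is a reasonable (and slightly more careful) addition, but nothing in your proof departs from the paper's approach.
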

The proof is a simple implication of \eqref{eq:optadaptpolicy} and is hence skipped. According to Theorem \ref{thm:optadapt}, the oracle adaptation policy is optimal and it outperforms every other adaptation policy. Hence we call it {\em optimal adaptation policy}. Note that in all situations, the optimal-($p,\lambda$) $\ell_p$-AMP outperforms the optimal-$\lambda$ $\ell_p$-AMP (for any $0 \leq p \leq 1$). In the next two sections, we characterize the amount of improvement that is gained from the optimal adaptation policy.

\begin{figure}
\begin{center}
  \includegraphics[width=3.2in]{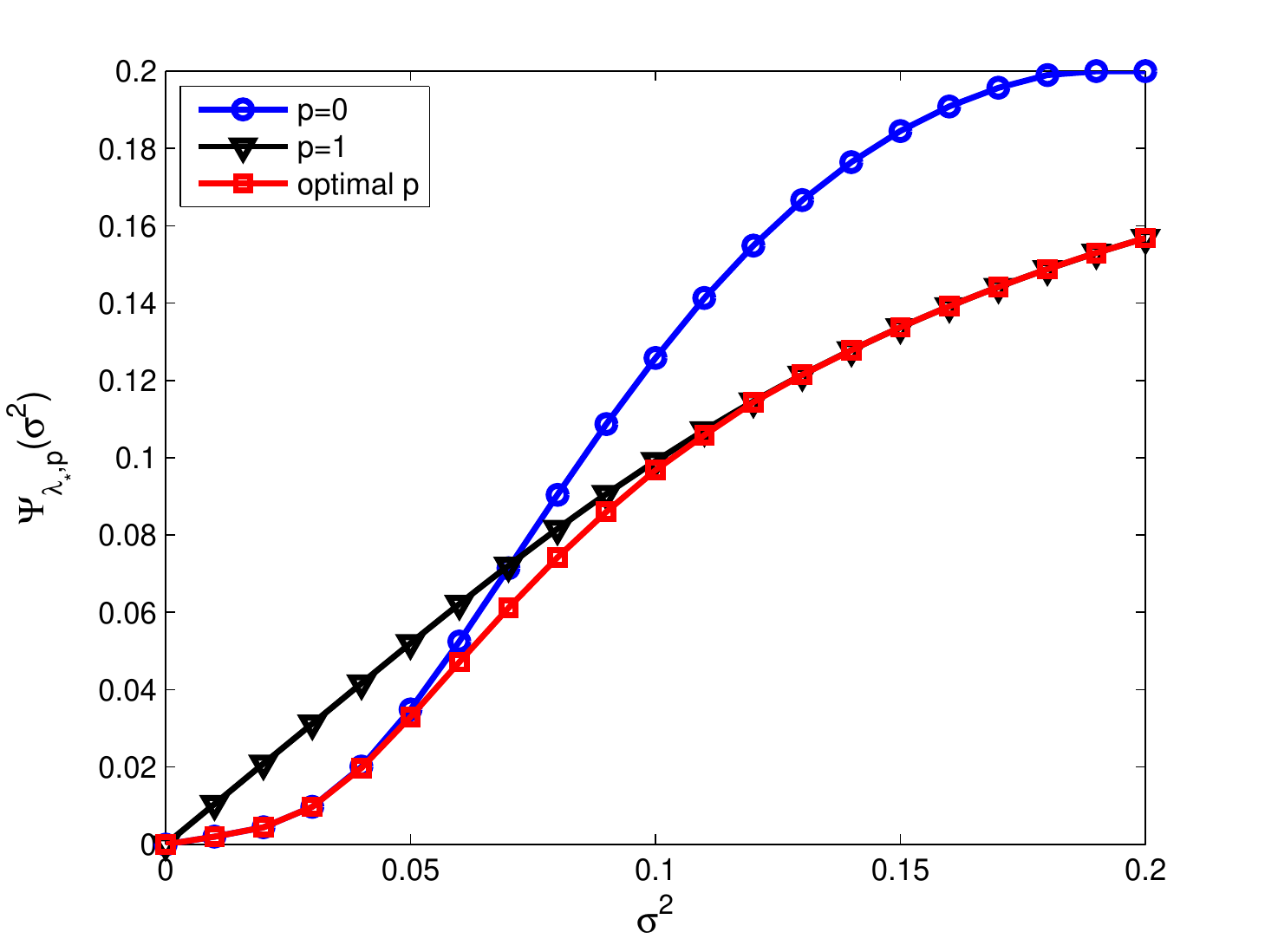}
  \caption{Comparison of (a) $\Psi_{\lambda_*,0}(\sigma^2)$, (b) $\Psi_{\lambda_*,1}(\sigma^2)$ and (c) $\Psi_{\lambda_*,p_*}(\sigma^2)$. $\delta$ and $\epsilon$ are set to $0.1$ and $0.02$, respectively. For small values of $\sigma$, $p=0$ is optimal and for large values of $\sigma$, $p=1$ is optimal. These two observations will be formally addressed in Proposition \ref{prop:optimallassolargesigma}, \ref{lem:riskbehsmallsigma} and \ref{lem:riskbehsmallsigmazero}.}
  \end{center}
\label{fig:5}
  \end{figure}

In this paper, we analyze the performance of $\ell_p$-AMP with optimal thresholding and adaptation policies. We will then employ the Replica method to show the implications of our results for LPLS. 

\subsection{Discussion about thresholding policy and adaptation policy}\label{ssec:discussionwhypolicy}

 Starting with an initialization, one may run $\ell_p$-AMP with thresholds $\lambda_1, \lambda_2, ....$  until the algorithm converges. $\lambda_t$ may depend on not only $\sigma_t$, but also the entire information about $\Psi$. In that sense, it is conceivable that one may pick the threshold in a way that he/she can beat $\ell_p$-AMP with optimal thresholding policy. Suppose that the lowest stable fixed point of $\Psi_{\lambda_*,p}(\sigma^2)$ is denoted with $\sigma_\ell^2$. Also, suppose that $\Psi_{\lambda_*,p}(\sigma^2)$ does not have any unstable fixed point below $\sigma_\ell$. Consider an oracle who runs $\ell_p$-AMP with a good initialization (whatever he/she wants) and picks a converging sequence $\lambda_1, \lambda_2, \ldots$ for the thresholds. Assume that the corresponding sequence of $\sigma_t$ converges to $\sigma_{\infty}$. It is then straightforward to show that no matter what threshold the oracle picks, he/she ends up with $\sigma_{\infty}\geq \sigma_\ell$. Hence, the lowest fixed point of $\Psi_{\lambda_*,p}(\sigma^2)$ specifies the best performance $\ell_p$-AMP offers.\footnote{The optimal thresholding policy has many more optimality properties, if $\eta_p$ satisfies the monotonicity property. For more information about monotonicity and its implications refer to \cite{metzler2014denoising}. We believe $\eta_p$ satisfies the monotonicity property, but have left the mathematical justification of this fact for future research. } 

Similarly, consider an oracle who runs $\ell_p$-AMP with a good choice of $(p_1, \lambda_1), (p_2,\lambda_2), \ldots $. Again we can argue that if $\sigma_t \rightarrow \sigma_{\infty}$, then $\sigma_\infty \geq \sigma_\ell$, where $\sigma_\ell$ denote the lowest fixed point of $\Psi_{\lambda_*,p_*}(\sigma^2)$. Note that considering $p$ as a free parameter and changing it at every iteration can be considered as a generalization of the continuation strategy we discussed in the introduction. Hence, $\sigma_\ell$ reflects the best performance any continuation strategy may achieve.

\section{Our contributions in noiseless settings}\label{sec:maincontribution}

\vspace{.3cm}
Table I summarizes all our contributions and the places they will appear.  This section discusses our main results in the noiseless setting $\sigma_w^2=0$. The discussion of the noisy setting is postponed until Section \ref{sec:analysisnoisy}.  We start with the optimal-$\lambda$ $\ell_p$-AMP. Since there is no measurement noise, the state of this system may converge to $0$, i.e., $\sigma_t \rightarrow 0$ as $t \rightarrow \infty$. If this happens, we say $\ell_p$-AMP has successfully recovered the sparse solution of $y=Ax_o$. Otherwise, we say $\ell_p$-AMP has failed. Depending on the under-determinacy value $\delta$, we may observe three different situations. 
\begin{itemize}
\item[(i)]  $\Psi_{\lambda_*, p}(\sigma^2)$ has only one stable fixed point at zero. In this case, optimal-$\lambda$ $\ell_p$-AMP is successful no matter where it is initialized.
\item[(ii)] $\Psi_{\lambda_*,p} (\sigma^2)$ has more than one stable fixed point, but $\sigma^2 =0$ is still a stable fixed point. In this case,  the performance of optimal-$\lambda$ $\ell_p$-AMP depends on its initialization. However, there exist initializations for which $\ell_p$-AMP is successful.
\item[(iii)] $0$ is not a stable fixed point of $\Psi_{\lambda_*, p}(\sigma^2)$. In such cases, optimal-$\lambda$ $\ell_p$-AMP does not recover the right solution under any initialization.
\end{itemize}

\begin{center}
\begin{table}
\caption{Summary of our findings in both noiseless and noisy settings}
\centering
\hspace{-1.5cm}
\begin{threeparttable}
\begin{tabular}{c|c}
\hline
Noiseless setting $(\sigma_w =0)$& Noisy setting $(\sigma_w>0)$\\
\hline
\shortstack{Phase transition curve for highest fixed point of\\
 optimal-$\lambda$ $\ell_p$-AMP under least favorable distribution: \\ $\overline{M}_p(\epsilon)=\delta$~ ($0\leq p <1$, \textbf{Theorem 3})\\ $M_1(\epsilon)=\delta$~ ($p=1$, \textbf{Proposition 2})  \\ Conclusion: minor improvement of $\ell_p$ over $\ell_1$. \\ ~ }   & \shortstack{ \\ \\ Noise sensitivity for highest fixed point in\\
optimal-$\lambda$ $\ell_p$-AMP under least favorable distribution:\\
 \\  $\sigma^2_h \leq \frac{\sigma^2_w}{1-\frac{\overline{M}_p(\epsilon)}{\delta}}$~($0\leq p \leq 1$, \textbf{Theorem 10}) \\  Conclusion: minor improvement of $\ell_p$ over $\ell_1$. \\ ~ } \\
 \hline
 \shortstack{ Phase transition curve for lowest fixed point in \\ optimal-$\lambda$ $\ell_p$-AMP: \\ $\epsilon =\delta$ ~ ($0\leq p <1$, \textbf{Theorem 4})\\ $M_1(\epsilon)=\delta$~ ($p=1$, \textbf{Proposition 2}) \\ Conclusion: major improvement of $\ell_p$ over $\ell_1$. \\ ~ \\~ } & \shortstack{ \\ \\ Noise sensitivity for lowest fixed point in optimal-$\lambda$ $\ell_p$-AMP: \\  $\lim_{\sigma_w \rightarrow 0}\frac{\sigma^2_{\ell}}{\sigma_w^2} = \frac{1}{1-\frac{\epsilon}{\delta}}$~($0\leq p <1$, \textbf{Theorem 6}) \\ $\lim_{\sigma_w \rightarrow 0}\frac{\sigma^2_{\ell}}{\sigma_w^2} = \frac{1}{1-\frac{M_1(\epsilon)}{\delta}}$~($p=1$, \textbf{Theorem 7})\\   Conclusion: major improvement of $\ell_p$ over $\ell_1$. \\ ~ } \\
 \hline
 \shortstack{Phase transition curve for highest fixed point in \\ optimal-$(p,\lambda)$ $\ell_p$-AMP under least favorable distribution:\\ $\inf_{0\leq p \leq 1} \overline{M}_p(\epsilon)=\delta$~(\textbf{Theorem 5})\\ Conclusion: minor improvement over $\ell_1$. \\ ~ \\ ~ \\ ~ \\ ~ \\ ~ \\ ~}  &  \shortstack{ \\ \\ Noise sensitivity for highest fixed point in \\ optimal-$(p, \lambda)$ $\ell_p$-AMP under least favorable distribution: \\  $\lim_{\sigma_w \rightarrow 0}\frac{\sigma^2_h}{\sigma_w^2} = \frac{1}{1-\frac{\epsilon}{\delta}}$~(\textbf{Theorem 11}) \\   $\lim_{\sigma_w \rightarrow 0}\frac{\sigma^2_h}{\sigma_w^2} = \frac{1}{1-\frac{M_1(\epsilon)}{\delta}}$~($p=1$, \textbf{Theorem 7}) \\ Conclusion: major improvement over $\ell_1$. \\~ } \\
 \hline
\end{tabular}
\begin{tablenotes}
\item $^*$Note that all the results shown for the highest fixed point are sharp for the least favorable distributions (based on the identity $\overline{M}_p(\epsilon)=\underline{M}_p(\epsilon)$ confirmed by our simulations). However, for some specific distribution we may observe major improvement of $\ell_p$ over $\ell_1$. See Figure \ref{fig:ptGaussian} for further information. Also we will present a more accurate analysis for the case when $\sigma_w$ is either very small or very large. Refer to Proposition \ref{prop:optimallassolargesigma} and Theorems  \ref{thm:riskbehsmallsigma} and \ref{thm:lownoisehardthresh1} for further information. 
\end{tablenotes}
\end{threeparttable}
\end{table}
\end{center}

These three cases are summarized in Figure \ref{fig:zerofp}. Our goal is to identify the conditions under which each of these cases happens. The following quantities will play a pivotal role in our results:
\begin{eqnarray}\label{eq:definitionMp}
{\overline{M}_p}(\epsilon) &\triangleq& \mathop {\inf }\limits_{\tau\geq0} \mathop {\sup }\limits_{\mu  \geq 0}  \left[ {(1 - \epsilon )\mathbb{E}{{\left( {{\eta _p}(Z;\tau )} \right)}^2} + \epsilon \mathbb{E}{{\left( {{\eta _p}(\mu  + Z;\tau ) - \mu } \right)}^2}} \right], \nonumber \\
{\underline{M}_p}(\epsilon) &\triangleq& \mathop {\sup }\limits_{\mu  \geq 0} \mathop {\inf }\limits_{\tau\geq0}  \left[ {(1 - \epsilon )\mathbb{E}{{\left( {{\eta _p}(Z;\tau )} \right)}^2} + \epsilon  \mathbb{E}{{\left( {{\eta _p}(\mu  + Z;\tau ) - \mu } \right)}^2}} \right],
\end{eqnarray}
where $\mathbb{E}$ is with respect to $Z \sim N(0,1)$. It is straightforward to confirm that ${\overline{M}_p}(\epsilon) \geq {\underline{M}_p}(\epsilon)$. 
Our next theorem explains the conditions that are required for Case (i) above. \\

\begin{figure}
\begin{center}
\includegraphics[width=14cm]{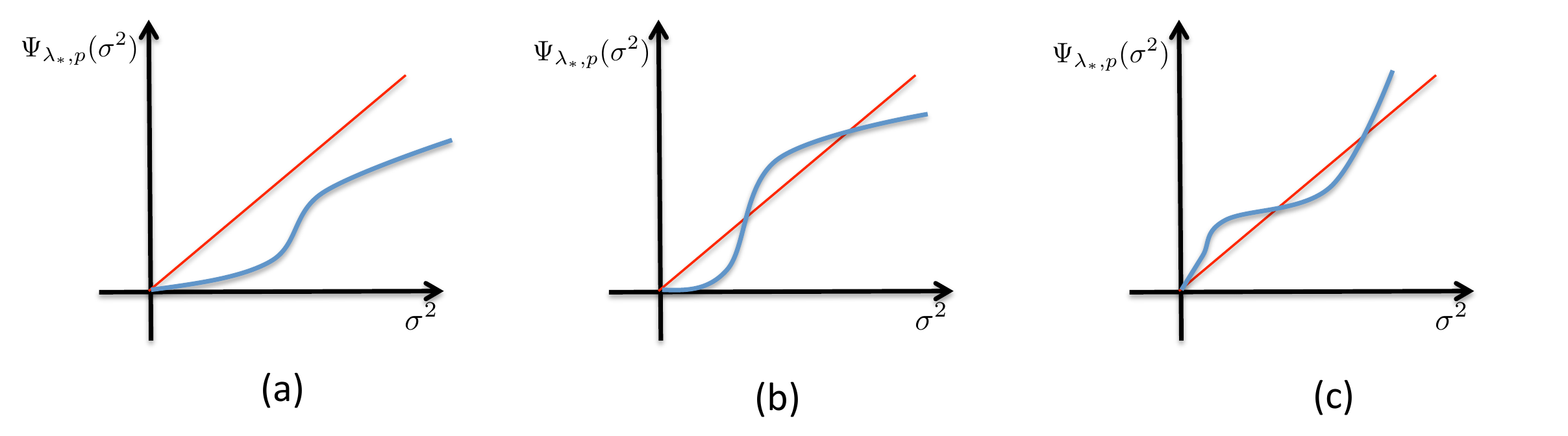}
\caption{Three main cases that may arise in the noiseless setting for $p<1$. (a) $\Psi_{\lambda_*, p}(\sigma^2)$ has only one stable fixed point at zero.  (b) $\Psi_{\lambda_*,p} (\sigma^2)$ has more than one stable fixed point, but $\sigma^2 =0$ is still a stable fixed point.  (c) $0$ is not a stable fixed point of $\Psi_{\lambda_*, p}(\sigma^2)$. }
\label{fig:zerofp}
\end{center}
\end{figure}

\begin{theorem}\label{thm:highestfpnoiseless}
Let $p_X \in \mathcal{F}_\epsilon$. If $\overline{M}_p (\epsilon)< \delta$, then the highest stable fixed point of the optimal-$\lambda$ state evolution happens at zero. In other words, $\Psi_{\lambda_*,p} (\sigma^2)$ has a unique stable fixed point at zero. Furthermore, if $\underline{M}_p(\epsilon)> \delta$, then there exists  a distribution $p_X \in \mathcal{F}_{\epsilon}$ for which $\Psi_{\lambda_*, p}(\sigma^2)$ has more than one stable fixed point.
\end{theorem}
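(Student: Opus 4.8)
The plan rests on a scaling property of the proximal map. Substituting $x=\sigma x'$ into $\min_x \tfrac12(\sigma u-x)^2+\lambda|x|^p$ shows that $\eta_p(\sigma u;\sigma^{2-p}\tau)=\sigma\,\eta_p(u;\tau)$ for every $\sigma>0$ (one can check this directly against the closed forms for $p=0,1$). Since $\sigma_w^2=0$ and $\lambda\mapsto\sigma^{2-p}\lambda$ is a bijection of $[0,\infty)$ for fixed $\sigma$, the optimal thresholding policy lets me factor the scale out of the risk:
\[
\Psi_{\lambda_*,p}(\sigma^2)=\frac{\sigma^2}{\delta}\,r(\sigma),\qquad r(\sigma)\triangleq\min_{\tau\geq0}\mathbb{E}\left(\left|\eta_p\Big(\tfrac{X}{\sigma}+Z;\tau\Big)-\tfrac{X}{\sigma}\right|^2\right).
\]
Thus the entire statement reduces to comparing the normalized risk $r(\sigma)$ with $\delta$. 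Splitting the expectation according to whether $X=0$ (probability $q\triangleq P(X=0)\geq1-\epsilon$) and using that $\eta_p$ is odd, each nonzero atom of $X/\sigma$ contributes a term $\mathbb{E}(|\eta_p(\mu+Z;\tau)-\mu|^2)$ with $\mu=|X|/\sigma\geq0$, while the zero atoms contribute $\mathbb{E}(\eta_p(Z;\tau)^2)$.

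For the first claim, I fix $\tau\geq0$ and bound every nonzero atom by $b(\tau)\triangleq\sup_{\mu\geq0}\mathbb{E}(|\eta_p(\mu+Z;\tau)-\mu|^2)$. Because the supremum includes $\mu=0$, we have $a(\tau)\triangleq\mathbb{E}(\eta_p(Z;\tau)^2)\leq b(\tau)$, so the convex combination $q\,a(\tau)+(1-q)\,b(\tau)$ is nonincreasing in $q$; since $q\geq1-\epsilon$ it is dominated by $(1-\epsilon)a(\tau)+\epsilon\,b(\tau)$. Taking the infimum over $\tau$ gives $r(\sigma)\leq\inf_\tau[(1-\epsilon)a(\tau)+\epsilon\,b(\tau)]=\overline M_p(\epsilon)$ for \emph{every} $\sigma>0$, whence $\Psi_{\lambda_*,p}(\sigma^2)\leq\frac{\overline M_p(\epsilon)}{\delta}\sigma^2$. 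When $\overline M_p(\epsilon)<\delta$ this yields $\Psi_{\lambda_*,p}(\sigma^2)<\sigma^2$ for all $\sigma>0$, while taking $\lambda=0$ gives $\Psi_{\lambda_*,p}(0)=0$. Hence $0$ is the unique fixed point and it is stable in the sense of Definition \ref{def:stablefp}, so the highest stable fixed point is at zero.

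For the converse, since $\underline M_p(\epsilon)=\sup_\mu\inf_\tau[\cdots]>\delta$ I pick $\mu_0\geq0$ with $\inf_\tau[(1-\epsilon)a(\tau)+\epsilon\,\mathbb{E}(|\eta_p(\mu_0+Z;\tau)-\mu_0|^2)]>\delta$, and I take $p_X\in\mathcal F_\epsilon$ to be the three-point law with mass $1-\epsilon$ at $0$ and mass $\epsilon/2$ at each of $\pm\sigma_0\mu_0$, for an arbitrary scale $\sigma_0>0$. Evaluating $r$ exactly at $\sigma=\sigma_0$ makes every nonzero atom equal to $\pm\mu_0$, so by oddness $r(\sigma_0)$ equals the bracketed infimum above, giving the over-shoot $\Psi_{\lambda_*,p}(\sigma_0^2)>\sigma_0^2$. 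To convert this into more than one stable fixed point I examine the two ends. As $\sigma\to\infty$, $X/\sigma\to0$ and choosing $\tau\to\infty$ forces $r(\sigma)\to0$, so $\Psi_{\lambda_*,p}(\sigma^2)<\sigma^2$ for large $\sigma$; as $\sigma\to0$, the nonzero atoms have $|X|/\sigma\to\infty$, where $\eta_p(\mu+Z;\tau)-\mu\to Z$ so each contributes $1$, giving $\Psi_{\lambda_*,p}(\sigma^2)/\sigma^2\to\epsilon/\delta$. In the regime where this part is invoked, $\epsilon<\delta$ (the transition curve $\overline M_p(\epsilon)=\delta$ lies in $\delta>\epsilon$), this limit is below $1$, so $0$ is stable; combined with the over-shoot at $\sigma_0$ and the decay at infinity, continuity of $\Psi_{\lambda_*,p}$ together with the intermediate value theorem forces a crossing from below to above and a later crossing from above to below, producing a positive stable fixed point in addition to the stable point at $0$.

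The scaling identity and the monotonicity bound of the first claim are routine; the \emph{main obstacle} is the boundary analysis in the converse. I must justify interchanging $\lim_{\sigma\to0}$ and $\lim_{\sigma\to\infty}$ with $\min_\tau$ and with the expectation in order to pin down the limiting values $\epsilon/\delta$ and $0$ of $\Psi_{\lambda_*,p}(\sigma^2)/\sigma^2$, and I must verify continuity of $\sigma\mapsto\Psi_{\lambda_*,p}(\sigma^2)$ so that the intermediate-value argument rigorously produces the two stable fixed points of Definition \ref{def:stablefp}. Establishing $\lim_{\mu\to\infty}\mathbb{E}(|\eta_p(\mu+Z;\tau)-\mu|^2)=1$ with enough uniformity to extract $\lim_{\sigma\to0}r(\sigma)=\epsilon$ is the most delicate estimate, and it is precisely where the $p<1$ structure of $\eta_p$ (negligible shrinkage of large entries) enters the argument.
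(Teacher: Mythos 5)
Your proposal is correct and follows essentially the same route as the paper's proof: the scale-invariance identity $\eta_p(\sigma u;\sigma^{2-p}\tau)=\sigma\eta_p(u;\tau)$ to factor $\Psi_{\lambda_*,p}(\sigma^2)=\sigma^2 r(\sigma)/\delta$, the bound $r(\sigma)\le \overline{M}_p(\epsilon)$ via the supremum over $\mu$ for the first claim, and a point mass placed at scale $\sigma_0\mu_0$ (the paper equivalently uses $(1-\epsilon)\Delta_0+\epsilon\Delta_1$ evaluated at $\sigma_0=1/\mu_*$) together with the decay of $\Psi$ at large $\sigma$ and continuity for the converse. The one piece you add beyond the paper's own proof of this theorem --- establishing $\lim_{\sigma\to 0}r(\sigma)=\epsilon$ so that zero is itself stable --- is exactly the content of the paper's Theorem \ref{lem:noiselesslowfpless1} (a genuinely delicate estimate there), and the caveat you correctly flag, that $\underline{M}_p(\epsilon)>\delta$ alone does not force $\epsilon<\delta$, is a looseness present in the paper's proof as well, which only exhibits a nonzero stable fixed point.
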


\vspace{.2cm}
The proof of this theorem is summarized in Section \ref{proof:theorem3noiseless}. Note that this theorem is concerned with the minimax framework. In other words, the minimum value of $\delta$ for which $\Psi_{\lambda_*,p}(\sigma^2)$ has a unique fixed point  at zero depends on $p_X \in \mathcal{F}_{\epsilon}$. However, in most applications $p_X$ is not known and we would like to ensure that the algorithm works for any distributions. Theorem \ref{thm:highestfpnoiseless} ensures that under certain conditions, $\Psi_{\lambda_*, p}(\sigma^2)$ has a unique fixed point for any $p_X \in \mathcal{F}_{\epsilon}$. 

\vspace{.2cm}

Based  on Theorem \ref{thm:highestfpnoiseless}, we can discuss the first phase transition behavior of the optimal-$\lambda$ $\ell_p$-AMP algorithm. Let $p_X \in \mathcal{F}_\epsilon$. This phase transition behavior is discussed in the following corollary. 

\vspace{.2cm}

\begin{corollary}\label{cor:noiselessell_p}
For every $0 \leq p<1$ and $\delta$, there exists $\overline{\epsilon}^*_p(\delta)$ such that for every $\epsilon< \overline{\epsilon}^*_p(\delta)$, $\Psi_{\lambda_*,p} (\sigma^2)$ has only one stable fixed point at zero. Furthermore, there exists $\underline{\epsilon}^*_p(\delta)$ such that for every $\epsilon>\underline{\epsilon}^*_p(\delta)$,  $\Psi_{\lambda_*,p} (\sigma^2)$ has more than one stable fixed point for certain distributions in $\mathcal{F}_\epsilon$. 
\end{corollary}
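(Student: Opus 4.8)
The plan is to reduce Corollary~\ref{cor:noiselessell_p} to two monotonicity statements about the minimax risk functions of \eqref{eq:definitionMp} and then invoke Theorem~\ref{thm:highestfpnoiseless}. Writing $r_p(\mu,\tau)\triangleq\mathbb{E}(\eta_p(\mu+Z;\tau)-\mu)^2$ for $Z\sim N(0,1)$, so that $r_p(0,\tau)=\mathbb{E}(\eta_p(Z;\tau))^2$, the definitions read $\overline{M}_p(\epsilon)=\inf_{\tau\ge0}\sup_{\mu\ge0}[(1-\epsilon)r_p(0,\tau)+\epsilon\,r_p(\mu,\tau)]$ and $\underline{M}_p(\epsilon)=\sup_{\mu\ge0}\inf_{\tau\ge0}[(1-\epsilon)r_p(0,\tau)+\epsilon\,r_p(\mu,\tau)]$. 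Once I show that both $\overline{M}_p(\cdot)$ and $\underline{M}_p(\cdot)$ are non-decreasing on $[0,1]$, I set
\[
\overline{\epsilon}^*_p(\delta)\triangleq\sup\{\epsilon\in[0,1]:\overline{M}_p(\epsilon)<\delta\},\qquad \underline{\epsilon}^*_p(\delta)\triangleq\inf\{\epsilon\in[0,1]:\underline{M}_p(\epsilon)>\delta\}.
\]
By monotonicity, for every $\epsilon<\overline{\epsilon}^*_p(\delta)$ there is an $\epsilon'$ in the first set with $\epsilon<\epsilon'$, so $\overline{M}_p(\epsilon)\le\overline{M}_p(\epsilon')<\delta$; Theorem~\ref{thm:highestfpnoiseless} then gives a unique stable fixed point at zero. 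Symmetrically, for $\epsilon>\underline{\epsilon}^*_p(\delta)$ we obtain $\underline{M}_p(\epsilon)>\delta$, and Theorem~\ref{thm:highestfpnoiseless} produces a distribution in $\mathcal{F}_\epsilon$ with more than one stable fixed point. Thus the entire corollary rests on the monotonicity of the two risk functions in $\epsilon$.

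The monotonicity of $\overline{M}_p$ is essentially automatic and needs no auxiliary facts. For each fixed $\tau$, since $\mu=0$ is admissible in the inner supremum, the quantity $c_\tau\triangleq\sup_{\mu\ge0}\big(r_p(\mu,\tau)-r_p(0,\tau)\big)$ is non-negative, and it is finite because for $p<1$ the shrinkage $\eta_p(u;\tau)-u$ is bounded and vanishes as $|u|\to\infty$, keeping $r_p(\mu,\tau)$ bounded in $\mu$. Hence $\sup_{\mu\ge0}[(1-\epsilon)r_p(0,\tau)+\epsilon\,r_p(\mu,\tau)]=r_p(0,\tau)+\epsilon\,c_\tau$ is affine and non-decreasing in $\epsilon$, and taking the infimum over $\tau$ preserves this, yielding $\overline{M}_p(\epsilon_1)\le\overline{M}_p(\epsilon_2)$ for $\epsilon_1\le\epsilon_2$.

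The monotonicity of $\underline{M}_p$ is the crux and is where I expect the real difficulty. It reduces to the scalar-risk inequality
\[
r_p(\mu,\tau)\ \ge\ r_p(0,\tau)\qquad\text{for all }\mu\ge0,\ \tau\ge0,
\]
i.e.\ that, at a fixed threshold, the denoising risk is smallest when the underlying coefficient is zero. Granting this, for each fixed $\mu$ the map $\tau\mapsto r_p(0,\tau)+\epsilon(r_p(\mu,\tau)-r_p(0,\tau))$ is pointwise non-decreasing in $\epsilon$, so $\inf_\tau$ and then $\sup_\mu$ preserve the ordering and $\underline{M}_p$ is non-decreasing. The obstacle is proving the displayed inequality for general $p$: it can be checked for $p=1$ and, after a short but delicate one-dimensional computation, for $p=0$ (splitting the expectation on the event that $\mu+Z$ survives the hard threshold, one verifies a scalar inequality that is nontrivial precisely at small thresholds), but for $p\in(0,1)$ the map $\eta_p$ is only defined implicitly and is discontinuous, so I would instead establish monotonicity of $\mu\mapsto r_p(\mu,\tau)$ on $[0,\infty)$ directly from the shrinkage structure of $\eta_p$, or invoke a corresponding scalar-risk property proved elsewhere in the paper. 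I would not rely on the identity $\overline{M}_p(\epsilon)=\underline{M}_p(\epsilon)$ mentioned in the paper, since that is only confirmed numerically.

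Finally, I would record the boundary behaviour that makes the two thresholds non-vacuous rather than degenerate. Letting $\tau\to\infty$ gives $\eta_p(Z;\tau)=0$ almost surely, so $\overline{M}_p(0)=\inf_\tau r_p(0,\tau)=0$; moreover $\overline{M}_p$ is concave (an infimum of affine-in-$\epsilon$ functions), hence continuous on $(0,1)$, so $\overline{M}_p(\epsilon)<\delta$ for small $\epsilon>0$ whenever $\delta>0$ and thus $\overline{\epsilon}^*_p(\delta)>0$. Likewise $\underline{M}_p(1)=\sup_\mu\inf_\tau r_p(\mu,\tau)=1$ (the infimum over $\tau$ tends to $\mathbb{E}(Z^2)=1$ as $\mu\to\infty$ and never exceeds $r_p(\mu,0)=1$), so for $\delta<1$ the set defining $\underline{\epsilon}^*_p(\delta)$ is non-empty and the second regime is genuinely realized. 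These remarks are not logically required by the statement but confirm that both regimes it describes are non-empty.
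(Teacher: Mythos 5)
Your argument has a genuine gap, and it is one you create for yourself by the way you define the two thresholds. You set $\overline{\epsilon}^*_p(\delta)=\sup\{\epsilon:\overline{M}_p(\epsilon)<\delta\}$ and $\underline{\epsilon}^*_p(\delta)=\inf\{\epsilon:\underline{M}_p(\epsilon)>\delta\}$, and with these definitions the implication ``$\epsilon<\overline{\epsilon}^*_p(\delta)\Rightarrow\overline{M}_p(\epsilon)<\delta$'' really does require monotonicity of $\overline{M}_p(\cdot)$ and $\underline{M}_p(\cdot)$ in $\epsilon$, exactly as you say. Your monotonicity proof for $\overline{M}_p$ is fine (the inner $\sup_\mu$ is affine in $\epsilon$ with nonnegative slope $c_\tau$, and $\inf_\tau$ preserves the ordering). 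But the monotonicity of $\underline{M}_p$ is reduced to the scalar inequality $r_p(\mu,\tau)\ge r_p(0,\tau)$ for all $\mu,\tau\ge 0$ and all $0\le p<1$, and you never prove it: you explicitly defer it to ``a corresponding scalar-risk property proved elsewhere in the paper.'' No such property is proved in the paper for $p\in(0,1)$ or $p=0$; the only place the paper establishes anything of this kind is the proof of Lemma \ref{lem:ell1pteq}, where $\mu\mapsto\mathbb{E}(\eta_1(\mu+Z;\tau)-\mu)^2$ is shown to be increasing for $p=1$ only, and the paper is explicit that it could not extend the corresponding argument to $p<1$ (this is precisely why $\overline{\epsilon}_p^*=\underline{\epsilon}_p^*$ is only verified numerically for $p<1$). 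So as written your proof of the second half of the corollary does not close.

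The fix is to notice that the statement is purely existential, so you are free to choose the thresholds differently, and the paper's proof does exactly that: it sets $\overline{\epsilon}_p^*(\delta)\triangleq\inf\{\epsilon:\overline{M}_p(\epsilon)\ge\delta\}$ and $\underline{\epsilon}_p^*(\delta)\triangleq\sup\{\epsilon:\underline{M}_p(\epsilon)\le\delta\}$. With these definitions no monotonicity is needed at all: any $\epsilon$ strictly below the infimum of $\{\epsilon:\overline{M}_p(\epsilon)\ge\delta\}$ cannot belong to that set, hence satisfies $\overline{M}_p(\epsilon)<\delta$, and symmetrically any $\epsilon$ strictly above the supremum of $\{\epsilon:\underline{M}_p(\epsilon)\le\delta\}$ satisfies $\underline{M}_p(\epsilon)>\delta$. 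Theorem \ref{thm:highestfpnoiseless} then gives both conclusions, and the only auxiliary facts required are the non-emptiness of the two sets, i.e.\ $\overline{M}_p(1)=1$ and $\underline{M}_p(0)=0$. Your closing remarks on non-degeneracy ($\overline{M}_p(0)=0$ plus continuity giving $\overline{\epsilon}^*_p(\delta)>0$) are a pleasant addition the paper omits, but they do not rescue the missing monotonicity step.
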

The proof is presented in Section \ref{ssec:cornoiselessproof}.

\vspace{.2cm}

Our numerical results show that $\overline{\epsilon}_p^*(\delta) = \underline{\epsilon}_p^*(\delta)$ holds for every $0 \leq p \leq 1$. If we accept this identity, then Corollary \ref{cor:noiselessell_p} proves the first type of phase transition that we observe in $\ell_p$-AMP; for certain distributions the algorithm switches from having one stable fixed point to more than one stable fixed point at $\overline{\epsilon}_p^*(\delta)$. Figure \ref{fig:localpvs1} exhibits $\overline{\epsilon}_p^*(\delta)$ for several different values of $p$ ($\underline{\epsilon}_p^*(\delta)$ has the same value).  While our simulation results confirm that $\overline{\epsilon}_p^*(\delta) = \underline{\epsilon}_p^*(\delta)$ for every $0\leq p \leq 1$, we have only proved this observation for $p=1$. 

\begin{lemma}\label{lem:ell1pteq}
For $p=1$, we have $\overline{\epsilon}_1^*(\delta) = \underline{\epsilon}_1^*(\delta)$.
\end{lemma}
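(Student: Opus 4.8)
The plan is to reduce the statement about the critical sparsity levels to an equality between a minimax and a maximin, and then to establish that equality using the special structure of the soft-thresholding risk. By Theorem~\ref{thm:highestfpnoiseless} together with the definitions packaged in Corollary~\ref{cor:noiselessell_p}, $\overline{\epsilon}_1^*(\delta)$ and $\underline{\epsilon}_1^*(\delta)$ are exactly the sparsity levels at which the increasing functions $\overline{M}_1(\epsilon)$ and $\underline{M}_1(\epsilon)$, respectively, cross the value $\delta$; since both are continuous and strictly increasing in $\epsilon$, it suffices to prove that they coincide as functions, i.e. $\overline{M}_1(\epsilon)=\underline{M}_1(\epsilon)$ for every $\epsilon\in(0,1)$. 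Writing
\[
F(\mu,\tau) \triangleq (1-\epsilon)\,\mathbb{E}\big(\eta_1(Z;\tau)\big)^2 + \epsilon\,\mathbb{E}\big(\eta_1(\mu+Z;\tau)-\mu\big)^2,
\]
this is precisely the assertion $\inf_{\tau\geq0}\sup_{\mu\geq0}F(\mu,\tau)=\sup_{\mu\geq0}\inf_{\tau\geq0}F(\mu,\tau)$. The inequality $\overline{M}_1(\epsilon)\geq\underline{M}_1(\epsilon)$ is automatic (and is already noted in the excerpt), so the content is the reverse inequality.

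I would then exploit two classical properties of the soft-thresholding risk $r(\mu,\tau)\triangleq\mathbb{E}(\eta_1(\mu+Z;\tau)-\mu)^2$: for each fixed $\tau$ it is nondecreasing in $\mu\geq0$, and $\lim_{\mu\to\infty}r(\mu,\tau)=1+\tau^2$. Both follow from the explicit expression for $r$ obtained by splitting the Gaussian integral over the three regions $\{\mu+Z>\tau\}$, $\{|\mu+Z|\leq\tau\}$, $\{\mu+Z<-\tau\}$. Monotonicity in $\mu$ gives $\sup_{\mu\geq0}F(\mu,\tau)=(1-\epsilon)\mathbb{E}(\eta_1(Z;\tau))^2+\epsilon(1+\tau^2)\triangleq h(\tau)$, so $\overline{M}_1(\epsilon)=\inf_{\tau\geq0}h(\tau)$, attained at a finite minimizer $\tau^*$ because $h(\tau)\to\infty$ as $\tau\to\infty$. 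The same monotonicity shows $g(\mu)\triangleq\inf_{\tau\geq0}F(\mu,\tau)$ is nondecreasing, whence $\underline{M}_1(\epsilon)=\sup_{\mu\geq0}g(\mu)=\lim_{\mu\to\infty}g(\mu)$.

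It remains to prove $\lim_{\mu\to\infty}g(\mu)=\inf_{\tau}h(\tau)$, i.e. to interchange the limit $\mu\to\infty$ with the infimum over $\tau$; this is the technical heart of the argument. The plan is to show that the (near-)minimizers $\tau_\mu$ of $F(\mu,\cdot)$ remain bounded as $\mu\to\infty$. Thresholds $\tau\gg\mu$ are ruled out because there $r(\mu,\tau)$ is close to $\mu^2$, forcing $F(\mu,\tau)\approx\epsilon\mu^2$ to be large; more generally one checks that large thresholds are strictly suboptimal relative to the fixed test threshold $\tau^*$ (for which $g(\mu)\leq F(\mu,\tau^*)\leq h(\tau^*)=\overline{M}_1(\epsilon)$), so that $\limsup_{\mu\to\infty}\tau_\mu<\infty$. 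Passing to a convergent subsequence $\tau_{\mu_k}\to\bar\tau$ and using the joint continuity of $r$ together with the locally uniform convergence $r(\mu,\tau)\to1+\tau^2$ on compact $\tau$-sets, one obtains $g(\mu_k)=F(\mu_k,\tau_{\mu_k})\to h(\bar\tau)\geq\inf_\tau h=\overline{M}_1(\epsilon)$. Hence $\liminf_{\mu\to\infty}g(\mu)\geq\overline{M}_1(\epsilon)$, and since $g(\mu)\leq\overline{M}_1(\epsilon)$ for all $\mu$, the limit equals $\overline{M}_1(\epsilon)$; thus $\underline{M}_1(\epsilon)=\overline{M}_1(\epsilon)$ and therefore $\overline{\epsilon}_1^*(\delta)=\underline{\epsilon}_1^*(\delta)$.

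I expect the main obstacle to be precisely this last interchange: establishing the boundedness of the optimizers $\tau_\mu$ and the locally uniform control on $r(\mu,\tau)$ needed to pass to the limit. This is exactly where the concrete structure of soft-thresholding is indispensable, namely the monotonicity of $r(\mu,\tau)$ in $\mu$, the quadratic growth of $h$, and the clean limit $1+\tau^2$. None of these is available for the discontinuous proximal maps $\eta_p$ with $p<1$, which is why the identity $\overline{M}_p=\underline{M}_p$ can only be verified numerically there and is proved rigorously just for $p=1$.
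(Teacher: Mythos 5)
Your proposal is correct and follows essentially the same route as the paper: monotonicity of the soft-thresholding risk in $\mu$ reduces $\overline{M}_1(\epsilon)$ to $\inf_\tau\bigl[(1-\epsilon)\mathbb{E}(\eta_1(Z;\tau))^2+\epsilon(1+\tau^2)\bigr]$, and the reverse inequality $\underline{M}_1\geq\overline{M}_1$ is obtained by showing the optimal thresholds stay bounded as $\mu\to\infty$ and passing to a convergent subsequence — which is exactly the content of the paper's appeal to the proof of Lemma~\ref{lem:ratiofinite} (there phrased with $\sigma=1/\mu\to0$). The only piece you assert rather than prove is the strict monotonicity and continuity of $M_1(\epsilon)$ needed to conclude $\overline{\epsilon}_1^*=\underline{\epsilon}_1^*$ from $\overline{M}_1=\underline{M}_1$; the paper supplies this with a one-line subgradient computation.
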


Proof of this claim will be presented in Section \ref{ssec:prooflemmaell1pteq}. Before we discuss and compare the phase transitions curves that are shown in Figure \ref{fig:localpvs1}, we discuss Case (ii) in which $\Psi_{\lambda_*, p}(\sigma^2)$ does not have a unique stable fixed point, but zero is still a stable fixed point.

\begin{figure}[htbp]
  \centering
  \includegraphics[width=3.2in]{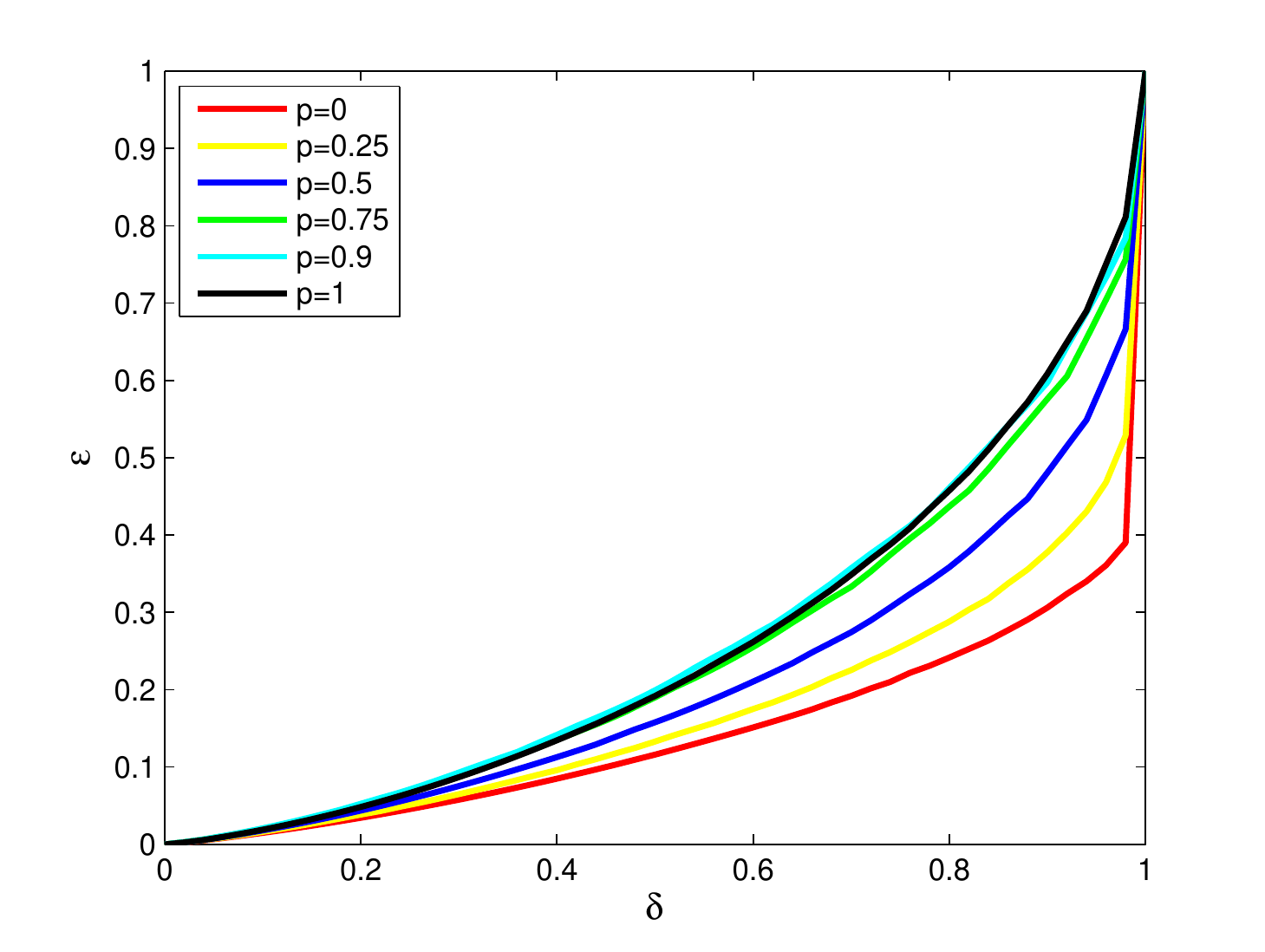}
  \caption{The value of $\overline{\epsilon}_p^*(\delta)$ as a function of $\delta$ for several different values of $p$. As is clear from the figure, the improvement gained from $\ell_p$-AMP  ($p<1$) is minor in the noiseless setting. Also, the values of $p$ that are close to $1$ are the only values of $p$ that can outperform $\ell_1$-AMP. $\ell_0$-AMP performs much worse than $\ell_1$-AMP. Note that  in these phase transition calculations we have assumed that we do not have access to a good initialization for optimal-$\lambda$ $\ell_p$-AMP.  Hence, these phase transitions are concerned with the behavior of $\ell_p$-AMP under the worst initializations. Theorem \ref{lem:noiselesslowfpless1} shows that under good initialization, optimal-$\lambda$ $\ell_p$-AMP outperforms optimal-$\lambda$ $\ell_1$-AMP by  a large margin. }
  \label{fig:localpvs1}
\end{figure}

\vspace{.2cm}

\begin{theorem}\label{lem:noiselesslowfpless1}
Let $p_X$ be an arbitrary distribution in $\mathcal{F}_{\epsilon}$. For any $ 0 \leq p<1$, $0$ is the lowest stable fixed point of $\Psi_{\lambda_*,p}(\sigma^2)$ if and only if $\delta > \epsilon$.
\end{theorem}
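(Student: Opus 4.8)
The plan is to study $\Psi_{\lambda_*,p}(\sigma^2)$ in a right neighbourhood of $\sigma=0$. Since $\sigma_w=0$ the smallest value a fixed point can take is $0$, so ``$0$ is the lowest stable fixed point'' is equivalent to ``$0$ is a stable fixed point'': if $0$ is stable it is automatically the smallest fixed point, and if it is not stable the lowest stable fixed point (which exists by Proposition \ref{proof:existsncestable}) lies strictly above $0$. Moreover $\Psi_{\lambda_*,p}(0)=0$, because taking $\lambda=0$ makes $\eta_p$ the identity and kills the risk at $\sigma=0$. Hence by Definition \ref{def:stablefp} the theorem reduces to a purely local statement: $0$ is stable iff $\Psi_{\lambda_*,p}(\sigma^2)<\sigma^2$ for all small $\sigma>0$. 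Writing $\epsilon=\mathbb{P}(X\neq0)$ (the sparsity of $p_X$) and
\[
L(\sigma)\triangleq\frac{1}{\sigma^2}\min_{\lambda}\mathbb{E}\big|\eta_p(X+\sigma Z;\lambda)-X\big|^2,
\]
the inequality $\Psi_{\lambda_*,p}(\sigma^2)<\sigma^2$ is exactly $L(\sigma)<\delta$. Everything then follows once I show $\lim_{\sigma\to0}L(\sigma)=\epsilon$ together with the one-sided information needed to decide the critical case $\delta=\epsilon$.

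To identify the limit I exploit the homogeneity of the proximal map, $\eta_p(au;a^{2-p}\lambda)=a\,\eta_p(u;\lambda)$ for $a>0$. Reparametrising every threshold by $\lambda=\tau\sigma^{2-p}$ and rescaling each active coordinate as $X/\sigma$ turns $L(\sigma)$ exactly into
\[
L(\sigma)=\inf_{\tau\ge0}\Big[(1-\epsilon)F_p(\tau)+\mathbb{E}\big(\mathbb{I}(X\neq0)\,\rho_p(X/\sigma,\tau)\big)\Big],
\]
where $F_p(\tau)=\mathbb{E}\,\eta_p(Z;\tau)^2$ is the normalised risk at a zero coordinate and $\rho_p(\mu,\tau)=\mathbb{E}|\eta_p(\mu+Z;\tau)-\mu|^2$ the normalised risk at a spike of height $\mu$. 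The decisive elementary fact, and the place where $p<1$ enters, is that the residual shrinkage bias of $\eta_p$ at a large argument is $O(\mu^{p-1})\to0$, so for fixed $\tau$ one has $\lim_{\mu\to\infty}\rho_p(\mu,\tau)=1$; for $p=1$ the soft-threshold bias instead leaves $\rho_1(\mu,\tau)\to1+\tau^2$, which is exactly why the least-favourable quantity $M_1(\epsilon)$, and not $\epsilon$, governs the $p=1$ transition. For the \emph{upper bound} I fix $\tau$, let $\sigma\to0$ (so $X/\sigma\to\pm\infty$ on $\{X\neq0\}$) and pass to the limit under the expectation by dominated convergence, the envelope being the finite constant $\sup_{\mu}\rho_p(\mu,\tau)$; this gives $\limsup_{\sigma\to0}L(\sigma)\le(1-\epsilon)F_p(\tau)+\epsilon$, and letting $\tau\to\infty$ (where $F_p(\tau)\to0$) yields $\limsup_{\sigma\to0}L(\sigma)\le\epsilon$. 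This already settles the ``if'' direction: when $\delta>\epsilon$ we have $L(\sigma)<\delta$ for all small $\sigma$, so $0$ is a, hence the lowest, stable fixed point.

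For the \emph{lower bound} I discard the nonnegative zero-coordinate term and reduce to $\inf_{\tau\ge0}\mathbb{E}\big(\mathbb{I}(X\neq0)\rho_p(X/\sigma,\tau)\big)\ge\epsilon-o(1)$, together with the sharp form $\ge\epsilon$ needed at the boundary. The engine here is that for $p<1$ the map $\eta_p(\cdot;\lambda)$ is \emph{non-contractive on its active set}: differentiating the stationarity relation $u=x+\lambda p\,x^{p-1}$ gives $\eta_p'\ge1$ there (with an upward jump at the threshold), so applying $\eta_p$ to $\mu+Z$ inflates rather than shrinks the variance. Consequently $\rho_p(\mu,\tau)\ge\mathrm{Var}\,\eta_p(\mu+Z;\tau)\ge1-o_\mu(1)$ uniformly in $\tau$, the only deflation coming from the thresholding event $\{\mu+Z<\tilde\lambda\}$, whose probability is negligible for large $\mu$. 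Granting this structural bound I split $\{X\neq0\}$ into $\{0<|X|\le\mu_0\sigma\}$, whose $p_X$-mass tends to $0$ as $\sigma\to0$, and $\{|X|>\mu_0\sigma\}$, on which $X/\sigma>\mu_0$ forces each coordinate to contribute at least $1-o(1)$; integrating gives $\liminf_{\sigma\to0}L(\sigma)\ge\epsilon$, and the one-sided inequality $\rho_p(\mu,\tau)\ge1$ for large $\mu$ upgrades this to $L(\sigma)\ge\epsilon$ for small $\sigma$. Hence when $\delta\le\epsilon$ we obtain $\Psi_{\lambda_*,p}(\sigma^2)=\sigma^2L(\sigma)/\delta\ge\sigma^2$ for all small $\sigma>0$, so $0$ fails to be stable, which is the ``only if'' direction including the equality case $\delta=\epsilon$.

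\textbf{Main obstacle.} The genuine work is the structural lower bound on $\rho_p$: proving, \emph{uniformly over the threshold} $\tau$, that no $\ell_p$-threshold can denoise a growing spike below the unit noise floor, and pinning the limit to exactly $1$ with the one-sided inequality $\rho_p(\mu,\tau)\ge1$ for large $\mu$ so as to decide the critical case $\delta=\epsilon$. Making the expansiveness argument rigorous requires the explicit threshold $\tilde\lambda$ from Lemma \ref{lem:threshform} and a careful large-$\mu$ expansion of $\eta_p(\mu+Z;\tau)$, separating the regimes in which $\tilde\lambda$ is negligible, comparable, or large relative to $\mu$, and controlling the thresholding event uniformly in $\tau$. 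By contrast the dominated-convergence step in the upper bound (bounded envelope $\sup_\mu\rho_p(\mu,\tau)<\infty$) and the vanishing-mass estimate $\mathbb{P}(0<|X|\le\mu_0\sigma)\to0$ are routine.
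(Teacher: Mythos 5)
Your reduction (stability of $0$ $\Leftrightarrow$ $\Psi_{\lambda_*,p}(\sigma^2)<\sigma^2$ near $0$ $\Leftrightarrow$ $L(\sigma)<\delta$) and your entire upper bound are the paper's argument: the paper also restricts to policies $\lambda=\tau\sigma^{2-p}$, uses the scale invariance $\eta_p(\alpha u;\alpha^{2-p}\lambda)=\alpha\eta_p(u;\lambda)$, passes $\sigma\to0$ under the expectation with the bounded-shrinkage envelope $|\eta_p(u;\tau)-u|\le c_p\tau^{1/(2-p)}$, and then sends $\tau\to\infty$ to kill the null-coordinate term (see \eqref{eq:dervativezero1}--\eqref{eq:upperboundder0}). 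Where you genuinely diverge is the lower bound: the paper conditions on $\{|U|\ge\mu\}$ and then proves the much stronger Propositions \ref{lem:riskbehsmallsigma} and \ref{lem:riskbehsmallsigmazero}, i.e.\ it pins down the exact rate of the \emph{optimal} $\tau_*(\sigma)$ ($\tau_*\to\infty$, $\tau_*\sigma^{2-p}\to0$, Lemmas \ref{rough:rate}--\ref{sharp:rate}) and a second-order expansion of the risk, whereas you propose a uniform-in-$\tau$ structural bound on the single-spike risk $\rho_p(\mu,\tau)$. That is an attractive shortcut in principle, but as written it has two concrete problems.

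First, the displayed chain $\rho_p(\mu,\tau)\ge\mathrm{Var}\,\eta_p(\mu+Z;\tau)\ge1-o_\mu(1)$ \emph{uniformly in $\tau$} is false. The infimum over $\tau$ ranges over thresholds $\tilde\lambda=c_p\tau^{1/(2-p)}$ that may be comparable to or larger than $\mu$; when $\tilde\lambda\ge\mu$, the event $\{|\mu+Z|<\tilde\lambda\}$ has probability near $1$ (or at least bounded below), the output is $0$ there, and $\mathrm{Var}\,\eta_p(\mu+Z;\tau)$ collapses to nearly $0$. In that regime $\rho_p\ge1$ survives only through the squared bias $(\mathbb{E}\eta_p-\mu)^2\approx\mu^2\,\mathbb{P}(\mathrm{kill})$, so the argument must be run on the full decomposition $\rho_p=\mathrm{Var}+(\mathbb{E}\eta_p-\mu)^2$ with a case split on $\tilde\lambda/\mu$, plus care with the truncation of $Z$ to the two active rays (the expansiveness $\eta_p'\ge1$ does not hold across the gap between them). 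You flag this case analysis in your ``main obstacle'' paragraph but do not supply it, and it is precisely the content that the paper's Lemmas \ref{rough:rate}--\ref{sharp:rate} replace. Second, your claimed upgrade to the exact inequality $L(\sigma)\ge\epsilon$ for all small $\sigma$ does not follow from your split: for a general $p_X\in\mathcal{F}_\epsilon$ with mass accumulating at the origin, the coordinates with $0<|X|\le\mu_1\sigma$ contribute essentially nothing, so you only get $L(\sigma)\ge\epsilon-\mathbb{P}(0<|X|\le\mu_1\sigma)$, which yields $\liminf_{\sigma\to0}L(\sigma)\ge\epsilon$ (hence the derivative $\epsilon/\delta$ at $0$, matching \eqref{eq:lowerboundgood1}) but not the pointwise bound you invoke to decide $\delta=\epsilon$. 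This only affects the boundary case, where the paper's own argument is likewise only at the level of the first derivative, but you should not claim the stronger statement your estimate does not deliver.
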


\vspace{.2cm}

This theorem is proved in Section \ref{proof sec:thmnoiselesslfp}. There are two main features of this theorem that we would like to emphasize.
\vspace{.2cm}

\begin{remark}
Compared to Theorem \ref{thm:highestfpnoiseless}, this theorem is universal in the sense that the actual distribution that is picked from $\mathcal{F}_{\epsilon}$ does not have any impact on the behavior of the fixed point at $0$. Furthermore, the number of measurements $\delta$  that is required for the stability of this fixed point is the same as the sparsity level $\epsilon$.  
\end{remark}

\vspace{.2cm}

\begin{remark}
As long as $\delta> \epsilon$, zero is a stable fixed point for every value of $p$. As we will see later in Section \ref{sec:discussion} (under the assumptions of Replica method), this fixed point gives the asymptotic results for the global minimizer of \eqref{eq:ell0minimization}. Therefore, for every $0 \leq p<1$, \eqref{eq:ell0minimization} recovers $x_o$ accurately as long as $\delta> \epsilon$. This result seems to be counter-intuitive; if we are concerned with the noiseless settings, all $\ell_p$-minimization algorithms are the same. We will shed some light on this surprising phenomenon in Section \ref{sec:analysisnoisy}, where we consider measurement noise. 
\end{remark}

\vspace{.2cm}

 To provide a fair comparison between optimally tuned $\ell_p$ and $\ell_1$-AMP algorithms, we study the performance of optimally tuned $\ell_1$-AMP in the following theorem. This result is similar to the results  proved in \cite{donoho2011noise}. Since for $p=1$ we have already showed $\overline{M}_1(\epsilon) = \underline{M}_1(\epsilon)$ in the proof of Lemma \ref{lem:ell1pteq}, in the rest of the paper we will use the notation $M_1(\epsilon)$ instead. 

\vspace{.3cm}

\noindent \begin{proposition}\label{prop:ell_1pt} 
Optimal-$\lambda$ $\ell_1$-AMP has a unique stable fixed point. Furthermore, in the noiseless setting and for every $p_X \in \mathcal{F}_\epsilon$, $0$ is the unique stable fixed point of $\Psi_{\lambda_*, 1} (\sigma^2)$ if and only if
\[
\delta >M_1(\epsilon),
\]
where $M_1(\epsilon)$ defined in \eqref{eq:definitionMp} with $p=1$ can be simplified to:
\begin{align}
&M_1(\epsilon)= \inf_{\tau\geq 0} (1- \epsilon)  \mathbb{E}( \eta_1^2(Z;\tau))+ \epsilon(1+ \tau^2). \nonumber
\end{align}
\end{proposition}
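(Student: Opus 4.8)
The plan is to route the entire argument through the single scalar map
\[
F(\sigma^2)\;:=\;\frac1\delta\,\min_{\lambda\ge0}\;\mathbb{E}\big(\eta_1(X+\sigma Z;\lambda)-X\big)^2,
\]
so that, by \eqref{eq:defininPsi1} and \eqref{eq:optlambdadef}, $\Psi_{\lambda_*,1}(\sigma^2)=\sigma_w^2+F(\sigma^2)$. I would establish three facts about $F$: (i) $F$ is concave and increasing on $[0,\infty)$ with $F(0)=0$ and $F$ bounded; (ii) its slope at the origin is $F'(0^+)=M_1(\epsilon_X)/\delta$, where $\epsilon_X:=\mathbb{P}(X\neq0)\le\epsilon$; and (iii) $F(\sigma^2)\le \tfrac1\delta M_1(\epsilon)\,\sigma^2$ for every $\sigma$. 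Granting these, the first assertion of the proposition follows from (i), and the noiseless equivalence from (ii)--(iii). Throughout I write $r(\sigma^2,\lambda):=\mathbb{E}(\eta_1(X+\sigma Z;\lambda)-X)^2$ and exploit the homogeneity $\eta_1(au;a\lambda)=a\,\eta_1(u;\lambda)$ for $a>0$.

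For the uniqueness claim the crux is the concavity of $F$ in $\sigma^2$, which is exactly the kind of statement established for optimally tuned soft thresholding in \cite{donoho2011noise}. I would prove it by writing $r(\sigma^2,\lambda)=\mathbb{E}_X[(G_X*\phi_{\sigma^2})(X)]$ with $G_x(u)=(\eta_1(u;\lambda)-x)^2$ and $\phi_{\sigma^2}$ the $N(0,\sigma^2)$ density, invoking the heat-semigroup identity $\partial_{\sigma^2}(G_x*\phi_{\sigma^2})=\tfrac12(G_x''*\phi_{\sigma^2})$, and showing that the resulting second derivative in $\sigma^2$ of the optimally tuned risk is nonpositive. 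Boundedness is immediate since $\lambda\to\infty$ gives $F(\sigma^2)\le\tfrac1\delta\mathbb{E}(X^2)$, and $F(0)=0$ because $\eta_1(X;0)=X$. Then $g(\sigma^2):=\Psi_{\lambda_*,1}(\sigma^2)-\sigma^2=\sigma_w^2+F(\sigma^2)-\sigma^2$ is concave with $g(0)=\sigma_w^2\ge0$ and $g(\sigma^2)\to-\infty$; a concave function that is nonnegative at $0$ and diverges to $-\infty$ crosses zero exactly once, from above. By Definition \ref{def:stablefp} this single crossing is the unique stable fixed point (concavity also forces $F$ increasing, as $F'$ then decreases to $0$), proving the first sentence for arbitrary $\sigma_w$ and $p_X$.

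For the noiseless characterization I set $\sigma_w=0$, so $\Psi_{\lambda_*,1}=F$ and $0$ is automatically a fixed point. First I simplify $M_1(\epsilon)$: writing $R(\mu,\tau):=\mathbb{E}(\eta_1(\mu+Z;\tau)-\mu)^2$, a monotonicity computation shows $R$ increases in $\mu$ to its limit $\mathbb{E}(Z-\tau)^2=1+\tau^2$, so the inner supremum in \eqref{eq:definitionMp} equals $1+\tau^2$ and, using $\overline M_1=\underline M_1=M_1$ from Lemma \ref{lem:ell1pteq}, $M_1(\epsilon)=\inf_\tau[(1-\epsilon)\mathbb{E}\eta_1^2(Z;\tau)+\epsilon(1+\tau^2)]$; the same expression shows $M_1$ is nondecreasing in $\epsilon$ (since $1+\tau^2\ge R(0,\tau)=\mathbb{E}\eta_1^2(Z;\tau)$). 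Taking the proportional threshold $\lambda=\tau\sigma$ and using homogeneity on each of the events $\{X=0\}$ and $\{X\neq0\}$ gives $r(\sigma^2,\tau\sigma)=\sigma^2[(1-\epsilon_X)\mathbb{E}\eta_1^2(Z;\tau)+\mathbb{E}(R(X/\sigma,\tau)\mathbb{I}(X\neq0))]\le\sigma^2[(1-\epsilon_X)\mathbb{E}\eta_1^2(Z;\tau)+\epsilon_X(1+\tau^2)]$, and minimizing over $\tau$ yields the sandwich $F(\sigma^2)\le\tfrac{M_1(\epsilon_X)}{\delta}\sigma^2\le\tfrac{M_1(\epsilon)}{\delta}\sigma^2$. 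Hence if $\delta>M_1(\epsilon)$ then $F(\sigma^2)<\sigma^2$ for every $\sigma^2>0$ and every $p_X\in\mathcal{F}_\epsilon$, so $0$ is the unique stable fixed point. Conversely, letting $\sigma\to0$ in the same expression (so $X/\sigma\to\infty$ on $\{X\neq0\}$) identifies the slope $F'(0^+)=M_1(\epsilon_X)/\delta$; for the distribution placing mass $1-\epsilon$ at $0$ and mass $\epsilon$ at any point $\mu\neq0$ we have $\epsilon_X=\epsilon$, so $F'(0^+)=M_1(\epsilon)/\delta>1$ whenever $\delta<M_1(\epsilon)$. For that $p_X$ the point $0$ is unstable, and by the uniqueness already proven the unique stable fixed point is strictly positive, so $0$ is not the unique stable fixed point. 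This gives the equivalence.

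The main obstacle is the concavity of $F$ in Step (i): the fixed-$\lambda$ slices $r(\cdot,\lambda)$ are actually \emph{convex} in $\sigma^2$ (this is already transparent for $X\equiv0$), so concavity is a genuinely envelope-level phenomenon that must be extracted from the optimal dependence $\lambda_*(\sigma)$, via a careful sign analysis of the Gaussian integrals in the spirit of \cite{donoho2011noise}. The second delicate point is the limit interchange behind $F'(0^+)=M_1(\epsilon_X)/\delta$ and the tightness of the linear sandwich: one must justify passing $\lim_{\sigma\to0}$ through $\inf_\tau$ and through the expectation over $\{X\neq0\}$, which needs a uniform-integrability estimate on $R(X/\sigma,\tau)$ together with the fact that non-proportional thresholds are suboptimal as $\sigma\to0$ (a fixed $\lambda>0$ leaves an $\Theta(1)$ bias on the nonzero mass). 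Finally, the boundary case $\delta=M_1(\epsilon)$ is critical---there $F'(0^+)=1$---and either a second-order analysis or the observation that the least-favorable risk becomes exactly linear only in the $\mu\to\infty$ limit is required to confirm that the strict inequality $\delta>M_1(\epsilon)$ is the correct threshold.
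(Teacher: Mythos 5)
Your overall architecture is sound and reaches the same conclusions, but your route to uniqueness differs from the paper's in an instructive way. The paper does \emph{not} prove concavity of the optimally tuned risk $F(\sigma^2)=\frac{1}{\delta}\inf_\lambda \mathbb{E}(\eta_1(X+\sigma Z;\lambda)-X)^2$ directly; instead it cites Lemma \ref{lem:concavityell_1} (concavity in $\sigma^2$ of the \emph{fixed proportional-threshold} risk $\bar{\psi}_{\beta,1}(\sigma^2)$, from \cite{donoho2009supporting}), proves in Lemma \ref{lem:ratiofinite} that $\lambda_*(\sigma)/\sigma$ converges to a finite nonzero $\beta_*$, and then runs a comparison argument: if $\Psi_{\lambda_*,1}$ had two positive fixed points $\sigma_1^2<\sigma_2^2$, the proportional policy with $\beta_*=\lambda_*(\sigma_1)/\sigma_1$ would satisfy $\bar{\psi}_{\beta_*,1}(\sigma_2^2)<\sigma_2^2=\Psi_{\lambda_*,1}(\sigma_2^2)$, contradicting optimality of $\lambda_*$. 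Your claim that $F$ itself is concave is true and yields a cleaner one-shot uniqueness proof via the single-crossing argument you describe; but the right way to get it is not the heat-semigroup computation you propose. Since for each fixed $\sigma>0$ the map $\lambda\mapsto\beta=\lambda/\sigma$ is a bijection, $F(\sigma^2)=\inf_{\beta\ge 0}\bar{\psi}_{\beta,1}(\sigma^2)$ is an infimum of concave functions of $\sigma^2$ and hence concave --- one line, given the cited lemma. As you yourself note, the fixed-$\lambda$ slices are convex, so differentiating the envelope twice in $\sigma^2$ requires controlling $\lambda_*'(\sigma)$ through a cross term that the envelope theorem does not give you; as written, that step does not close. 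The remainder of your argument (the monotonicity of $\mathbb{E}(\eta_1(\mu+Z;\tau)-\mu)^2$ in $\mu$ giving the simplified form of $M_1(\epsilon)$, the scale-invariance sandwich $F(\sigma^2)\le M_1(\epsilon)\sigma^2/\delta$ for sufficiency, and the slope $F'(0^+)=M_1(\epsilon)/\delta$ at a two-point least-favorable distribution for necessity) matches the paper's proof essentially step for step.

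Two residual points. First, the limit interchange behind $F'(0^+)=M_1(\epsilon_X)/\delta$, which you correctly flag as delicate, is exactly what the paper's Lemma \ref{lem:ratiofinite} supplies (ruling out $\tau_*(\sigma)\to 0$ and $\tau_*(\sigma)\to\infty$ by contradiction, then identifying the limit as the minimizer of $\epsilon(1+\beta^2)+(1-\epsilon)\mathbb{E}\eta_1^2(Z;\beta)$); your proposal would need an equivalent lemma rather than a remark. Second, your observation about the boundary case $\delta=M_1(\epsilon)$ is apt: there the least-favorable risk is strictly below the linear envelope at every finite $\sigma$, so zero remains the unique stable fixed point at equality, and the paper's ``if and only if'' is, strictly speaking, only established for $\delta\neq M_1(\epsilon)$; neither your argument nor the paper's resolves that measure-zero case, and it does not affect the phase-transition statement in Corollary \ref{cor:ell_1PT}.
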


\vspace{.2cm}

We present the proof of this proposition in Section \ref{sec:proofprop1}.  Based on this result, we define the phase transition of the optimally tuned $\ell_1$-AMP. Denote
\[
\epsilon_1^*(\delta) \triangleq \sup\{\epsilon : M_1(\epsilon)<\delta\}.
\]

\begin{corollary}\label{cor:ell_1PT} In the noiseless setting, if $\epsilon< \epsilon_1^*(\delta)$, the state evolution of optimal-$\lambda$ $\ell_1$-AMP has only one stable fixed point at zero for every $p_X \in \mathcal{F}_{\epsilon}$. Furthermore, if $\epsilon> \epsilon_1^*(\delta)$, for every $p_X \in \mathcal{F}_\epsilon$  the fixed point at zero becomes unstable and it will have one non-zero stable fixed point.  
\end{corollary}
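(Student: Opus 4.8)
The plan is to recognize that Corollary \ref{cor:ell_1PT} is nothing more than Proposition \ref{prop:ell_1pt} re-expressed with $\epsilon$ (rather than $\delta$) as the free variable, together with the observation that when zero fails to be the stable fixed point it must be replaced by a nonzero one. Proposition \ref{prop:ell_1pt} already tells us that $\Psi_{\lambda_*,1}(\sigma^2)$ has a unique stable fixed point, and that this point sits at zero if and only if $\delta > M_1(\epsilon)$. Hence the entire task reduces to converting the condition $\delta > M_1(\epsilon)$ into the condition $\epsilon < \epsilon_1^*(\delta)$, where $\epsilon_1^*(\delta)=\sup\{\epsilon: M_1(\epsilon)<\delta\}$. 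This conversion is purely a statement about the monotonicity and continuity of the one-dimensional map $\epsilon \mapsto M_1(\epsilon)$.

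First I would establish the two analytic properties of $M_1$ that make the inversion valid. Writing
\[
M_1(\epsilon)=\inf_{\tau\geq 0}\; g(\epsilon,\tau), \qquad g(\epsilon,\tau)\triangleq (1-\epsilon)\,\mathbb{E}\big(\eta_1^2(Z;\tau)\big)+\epsilon(1+\tau^2),
\]
note that for each fixed $\tau$ the function $g(\cdot,\tau)$ is affine in $\epsilon$, so $M_1$ is a pointwise infimum of affine functions and is therefore concave, hence continuous on $(0,1)$. For strict monotonicity I would compute $\partial_\epsilon g(\epsilon,\tau)=(1+\tau^2)-\mathbb{E}(\eta_1^2(Z;\tau))$, which is strictly positive whenever $\tau>0$ because $\mathbb{E}(\eta_1^2(Z;\tau))\leq \mathbb{E}(Z^2)=1<1+\tau^2$. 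Since the minimizing threshold in the definition of $M_1(\epsilon)$ is strictly positive for $\epsilon<1$ (the least-favorable problem never prefers $\tau=0$), evaluating $g(\epsilon_1,\tau^\star_{\epsilon_2})$ for $\epsilon_1<\epsilon_2$, where $\tau^\star_{\epsilon_2}$ attains the infimum at $\epsilon_2$, yields $M_1(\epsilon_1)<M_1(\epsilon_2)$. Continuity then forces $M_1(\epsilon_1^*(\delta))=\delta$, and strict monotonicity gives $M_1(\epsilon)<\delta$ for $\epsilon<\epsilon_1^*(\delta)$ and $M_1(\epsilon)>\delta$ for $\epsilon>\epsilon_1^*(\delta)$.

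With these equivalences in hand the two claims follow. For $\epsilon<\epsilon_1^*(\delta)$ we have $\delta>M_1(\epsilon)$, so Proposition \ref{prop:ell_1pt} immediately gives that zero is the unique stable fixed point for every $p_X\in\mathcal{F}_\epsilon$. For $\epsilon>\epsilon_1^*(\delta)$ we have $\delta<M_1(\epsilon)$, so the ``if and only if'' in Proposition \ref{prop:ell_1pt} says zero is \emph{not} the unique stable fixed point; combined with the first assertion of that proposition (a unique stable fixed point always exists) this forces the unique stable fixed point to be nonzero. To upgrade ``zero is not stable'' to ``zero is unstable,'' I would use the scale-invariance of the soft-threshold minimax risk, which yields $\lim_{\sigma\to 0}\Psi_{\lambda_*,1}(\sigma^2)/\sigma^2=M_1(\epsilon)/\delta$; since this ratio exceeds $1$ strictly when $\epsilon>\epsilon_1^*(\delta)$, we get $\Psi_{\lambda_*,1}(\sigma^2)>\sigma^2$ for all small $\sigma>0$, which is precisely Definition \ref{def:unstablefp} of an unstable fixed point at zero.

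The step I expect to be the main obstacle is the strict monotonicity of $M_1$, and specifically the claim that the minimizing threshold is bounded away from zero, since the affine slope $\partial_\epsilon g$ vanishes exactly at $\tau=0$; this is where the argument would be most delicate and would lean on the structure of the optimally tuned soft-threshold risk already analyzed in the proof of Proposition \ref{prop:ell_1pt}. The strictness at the boundary is also what distinguishes an unstable zero from a merely half-stable one, so care is needed to tie the strict inequality $M_1(\epsilon)>\delta$ to the strict slope condition $\lim_{\sigma\to0}\Psi_{\lambda_*,1}(\sigma^2)/\sigma^2>1$.
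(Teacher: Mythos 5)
Your proposal is correct and follows essentially the same route the paper intends: the paper explicitly skips this proof as ``a simple implication of Proposition \ref{prop:ell_1pt}'' analogous to Corollary \ref{cor:noiselessell_p}, and the strict monotonicity and continuity of $M_1(\epsilon)$ that you establish (via the positive slope $1+\tau_*^2-\mathbb{E}\,\eta_1^2(Z;\tau_*)$ at the strictly positive minimizer $\tau_*$) is exactly the argument already given in the paper's proof of Lemma \ref{lem:ell1pteq}, while the instability of zero via $\lim_{\sigma\to 0}\Psi_{\lambda_*,1}(\sigma^2)/\sigma^2=M_1(\epsilon)/\delta>1$ matches \eqref{eq:ptoofptell11} in the proof of Proposition \ref{prop:ell_1pt}. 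The positivity of the optimal threshold, which you flag as the delicate step, is also already covered by the derivative computation at $\beta=0$ in the proof of Lemma \ref{lem:ratiofinite}.
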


\vspace{.2cm}

The proof is a simple implication of Proposition \ref{prop:ell_1pt} and  is similar to the proof of Corollary \ref{cor:noiselessell_p}. Hence it is skipped here. We can now compare the performance of optimal-$\lambda$ $\ell_p$-AMP with optimal-$\lambda$ $\ell_1$-AMP. We first emphasize on the following points:

\begin{enumerate}
\item[(i)] Optimal-$\lambda$ $\ell_1$-AMP has only one stable fixed point, while in general optimal-$\lambda$ $\ell_p$-AMP has multiple stable fixed points.

\item[(ii)] In the noiseless setting, $0$ is a fixed point for both optimal-$\lambda$ $\ell_1$-AMP and optimal-$\lambda$ $\ell_p$-AMP. The stability of this fixed point only depends on sparsity level $\epsilon$ and does not depend on the specific choice of $p_X$ that is picked from $\mathcal{F}_{\epsilon}$. The range of the values of $\epsilon$ for which $0$ is a stable fixed point of optimal-$\lambda$ $\ell_p$-AMP is much wider than that of optimal-$\lambda$ $\ell_1$-AMP  as shown in Figure \ref{fig:globalpvs1}. 

\begin{figure}
\begin{center}
\includegraphics[width=3.2in]{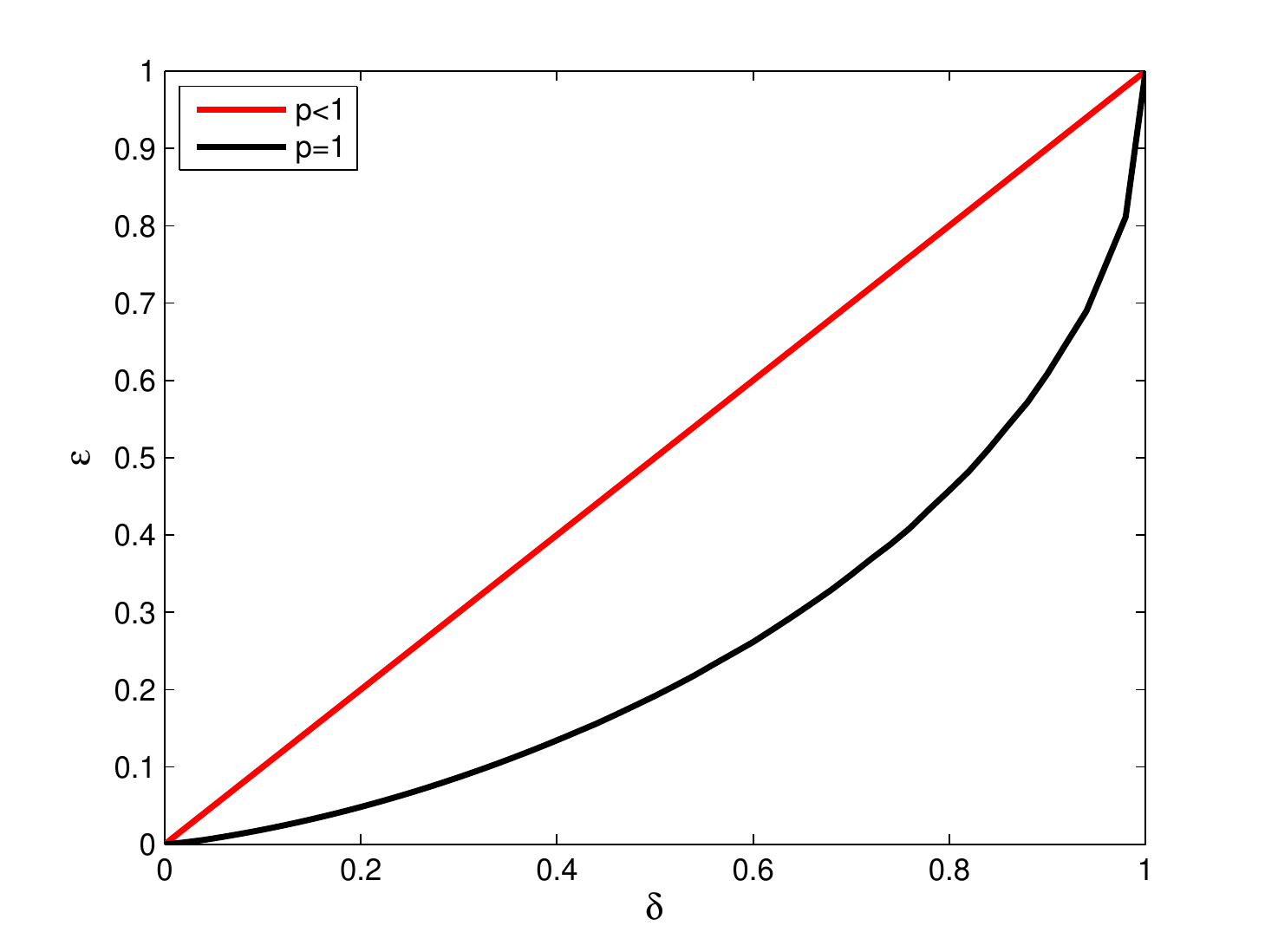}
\caption{Comparison of the best performance of optimal-$\lambda$ $\ell_p$-AMP for $p<1$ (under the best initialization) with optimal-$\lambda$ $\ell_1$-AMP. The phase transition is the same for every $p<1$. According to Replica method, the phase transition of  optimal-$\lambda$ $\ell_p$-AMP corresponds to the phase transition curve of the solution of \eqref{eq:ell0minimization}. }
\label{fig:globalpvs1}
\end{center}
\end{figure}

\item[(iii)] $\Psi_{\lambda_*,p} (\sigma^2)$ may have another stable fixed point in addition to $0$ for $0 \leq p<1$. The value of $\epsilon$ below which $\Psi_{\lambda_*,p} (\sigma^2)$ has only one stable fixed point at zero depends on the distribution $p_X$. Theorem \ref{thm:highestfpnoiseless} characterizes the condition under which for every $p_X \in \mathcal{F}_{\epsilon}$, zero is the unique fixed point. This specifies another phase transition for the $\ell_p$-AMP that we called $\overline{\epsilon}_p^*(\delta)$ (note that in this argument we are assuming the equality of $\overline{\epsilon}_p^*(\delta) = \underline{\epsilon}^*_p(\delta)$). These phase transition curves are exhibited in Figure \ref{fig:localpvs1}. As is clear from the figure, for small values of $p$, the corresponding phase transition curve falls much below the phase transition curve of optimally tuned $\ell_1$-AMP. For $p>0.9$, some improvement can be gained from $\ell_p$-AMP, but the improvement is marginal. 

\end{enumerate}

As is clear from the comparison of the phase transitions in Figure \ref{fig:localpvs1} and Figure \ref{fig:globalpvs1}, a good initialization can lead to major improvement in the performance of $\ell_p$-AMP for $p<1$. According to Folklore Theorem (iii) mentioned in Section \ref{sec:intro}, we expect $p$-continuation to provide such initialization. Hence, we study the performance of the optimal-$(p, \lambda)$ $\ell_p$-AMP. Refer to Section \ref{ssec:discussionwhypolicy} for more information on the connection of the optimal adaptation policy and $p$-continuation that we discussed in the introduction. 

\vspace{.3cm}

\begin{theorem}\label{thm:highestfpnoiselessoptp}
If $\inf_{0\leq p\leq 1} \overline{M}_p(\epsilon) <\delta$, then the highest stable fixed point of optimal-$(p,\lambda)$ $\ell_p$-AMP happens at zero. In other words, $\Psi_{\lambda_*,p_*} (\sigma^2)$ has a unique stable fixed point at zero. Furthermore, if 
$$\mathop {\sup }\limits_{\mu  \geq 0} \inf_{0 \leq p \leq 1} \mathop {\inf }\limits_{\tau\geq0}  \left[ {(1 - \epsilon )\mathbb{E}{{\left( {{\eta _p}(Z;\tau )} \right)}^2} + \epsilon  \mathbb{E}{{\left( {{\eta _p}(\mu  + Z;\tau ) - \mu } \right)}^2}} \right]>\delta,$$
then there exists  a distribution $p_X \in \mathcal{F}_{\epsilon}$ for which $\Psi_{\lambda_*, p_*}$ has an extra stable fixed point in addition to zero.
\end{theorem}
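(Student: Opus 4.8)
The plan is to mirror the proof of Theorem \ref{thm:highestfpnoiseless}, the only new ingredient being the extra minimization over $p$ that the optimal adaptation policy performs at every value of $\sigma$. Throughout I would use the scaling identity $\eta_p(a u;\lambda) = a\,\eta_p(u; a^{p-2}\lambda)$, which lets me reparametrize the threshold as $\tau = \sigma^{p-2}\lambda$ and reduce the noiseless state-evolution map, evaluated at a distribution with mass $1-\epsilon$ at $0$ and the remaining mass on nonzero atoms, to the normalized risk
\begin{equation*}
F(\mu,p,\tau) \triangleq (1-\epsilon)\,\mathbb{E}\big(\eta_p(Z;\tau)\big)^2 + \epsilon\,\mathbb{E}\big(\eta_p(\mu+Z;\tau)-\mu\big)^2 .
\end{equation*}
The two structural facts I would record first are: (a) since the adaptation policy minimizes over $(\lambda,p)$ jointly at each $\sigma$, we have $\Psi_{\lambda_*,p_*}(\sigma^2) = \min_{0\le p\le 1}\Psi_{\lambda_*,p}(\sigma^2)$; and (b) for a fixed pair $(p,\tau)$ the normalized risk of an arbitrary $p_X\in\mathcal{F}_\epsilon$ is dominated by its worst point-mass counterpart, $\mathbb{E}(\eta_p(X/\sigma+Z;\tau)-X/\sigma)^2 \le \sup_{\mu\ge 0}F(\mu,p,\tau)$, because $\sup_\mu\mathbb{E}(\eta_p(\mu+Z;\tau)-\mu)^2 \ge \mathbb{E}(\eta_p(Z;\tau))^2$ (the supremum includes $\mu=0$) and the mass on the nonzero part is at most $\epsilon$.

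For the first (sufficiency) part I would invoke the intermediate bound already established in the proof of Theorem \ref{thm:highestfpnoiseless}, namely $\Psi_{\lambda_*,p}(\sigma^2)\le \tfrac{\sigma^2}{\delta}\,\overline{M}_p(\epsilon)$ for every fixed $p$ and every $p_X\in\mathcal{F}_\epsilon$ (this is fact (b) followed by $\inf_\tau$). Given $\inf_{0\le p\le 1}\overline{M}_p(\epsilon)<\delta$, I choose $\bar p$ with $\overline{M}_{\bar p}(\epsilon)<\delta$. Then, using fact (a), $\Psi_{\lambda_*,p_*}(\sigma^2)\le \Psi_{\lambda_*,\bar p}(\sigma^2)\le \tfrac{\sigma^2}{\delta}\overline{M}_{\bar p}(\epsilon)<\sigma^2$ for every $\sigma^2>0$. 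Hence the map stays strictly below the diagonal away from the origin, so $0$ is the unique (and globally attracting) stable fixed point, which is the claim.

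For the second part I would exhibit the unfavorable distribution by a point-mass construction together with an intermediate-value argument. Writing $G(\mu)\triangleq \inf_{0\le p\le1}\inf_{\tau\ge0}F(\mu,p,\tau)$, the hypothesis reads $\sup_{\mu\ge0}G(\mu)>\delta$, so I pick $\mu_1>0$ with $G(\mu_1)>\delta$. I take $p_X$ to place mass $1-\epsilon$ at $0$ and mass $\epsilon$ at a single atom $\mu_0>0$; by the scaling reduction and fact (a), $\Psi_{\lambda_*,p_*}(\sigma^2)=\tfrac{\sigma^2}{\delta}\,G(\mu_0/\sigma)$. Evaluating at $\sigma_1=\mu_0/\mu_1$ gives $\Psi_{\lambda_*,p_*}(\sigma_1^2)=\tfrac{\sigma_1^2}{\delta}G(\mu_1)>\sigma_1^2$, so the map lies strictly above the diagonal at $\sigma_1$. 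On the other hand, as $\sigma\to 0$ (so $\mu_0/\sigma\to\infty$) one has $G(\mu_0/\sigma)\to\epsilon$, whence $\Psi_{\lambda_*,p_*}(\sigma^2)\approx\tfrac{\epsilon}{\delta}\sigma^2<\sigma^2$ in the regime $\delta>\epsilon$ (in which, by Theorem \ref{lem:noiselesslowfpless1}, $0$ is stable), and as $\sigma\to\infty$ (so $\mu_0/\sigma\to0$) taking $\tau\to\infty$ shows $G(\mu_0/\sigma)\le\epsilon(\mu_0/\sigma)^2$, so $\Psi_{\lambda_*,p_*}(\sigma^2)\le \epsilon\mu_0^2/\delta$ saturates below the diagonal. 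Continuity of $\sigma^2\mapsto\Psi_{\lambda_*,p_*}(\sigma^2)$ then forces a down-crossing of the diagonal at some $\sigma_h>\sigma_1$, which is a stable fixed point distinct from $0$.

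The step I expect to be the main obstacle is this second part, specifically controlling the boundary behavior of $G(\mu)=\inf_{p,\tau}F(\mu,p,\tau)$ while the algorithm is free to re-optimize $p$ at every scale. Establishing $\lim_{\mu\to\infty}G(\mu)=\epsilon$ and the large-$\sigma$ saturation requires a careful joint limit in $(p,\tau)$ (for $p<1$ the bias of $\eta_p$ on large atoms vanishes, giving signal risk $\to 1$, while the risk-at-zero term $\mathbb{E}(\eta_p(Z;\tau))^2\to 0$ as $\tau\to\infty$), and I must confirm that the infimum over $p$ preserves continuity of $\Psi_{\lambda_*,p_*}$ so that the intermediate-value argument applies. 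I would also note that the ``in addition to zero'' conclusion presumes the regime $\delta>\epsilon$, where zero is itself a stable fixed point by Theorem \ref{lem:noiselesslowfpless1}.
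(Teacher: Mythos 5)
Your proposal is correct and follows essentially the same route as the paper, which omits the proof precisely because it is the proof of Theorem \ref{thm:highestfpnoiseless} with the extra minimization over $p$ folded in: the scale-invariance reduction to the normalized risk, the bound $\Psi_{\lambda_*,p_*}(\sigma^2)\leq \frac{\sigma^2}{\delta}\inf_p\overline{M}_p(\epsilon)$ for the first part, and the point-mass construction plus the large-$\sigma$ saturation bound and continuity (Theorem \ref{thm:smoothplambdaopt}) for the second. Your only deviation is cosmetic but slightly more robust: by choosing $\mu_1$ with $G(\mu_1)>\delta$ directly rather than assuming the supremum over $\mu$ is attained, you sidestep the footnoted caveat in the paper's Theorem \ref{thm:highestfpnoiseless} argument about $\mu_*$ possibly being infinite.
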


\vspace{.2cm}

The proof of this theorem is very similar to the proof of Theorem \ref{thm:highestfpnoiseless} and hence is skipped here. 
\vspace{.2cm}

\begin{corollary}\label{cor:continuationp}
$\Psi_{\lambda_*, p_*} (\sigma^2)$ has a unique stable fixed point at zero if $\epsilon < \sup_{0\leq p\leq1} \overline{\epsilon}_p^*(\delta)$. Furthermore, there exists $\underline{\epsilon}^{**}(\delta)$ such that if $\epsilon > \underline{\epsilon}^{**}(\delta)$, then for certain distribution $p_X \in \mathcal{F}_{\epsilon}$, $\Psi_{\lambda_*, p_*} (\sigma^2)$ has more than one stable fixed point. 
\end{corollary}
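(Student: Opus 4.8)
The plan is to read Corollary~\ref{cor:continuationp} as the phase-transition restatement of Theorem~\ref{thm:highestfpnoiselessoptp}, in exactly the way Corollary~\ref{cor:noiselessell_p} restates Theorem~\ref{thm:highestfpnoiseless}. Throughout, write $A_p(\tau)\triangleq\mathbb{E}(\eta_p(Z;\tau))^2$ and $B_p(\mu,\tau)\triangleq\mathbb{E}(\eta_p(\mu+Z;\tau)-\mu)^2$, so that $\overline{M}_p(\epsilon)=\inf_{\tau\ge0}\sup_{\mu\ge0}[(1-\epsilon)A_p(\tau)+\epsilon B_p(\mu,\tau)]$, and let $\underline{M}^{**}(\epsilon)$ denote the left-hand side of the second displayed condition in Theorem~\ref{thm:highestfpnoiselessoptp} (its $\inf_p\inf_\tau$ being read as a joint $\inf_{p,\tau}$). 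The only two ingredients needed beyond Theorem~\ref{thm:highestfpnoiselessoptp} are (a) monotonicity of these minimax quantities in the sparsity level $\epsilon$, and (b) an elementary interchange of the infimum over $p$ with the defining suprema and infima.

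For the first statement I would first record that $\overline{M}_p(\epsilon)$ is non-decreasing in $\epsilon$: since $A_p(\tau)$ does not depend on $\mu$ and $\epsilon\ge0$, one has $\sup_{\mu\ge0}[(1-\epsilon)A_p(\tau)+\epsilon B_p(\mu,\tau)]=A_p(\tau)+\epsilon\,[\sup_{\mu\ge0}B_p(\mu,\tau)-A_p(\tau)]$, and the bracket is nonnegative because $B_p(0,\tau)=A_p(\tau)$; hence the inner function is non-decreasing in $\epsilon$ for every $\tau$, and so is its infimum $\overline{M}_p(\epsilon)$. Together with the definition $\overline{\epsilon}_p^*(\delta)=\sup\{\epsilon:\overline{M}_p(\epsilon)<\delta\}$ underlying Corollary~\ref{cor:noiselessell_p}, this monotonicity gives the equivalence $\overline{M}_p(\epsilon)<\delta\iff\epsilon<\overline{\epsilon}_p^*(\delta)$ for each fixed $p\in[0,1]$. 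I would then chain the trivial facts $\inf_{0\le p\le1}\overline{M}_p(\epsilon)<\delta\iff\exists\,p:\overline{M}_p(\epsilon)<\delta\iff\exists\,p:\epsilon<\overline{\epsilon}_p^*(\delta)\iff\epsilon<\sup_{0\le p\le1}\overline{\epsilon}_p^*(\delta)$. The leftmost condition is precisely the hypothesis of Theorem~\ref{thm:highestfpnoiselessoptp} guaranteeing a unique stable fixed point at zero, which establishes the first claim.

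For the second statement I would invoke the second half of Theorem~\ref{thm:highestfpnoiselessoptp}, which produces an extra stable fixed point for some $p_X\in\mathcal{F}_\epsilon$ as soon as $\underline{M}^{**}(\epsilon)>\delta$; it therefore suffices to exhibit a threshold $\underline{\epsilon}^{**}(\delta)$ above which this inequality holds. The clean route is the lower bound $\underline{M}^{**}(\epsilon)\ge\epsilon$: dropping the nonnegative term $(1-\epsilon)A_p(\tau)$ gives $\inf_{p,\tau}[(1-\epsilon)A_p(\tau)+\epsilon B_p(\mu,\tau)]\ge\epsilon\inf_{p,\tau}B_p(\mu,\tau)$, and taking $\sup_{\mu\ge0}$ yields $\underline{M}^{**}(\epsilon)\ge\epsilon\,\sup_{\mu\ge0}\inf_{p,\tau}B_p(\mu,\tau)$. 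The key analytic step is to show $\sup_{\mu\ge0}\inf_{p,\tau}B_p(\mu,\tau)=1$: the bound $\le1$ is immediate from the identity estimator $\tau=0$, while $\ge1$ follows because, as $\mu\to\infty$, no $\ell_p$-threshold beats the identity, i.e.\ $\inf_{p,\tau}B_p(\mu,\tau)\to1$. Granting this, $\underline{M}^{**}(\epsilon)\ge\epsilon$, so taking $\underline{\epsilon}^{**}(\delta)=\delta$ forces $\underline{M}^{**}(\epsilon)\ge\epsilon>\delta$ whenever $\epsilon>\delta$, and Theorem~\ref{thm:highestfpnoiselessoptp} supplies the extra stable fixed point.

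I expect the main obstacle to be the limiting computation $\lim_{\mu\to\infty}\inf_{p,\tau}B_p(\mu,\tau)=1$ (equivalently $\underline{M}^{**}(1)=1$): one must argue uniformly over $(p,\tau)$ that shrinking or thresholding a large coordinate can only increase the scalar risk above the noise variance, ruling out any $\ell_p$-proximal map that estimates a diverging mean with mean-squared error bounded away from $1$. A secondary, optional refinement would sharpen $\underline{\epsilon}^{**}(\delta)$ below $\delta$ by showing that $\underline{M}^{**}(\epsilon)$ is itself non-decreasing in $\epsilon$; because of the outer supremum over $\mu$ this does not follow from the simple reweighting argument used for $\overline{M}_p$, and is most naturally obtained by identifying $\underline{M}^{**}(\epsilon)$ with a restricted minimax risk over the nested classes $\mathcal{F}_\epsilon$ and invoking monotonicity under set inclusion. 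This refinement is not required for the corollary as stated, since the existence of a valid threshold already follows from the lower bound above.
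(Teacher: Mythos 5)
Your first half is correct and is essentially the intended argument: the paper skips this proof because it is meant to be the verbatim analogue of the proof of Corollary~\ref{cor:noiselessell_p}, i.e.\ the chain $\epsilon<\sup_{p}\overline{\epsilon}_p^*(\delta)\Rightarrow\exists p:\overline{M}_p(\epsilon)<\delta\Rightarrow\inf_p\overline{M}_p(\epsilon)<\delta$ followed by the first part of Theorem~\ref{thm:highestfpnoiselessoptp}. Your monotonicity lemma for $\overline{M}_p(\epsilon)$ in $\epsilon$ is correct and is a nice addition; note, however, that with the paper's actual definition $\overline{\epsilon}_p^*(\delta)=\inf\{\epsilon:\overline{M}_p(\epsilon)\geq\delta\}$ (from the proof of Corollary~\ref{cor:noiselessell_p}) the implication $\epsilon<\overline{\epsilon}_p^*(\delta)\Rightarrow\overline{M}_p(\epsilon)<\delta$ is immediate and the monotonicity is not needed.

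The second half is where you depart from the intended route, and the departure costs you. The paper's argument is purely definitional: set $\underline{\epsilon}^{**}(\delta)\triangleq\sup\{\epsilon:\underline{M}^{**}(\epsilon)\leq\delta\}$, where $\underline{M}^{**}(\epsilon)$ is the sup--inf quantity in the second part of Theorem~\ref{thm:highestfpnoiselessoptp}; the set is nonempty since $\underline{M}^{**}(0)=0$ (send $\tau\to\infty$), and for $\epsilon$ above this supremum one has $\underline{M}^{**}(\epsilon)>\delta$, at which point Theorem~\ref{thm:highestfpnoiselessoptp} finishes the job. No analytic estimate on $\underline{M}^{**}$ is required. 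Your explicit choice $\underline{\epsilon}^{**}(\delta)=\delta$ has two problems. First, as you acknowledge, it rests on the unproven uniform claim $\lim_{\mu\to\infty}\inf_{p,\tau}\mathbb{E}(\eta_p(\mu+Z;\tau)-\mu)^2=1$; this is plausible (the relevant one-parameter version for $p=1$ is carried out in the proof of Lemma~\ref{lem:ratiofinite}), but it is the hardest step of your proof and is left open. Second, and more seriously, the threshold $\delta$ lands entirely in the regime where the cited mechanism fails to deliver the stated conclusion: for $\epsilon>\delta$ the derivative of $\Psi_{\lambda_*,p_*}(\sigma^2)$ at $\sigma^2=0$ equals $\epsilon/\delta>1$ (this is the computation in the proof of Theorem~\ref{thm:improvecontinuation}), so zero is an \emph{unstable} fixed point, and the second part of Theorem~\ref{thm:highestfpnoiselessoptp} --- which produces one nonzero stable fixed point ``in addition to zero'' --- no longer yields more than one \emph{stable} fixed point. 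The regime in which the argument genuinely produces two stable fixed points is $\underline{M}^{**}(\epsilon)>\delta$ together with $\epsilon<\delta$, and your threshold excludes that window entirely. Replace the explicit bound by the definitional threshold and both issues disappear.
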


\vspace{.2cm}

The proof of this corollary is straightforward and is skipped. Again our {\em numerical calculations} confirm that
\[
\underline{\epsilon}^{**}(\delta)= \sup_{0\leq p\leq1} \overline{\epsilon}_p^*(\delta). 
\]
Corollary \ref{cor:continuationp} has a simple implication for adaptation policies (and also $p$-continuation). The performance of optimal-$(p, \lambda)$ $\ell_p$-AMP is the same as the performance of optimal-$\lambda$ $\ell_p$-AMP for the best value of $p$. In this sense, the only help that the optimal adaptation policy provides is to automatically find the best value of $p$ for running optimal-$\lambda$ $\ell_p$-AMP.\footnote{Note that in this paper we are only interested in one performance measure of $\ell_p$-AMP algorithms and that is the reconstruction error. Adaptation policy may improve the convergence rate of the algorithm. } Figure \ref{fig:optimaladaptvs1} compares the phase transition of optimal-$(p, \lambda)$ $\ell_p$-AMP with that of optimal-$\lambda$ $\ell_1$-AMP.  As we expected from Theorem \ref{thm:highestfpnoiselessoptp}, the improvement is minor. 
\begin{figure}
\begin{center}
\includegraphics[width=8cm]{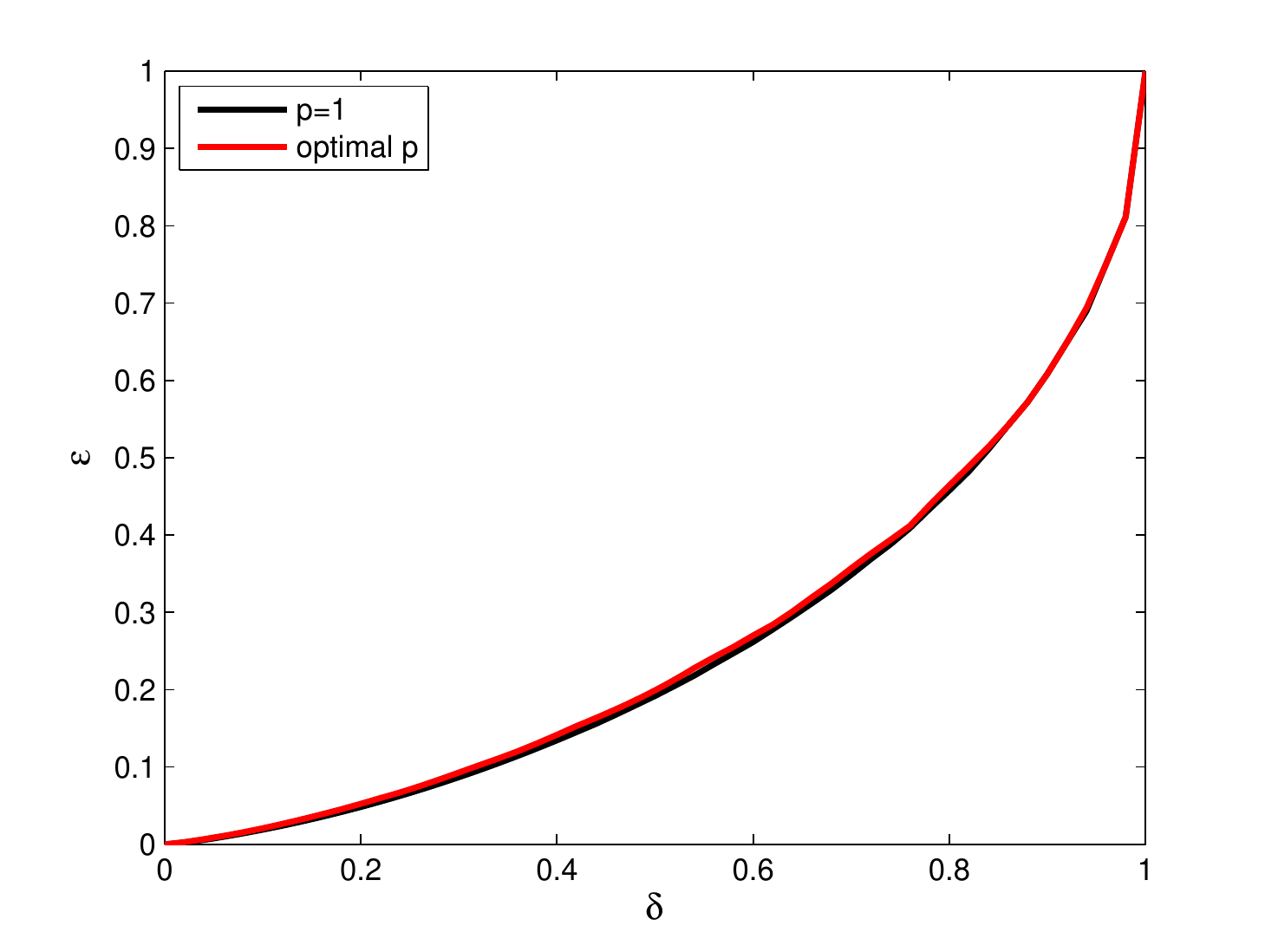}
\caption{Comparison of the phase transition of optimal-$\lambda$ $\ell_1$-AMP and optimal-($p,\lambda$) $\ell_p$-AMP under the minimax framework. The phase transition exhibited for optimal-($p$, $\lambda$) $\ell_p$-AMP is the value of $\epsilon$ at which the number of stable fixed points of optimal-($p$,$\lambda$) $\ell_p$-AMP changes from one to more than one for at least some prior $p_X \in \mathcal{F}_\epsilon$. }
\label{fig:optimaladaptvs1}
\end{center}
\end{figure}
\vspace{.2cm}

The results we have presented so far regarding the highest fixed point of $\ell_p$-AMP are disappointing. It seems that if we do not initialize the algorithm properly (and in practice in most cases we will not be able to do so), then the performance of the algorithm is at best slightly better than $\ell_1$-AMP. However, simulation results presented elsewhere have shown that iterative algorithms that aim to solve LPLS usually outperform LASSO. Such simulation results are not in contradiction with the result we present in this paper. In contrary, they can be explained with the framework we developed in our paper.  Let the distribution of $X$  be denoted by $X \sim (1-\epsilon) \Delta_0 + \epsilon G$, where $\Delta_0$ denotes a point mass at zero and $G$ denotes the distribution of the nonzero elements. According to Proposition \ref{prop:ell_1pt}, the phase transition curve of the optimal-$\lambda$ $\ell_1$-AMP is independent of $G$ and only depends on $\epsilon$. This is not true for the phase transition of $\ell_p$-AMP (the one derived based on the highest fixed point of $\Psi_{ \lambda_*,p} (\sigma^2)$). In fact, the results in Theorem \ref{thm:highestfpnoiseless} are obtained under the {\em least favorable distribution} which is a certain choice of $G$ that leads to the lowest phase transition of $\ell_p$-AMP possible. For other distributions, optimal-$\lambda$ $\ell_p$-AMP can provide a higher phase transition. Figure \ref{fig:ptGaussian} compares the phase transition (based on the highest fixed point) of optimal-$\lambda$ $\ell_p$-AMP with that of optimal-$\lambda$ $\ell_1$-AMP when $G = N(0,1)$. As is clear from this figure, such distributions usually favor $\ell_p$-AMP but not the $\ell_1$-AMP algorithm. Hence, we see that here $p=0.75$ has much higher phase transition than optimal-$\lambda$ $\ell_1$-AMP. 


It is important to note that for different distributions, different values of $p$ provide the best phase transition. However, if we employ optimal-$(p, \lambda)$ $\ell_p$-AMP, it will find the optimal value of $p$ automatically. Hence, even though the continuation strategy does not provide much improvement in the minimax setting, it can in fact offer a huge boost in the performance for practical applications.  

\begin{figure}[htbp]
  \centering
  \includegraphics[width=3.2in]{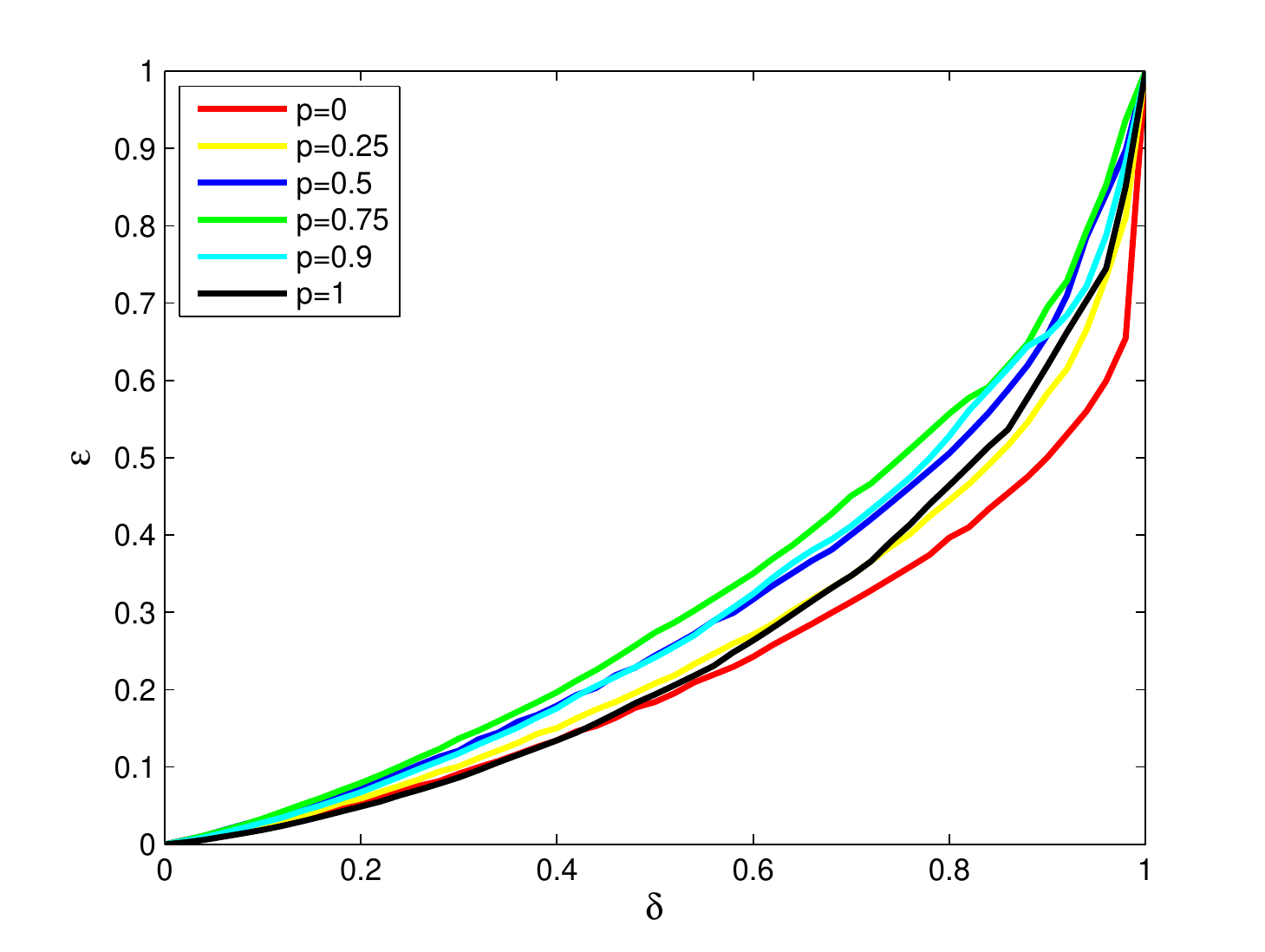}
  \caption{Phase transition curve for $p_X = (1- \epsilon)\Delta_0+ \epsilon G$, where $G$ is the PDF of the standard normal distribution. $p \in \{0, 0.25, 0.5, 0.75, 0.9, 1\}$ are considered in this simulation. These phase transition curves shall be compared with those of Figure \ref{fig:localpvs1}.}
  \label{fig:ptGaussian}
\end{figure}

\section{Our contributions in noisy setting}\label{sec:analysisnoisy}

\subsection{Roadmap}
In this section, we assume that $\sigma_w^2>0$. This implies that the reconstruction error of $\ell_p$-AMP is greater than zero for all $\ell_p$-AMPs. 
We start with analyzing the performance of optimal-$\lambda$ $\ell_p$-AMP. This corresponds to the analysis of the fixed points of $\Psi_{\lambda_*, p} (\sigma^2)$. Generally $\Psi_{\lambda_*, p} (\sigma^2)$ may have more than one stable fixed point. Similar to the last section, we study two of the fixed points of this function: (i) The lowest fixed point that corresponds to the performance of the algorithm under the best initialization, and (ii) the highest fixed point that corresponds to the performance of the algorithm under the worst initializations in Sections \ref{ssec:lfpsectionnoise} and \ref{ssec:hfpsectionnoise} respectively. We have empirically observed that under the initialization that we use, i.e., $x^0=0$, the algorithm converges to the highest fixed point. 

\subsection{Analysis of the lowest fixed point}\label{ssec:lfpsectionnoise}

In this section we study the lowest fixed point of optimally tuned $\ell_p$-AMP. We use the notation $\sigma_\ell$ for the lowest fixed point of $\Psi_{\lambda_*, p} (\sigma^2)$. 
Our first result is concerned with the performance of the algorithm for small amount of noise.

\begin{theorem}\label{thm:noisyoptimallambda}
If $\epsilon < \delta$, then there exists $\sigma^2_0$ such that for every $\sigma_w^2< \sigma^2_0$, $\sigma_{\ell}^2$ is a continuous function of $\sigma_w^2$. Furthermore,
\[
\lim_{\sigma_w^2 \rightarrow 0}  \frac{\sigma_{\ell}^2}{\sigma_w^2}= \frac{1}{1-\frac{\epsilon}{\delta}}.
\]
\end{theorem}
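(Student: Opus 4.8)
The plan is to reduce the statement to the scalar state-evolution fixed point and then to a single analytic fact about the optimal denoising risk at small noise. I would first write the lowest fixed point $\sigma_\ell$ as the smallest nonnegative solution of
\[
\sigma_\ell^2 = \sigma_w^2 + \frac{1}{\delta}\, R(\sigma_\ell), \qquad R(\sigma) \triangleq \min_\lambda \mathbb{E}\left(\left|\eta_p(X+\sigma Z;\lambda) - X\right|^2\right),
\]
where $X \sim p_X \in \mathcal{F}_\epsilon$ and $Z \sim N(0,1)$ are independent and $R(\sigma)$ is the risk attained by the optimal thresholding policy \eqref{eq:optlambdadef}. The noiseless analysis of Theorem \ref{lem:noiselesslowfpless1} already yields $R(0)=0$ and that $0$ is the lowest fixed point when $\epsilon<\delta$; the entire content of this theorem is the \emph{rate} at which $\sigma_\ell^2$ departs from $0$ as $\sigma_w^2$ increases, so everything hinges on the behaviour of $R(\sigma)/\sigma^2$ near $\sigma=0$.

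The central lemma I would isolate and prove first is the small-noise expansion $\lim_{\sigma\to0} R(\sigma)/\sigma^2 = \epsilon$, where $\epsilon = P(X\neq 0)$ is the sparsity level. To prove it I would exploit the homogeneity of the proximal map, $\eta_p(\sigma u;\lambda) = \sigma\,\eta_p(u;\lambda\sigma^{p-2})$, and reparametrize by $\tau \triangleq \lambda\sigma^{p-2}$ (so $\lambda=\tau\sigma^{2-p}$). Writing a generic nonzero atom of $p_X$ as $\mu\neq0$, the noise-only mass then contributes $(1-\epsilon)\sigma^2\,\mathbb{E}(\eta_p(Z;\tau)^2)$ and the signal mass contributes $\epsilon\,\mathbb{E}_\mu\mathbb{E}_Z\big((\eta_p(\mu+\sigma Z;\lambda)-\mu)^2\big)$. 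For the upper bound I would let the threshold diverge slowly, e.g. $\tau(\sigma)=\sigma^{-(1-p)/2}\to\infty$: this drives $\mathbb{E}(\eta_p(Z;\tau)^2)\to0$ by the Gaussian tail (the effective threshold $\tilde\lambda$ of Lemma \ref{lem:threshform} scales like $\tau^{1/(2-p)}$ and diverges), while, crucially because $2-p>1$ for $p<1$, the bias of $\eta_p$ on its smooth branch near a fixed $\mu\neq0$ is $O(\lambda)=O(\tau\sigma^{2-p})=o(\sigma)$, so each atom contributes $\sigma^2(1+o(1))$. A dominated-convergence argument over $\mu$, using the bounded-second-moment assumption of Definition \ref{def:convseq}, then gives $\limsup_{\sigma\to0} R(\sigma)/\sigma^2 \le \epsilon$. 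For the matching lower bound I would apply Stein's identity to $W=\mu+\sigma Z$ to express the per-atom risk as $\sigma^2\big(2\mathbb{E}\,\eta_p'(W)-1\big)+\mathbb{E}\big((\eta_p(W)-W)^2\big)$ and use that on the admissible minimizing branch $\eta_p'>1$, together with the fact that $W$ concentrates away from the jump at $\pm\tilde\lambda$ (so the jump correction to Stein's identity and the event $\{W\text{ thresholded to }0\}$ are super-exponentially small); this yields a per-atom risk $\ge\sigma^2(1-o(1))$ for \emph{every} $\lambda$, hence $\liminf_{\sigma\to0} R(\sigma)/\sigma^2 \ge \epsilon$.

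With the lemma established, I would finish by a monotonicity / implicit-function argument. Define $H(s) \triangleq s - \frac{1}{\delta}R(\sqrt{s})$ for $s=\sigma^2\ge0$, so the fixed-point equation reads $H(s)=\sigma_w^2$. The lemma gives $H(0)=0$ and the right-derivative $H'(0^+) = 1-\frac{\epsilon}{\delta}$, which is strictly positive precisely because $\epsilon<\delta$. Thus $H$ is continuous and strictly increasing on some interval $[0,s_0]$, so setting $\sigma_0^2\triangleq H(s_0)$, for every $\sigma_w^2<\sigma_0^2$ the smallest root is $\sigma_\ell^2 = H^{-1}(\sigma_w^2)$, a continuous function of $\sigma_w^2$ with $\sigma_\ell^2\to0$, and
\[
\lim_{\sigma_w^2\to0}\frac{\sigma_\ell^2}{\sigma_w^2} = \lim_{\sigma_w^2\to0}\frac{H^{-1}(\sigma_w^2)}{\sigma_w^2} = \frac{1}{H'(0^+)} = \frac{1}{1-\frac{\epsilon}{\delta}}.
\]

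The main obstacle is the central lemma, and specifically the uniform control required to interchange $\min_\lambda$ with $\sigma\to0$: the optimal threshold must diverge in order to suppress the noise-only mass, yet stay small enough to neither bias nor annihilate the nonzero atoms, and this balance must hold uniformly over the nonzero distribution $G$, which may place mass arbitrarily close to $0$. Making the ``thresholding-to-zero is negligible'' claim and the Stein jump-correction rigorous, uniformly in $\lambda$ and integrated against $G$, is the delicate part; by contrast the fixed-point and continuity steps are routine. I would also flag that the argument uses $p<1$ essentially through the exponent $2-p>1$, which is what renders the shrinkage bias negligible relative to the Gaussian noise; the borderline case $p=1$ behaves differently and is treated separately.
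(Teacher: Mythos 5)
Your proposal is correct in outline and its skeleton coincides with the paper's: the paper also reduces everything to the statement that the optimally tuned state-evolution map has derivative $\epsilon/\delta$ at $\sigma^2=0$ (this is exactly your central lemma, established in the proof of Theorem~\ref{lem:noiselesslowfpless1}), and then runs the same monotonicity/implicit-function argument on $\Gamma(\sigma^2)=\sigma^2-\Psi_{\lambda_*,p}(\sigma^2)-\sigma_w^2$, which is your $H$. Where you genuinely diverge is in how the central lemma is proved. For the upper bound the paper fixes a constant $\beta$ in the policy $\lambda=\beta\sigma^{2-p}$, passes to the limit $\sigma\to0$ by dominated convergence, and only then takes $\inf_\beta$ (using $\mathbb{E}\,\eta_p^2(Z;\beta)\to0$ as $\beta\to\infty$); your $\sigma$-dependent diverging threshold achieves the same thing. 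For the lower bound the paper does \emph{not} use Stein's identity: it truncates $G$ to $\{|U|\ge\mu\}$ and invokes the full asymptotic expansions of the optimally tuned risk (Propositions~\ref{lem:riskbehsmallsigma} and~\ref{lem:riskbehsmallsigmazero}), which require the lengthy analysis of the rate of $\tau_*(\sigma)$. Your Stein-plus-$\eta_p'>1$ route is lighter and adequate for the first-order statement, though it needs the case split you only gesture at: when the effective threshold $c_p\lambda^{1/(2-p)}$ is comparable to or larger than $|\mu|$, the Stein term $\sigma^2(2\mathbb{E}\eta_p'(W)-1)$ can be near zero or negative, and the bound must instead come from the thresholded mass $\mathbb{E}[W^2\mathbbm{1}(|W|\le\tilde\lambda)]=\Theta(\mu^2)$; an even cleaner alternative, used by the paper in the proof of Theorem~\ref{thm:improvecontinuation}, is to lower-bound $R(\sigma)$ by the Bayes MMSE and cite the known limit $\mathrm{mmse}(\sigma)/\sigma^2\to\epsilon$. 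Two things the paper's heavier machinery buys that your route does not: the second-order terms of Propositions~\ref{lem:riskbehsmallsigma} and~\ref{lem:riskbehsmallsigmazero} are reused in Theorems~\ref{thm:riskbehsmallsigma} and~\ref{thm:lownoisehardthresh1}, so they are not wasted effort; and your claim that $H$ is strictly increasing on a whole interval $[0,s_0]$ (not just differentiable at $0^+$) tacitly requires the continuity of $d\Psi_{\lambda_*,p}/d\sigma^2$ away from zero, which the paper supplies via Theorem~\ref{final:smooth} and which you should cite rather than assume.
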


The proof is presented in Section \ref{sec:proofthmnoisyfirst}. It is instructive to compare this result with the corresponding result for the optimal-$\lambda$ $\ell_1$-AMP. \\

\begin{theorem}\label{thm:noisyell_1limit}
If $M_1(\epsilon) < \delta$, then the fixed point of optimal-$\lambda$ $\ell_1$-AMP is unique and satisfies
\[
\lim_{\sigma_w^2 \rightarrow 0} \frac{\sigma_\ell^2}{\sigma_w^2} = \frac{1}{1- M_1(\epsilon)/\delta}.
\]
\end{theorem}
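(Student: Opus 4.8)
The plan is to combine the uniqueness already granted by Proposition~\ref{prop:ell_1pt} with a small‑noise expansion of the optimally–tuned soft‑thresholding risk. Write the optimal risk as
\[
R(\sigma)\triangleq \min_{\lambda\ge 0}\mathbb{E}\big(|\eta_1(X+\sigma Z;\lambda)-X|^2\big),
\]
so that any fixed point $\sigma_\ell$ of $\Psi_{\lambda_*,1}$ obeys $\sigma_\ell^2=\sigma_w^2+\tfrac1\delta R(\sigma_\ell)$. Proposition~\ref{prop:ell_1pt} guarantees the stable fixed point is unique when $M_1(\epsilon)<\delta$, so $\sigma_\ell$ is well defined. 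The strategy is: (i) show $\sigma_\ell\to 0$ as $\sigma_w\to 0$; (ii) establish $\lim_{\sigma\to 0}R(\sigma)/\sigma^2=M_1(\epsilon)$; and (iii) substitute this expansion into the fixed‑point identity and solve for the ratio $\sigma_\ell^2/\sigma_w^2$.

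The analytic core is step (ii). The key structural fact is the homogeneity of soft thresholding, $\eta_1(au;a\lambda)=a\,\eta_1(u;\lambda)$ for $a>0$, which lets me parametrize $\lambda=\tau\sigma$ and pull a factor $\sigma$ out of the $X=0$ contribution. Splitting the expectation on $\{X=0\}$ and $\{X\ne 0\}$, the first piece equals exactly $\sigma^2(1-\epsilon)\mathbb{E}\big(\eta_1^2(Z;\tau)\big)$. On $\{X\ne 0\}$ the entries are $O(1)$ while the noise scale $\sigma$ vanishes, so the argument $X+\sigma Z$ is eventually bounded away from the shrinking threshold $\tau\sigma$ and $\eta_1(X+\sigma Z;\tau\sigma)-X=\sigma Z-\tau\sigma\,\mathrm{sign}(X)$ on a set of probability tending to one; dominated convergence then gives $\sigma^2\big(\epsilon(1+\tau^2)+o(1)\big)$, which is independent of the law $G$ of the nonzeros because each nonzero entry effectively plays the role of the $\mu\to\infty$ limit in the minimax definition of $M_1$. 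Hence for each fixed $\tau$,
\[
\frac{1}{\sigma^2}\,\mathbb{E}\big(|\eta_1(X+\sigma Z;\tau\sigma)-X|^2\big)\;\longrightarrow\;(1-\epsilon)\mathbb{E}\big(\eta_1^2(Z;\tau)\big)+\epsilon(1+\tau^2)\quad(\sigma\to 0).
\]
Minimizing the right‑hand side over $\tau\ge 0$ reproduces the simplified form of $M_1(\epsilon)$ in Proposition~\ref{prop:ell_1pt}. To turn this pointwise‑in‑$\tau$ statement into $R(\sigma)/\sigma^2\to M_1(\epsilon)$ I must exchange $\min_\tau$ with $\lim_{\sigma\to 0}$, which I would justify by proving the convergence is uniform over compact $\tau$‑intervals and invoking the coercivity of the $\epsilon\tau^2$ term to rule out minimizers escaping to $\tau=\infty$, while noting the optimal threshold stays bounded away from $0$.

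Finally, for step (iii), the limiting slope $M_1(\epsilon)/\delta<1$ makes $\sigma\mapsto\sigma_w^2+\tfrac1\delta R(\sigma)$ behave like a contraction near the origin, which—together with uniqueness from Proposition~\ref{prop:ell_1pt} and continuity of $R$—forces $\sigma_\ell\to 0$ and gives $\sigma_\ell^2=O(\sigma_w^2)$. Inserting $R(\sigma_\ell)=M_1(\epsilon)\sigma_\ell^2+o(\sigma_\ell^2)$ into $\sigma_\ell^2=\sigma_w^2+\tfrac1\delta R(\sigma_\ell)$ yields $\sigma_\ell^2\big(1-M_1(\epsilon)/\delta\big)=\sigma_w^2+o(\sigma_w^2)$, whence $\sigma_\ell^2/\sigma_w^2\to 1/(1-M_1(\epsilon)/\delta)$. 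This mirrors the reasoning used for Theorem~\ref{thm:noisyoptimallambda}, the $0\le p<1$ analogue. I expect the main obstacle to be the uniform control in step (ii): handling nonzero components whose law $G$ may place mass near the origin—where the ``bounded away from the threshold'' approximation breaks down—and showing that this vanishing‑mass region contributes negligibly while the optimal $\tau$ remains in a fixed compact set.
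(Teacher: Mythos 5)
Your proposal is correct and follows essentially the same route as the paper: the paper likewise reduces the claim to showing $\lim_{\sigma\to 0}\Psi_{\lambda_*,1}(\sigma^2)/\sigma^2=M_1(\epsilon)/\delta$ (its Lemma~\ref{lem:ratiofinite} controls the optimal $\tau_*(\sigma)=\lambda_*(\sigma)/\sigma$, playing exactly the role of your uniform-convergence/coercivity step), and then combines this with uniqueness from Proposition~\ref{prop:ell_1pt} and continuity of $\sigma_\ell^2$ in $\sigma_w^2$. The only cosmetic difference is that the paper extracts the final limit via the implicit function theorem applied to the fixed-point equation, whereas you substitute the expansion $R(\sigma_\ell)=M_1(\epsilon)\sigma_\ell^2+o(\sigma_\ell^2)$ directly into $\sigma_\ell^2=\sigma_w^2+\tfrac{1}{\delta}R(\sigma_\ell)$; both are valid.
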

This result can be derived from the results of \cite{donoho2011noise}. But for the sake of completeness and since we are using  a different thresholding policy, we present the proof in Section \ref{sec:proofnoisyell_1first}. \\

\begin{remark}\label{rem:M1}
In Section \ref{ssec:cornoiselessproof}, we show that $M_1(\epsilon) = \inf_\tau  (1-\epsilon) \mathbb{E} (\eta_1(Z; \tau))^2 + \epsilon(1+ \tau^2) > \epsilon$. Hence the performance of the lowest fixed point of optimal-$\lambda$ $\ell_p$-AMP is better than that of optimal-$\lambda$ $\ell_1$-AMP in the limit $\sigma_w^2 \rightarrow 0$. The continuity of $\sigma_\ell^2$ as a function of $\sigma_w^2$ implies that this comparison is still valid, for small values of $\sigma_w^2$. 
 \end{remark}
 
 \vspace{.2cm}
 
What happens as we keep increasing $\sigma_w^2$? Figure \ref{fig:sigma_noisy} that is based on our numerical calculations, answers this question. It compares $\sigma_{\ell}^2$ as a function of $\sigma_w^2$ for several different values of $p$. Two interesting phenomena can be observed in this figure: 
\begin{enumerate}
\item[(i)] Low-noise phenomenon: For small values of $\sigma_w$, the lowest fixed point of $\ell_0$-AMP outperforms the lowest fixed point of all the other $\ell_p$-AMP algorithms. Furthermore smaller values of $p$ seem to have advantage over the larger values of $p$. Note that Theorem \ref{thm:noisyoptimallambda} does not explain this observation. According to this theorem all values of $p=0$ seem to perform similarly. We will present a refinement of Theorem \ref{thm:noisyoptimallambda} in Theorems \ref{thm:riskbehsmallsigma} and \ref{thm:lownoisehardthresh1} that is capable of explaining this phenomenon.

\item[(ii)] High-noise phenomenon: For large values of $\sigma_w$, optimally tuned $\ell_1$-AMP outperforms even the lowest fixed point of $\ell_p$-AMP for every $p<1$. As we mentioned before we will connect the lowest fixed point of $\ell_p$ with the global minimizer of LPLS. This means that LASSO will outperform the global minimizer of $\ell_0$-regularized least squares for large values of noise. This is in contradiction with the first folklore we mentioned in the introduction. Proposition \ref{prop:optimallassolargesigma} will prove this observation. 

\end{enumerate}

\begin{figure}[htbp]
  \centering
  \includegraphics[width=3.2in]{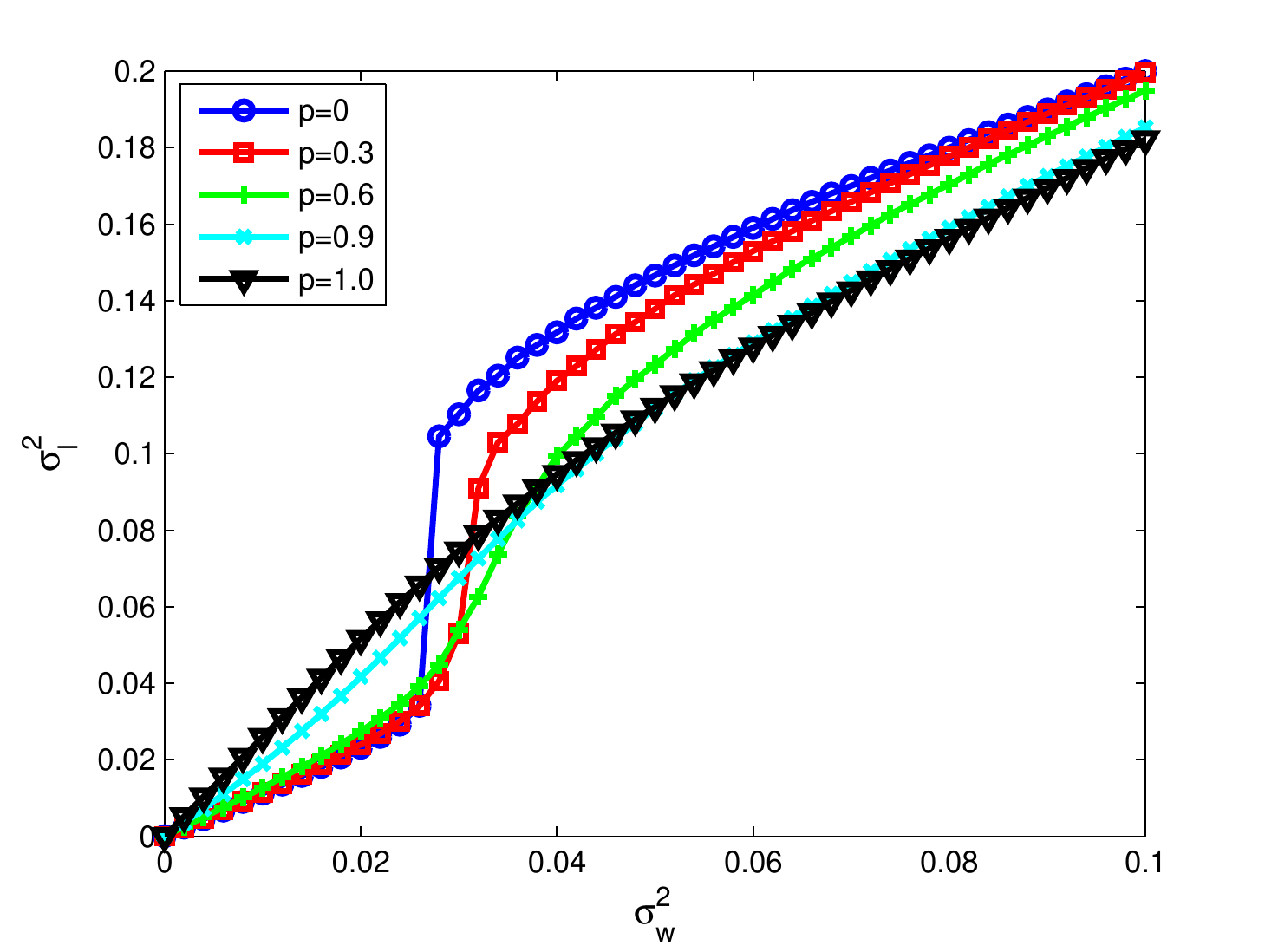}
  \caption{The curve of $\sigma_\ell^2$ as a function of  $\sigma_w^2$ for $p \in\{0, 0.3, 0.6, 0.9, 1\}$. Note that (i) for $p=0$, $\sigma_\ell^2$ is a discontinuous function of $\sigma_w^2$. (ii) For small values of $\sigma_w^2$, $p=0$ provides the smallest $\sigma_\ell^2$, while for large values of $\sigma_w^2$, $p=1$ exhibits the best performance. Here is the set-up for this simulation. $\delta$ and $\epsilon$ are set to $0.1$ and $0.01$ respectively. The non-zero elements of $x_o$ are iid $\pm1$ with probability $0.5$.}
  \label{fig:sigma_noisy}
\end{figure}


%
%
%
%
%
%
Below we justify both the low-noise and high-noise observations. The next two propositions are concerned with low noise phenomenon. According to Theorem \ref{thm:noisyoptimallambda}. We know that $\sigma_\ell^2/ \sigma_w^2 \rightarrow \frac{\delta}{\delta-\epsilon}$.  Hence, in order to see the discrepancy between different values of $p$, we have to explore how $\sigma_\ell^2 - \frac{\delta \sigma_w^2}{\delta-\epsilon}$ behaves for small values of $\sigma_w^2$. Let $X \sim (1-\epsilon)\Delta_0+ \epsilon G$. Let $U$ denote a random variable with distribution $G$. We also use the notation $\mathbb{E}_G(f(U)) \triangleq \int f(u) dG(u)$, and $\mathbb{P}_G(U \in {\cal A}) \triangleq \mathbb{E} (\mathbbm{1} (U \in {\cal A}))$, where $\mathbbm{1}$ denotes the indicator function. 
 
 \begin{theorem}\label{thm:riskbehsmallsigma}
 Suppose $P_G(|U|>\mu)=1$ with $\mu$ being a fixed positive number and $\mathbb{E}_G|U|^2 < \infty$, then for $0<p<1$ and $\epsilon <\delta$,
 \[
 \lim_{\sigma_w \rightarrow 0} \frac{\sigma_\ell^2 - \frac{\delta}{\delta- \epsilon}\sigma_w^2}{\sigma^{4-2p}_w (\log \frac{1}{\sigma_w})^{2-p}} = \frac{\epsilon c_p^{4-2p}p^2\mathbb{E}|U|^{2p-2}\delta^{2-p}}{(4-4p)^{2-p}(\delta-\epsilon)^{3-p}}.
 \]
  \end{theorem}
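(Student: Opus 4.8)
The plan is to study the minimal risk $r_p^*(\sigma)\triangleq\min_{\lambda}\mathbb{E}\big(|\eta_p(X+\sigma Z;\lambda)-X|^2\big)$ one order beyond its leading term and then feed the result into the fixed-point identity $\sigma_\ell^2=\sigma_w^2+\tfrac1\delta\,r_p^*(\sigma_\ell)$. Writing $X\sim(1-\epsilon)\Delta_0+\epsilon G$, I first split
\[
r_p(\sigma,\lambda)=(1-\epsilon)\,\mathbb{E}\big|\eta_p(\sigma Z;\lambda)\big|^2+\epsilon\,\mathbb{E}_G\mathbb{E}_Z\big|\eta_p(U+\sigma Z;\lambda)-U\big|^2,
\]
which I call the \emph{noise term} and the \emph{signal term}. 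The hypothesis $P_G(|U|>\mu)=1$ is what makes the analysis tractable: every nonzero coordinate is bounded away from the origin, so for small $\sigma$ and the near-optimal threshold (which will scale like $\sigma\sqrt{\log(1/\sigma)}\ll\mu$) the argument $U+\sigma Z$ lies in the smooth regime of $\eta_p$ with overwhelming probability, and the event that a nonzero entry is thresholded to zero has probability $O(e^{-c\mu^2/\sigma^2})$, negligible against the eventual rate.

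For the signal term I would use the stationarity characterization of the proximal map to justify the expansion $\eta_p(u;\lambda)=u-\lambda p|u|^{p-1}\mathrm{sign}(u)+O(\lambda^2|u|^{2p-3})$ valid for $|u|$ bounded below. Substituting $u=U+\sigma Z$, expanding $|U+\sigma Z|^{p-1}$ about $U$, and integrating over $Z$ (the cross term drops since $\mathbb{E}Z=0$) gives, to the relevant order, $\mathbb{E}_Z|\eta_p(U+\sigma Z;\lambda)-U|^2=\sigma^2+\lambda^2 p^2|U|^{2p-2}+\text{(h.o.t.)}$, so the signal contribution is $\epsilon\big(\sigma^2+\lambda^2 p^2\,\mathbb{E}_G|U|^{2p-2}\big)$. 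This is the source of the factors $p^2$ and $\mathbb{E}|U|^{2p-2}$ in the claim; note $\mathbb{E}_G|U|^{2p-2}\le\mu^{2p-2}<\infty$ precisely because $2p-2<0$ and $|U|\ge\mu$. The leading piece $\epsilon\sigma^2$ reproduces Theorem~\ref{thm:noisyoptimallambda}, while the remaining terms form the correction $R(\sigma)$ I must pin down.

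For the noise term I would set $t=\tilde\lambda$, use the threshold relation from Lemma~\ref{lem:threshform} (which introduces the constant $c_p$ linking $\tilde\lambda$ and $\lambda$), and estimate the Gaussian tail $2\int_{t/\sigma}^{\infty}\eta_p(\sigma z;\lambda)^2\phi(z)\,dz\asymp\sigma t\,e^{-t^2/(2\sigma^2)}$ by a boundary-layer (Laplace) argument near $z=t/\sigma$, where $\eta_p$ is essentially the jump value. Minimizing over $t$ the sum of this exponentially small noise term and the polynomially growing bias term (proportional to $t^{4-2p}$ through the $c_p$ relation), the first-order condition balances $-N_t$ against $B_t$; solving the resulting transcendental equation yields $t^{*2}=(4-4p)\,\sigma^2\log(1/\sigma)\,(1+o(1))$ and shows that at the optimum the noise term is smaller than the bias by a factor $O(1/\log(1/\sigma))$. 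Hence $r_p^*(\sigma)=\epsilon\sigma^2+R(\sigma)$ with $R(\sigma)\asymp\sigma^{4-2p}(\log\tfrac1\sigma)^{2-p}$, the power $2-p$ arising from $(t^{*2})^{2-p}$ and the prefactor assembling from $c_p$ and the balancing factor $4-4p$.

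Finally, substituting into the fixed-point equation gives $\sigma_\ell^2-\tfrac{\delta}{\delta-\epsilon}\sigma_w^2=\tfrac{1}{\delta-\epsilon}R(\sigma_\ell)$. Theorem~\ref{thm:noisyoptimallambda} supplies both the continuity of $\sigma_\ell^2$ and the leading relation $\sigma_\ell=\sqrt{\delta/(\delta-\epsilon)}\,\sigma_w(1+o(1))$, so I may replace $\sigma_\ell$ by this inside $R$: then $\sigma_\ell^{4-2p}=\big(\tfrac{\delta}{\delta-\epsilon}\big)^{2-p}\sigma_w^{4-2p}(1+o(1))$, while $\log(1/\sigma_\ell)\sim\log(1/\sigma_w)$ since the constant shift only perturbs lower-order terms of the logarithm. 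Collecting $\tfrac{1}{\delta-\epsilon}\cdot\big(\tfrac{\delta}{\delta-\epsilon}\big)^{2-p}=\tfrac{\delta^{2-p}}{(\delta-\epsilon)^{3-p}}$ reproduces the $\delta$-dependence of the claim after dividing by $\sigma_w^{4-2p}(\log\tfrac1{\sigma_w})^{2-p}$. I expect the main obstacle to be the noise-term analysis: making the boundary-layer tail estimate rigorous with the correct constant, and—most importantly—establishing that the optimal threshold obeys $t^{*2}\sim(4-4p)\sigma^2\log(1/\sigma)$ (the factor $4-4p$, not the naive $2$, is exactly what fixes the constant) together with the fact that the noise term is genuinely subordinate there; a secondary difficulty is uniform control of the remainder in the $\eta_p$ expansion across the support of $G$, which again leans on $|U|\ge\mu$.
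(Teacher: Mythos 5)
Your proposal follows essentially the same route as the paper: Proposition \ref{lem:riskbehsmallsigma} carries out exactly your program (decomposing the risk into the noise contribution from the zero mass and the signal contribution, extracting the bias term $\epsilon p^2\mathbb{E}|U|^{2p-2}\lambda^2$ from the stationarity expansion of $\eta_p$, and balancing it against the Gaussian-tail noise term via the first-order condition in $\tau$ to get the threshold scaling $t_*^2\sim(4-4p)\sigma^2\log(1/\sigma)$ and the subordination of the noise term by a factor $1/\log(1/\sigma)$), and Section \ref{sec:prooflemmasmallnoise} then performs your final substitution into the fixed-point identity, including the replacement of $\sigma_\ell$ by $\sqrt{\delta/(\delta-\epsilon)}\,\sigma_w$ inside the correction term. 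The only cosmetic difference is that the paper keeps the correction expressed through $\tau_*(\sigma)$ until the last step (proving $\tau_*(\sigma_\ell)/\tau_*(\sigma_w)\to 1$ before converting to $\log(1/\sigma_w)$), whereas you solve the transcendental equation for the threshold up front; the content is the same.
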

 The proof of this result can be found in Section \ref{sec:prooflemmasmallnoise}. Before we interpret this result, let us discuss the result for $p=0$ as well. Note that Theorem \ref{thm:riskbehsmallsigma} does not cover $p=0$ case. 
 
 \begin{theorem}\label{thm:lownoisehardthresh1}
 Suppose $\mathbb{E}_G|U|^2 < \infty$ and $P_G(|U|>\mu)=1$, where $\mu=\sup_v \{v : P(|U|>v)=1 \} >0$, then for $p=0$ and $\epsilon< \delta$,
\[
\sigma_\ell^2 = \frac{\delta}{\delta- \epsilon} \sigma_w^2 + o(\phi(\bar{\mu}\sigma^{-1}_w)),
\]
where $\bar{\mu}$ is any constant that is smaller than $\frac{\mu}{2}\sqrt{\frac{\delta-\epsilon}{\delta}}$.
 \end{theorem}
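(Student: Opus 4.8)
The plan is to analyze $\sigma_\ell$ directly through the fixed-point relation $\sigma_\ell^2 = \sigma_w^2 + \frac{1}{\delta} r_*(\sigma_\ell)$, where $r_*(\sigma) \triangleq \inf_{\lambda} \mathbb{E}\left(|\eta_0(X + \sigma Z; \lambda) - X|^2\right)$ is the optimally tuned hard-thresholding risk. It is convenient to rewrite the threshold in amplitude units, $\chi \triangleq \sqrt{2\lambda}$, so that $\eta_0(u;\lambda) = u\,\mathbb{I}(|u|>\chi)$. First I would decompose the risk over the atom $X=0$ (probability $1-\epsilon$) and the nonzero part $X = U$ with $U \sim G$ (probability $\epsilon$). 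The zero part produces a \emph{false-alarm} contribution $(1-\epsilon)\sigma^2\mathbb{E}\left(Z^2\mathbb{I}(|Z|>\chi/\sigma)\right)$, while the nonzero part equals $\epsilon\sigma^2$ (the ideal ``keep'' error) minus two corrections supported on the \emph{missed-detection} event $\{|U+\sigma Z|\le\chi\}$. The target is to show $r_*(\sigma) = \epsilon\sigma^2 + \rho(\sigma)$ with $|\rho(\sigma)|$ decaying at the rate $\phi(\mu/(2\sigma))$, where $\phi$ and $\Phi$ denote the standard normal density and CDF.

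The key step is to control every correction by a \emph{uniform} Gaussian-tail estimate. Since $P_G(|U|>\mu)=1$, for any $\chi<\mu$ the triangle inequality forces $\{|U+\sigma Z|\le\chi\}\subseteq\{\sigma|Z|\ge\mu-\chi\}$ almost surely, so that the missed-detection mass obeys $\mathbb{E}_G\left(U^2\, \mathbb{P}(|U+\sigma Z|\le\chi)\right)\le 2\,\mathbb{E}_G(U^2)\,\Phi(-(\mu-\chi)/\sigma)$, and the companion ``kept-signal'' variance correction $\sigma^2\mathbb{E}_G\mathbb{E}\left(Z^2\mathbb{I}(|U+\sigma Z|\le\chi)\right)$ is bounded the same way, using only $\mathbb{E}_G|U|^2<\infty$. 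This is the crucial device: it shows uniformly in $u$ that large values of $|U|$ cannot dominate (their Gaussian tail is super-exponentially small in $|U|$), and the worst exponential rate is governed by the essential infimum $\mu$. Together with the classical tail asymptotic $\mathbb{E}\left(Z^2\mathbb{I}(|Z|>t)\right)\sim 2t\phi(t)$ for the false-alarm term, this yields $|\rho(\sigma)|\le C\,\mathrm{poly}(1/\sigma)\left[\phi(\chi/\sigma)+\phi((\mu-\chi)/\sigma)\right]$.

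Next I would optimize over the threshold. Choosing $\chi=\mu/2$ balances the two exponents, since then $\chi=\mu-\chi=\mu/2$, giving the upper bound $r_*(\sigma)-\epsilon\sigma^2\le C\,\mathrm{poly}(1/\sigma)\,\phi(\mu/(2\sigma))$. For the matching lower bound, note that for every admissible $\chi<\mu$ the nonzero ``keep'' part already forces $r(\sigma,\chi)\ge\epsilon\sigma^2-\text{(exponentially small)}$, whereas any $\chi\ge\mu$ incurs an $\Omega(\epsilon\mu^2)$ missed-detection cost and hence cannot be the minimizer for small $\sigma$; thus $r_*(\sigma)\ge\epsilon\sigma^2-C'\mathrm{poly}(1/\sigma)\,\phi(\mu/(2\sigma))$. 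Combining the two bounds establishes $|\rho(\sigma)|=o(\phi(c/\sigma))$ for every $c<\mu/2$.

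Finally I would close the fixed-point relation. Rearranging gives $\sigma_\ell^2=\frac{\delta}{\delta-\epsilon}\sigma_w^2+\frac{1}{\delta-\epsilon}\rho(\sigma_\ell)$, so it remains to translate a rate in $\sigma_\ell$ into a rate in $\sigma_w$. By Theorem \ref{thm:noisyoptimallambda} we have $\sigma_\ell=\sqrt{\delta/(\delta-\epsilon)}\,\sigma_w\,(1+o(1))$, hence $\mu/(2\sigma_\ell)=\tfrac{\mu}{2}\sqrt{\tfrac{\delta-\epsilon}{\delta}}\,\sigma_w^{-1}(1+o(1))$, and since $\bar{\mu}<\tfrac{\mu}{2}\sqrt{\tfrac{\delta-\epsilon}{\delta}}$ strictly, the $(1+o(1))$ factor can be absorbed to conclude $\phi(\mu/(2\sigma_\ell))=o(\phi(\bar{\mu}\sigma_w^{-1}))$, which is the claim. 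I expect the main obstacle to be the uniform control of the missed-detection integral under only the second-moment assumption on $U$, together with closing the self-consistency loop, where $\rho(\sigma_\ell)$ depends on $\sigma_\ell$ which itself depends on $\rho$; the triangle-inequality bound makes the former tractable, and the strict inequality defining $\bar\mu$ supplies exactly the slack needed to absorb the $(1+o(1))$ in the bootstrap.
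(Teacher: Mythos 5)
Your proposal is correct in substance but follows a genuinely different route from the paper. The paper proves this theorem by first pinning down the \emph{exact} asymptotics of the optimal threshold: Lemma \ref{exact:rate:zero} uses the stationarity condition $\partial R_0(\tau_*,\sigma)/\partial\tau=0$ and a delicate balancing of the four partial derivatives to show $c_0\sqrt{\tau_*(\sigma)}\,\sigma\to\mu/2$, and only then does Lemma \ref{exact:riskrate:zero} plug this back in to get $R_0(\tau_*,\sigma)=\epsilon+O(\sqrt{\tau_*}\phi(c_0\sqrt{\tau_*}))$; the translation from $\sigma_\ell$ to $\sigma_w$ is then inherited from the proof of Theorem \ref{thm:riskbehsmallsigma}. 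You instead sandwich $r_*(\sigma)$ directly: an upper bound by evaluating at the non-optimized threshold $\chi=\mu/2$ (with the triangle-inequality inclusion $\{|U+\sigma Z|\le\chi\}\subseteq\{\sigma|Z|\ge\mu-\chi\}$ controlling the missed-detection and kept-signal corrections uniformly under only $\mathbb{E}_G|U|^2<\infty$), and a lower bound uniform over thresholds. This avoids the first-order-condition analysis entirely and is cleaner, precisely because the theorem only asserts an $o(\cdot)$ rate and not the sharp constant; what the paper's route buys in exchange is the explicit limit of the optimal threshold, which it reuses elsewhere. Your closing bootstrap (using Theorem \ref{thm:noisyoptimallambda} and the strict slack in $\bar\mu<\tfrac{\mu}{2}\sqrt{(\delta-\epsilon)/\delta}$ to absorb the $(1+o(1))$) matches the paper's.

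One soft spot to patch: your lower bound as stated controls the kept-signal correction via $\phi((\mu-\chi)/\sigma)$, which is not $o(\phi(\bar\mu/\sigma))$ when $\chi$ ranges over $(\mu/2,\mu)$ with $\mu-\chi$ small. The fix is one line: writing the nonzero part as $\sigma^2+\mathbb{E}\left[(U^2-\sigma^2Z^2)\mathbb{I}(|U+\sigma Z|\le\chi)\right]$, the integrand is nonnegative unless $\sigma|Z|>|U|\ge\mu$, so uniformly over \emph{all} $\chi\ge0$ one has $r(\sigma,\chi)\ge\epsilon\sigma^2-\epsilon\sigma^2\mathbb{E}\left[Z^2\mathbb{I}(|Z|>\mu/\sigma)\right]=\epsilon\sigma^2-O(\sigma\phi(\mu/\sigma))$, which also makes your separate treatment of $\chi\ge\mu$ unnecessary. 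With that adjustment the argument is complete.
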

 The proof of this theorem is presented in Section \ref{ssec:prooflownosiehardthresh}. We now discuss how these theorems explain the {\em low-noise phenomenon} in Figure \ref{fig:sigma_noisy}. Suppose that we ignore all the logarithmic terms and study the second dominant term in the expressions of $\sigma_\ell$ that we derived in Theorem \ref{thm:riskbehsmallsigma}. There are two facts we should emphasize here: (i) The second dominant term is proportional to $\sigma_w^{4-2p}$, and is hence smaller for smaller values of $p$. (ii) The second dominant term is positive. If we combine these two facts, we conclude that if $p_1< p_2$, for small enough $\sigma_w$ the lowest fixed point of optimally tuned $\ell_{p_1}$-AMP outperforms optimally tuned $\ell_{p_2}$-AMP, which confirms our observation in Figure \ref{fig:sigma_noisy}.  More interestingly, according to  Theorem \ref{thm:lownoisehardthresh1} for the case $p=0$:
\[
\sigma^2_\ell=\frac{\delta}{\delta -\epsilon}\sigma^2_w+o(\phi(\bar{\mu}\sigma^{-1}_w)),
\]
Here, the second dominant term for $p=0$ decays exponentially faster than the polynomial rate for $p>0$. Hence $\ell_0$-AMP will outperform $\ell_p$-AMP for $p>0$ in low noise regime, which is again consistent with Figure \ref{fig:sigma_noisy}. Another interesting feature of this theorem is its implications for the values of $p$ that are less than $1$, but close to it. Figure \ref{fig:sigma_noisy} shows that their performance is in fact close to that of LASSO. If we look at the first dominant term in Theorem \ref{thm:riskbehsmallsigma}, even $p=0.99$ may seem to outperform LASSO by a large margin. However, note that the order of second dominant term for $p=0.99$ is pretty close to the order of the first dominant term. Hence, any judgement based on the first dominant term in such cases is inaccurate and misleading. This shows the importance of the second dominant term in these cases.

\vspace{.2cm}

So far, we have analyzed the lowest fixed point of $\Psi_{\lambda_*,p}(\sigma^2)$ and have seen that $p<1$ may lead to major improvements over optimal-$\lambda$ $\ell_1$-AMP, if the noise level is not large. Our next goal is to prove the ``high-noise phenomenon", i.e., the fact that for large values of noise, optimally tuned $\ell_1$-AMP outperforms optimally tuned $\ell_p$-AMP for $p<1$. 

 \begin{proposition}\label{prop:optimallassolargesigma}
Suppose $X \sim (1- \epsilon)\Delta_0+ \epsilon \Delta_\mu$ where $\mu$ is a non-zero constant. For any $0\leq p<1$, there exists a threshold $\tilde{\sigma}_w$ such that optimal-$\lambda$ $\ell_1$-{\rm AMP} outperforms the lowest fixed point of optimal-$\lambda$ $\ell_p$-{\rm AMP} for all $\sigma_w > \tilde{\sigma}_w$.
\end{proposition}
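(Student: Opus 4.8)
The plan is to reduce everything to a comparison of the optimally-tuned \emph{denoising risks} and then push it through the fixed-point equation. Write $R_p(\sigma)\triangleq\min_\lambda\mathbb{E}|\eta_p(X+\sigma Z;\lambda)-X|^2$, so that $\Psi_{\lambda_*,p}(\sigma^2)=\sigma_w^2+\tfrac1\delta R_p(\sigma)$ and the lowest fixed point $\sigma_{\ell,p}$ of optimal-$\lambda$ $\ell_p$-AMP solves $\sigma_{\ell,p}^2=\sigma_w^2+\tfrac1\delta R_p(\sigma_{\ell,p})$; by Proposition \ref{prop:ell_1pt} the $p=1$ fixed point $\sigma_{\ell,1}$ is unique. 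Because $X\in\{0,\mu\}$ is bounded, the trivial threshold ($\lambda\to\infty$, i.e.\ $\hat X\equiv0$) gives $R_p(\sigma)\le\epsilon\mu^2$ for every $p$ and every $\sigma$, whence $\sigma_{\ell,p}^2\le\sigma_w^2+\epsilon\mu^2/\delta$; so as $\sigma_w\to\infty$ both fixed points diverge while staying in an $O(1)$ band above $\sigma_w^2$. First I would record the elementary fixed-point reduction: once $R_1(\sigma)<R_p(\sigma)$ holds for all large $\sigma$, at $\sigma=\sigma_{\ell,p}$ one has $\Psi_{\lambda_*,1}(\sigma_{\ell,p}^2)=\sigma_w^2+\tfrac1\delta R_1(\sigma_{\ell,p})<\sigma_w^2+\tfrac1\delta R_p(\sigma_{\ell,p})=\sigma_{\ell,p}^2$, while $\Psi_{\lambda_*,1}(0)\ge\sigma_w^2>0$; continuity together with the uniqueness of the $\ell_1$ fixed point then forces $\sigma_{\ell,1}<\sigma_{\ell,p}$. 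This isolates the real content: proving $R_1(\sigma)<R_p(\sigma)$ for all sufficiently large $\sigma$.

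To that end I would rescale by $\sigma$. Setting $s\triangleq\mu/\sigma\to0$ and using the homogeneity of the proximal maps, $R_p(\sigma)=\sigma^2\widetilde R_p(s)$ with $\widetilde R_p(s)=\min_{\lambda}\big[(1-\epsilon)\mathbb{E}\,\eta_p(Z;\lambda)^2+\epsilon\,\mathbb{E}|\eta_p(s+Z;\lambda)-s|^2\big]$. A short computation, lower-bounding the risk on the event $\{|s+Z|\le\tilde\lambda\}$ on which every thresholding rule outputs $0$, shows that beating the trivial value $\epsilon s^2$ by a constant factor would require a non-divergent threshold, whose false-alarm term $(1-\epsilon)\mathbb{E}\,\eta_p(Z;\lambda)^2$ is then $\Theta(1)\gg\epsilon s^2$; hence $\widetilde R_p(s)=\epsilon s^2\,(1+o(1))$ for \emph{every} $p$, i.e.\ $R_p(\sigma)\to\epsilon\mu^2$. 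The leading terms agree, so the proposition rests entirely on the \emph{second-order} correction below $\epsilon\mu^2$.

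The heart of the argument is a cost--benefit expansion of $\widetilde R_p(s;\lambda)$ around the trivial estimator as the effective threshold $\tilde\lambda\to\infty$. Differentiating in $s$ at $s=0$, the net second-order \emph{gain} from lowering the threshold is of order $\epsilon s^2\,\mathbb{E}\,\eta_p'(Z;\lambda)$ (combining the detection cross term $-2\epsilon s\,\mathbb{E}\,\eta_p(s+Z;\lambda)$ with the $O(s^2)$ rise of $\mathbb{E}\,\eta_p(s+Z;\lambda)^2$), while the \emph{price} is the false-alarm term $(1-\epsilon)\mathbb{E}\,\eta_p(Z;\lambda)^2$. The decisive point is the mismatch in the Gaussian-tail rates of these two quantities. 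For soft-thresholding the shrinkage keeps retained values small near the threshold, so $\mathbb{E}\,\eta_1(Z;\lambda)^2=\Theta(\phi(\lambda)/\lambda^3)$ whereas $\mathbb{E}\,\eta_1'(Z;\lambda)=\mathbb{P}(|Z|>\lambda)=\Theta(\phi(\lambda)/\lambda)$; taking $\lambda\asymp1/(s\sqrt\epsilon)$ makes the benefit dominate the cost and yields $\widetilde R_1(s)<\epsilon s^2$ strictly. For $\eta_p$ with $p<1$ the discontinuity at $\tilde\lambda$ forces every retained value to exceed the jump height $\nu=\eta_p^+(\tilde\lambda;\lambda)$, and since $\eta_p(z;\lambda)$ tracks the input $z$ in the tail, its false-alarm term is of order $\phi(\tilde\lambda)\cdot\mathrm{poly}(\tilde\lambda)$ --- \emph{growing} relative to $\phi(\tilde\lambda)$ instead of decaying. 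This price can no longer be outrun by the detection benefit, so the smallest achievable $\widetilde R_p(s)$ lies strictly above $\widetilde R_1(s)$ for small $s$. Combining gives $R_1(\sigma)<R_p(\sigma)$ for all large $\sigma$, and the fixed-point comparison of the first paragraph completes the proof.

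The step I expect to be the main obstacle is precisely this last asymptotic comparison, and in particular making it hold uniformly across the whole range $0\le p<1$: as $p\uparrow1$ the jump height $\nu\to0$ and $\eta_p$ degenerates toward the shrinking soft-threshold, so the gap between the two false-alarm rates closes and one must track the constants in the $\mathrm{poly}(\tilde\lambda)$ factor carefully to preserve a strict inequality (with the threshold $\tilde\sigma_w$ necessarily depending on $p$). Establishing the clean tail asymptotics of $\mathbb{E}\,\eta_p(Z;\lambda)^2$ and $\mathbb{E}\,\eta_p'(Z;\lambda)$ for general $p$ --- via the explicit form of $\tilde\lambda$ from Lemma \ref{lem:threshform} and the near-threshold behavior of $\eta_p$ --- is where the technical weight of the proof lies.
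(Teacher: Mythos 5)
Your proposal is correct in outline and reaches the result by a route that genuinely differs from the paper's in its central step. Both arguments share the same skeleton: reduce the claim to showing that the optimally tuned $\ell_1$ denoising risk lies strictly below the optimally tuned $\ell_p$ risk for all large $\sigma$, then convert this into an ordering of the lowest fixed points via continuity and the uniqueness of the $\ell_1$ fixed point (Proposition \ref{prop:ell_1pt}); your first paragraph matches the closing step of Section \ref{sec:prooflargenoiseell_1}. Where you diverge is in how the risk inequality is obtained. The paper parametrizes both denoisers by a common effective threshold $\alpha$ (so that both vanish on $|u|<\alpha\sigma$) and proves a \emph{same-threshold} domination, Lemma \ref{lem:ell1isbetterlargesig}: writing $\eta_p=\eta_1+\xi_p$, the condition $\eta_p^+(\alpha;(\alpha/c_p)^{2-p})>2\mu/\sigma$ forces $\xi_p^2-2\mu_\sigma\xi_p>0$ pointwise and hence $r_{\alpha,1}<r_{\alpha,p}$; this is combined with $\alpha_p^{\rm opt}\to\infty$ (Corollary \ref{cor:optimalinf}) and the monotonicity of the soft-thresholding risk for large $\alpha$ (Lemma \ref{lem:alispaper}, imported from \cite{mousavi2013asymptotic}, which yields $\inf_\alpha r_{\alpha,1}<\epsilon\mu^2$). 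You instead compare both optimized risks to the trivial benchmark $\epsilon\mu^2$: soft thresholding beats it because its false-alarm term decays like $\phi(\lambda)/\lambda^3$ while its detection gain decays like $s^2\phi(\lambda)/\lambda$, whereas for $p<1$ the jump of height $\Theta(\tilde\lambda)$ at the threshold makes the false-alarm term of order $\tilde\lambda\phi(\tilde\lambda)$, which always dominates the gain, so the $\ell_p$ infimum cannot fall below $\epsilon\mu^2$. Your ``$\ell_1$ beats trivial'' half is exactly what the paper cites from \cite{mousavi2013asymptotic}; your ``$\ell_p$ cannot beat trivial'' half replaces Lemma \ref{lem:ell1isbetterlargesig} and Corollary \ref{cor:optimalinf} and is correct, but it carries the technical weight you identify: one must rule out bounded thresholds (where the false alarm is $\Theta(1)$) as well as divergent ones, and track the $p$-dependent jump constant $[2(1-p)]^{1/(2-p)}$, which degenerates as $p\uparrow 1$ (harmless here since $\tilde{\sigma}_w$ may depend on $p$). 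The paper's same-threshold comparison buys a clean, essentially non-asymptotic inequality that sidesteps these second-order tail expansions; your version buys a sharper qualitative picture, namely that the optimal $\ell_p$ risk converges to the trivial value $\epsilon\mu^2$ from above while the optimal $\ell_1$ risk stays strictly below it.
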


\vspace{.2cm}

Proof of this result is presented in Section \ref{sec:prooflargenoiseell_1}. This proposition implies that even if we had access to the best initialization for the optimal-$\lambda$ $\ell_p$-AMP, we should still use optimal-$\lambda$ $\ell_1$-AMP when the measurement noise is large. Note that even though this theorem is concerned with very large values of the measurement noise, as is clear from Figure \ref{fig:sigma_noisy}, even for not so large noise levels, $\ell_1$-AMP outperforms $\ell_p$-AMP.

\subsection{Analysis of the highest fixed point of optimally tuned $\ell_p$-AMP}\label{ssec:hfpsectionnoise}

So far we have analyzed the lowest fixed point of optimally tuned $\ell_p$-AMP. In this section we study its highest fixed point in the presence of noise.

\begin{theorem}\label{thm:hfpnoisy}
Let $\sigma_h$ denote the highest fixed point of the optimal-$\lambda$ $\ell_p$-AMP. If $\overline{M}_p(\epsilon) < \delta$, then 
\begin{equation}\label{eq:noisesensitivity1}
\frac{\sigma_h^2}{\sigma_w^2} \leq \frac{1}{1- \frac{\overline{M}_p(\epsilon)}{\delta}}.
\end{equation}
Furthermore, there exists a distribution $p_X \in \mathcal{F}_{\epsilon}$ and a noise variance $\sigma_w^2$ for which 
\begin{equation}\label{eq:noisesensitivity2}
\frac{\sigma_h^2}{\sigma_w^2} \geq \frac{1}{1- \frac{\underline{M}_p(\epsilon)}{\delta}}.
\end{equation}
\end{theorem}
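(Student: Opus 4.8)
The plan is to exploit the scale invariance of the proximal map $\eta_p$ to rewrite the optimally tuned state-evolution function in terms of the minimax risk quantities of \eqref{eq:definitionMp}, and then read off both inequalities directly from the fixed-point identity $\Psi_{\lambda_*,p}(\sigma_h^2)=\sigma_h^2$. First I would record the homogeneity relation $\eta_p(a u;\lambda)=a\,\eta_p(u;\lambda a^{p-2})$ for $a>0$, which follows from the substitution $x=ay$ in the definition $\eta_p(u;\lambda)=\arg\min_x\tfrac12(u-x)^2+\lambda|x|^p$. Writing $X+\sigma Z=\sigma(X/\sigma+Z)$ and setting $\tau=\lambda\sigma^{p-2}$, this gives
\[
\mathbb{E}\bigl|\eta_p(X+\sigma Z;\lambda)-X\bigr|^2=\sigma^2\,\mathbb{E}\bigl|\eta_p(X/\sigma+Z;\tau)-X/\sigma\bigr|^2 .
\]
Since $\lambda\mapsto\tau$ is a bijection of $[0,\infty)$ for fixed $\sigma>0$, minimizing over the threshold yields the normalized form
\[
\Psi_{\lambda_*,p}(\sigma^2)=\sigma_w^2+\frac{\sigma^2}{\delta}\,\min_{\tau\ge0}\mathbb{E}\bigl|\eta_p(X/\sigma+Z;\tau)-X/\sigma\bigr|^2 ,
\]
which is exactly the object governed by \eqref{eq:definitionMp}. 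Throughout I abbreviate $R(\mu,\tau)=(1-\epsilon)\mathbb{E}(\eta_p(Z;\tau))^2+\epsilon\,\mathbb{E}(\eta_p(\mu+Z;\tau)-\mu)^2$, so that $\overline M_p(\epsilon)=\inf_\tau\sup_\mu R(\mu,\tau)$ and $\underline M_p(\epsilon)=\sup_\mu\inf_\tau R(\mu,\tau)$.

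For the upper bound, let $\tau^\star$ attain $\overline M_p(\epsilon)$ and take any $p_X\in\mathcal F_\epsilon$ with $\mathbb{P}(X\neq0)=\epsilon'\le\epsilon$ and nonzero law $G$. Evaluating the normalized risk at the single threshold $\tau^\star$ and writing $g(\mu):=\mathbb{E}(\eta_p(\mu+Z;\tau^\star)-\mu)^2$ (even in $\mu$ by oddness of $\eta_p$), I get $\mathbb{E}|\eta_p(X/\sigma+Z;\tau^\star)-X/\sigma|^2=(1-\epsilon')\mathbb{E}(\eta_p(Z;\tau^\star))^2+\epsilon'\,\mathbb{E}_G\,g(|U|/\sigma)$. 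The key elementary observation is $g(0)=\mathbb{E}(\eta_p(Z;\tau^\star))^2$, so $\sup_\mu g(\mu)\ge\mathbb{E}(\eta_p(Z;\tau^\star))^2$; combined with $\epsilon'\le\epsilon$ this makes the difference $\overline M_p(\epsilon)-\mathbb{E}|\eta_p(X/\sigma+Z;\tau^\star)-X/\sigma|^2=(\epsilon-\epsilon')\bigl(\sup_\mu g-\mathbb{E}(\eta_p(Z;\tau^\star))^2\bigr)\ge0$. Hence the minimized normalized risk is $\le\overline M_p(\epsilon)$ for every $\sigma$, giving $\Psi_{\lambda_*,p}(\sigma^2)\le\sigma_w^2+\tfrac{\sigma^2}{\delta}\overline M_p(\epsilon)$. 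Plugging $\sigma=\sigma_h$ into $\sigma_h^2=\Psi_{\lambda_*,p}(\sigma_h^2)$ and rearranging (valid since $\overline M_p(\epsilon)<\delta$) produces \eqref{eq:noisesensitivity1}.

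For the lower bound I would build the prior so that the normalized problem sits exactly at the saddle value $\underline M_p(\epsilon)$. Let $\mu^\star$ attain $\sup_\mu\inf_\tau R(\mu,\tau)=\underline M_p(\epsilon)$, fix an arbitrary target $s>0$, and choose $p_X=(1-\epsilon)\Delta_0+\epsilon\Delta_{m}$ with $m=\mu^\star s$ together with noise variance $\sigma_w^2=s^2\bigl(1-\underline M_p(\epsilon)/\delta\bigr)$. For this two-point prior the normalized risk is $\inf_\tau R(m/\sigma,\tau)$, so at $\sigma=s$ it equals $\inf_\tau R(\mu^\star,\tau)=\underline M_p(\epsilon)$, whence $\Psi_{\lambda_*,p}(s^2)=\sigma_w^2+\tfrac{s^2}{\delta}\underline M_p(\epsilon)=s^2$; that is, $s$ is a fixed point. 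Because the highest fixed point dominates every fixed point, $\sigma_h\ge s$, and since $\sigma_w^2$ is held fixed this gives $\sigma_h^2/\sigma_w^2\ge s^2/\sigma_w^2=1/\bigl(1-\underline M_p(\epsilon)/\delta\bigr)$, which is \eqref{eq:noisesensitivity2}. I would also check that $\Psi_{\lambda_*,p}(\sigma^2)\to\sigma_w^2$ as $\sigma\to\infty$ (the normalized risk vanishes as $\tau\to\infty$), so that the set of fixed points is bounded and $\sigma_h$ is well defined.

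The main obstacle is not the algebra but the least-favorable reduction: I must justify that the extremal prior in $\mathcal F_\epsilon$ is the two-point mixture and that the outer $\sup_\mu$ defining $\underline M_p(\epsilon)$ and the outer $\inf_\tau$ defining $\overline M_p(\epsilon)$ are actually attained, so that $\mu^\star$ and $\tau^\star$ exist; should attainment fail, the argument must be run with near-optimizers and a limit taken. Securing this requires the regularity (continuity and finiteness) of $R(\mu,\tau)$ in both arguments, which also underpins the $\lambda\leftrightarrow\tau$ reparametrization. The smoothing and differentiability concerns attached to $\tilde\eta_{p,h}$ are already absorbed into the $h\to0$ limit of Theorem \ref{conj:se}, so they do not resurface here.
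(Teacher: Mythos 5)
Your proposal is correct and follows essentially the same route as the paper's proof: both use the scale invariance $\eta_p(\alpha u;\lambda\alpha^{2-p})=\alpha\eta_p(u;\lambda)$ to normalize the state-evolution risk, bound the expectation over the signal prior by $\sup_\mu$ to get $\Psi_{\lambda_*,p}(\sigma^2)\le\sigma_w^2+\tfrac{\sigma^2}{\delta}\overline{M}_p(\epsilon)$, and for the lower bound construct the two-point prior with atom at $\mu_*$ times the target fixed point so that the saddle value $\underline{M}_p(\epsilon)$ is realized exactly at that fixed point. The only step the paper adds that you gloss over is the case where the constructed fixed point is unstable, in which case it invokes the argument of Proposition~\ref{proof:existsncestable} to produce a stable fixed point above it; your phrase ``the highest fixed point dominates every fixed point'' suffices only under the literal reading of $\sigma_h$ as the supremum of all fixed points.
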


\vspace{.2cm}
This theorem is proved in Section \ref{sec:proofnoisyhigh}. We again emphasize that our numerical calculations show that $\overline{M}_p(\epsilon) = \underline{M}_p(\epsilon)$. Also, in the proof of Lemma \ref{lem:ell1pteq} we have proved that $\overline{M}_1(\epsilon) = \underline{M}_1(\epsilon)$.  Figure \ref{fig:9} compares $\overline{M}_p(\epsilon)$ for different values of $p$. For most $p<1$, $\overline{M}_p(\epsilon)$ is either larger than $\overline{M}_1(\epsilon)$ or in some cases slightly lower. Hence, as far as the highest fixed point of the $\ell_p$-AMP algorithm on the least favorable signals is concerned, optimal-$\lambda$ $\ell_p$-AMP can offer slight improvements (if any at all) over $\ell_1$-AMP. 

\begin{figure}[htbp]
  \centering
  \includegraphics[width=3.2in]{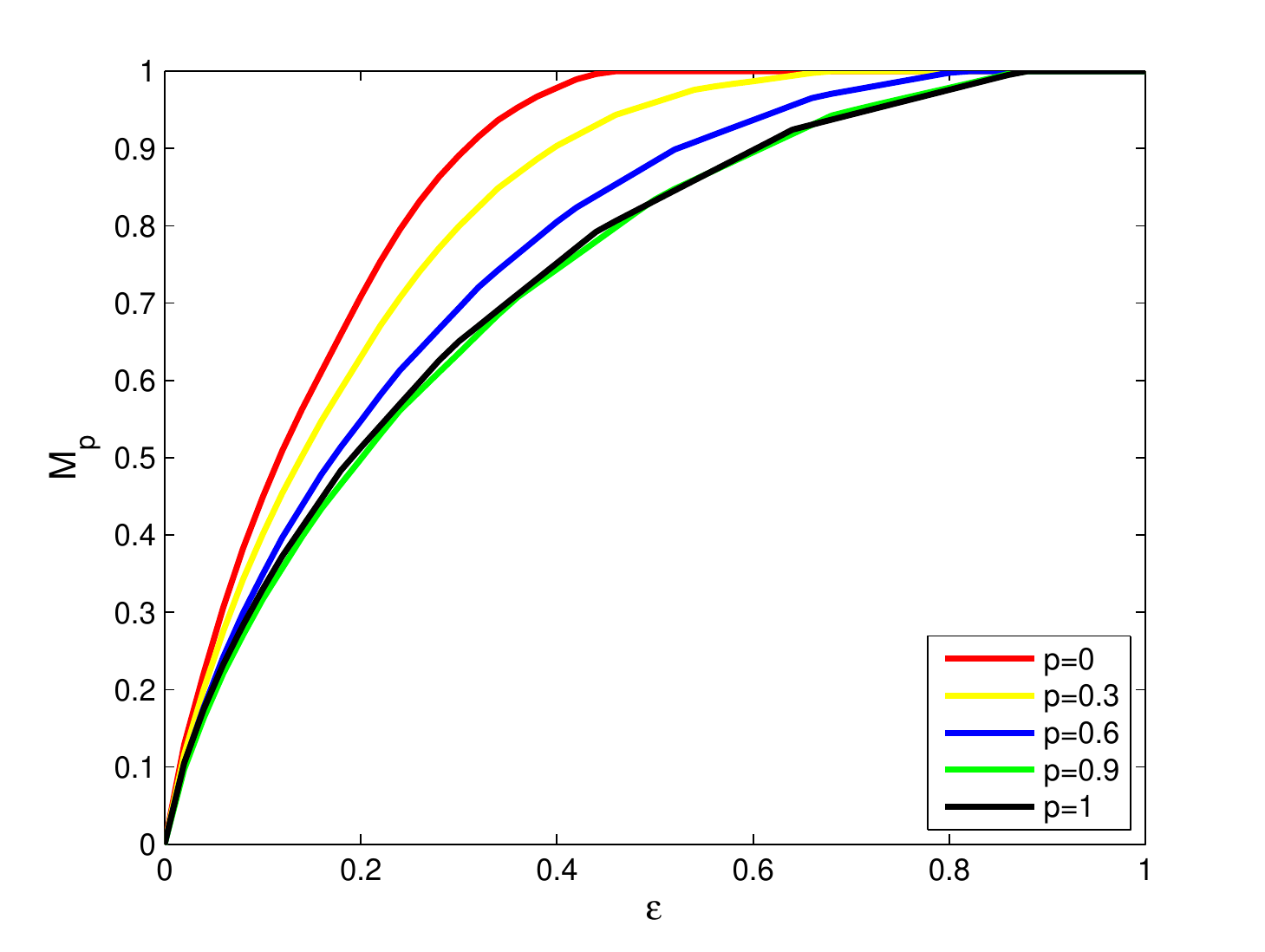}
  \caption{$\overline{M}_p(\epsilon)$ as a function of $\epsilon$ for $p\in \{0, 0.3, 0.6, 0.9, 1\}$. Lower values of $\overline{M}_p(\epsilon)$ lead to better upper bounds for $\sigma_h^2$.   }
  \label{fig:9}
\end{figure}

Again we would like to emphasize that the bound $\frac{1}{1- \frac{\overline{M}_p(\epsilon)}{\delta}}$ is achieved for very specific distributions. If the distribution of $X$ is different from those, optimally tuned $\ell_p$-AMP can achieve major improvement over optimal-$\lambda$ $\ell_1$-AMP. An interesting question that is left for future research is which distributions benefit LPLS more.

As is clear from our discussion, optimal-$\lambda$ $\ell_p$-AMP can outperform $\ell_1$-AMP for small values of noise and if it reaches its lowest fixed point. Also, since in many cases $\ell_p$-AMP has other fixed points, it requires a good initialization to reach its lowest fixed point. Our next goal is to show whether an optimal adaptation policy can resolve the issue of finding a good initialization. As we showed in the last section in the noiseless setting, it does not offer much improvement. However, when the noise is small, this algorithm outperforms optimal-$\lambda$ $\ell_1$-AMP by a large margin. The following theorem confirms this claim.

\vspace{.2cm}

\begin{theorem}\label{thm:improvecontinuation}
Let $\sigma_h$ denote the highest fixed point of the optimal-$(p, \lambda)$ $\ell_p$-AMP. If $\inf_{0\leq p \leq 1 }M_p(\epsilon) < \delta$, then
\[
\lim_{\sigma_w^2 \rightarrow 0} \frac{\sigma_h^2}{\sigma_w^2}= \frac{1}{1-\frac{\epsilon}{\delta}}.
\]
\end{theorem}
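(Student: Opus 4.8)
The plan is to exploit the identity $\Psi_{\lambda_*,p_*}(\sigma^2)=\min_{0\le p\le 1}\Psi_{\lambda_*,p}(\sigma^2)$, which holds because the common noise term $\sigma_w^2$ factors out of the minimization over $p$; equivalently, writing $R^*(\sigma^2)\triangleq\min_{0\le p\le 1,\,\lambda\ge 0}\mathbb{E}(|\eta_p(X+\sigma Z;\lambda)-X|^2)$, the highest fixed point $\sigma_h$ is the largest solution of $\sigma^2=\sigma_w^2+\frac{1}{\delta}R^*(\sigma^2)$. The whole result then reduces to two facts: (a) under the hypothesis $\inf_{0\le p\le 1}M_p(\epsilon)<\delta$ the highest fixed point collapses to the origin, i.e. $\sigma_h\to 0$ as $\sigma_w\to 0$; and (b) the optimal minimax risk has the sharp local expansion $R^*(\sigma^2)=\epsilon\sigma^2(1+o(1))$ as $\sigma\to 0$. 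Granting (a) and (b), substituting into the fixed point relation gives $\sigma_h^2\bigl(1-\frac{\epsilon}{\delta}(1+o(1))\bigr)=\sigma_w^2$, whence $\sigma_h^2/\sigma_w^2\to 1/(1-\epsilon/\delta)$; note that $\inf_p M_p(\epsilon)\ge\epsilon$ so the hypothesis forces $\epsilon<\delta$ and the limit is well defined.

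For (a) I would first invoke Theorem \ref{thm:highestfpnoiselessoptp}: when $\inf_p\overline{M}_p(\epsilon)<\delta$ the noiseless map $\frac{1}{\delta}R^*(\sigma^2)$ lies strictly below the diagonal for all $\sigma>0$, so $0$ is its unique, hence highest, fixed point. The quantitative input is scale covariance of the proximal map, $\eta_p(\sigma u;\lambda)=\sigma\,\eta_p(u;\lambda\sigma^{p-2})$, which after optimizing over $\lambda$ yields $\min_\lambda\mathbb{E}(|\eta_p(X+\sigma Z;\lambda)-X|^2)\le\overline{M}_p(\epsilon)\,\sigma^2$ for every $p_X\in\mathcal{F}_\epsilon$; thus the noiseless collapse holds under the least favorable distribution and a fortiori for every prior in $\mathcal{F}_\epsilon$. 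I would then promote this to small positive noise by monotonicity: $\Psi_{\lambda_*,p_*}(\sigma^2)$ is increasing in $\sigma_w^2$ and, for the minimizing $p_0$, is bounded above by $\sigma_w^2+\frac{\overline{M}_{p_0}(\epsilon)}{\delta}\sigma^2$, so a sandwich/continuity argument shows the largest fixed point is $O(\sigma_w^2)$ and tends to $0$.

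The crux is (b), the sharp constant $\epsilon$. For the upper bound I would use the $p=0$ branch with a hard threshold $\sqrt{2\lambda}\asymp\sigma\sqrt{2\log(1/\sigma)}$: writing $X\sim(1-\epsilon)\Delta_0+\epsilon G$, the null components contribute $(1-\epsilon)\sigma^2\mathbb{E}(Z^2\mathbb{I}(|Z|>\sqrt{2\log(1/\sigma)}))=o(\sigma^2)$, while for each fixed nonzero $U$ the event $\{|U+\sigma Z|\le\sqrt{2\lambda}\}$ has vanishing probability, so the retained components contribute $\epsilon\sigma^2\mathbb{E}_G(Z^2)(1+o(1))=\epsilon\sigma^2(1+o(1))$; this is exactly the regime quantified by Theorem \ref{thm:lownoisehardthresh1}, which I would quote. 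Hence $R^*(\sigma^2)\le\epsilon\sigma^2(1+o(1))$. For the matching lower bound I would use that every admissible denoiser is expansive on the retained set: from the stationarity condition one gets $\eta_p'(u;\lambda)=(1+\lambda p(p-1)|\eta_p(u;\lambda)|^{p-2})^{-1}\ge 1$ for all $p\in[0,1]$, so a delta-method expansion of $\eta_p(U+\sigma Z;\lambda)-U$ about $\sigma=0$ gives conditional second moment at least $\eta_p'(U;\lambda)^2\sigma^2\ge\sigma^2$ on the retained event, the complementary (zeroing) event contributing the nonnegative $U^2$ term. Integrating against $G$ (using $\mathbb{E}_G|U|^2<\infty$ for uniform integrability) yields $R^*(\sigma^2)\ge\epsilon\sigma^2(1-o(1))$.

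I expect the main obstacle to be making the lower bound in (b) uniform over the optimizing policy: because $(p_*(\sigma),\lambda_*(\sigma))$ is chosen to minimize the risk and may drift as $\sigma\to 0$ (for instance $\lambda$ scaling so the threshold becomes comparable to part of the support of $G$), one must rule out that the denoiser shrinks enough on the nonzero components to beat the oracle variance $\sigma^2$. The expansiveness bound $\eta_p'\ge 1$ is the right mechanism, but converting the pointwise-in-$u$ statement into an integrated bound that is uniform in $(p,\lambda)$—while simultaneously controlling the tail event in which a genuine nonzero entry is thresholded to zero—requires a careful truncation argument. This is also precisely where the ``least favorable distribution'' caveat enters: one must verify that the worst-case prior cannot place mass near the shrinking threshold in a way that inflates the leading constant above $\epsilon$, which is the feature that distinguishes optimal-$(p,\lambda)$ $\ell_p$-AMP (limit $\epsilon$) from fixed-$p$ $\ell_p$-AMP in Theorem \ref{thm:hfpnoisy} (bound governed by $\overline{M}_p(\epsilon)>\epsilon$).
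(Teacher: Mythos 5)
Your skeleton is the paper's: reduce the claim to (a) $\sigma_h^2 = O(\sigma_w^2) \to 0$ and (b) $\lim_{\sigma \to 0} \Psi_{\lambda_*,p_*}(\sigma^2)/\sigma^2 = \epsilon/\delta$, then read the limit off the fixed-point identity; your part (a) and your upper bound in (b) are essentially what the paper does (it simply recycles $\lim_{\sigma\to 0}\Psi_{\lambda_*,p}(\sigma^2)/\sigma^2 = \epsilon/\delta$ for fixed $p<1$ from the proof of Theorem \ref{lem:noiselesslowfpless1} rather than re-deriving a hard-threshold bound). Two caveats on your upper bound as written: Theorem \ref{thm:lownoisehardthresh1} assumes $P_G(|U|>\mu)=1$, so it cannot be quoted for arbitrary $G$; and the universal threshold $\sigma\sqrt{2\log(1/\sigma)}$ is too aggressive for general $G$ --- if $G$ places mass at the scale of the threshold $\tilde\lambda$, the zeroing contribution $\mathbb{E}_G[U^2\,\mathbb{I}(|U|\lesssim \tilde\lambda)]$ is only $o(\sigma^2\log(1/\sigma))$, which can be of order $\sigma^2$ and inflate the leading constant above $\epsilon$. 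Use a threshold $K\sigma$ and send $K\to\infty$ after $\sigma\to 0$, or just cite the fixed-$p$ limit.

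The genuine gap is exactly where you flag it: the lower bound in (b), uniformly over the drifting optimizer $(p_*(\sigma),\lambda_*(\sigma))$. You name the right mechanism ($\partial_1\eta_p \ge 1$ on the retained set, plus truncation) but do not carry it out, and the dangerous regime --- threshold within $O(\sigma)$ of $|U|$, or $G$ accumulating near the shrinking dead zone --- is left unresolved, so the proof is incomplete as it stands. The paper closes this in one line by an entirely different device: since $\mathbb{E}[X \mid X+\sigma Z]$ is the Bayes-optimal denoiser, $\Psi_{\lambda_*,p_*}(\sigma^2) \ge \frac{1}{\delta}\mathbb{E}\big(\mathbb{E}[X\mid X+\sigma Z]-X\big)^2$, and the Wu--Verd\'u MMSE-dimension theorem \cite{wu2011mmse} gives $\lim_{\sigma\to 0}\mathbb{E}\big(\mathbb{E}[X\mid X+\sigma Z]-X\big)^2/\sigma^2 = \epsilon$ for $X\sim(1-\epsilon)\Delta_0+\epsilon G$; this sidesteps every uniformity issue over $(p,\lambda)$. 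Note, though, that the MMSE route needs the MMSE dimension of $p_X$ to equal $\epsilon$ (in essence, $G$ absolutely continuous); for $G=\Delta_1$ the MMSE is exponentially small and that lower bound is vacuous, whereas your expansiveness argument, if completed, is precisely the tool that works there because $\eta_p$ cannot imitate the Bayes estimator. So your route is not wrong and would buy extra generality, but you must either finish the truncation/uniformity argument or import the MMSE bound to have a proof.
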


The proof of this theorem is presented in Section \ref{sec:proofcontlownoise}. Note that there is a major difference between this theorem and Theorem \ref{thm:noisyoptimallambda}. This result is about the highest fixed point, while Theorem \ref{thm:noisyoptimallambda} evaluates the lowest fixed point. Note that according to this theorem, if the sparsity level of the signal is below the phase transition of optimal-$(p, \lambda)$ $\ell_p$-AMP, then optimal-$(p, \lambda)$ $\ell_p$-AMP offers much better noise sensitivity than that of optimal-$\lambda$ $\ell_1$-AMP (for small values of noise). Note that according to Proposition \ref{prop:optimallassolargesigma}, we expect the noise sensitivity of optimal-($p,\lambda$) $\ell_p$-AMP to be the same as the noise sensitivity of optimal-$\lambda$ $\ell_1$-AMP for large values of noise. This phenomenon can be observed in Figure \ref{fig:continuationstrategyhelps}. As is clear in this figure, for small values of the noise, $p$-continuation leads to substantially better results than the optimal-$\lambda$ $\ell_1$-AMP. 

\begin{figure}
\begin{center}
\includegraphics[width=3.2in]{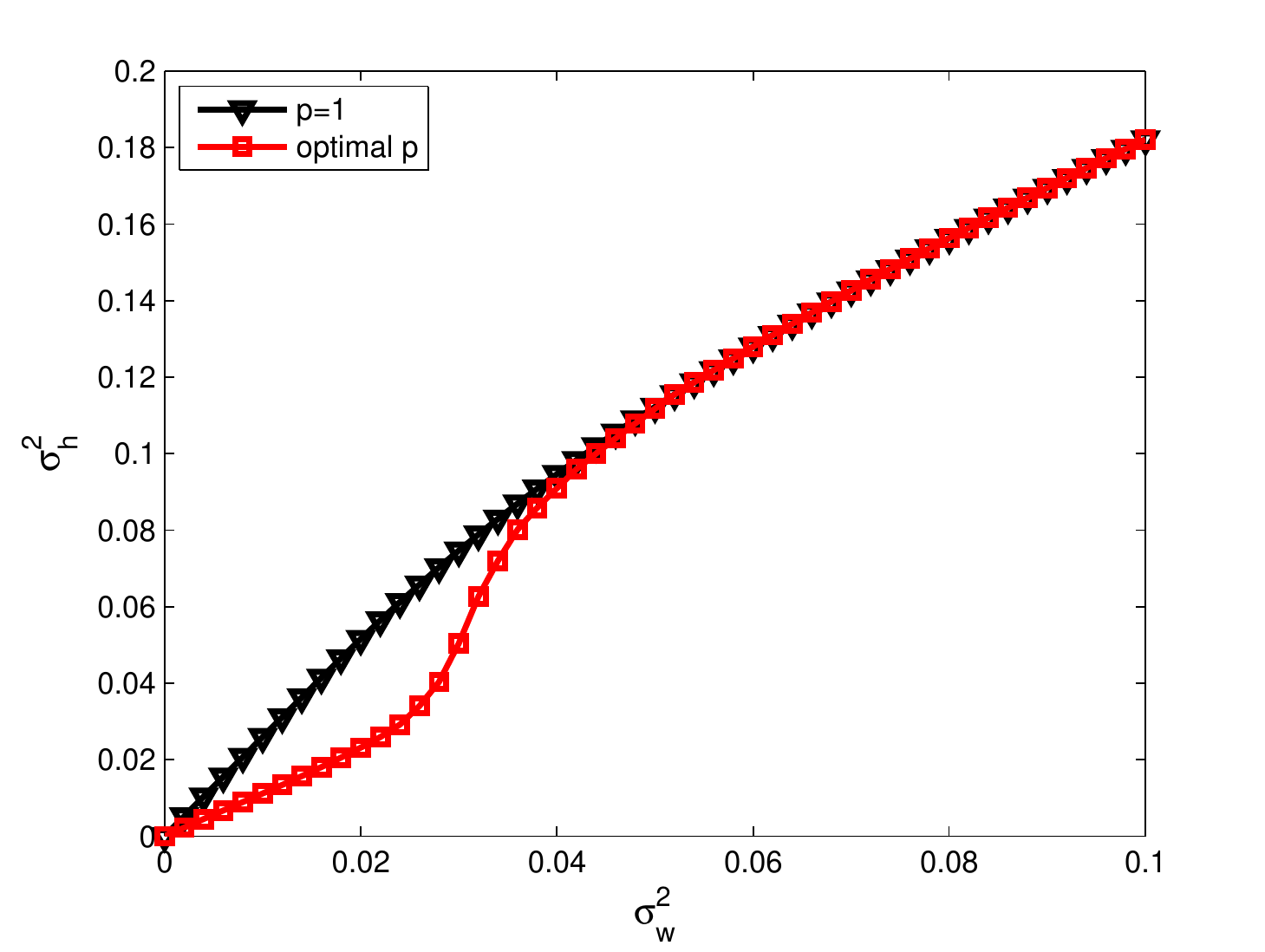}
\caption{Comparison of $\sigma_h^2$ in optimal-$\lambda$ $\ell_1$-AMP and optimal-$(p, \lambda)$ $\ell_p$-AMP. $\delta$ and $\epsilon$ are set to $0.1$ and $0.01$ respectively. The non-zero elements of $x_o$ are iid $\pm1$ with probability $0.5$.}
\label{fig:continuationstrategyhelps}
\end{center}
\end{figure}

\section{Relation with $\ell_p$-norm minimization}\label{sec:discussion}

Replica method is a non-rigorous method invented in statistical physics to study the behavior of large  magnetic and disordered systems. This method has found many applications in science and engineering \cite{mezard1985replicas, castellani2005spin, guo2005randomly, tanaka2002statistical}. In particular, \cite{rangan2009asymptotic } has used this method to analyze the accuracy of $\hat{x}(\gamma, p)$. 
Here we briefly explain the results derived  in \cite{rangan2009asymptotic} and compare them with the results of our paper. Under the replica symmetry assumptions (summarized in Section IV of \cite{rangan2009asymptotic}), as $N \rightarrow \infty$, $(\hat{x}_j(\gamma,p ), x_{0,j})$ converges in distribution to the random vector $(\eta_p (X + \sigma_{eff} Z; \gamma_p) , X)$ where $X \sim p_X$ and $Z \sim N(0,1)$ are independent, and $\sigma_{eff}$ satisfies the following fixed point equation:
\begin{equation}\label{eq:replica}
\sigma_{eff}^2 = \sigma_w^2 + \frac{1}{\delta} \mathbb{E} (\eta_p(X+ \sigma_{eff} Z; \gamma_p) - X)^2,
\end{equation}
where $\gamma_p$ can also be calculated in terms of $\gamma$ and $\sigma_{eff}$, but its particular form is not of interest in this paper. Note that the fixed point of $\ell_p$-AMP satisfies a fixed point equation that is the same as \eqref{eq:replica} (modulo the threshold parameter).  If we pick $\gamma$ optimally in \eqref{eq:replica} (to make $\sigma_{eff}$ or the mean square error of the reconstructed vector as small as possible), then the two fixed point equations that are derived from $\ell_p$-AMP and the Replica method will be exactly the same. This exact correspondence can transform all the results about lowest fixed points we derived for the optimal-$\lambda$ $\ell_p$-AMP to new results for the solution of \eqref{eq:ell0minimization}. For the sake of brevity, we do not repeat all the results here. We qualitatively explain the implications of two of our results:
\begin{enumerate}
\item If $\delta> \epsilon$, then \eqref{eq:ell0minimization} recovers the exact solution in the noiseless setting for any $0 \leq p<1$. This can be derived by combining Theorem \ref{lem:noiselesslowfpless1} with the result of the Replica method described above. Note among all the fixed points of \eqref{eq:replica}, the lowest fixed point corresponds to the minimum free energy \cite{mezard1985replicas} and hence characterizes the asymptotic performance of the global minimizer of \eqref{eq:ell0minimization}.

\item When the noise level, $\sigma_w^2$,  is high, LASSO outperforms LPLS for every $p<1$. This result can be derived by combining the results of Proposition \ref{prop:optimallassolargesigma} and the Replica method result.
\end{enumerate}

\section{Proofs of the main results}\label{sec:proof}

\subsection{Properties of $\eta_p(u, \lambda)$}\label{sec:propell_pprox}

In the proofs of our main results, we employ several properties of the proximal functions ${\eta _p}\left( {u;\lambda } \right)$. This section is devoted to the derivation of these properties. Note that since ${\eta _0}\left( {u;\lambda } \right)$ and ${\eta _1}\left( {u;\lambda } \right)$ have very simple forms,\footnote{$\eta_0(u; \lambda) = u \mathbb{I} (|u| > \sqrt{2 \lambda})$ and $\eta_1(u; \lambda) = (|u|- \lambda) {\rm sign} (u) \mathbb{I} (|u| > \lambda) $. } in some of the results mentioned below these two cases are omitted.  \\

Our first result is concerned with the scale invariance property of ${\eta _p}\left( {u;\lambda } \right)$. This result will be used extensively in the rest of the paper.

\vspace{.2cm}

\begin{lemma}\label{lem:scaleinv}
$\eta_p(u, \lambda)$ has the following scale invariance properties for $0 \leqslant p \leqslant 1$:
\begin{enumerate}
\item[(i)] $\eta_p(-u; \lambda) = - \eta_p(u; \lambda)$.
\item[(ii)] ${\eta_p}(\alpha u;\lambda {\alpha ^{2 - p}}) = \alpha {\eta_p}(u;\lambda )$, for every $\alpha \geq 0$.
\end{enumerate}
\end{lemma}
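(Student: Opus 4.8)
The plan is to prove both identities by a change of variables in the optimization problem that defines the scalar proximal map. Writing the objective as $f_{u,\lambda}(x) \triangleq \frac{1}{2}(u-x)^2 + \lambda |x|^p$, so that $\eta_p(u;\lambda) = \arg\min_x f_{u,\lambda}(x)$ (the vector statement then follows at once since $\eta_p$ acts component-wise), each claim reduces to exhibiting a bijection of the domain under which one objective is mapped to a positive multiple of another. Because the minimizer set is invariant under such maps, comparing minimizers yields the stated relations.

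For part (i), I would substitute $x = -y$ into the objective associated with the input $-u$:
\begin{equation*}
f_{-u,\lambda}(-y) = \tfrac{1}{2}\big((-u)-(-y)\big)^2 + \lambda|-y|^p = \tfrac{1}{2}(y-u)^2 + \lambda|y|^p = f_{u,\lambda}(y).
\end{equation*}
Since $y \mapsto -y$ is a bijection, the minimizers of $f_{-u,\lambda}$ are exactly the negatives of the minimizers of $f_{u,\lambda}$, which gives $\eta_p(-u;\lambda) = -\eta_p(u;\lambda)$.

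For part (ii), assuming first $\alpha>0$, I would substitute $x = \alpha y$ into the objective associated with the input $\alpha u$ and threshold $\lambda \alpha^{2-p}$, using $|\alpha y|^p = \alpha^p |y|^p$:
\begin{equation*}
f_{\alpha u,\, \lambda \alpha^{2-p}}(\alpha y) = \tfrac{1}{2}(\alpha u - \alpha y)^2 + \lambda \alpha^{2-p}\alpha^p|y|^p = \alpha^2\Big(\tfrac{1}{2}(u-y)^2 + \lambda|y|^p\Big) = \alpha^2 f_{u,\lambda}(y).
\end{equation*}
Since $y \mapsto \alpha y$ is a bijection for $\alpha>0$ and scaling the objective by the positive constant $\alpha^2$ leaves the location of its minimizers unchanged, the minimizer of $f_{\alpha u,\, \lambda\alpha^{2-p}}$ equals $\alpha$ times the minimizer of $f_{u,\lambda}$, i.e.\ $\eta_p(\alpha u; \lambda \alpha^{2-p}) = \alpha \eta_p(u;\lambda)$. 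The boundary case $\alpha=0$ I would check directly: the objective collapses to $\tfrac{1}{2}x^2$, whose unique minimizer is $0$, matching $0 \cdot \eta_p(u;\lambda)$.

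The only point requiring care — and the main (and rather mild) obstacle — is that for $p<1$ the objective $f_{u,\lambda}$ is nonconvex, so $\eta_p$ may be set-valued at the isolated inputs where two competing minima tie on the threshold boundary. I would therefore phrase the argument at the level of the full $\arg\min$ set rather than a single point; the two change-of-variables bijections above preserve the entire minimizer set, so the identities hold verbatim wherever the map is single-valued, and as set identities otherwise. Notably, no continuity or differentiability of $\eta_p$ is invoked, which is precisely why this scale-invariance lemma can be established before the explicit form of $\tilde\lambda$ in Lemma~\ref{lem:threshform} and the later regularity results.
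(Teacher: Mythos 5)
Your proof is correct and follows essentially the same route as the paper's: both establish the identities by a change of variables ($x\mapsto -x$ and $x\mapsto x/\alpha$, respectively) in the optimization problem defining $\eta_p$, with the $\alpha=0$ case checked separately. Your additional remark about treating the $\arg\min$ as a set at the (measure-zero) tie points is a careful touch the paper leaves implicit, but it does not change the argument.
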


\vspace{.2cm}

\begin{proof}
First, we prove that $\eta_p(-u; \lambda) = - \eta_p(u; \lambda)$. According to the definition of $\eta _p$, we have
\begin{eqnarray*}
  {\eta _p}( - u;\lambda ) &=& \mathop {\arg \min }\limits_x {\left( { - u - x} \right)^2} + \lambda {\left| x \right|^p} = \mathop {\arg \min }\limits_x {\left( {u - ( - x)} \right)^2} + \lambda {\left| { - x} \right|^p}  \\
   &=&  - \mathop {\arg \min }\limits_x {\left( {u - x} \right)^2} + \lambda {\left| x \right|^p}    =  - {\eta _p}(u;\lambda ).
\end{eqnarray*}

To prove the second part of this lemma, note that it is trivially true when $\alpha =0$. For any $\alpha >0$, we have
\begin{eqnarray}
\label{eq:3-5}
  {\eta_p}(\alpha u;\lambda {\alpha ^{2 - p}}) &=& \mathop {\arg \min }\limits_x {\left( {\alpha u - x} \right)^2} + \lambda {\alpha ^{2 - p}}{\left| x \right|^p} \nonumber  \\
   &=& \mathop {\arg \min }\limits_x {\alpha ^2}{\left( {u - \frac{x}{\alpha }} \right)^2} + \lambda {\alpha ^2}{\left| {\frac{x}{\alpha }} \right|^p} \nonumber  \\
   &=& \alpha \mathop {\arg \min }\limits_x {\left( {u - x} \right)^2} + \lambda {\left| x \right|^p} = \alpha {\eta_p}(u;\lambda ).    \nonumber   
\end{eqnarray}

\end{proof}

\vspace{.2cm}

The next lemma is an auxiliary result that will be used later to derive the main properties of $\eta_p(u; \lambda)$. \\

\begin{lemma} \label{lem:lowerbound}
For $0<p<1$, if $|\eta_p(u; \lambda)| > 0$, then it satisfies
\[
|\eta_p(u; \lambda) | \geq  \zeta^*, 
\]
where $ \zeta^* \triangleq \left( \frac{1}{\lambda p (1-p)} \right)^{\frac{1}{p-2}}$. Furthermore, $\eta_p(u; \lambda) =0$ for every $u$ satisfying
\[
|u| < g(\zeta^*), 
\]
where $g(\zeta) \triangleq \zeta+ \lambda p \zeta^{p-1}$. 
\end{lemma}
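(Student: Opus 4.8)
The plan is to reduce both claims to the analysis of a single smooth scalar objective and read them off from its first- and second-order conditions. By the oddness established in Lemma~\ref{lem:scaleinv}(i) it suffices to treat $u\ge 0$, in which case any minimizer $\zeta:=\eta_p(u;\lambda)$ is nonnegative; for $u>0$ I would study $f(x)=\tfrac12(u-x)^2+\lambda x^p$ on $(0,\infty)$, where it is smooth, noting that a global minimizer over $[0,\infty)$ exists because $f$ is continuous and $f(x)\to\infty$ as $x\to\infty$. The central computation is
\[
f'(x)=x-u+\lambda p x^{p-1}=g(x)-u,\qquad f''(x)=1-\lambda p(1-p)x^{p-2}=g'(x),
\]
which identifies $f'$ with $g-u$ and $f''$ with $g'$. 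Since $p\in(0,1)$ makes $x^{p-2}$ strictly decreasing, $g'(x)$ is negative for small $x$ and positive for large $x$, vanishing exactly where $x^{p-2}=\frac{1}{\lambda p(1-p)}$, i.e.\ at $\zeta^*=\left(\frac{1}{\lambda p(1-p)}\right)^{1/(p-2)}$. Hence $g$ is strictly decreasing on $(0,\zeta^*)$ and strictly increasing on $(\zeta^*,\infty)$, blows up as $x\to 0^+$ (because $p-1<0$) and as $x\to\infty$, and attains its global minimum on $(0,\infty)$ at $\zeta^*$ with value $g(\zeta^*)$.

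With this structure in hand both conclusions are immediate. For the lower bound, if $\zeta=\eta_p(u;\lambda)>0$ then $\zeta$ is an interior minimizer of the smooth function $f$, so the second-order necessary condition gives $f''(\zeta)=g'(\zeta)\ge 0$; since $g'<0$ throughout $(0,\zeta^*)$, this forces $\zeta\ge\zeta^*$, which is the claimed bound. For the thresholding statement, if $0<u<g(\zeta^*)$ then $f'(x)=g(x)-u\ge g(\zeta^*)-u>0$ for every $x>0$, so $f$ is strictly increasing on $(0,\infty)$ and its minimum over $[0,\infty)$ is attained only at $x=0$, giving $\eta_p(u;\lambda)=0$; invoking oddness once more extends this to all $u$ with $|u|<g(\zeta^*)$.

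The only genuinely delicate points are bookkeeping rather than depth. I must justify that a nonzero minimizer lies in the open half-line $(0,\infty)$, where $f$ is twice differentiable (the sole non-smooth point being $x=0$), so that the derivative tests apply legitimately; and I must pin down the sign pattern of $g'$ carefully, since everything rests on $g(\zeta^*)$ being the \emph{true} minimum of $g$ on $(0,\infty)$ rather than a mere stationary value, which is where the boundary blow-up of $g$ at $0$ and $\infty$ is used. If $\eta_p$ is set-valued in degenerate configurations, the arguments apply verbatim to any selected global minimizer, so no uniqueness is needed.
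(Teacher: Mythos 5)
Your proposal is correct and follows essentially the same route as the paper: both analyze the stationarity equation $g(\zeta)=u$ together with the shape of $g$ (global minimum at $\zeta^*$, blow-up at $0$ and $\infty$) to derive both claims. The one small difference is that where the paper simply asserts that, of the two roots of $g(\zeta)=u$, the minimizer is the one above $\zeta^*$, you derive $\zeta\ge\zeta^*$ from the second-order necessary condition $f''(\zeta)=g'(\zeta)\ge 0$ at an interior minimizer — a slightly cleaner justification of that step.
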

\vspace{.2cm}

\begin{proof}
According to Lemma \ref{lem:scaleinv} part (i), we only consider the case $u \geq 0$. If ${\eta _p}(u;\lambda )>0$, since it minimizes
\begin{equation}\label{eq:proxdef2}
{\xi _p}(x,u) \triangleq \frac{1}{2}{(u - x)^2} + \lambda {\left| x \right|^p},
\end{equation}
 it must satisfy
\begin{equation*}
\eta_p(u;\lambda) + \lambda p \eta_p^{p-1}(u;\lambda)=u,
\end{equation*}
which can be written as
\begin{equation*}
g({\eta _p}(u;\lambda ))=u.
\end{equation*}

It is straightforward to check the following facts about $g(\zeta)$: (i) $g(\zeta)$ has a global minimum at $\zeta^*= \left( \frac{1}{\lambda p (1-p)} \right)^{\frac{1}{p-2}}$. (ii) $\mathop {\lim }\limits_{\zeta \to 0} g(\zeta) = \infty$. (iii) $\mathop {\lim }\limits_{\zeta \to \infty } g(\zeta) = \infty$.  (iv) $g(\zeta)$ is a decreasing function below $\zeta^*$ and an increasing function above $\zeta^*$.

According to these properties, three different cases happen for $g(\zeta)=u$:
\begin{itemize}
\item[(i)] If $u< g(\zeta^*)$, then $g(\zeta)=u$ does not have any solution.
\item[(ii)] If $u=g(\zeta^*)$, then $g(\zeta)=u$ has only one solution at $\zeta^*$.
\item[(iii)] If $u> g(\zeta^*)$, then $g(\zeta) =u$ has two solutions; one below $\zeta^*$ and one above $\zeta^*$. Among these two solutions the value of $x$ that minimizes ${\xi _p}(x,u)$ is $x \geq \zeta^*$.
\end{itemize}

This completes the proof of the first part of the lemma. We now prove that for every $u< g(\zeta^*)$, $\eta_p(u;\lambda)=0$. This is due to the fact that the derivative of ${\xi _p}(x,u)$ with respect to $x$ will be always positive for every $x>0$. Hence the minimum must happen at zero. Note that ${\xi _p}(x,u)$ is a continuous function of $x$. 

\end{proof}

\vspace{.2cm}

\begin{lemma}\label{lem:threshform}
For $0 \leqslant p \leqslant 1$, there exists a threshold ${\tilde \lambda _p}$ such that $\forall \left| u \right| < {\tilde \lambda _p}$, ${\eta _p}(u;\lambda ) = 0$ and $|{\eta _p}(u;\lambda )| > 0$ $\forall \left| u \right| > {\tilde \lambda _p}$. Furthermore, we have
\[
{\tilde \lambda _p} = {c_p}{\lambda ^{\frac{1}{{2 - p}}}}
\]
where $c_p=[2(1-p)]^{\frac{1}{2-p}}+p[2(1-p)]^{\frac{p-1}{2-p}}$ . The value of $c_p$ is plotted in Figure \ref{fig:cp} for different values of $p$.

\begin{figure}[htbp]
  \centering
  \includegraphics[height=2.5in]{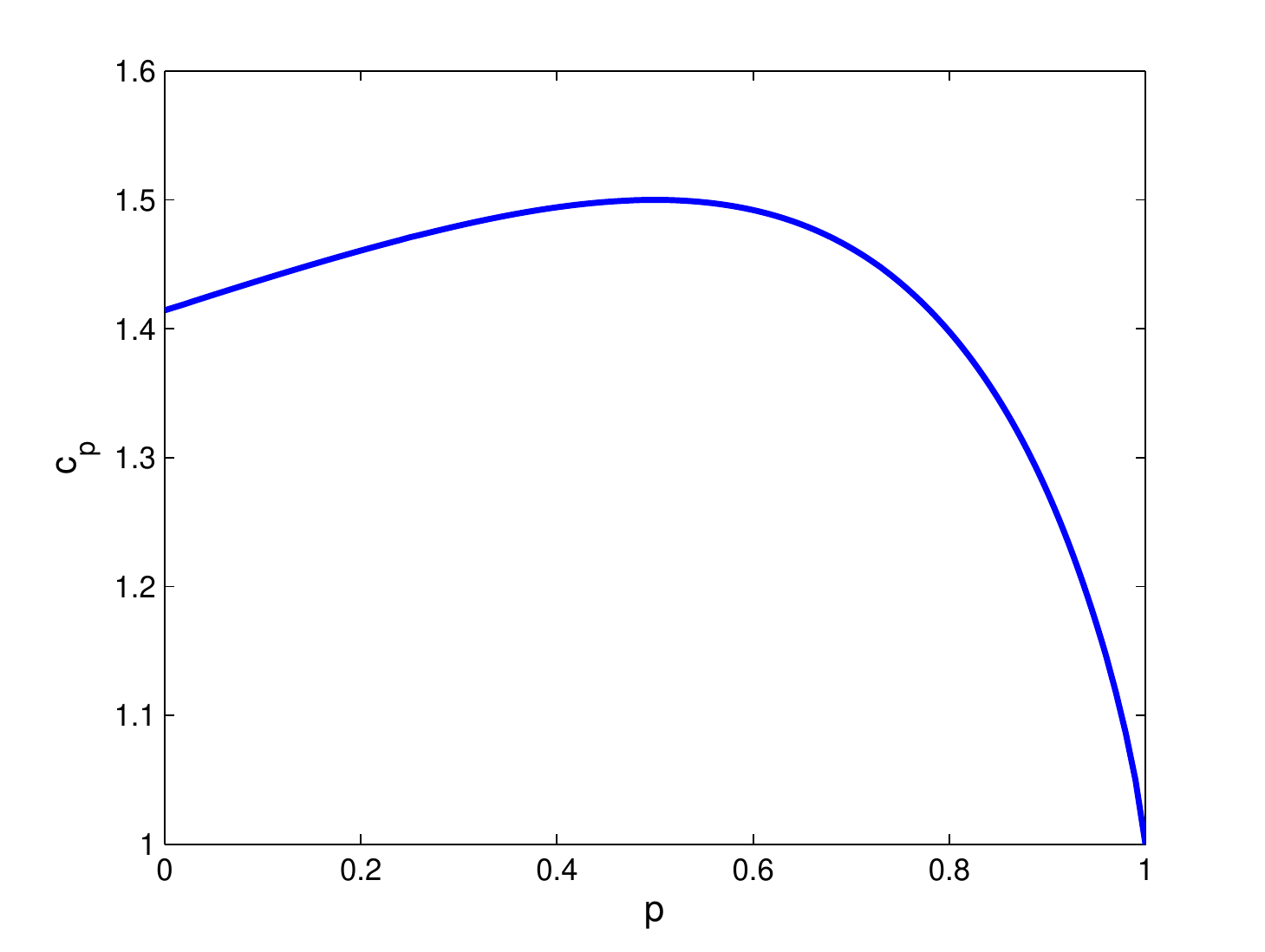}
  \caption{The value of $c_p$ as a function of $p$. Note that the threshold $\tilde{\lambda}_p$ defined in Lemma \ref{lem:threshform} is in the form of $c_p \lambda^{1/(2-p)}$.}
  \label{fig:cp}
\end{figure}
\end{lemma}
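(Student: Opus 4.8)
The plan is to reduce the claim to the scalar variational problem defining $\eta_p$, locate the exact value of $u$ at which the global minimizer jumps away from zero, and then solve two simultaneous equations for that value. By Lemma \ref{lem:scaleinv}(i) the map $\eta_p(\cdot;\lambda)$ is odd, so it suffices to treat $u\ge 0$, in which case the minimizer of $\xi_p(x,u)=\tfrac12(u-x)^2+\lambda|x|^p$ over $x$ can be taken nonnegative (for $u\ge0$ and $x<0$ one has $\xi_p(|x|,u)\le\xi_p(x,u)$). The endpoints follow from the closed forms: for $p=1$ the soft threshold gives $\tilde\lambda_1=\lambda$, and for $p=0$ the hard threshold gives $\tilde\lambda_0=\sqrt{2\lambda}$; both agree with $c_p\lambda^{1/(2-p)}$, the $p=1$ case being recovered as a limit. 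I therefore focus on $0<p<1$.

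The key observation is that the threshold in the statement is strictly larger than the value $g(\zeta^*)$ of Lemma \ref{lem:lowerbound}. That lemma shows that for $u<g(\zeta^*)$ the equation $g(x)=u$ has no positive root, so $0$ is the only critical point and $\eta_p(u;\lambda)=0$, while for $u>g(\zeta^*)$ there is a positive local minimizer $x_+(u)\ge\zeta^*$. The global minimizer over $x\ge0$ is then whichever of $x=0$ (value $\tfrac12u^2$) and $x=x_+(u)$ has the smaller objective. I would define the gap
\[
h(u)\;\triangleq\;\xi_p\big(x_+(u),u\big)-\tfrac12u^2\;=\;-u\,x_+(u)+\tfrac12 x_+(u)^2+\lambda\, x_+(u)^p ,
\]
and show it has a unique sign change. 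Since $x_+(u)$ is a stationary point of $\xi_p(\cdot,u)$, the envelope theorem gives $h'(u)=-x_+(u)<0$, so $h$ is strictly decreasing on $u>g(\zeta^*)$; a direct evaluation at $u=g(\zeta^*)$ (where $x_+=\zeta^*$) yields $h(g(\zeta^*))=(\zeta^*)^2\,\tfrac{2-p}{2p}>0$, while $h(u)\to-\infty$. Hence there is a unique $\tilde\lambda_p>g(\zeta^*)$ with $h(\tilde\lambda_p)=0$, and the monotonicity of $h$ shows $\eta_p(u;\lambda)=0$ exactly for $0\le u<\tilde\lambda_p$ and $\eta_p(u;\lambda)>0$ for $u>\tilde\lambda_p$; combined with oddness this gives the stated threshold behavior.

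It remains to compute $\tilde\lambda_p$. Writing $x_*$ for the value of $x_+$ at $u=\tilde\lambda_p$, I would impose the two defining conditions simultaneously: stationarity $\tilde\lambda_p=x_*+\lambda p\,x_*^{p-1}$ and the objective equality $\tfrac12(\tilde\lambda_p-x_*)^2+\lambda x_*^p=\tfrac12\tilde\lambda_p^2$. Substituting $\tilde\lambda_p-x_*=\lambda p\,x_*^{p-1}$ from the first into the second and dividing by $x_*$ reduces the pair to $\lambda x_*^{p-1}=\tilde\lambda_p-\tfrac12x_*$ and $\lambda x_*^{p-1}=(\tilde\lambda_p-x_*)/p$; eliminating $\lambda x_*^{p-1}$ yields $\tilde\lambda_p=\tfrac{2-p}{2(1-p)}x_*$, and back-substitution gives $x_*^{2-p}=2\lambda(1-p)$, i.e.\ $x_*=[2(1-p)]^{1/(2-p)}\lambda^{1/(2-p)}$. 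Combining these,
\[
\tilde\lambda_p=\frac{2-p}{2(1-p)}\,[2(1-p)]^{1/(2-p)}\lambda^{1/(2-p)},
\]
and a short simplification using $[2(1-p)]^{(p-1)/(2-p)}=[2(1-p)]^{1/(2-p)}/[2(1-p)]$ rewrites the prefactor as $c_p=[2(1-p)]^{1/(2-p)}+p[2(1-p)]^{(p-1)/(2-p)}$, as claimed.

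The main obstacle is conceptual rather than computational: one must recognize that the sharp threshold is where the two competing objective values coincide, not where the nonzero critical point first appears, and then justify that this crossing is unique. The envelope-theorem monotonicity of $h$ together with the strict positivity of $h(g(\zeta^*))$ is exactly what makes the crossing unique; once that is established, the explicit value of $\tilde\lambda_p$ follows from routine elimination, and the endpoints $p\in\{0,1\}$ are checked directly against the closed forms.
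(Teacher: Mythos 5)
Your proposal is correct and follows essentially the same route as the paper: both identify the threshold as the point where the objective values of the zero and nonzero critical points of $\xi_p(\cdot,u)$ coincide, establish that this gap is strictly monotone in $u$ (you via the envelope theorem, the paper via an explicit computation of $\xi_p(0,u)-\xi_p(\bar\zeta,u)=\tfrac12\bar\zeta^2+\lambda(p-1)\bar\zeta^p$ and the monotonicity of $\bar\zeta$ in $u$), and then extract $c_p$ from the simultaneous stationarity and objective-equality conditions. Your direct elimination at general $\lambda$ replaces the paper's scale-invariance reduction to $\lambda=1$, but the computations are equivalent and all steps check out.
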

\vspace{.2cm}

\begin{proof}
We only consider the case $0<p<1$ and $u \geq 0$. The proof is straightforward for $p=0$ and $p=1$, due to the explicit form of $\eta_p$ in these two cases. Consider the notations ${\xi _p}(x,u)$ and $g(\zeta) $ introduced in the proof of Lemma \ref{lem:lowerbound}. Note that according to the proof of Lemma \ref{lem:lowerbound}, if ${\eta _p}(u;\lambda )>0$, then  $
g({\eta _p}(u;\lambda ))=u$. As the first step, we would like to prove that if $\eta_p(u_0; \lambda) >0$ for $u_0$, then $\eta_p(u; \lambda)$ will be greater than zero for any $u> u_0$. Since $u>0$, it is straightforward to see that $\eta_p(u; \lambda) \geq 0$. Hence, $\eta_p(u; \lambda)$ is either equal to zero or it is the solution of $g(\zeta) = u$ where $\zeta > \zeta^*$ ($\zeta^*$ is defined in Lemma \ref{lem:lowerbound}). Let $\bar{\zeta}> \zeta^*$ denote the solution of $g(\zeta) = u$. Our goal is to show that ${\xi _p}(0,u)-{\xi _p}(\bar{\zeta},u)>0$. Toward this goal, we prove that ${\xi _p}(0,u)-{\xi _p}(\bar{\zeta},u)$ is an increasing function of $u$. 

We have
\begin{eqnarray}
{\xi _p}(0,u)-{\xi _p}(\bar{\zeta},u) &=& \frac{1}{2}u^2 - \frac{1}{2} (\bar{\zeta}- u)^2- \lambda \bar{\zeta}^p \nonumber \\
 &=& \bar{\zeta} u - \frac{1}{2} \bar{\zeta}^2 -  \lambda \bar{\zeta}^p= \bar{\zeta} (\bar{\zeta} + \lambda p \bar{\zeta}^{p-1}) - \frac{1}{2} \bar{\zeta}^2 -  \lambda \bar{\zeta}^p \nonumber\\ 
 &=& \frac{1}{2} \bar{\zeta}^2 + \lambda (p-1) \bar{\zeta}^p. 
\end{eqnarray}
By taking the derivative of this function with respect to $\bar{\zeta}$, it is straightforward to see that ${\xi _p}(0,u)-{\xi _p}(\bar{\zeta},u)$ is an increasing function of $\bar{\zeta}$ for $\bar{\zeta} > \zeta^*$. If we prove that $\bar{\zeta}$ is an increasing function of $u$, then we can conclude that ${\xi _p}(0,u)-{\xi _p}(\bar{\zeta},u)$ is an increasing function of $u$. Note that $g(\bar{\zeta}) =u$. By taking the derivative of both sides with respect to $u$, we obtain
\[
g'(\bar{\zeta}) \frac{\partial \bar{\zeta}}{\partial u} = 1. 
\]   
Again, since $g'(\bar{\zeta}) >0$ for $\bar{\zeta} > \zeta^*$, we conclude that $\bar{\zeta}$ is an increasing function of $u$. Hence we conclude that ${\xi _p}(0,u)-{\xi _p}(\bar{\zeta},u)$ is an increasing function of $u$. If $\eta_p(u_0; \lambda) >0$, we know that ${\xi _p}(0,u_0)-{\xi _p}(\bar{\zeta},u_0)>0$. Since ${\xi _p}(0,u)-{\xi _p}(\bar{\zeta},u)$ is an increasing function of $u$, we have for every $u>u_0$, ${\xi _p}(0,u)-{\xi _p}(\bar{\zeta},u)>0$, which implies that $\eta_p(u;\lambda)= \bar{\zeta} >0$. 

So far we have been able to prove that there exists an interval $[- \tilde{\lambda}_p, \tilde{\lambda}_p]$ such that if $|u| \geq \tilde{\lambda}_p$, $\eta_p (u; \lambda) >0$ and for every $|u| < \tilde{\lambda}_p$, $\eta_p(u; \lambda) =0$. Note that according to Lemma \ref{lem:lowerbound}, $\tilde{\lambda}_p > g(\zeta^*)$. Our next goal is to derive the exact form of $\tilde{\lambda}_p$. For notational simplicity, define $\alpha \triangleq \lambda ^{\frac{1}{2 - p}}$ and $c_p$ as
\begin{eqnarray}\label{c_p:def}
{c_p} \triangleq \sup \left\{ {u:{\eta _p}(u;1) = 0} \right\}.
\end{eqnarray}
We have
\[
{\eta _p}(u;\lambda ) = {\eta _p}(u;{\alpha ^{2 - p}} \cdot 1) = \alpha {\eta _p}\left( {\frac{u}{\alpha };1} \right),
\]
where the second equality is due to Lemma \ref{lem:scaleinv} part (ii). Hence, ${\eta _p}(u;\lambda ) = 0$ if and only if ${\eta _p}\left( {\frac{u}{\alpha };1} \right) = 0$, i.e., ${\eta _p}\left( {u;\lambda } \right) = 0$ if and only if $u/\alpha  < {c_p}$. Therefore, we have ${\tilde \lambda _p} = {c_p}\alpha  = {c_p}{\lambda ^{\frac{1}{{2 - p}}}}$. Finally, we aim to obtain the explicit form of $c_p$. Denote the larger solution of $x+px^{p-1}=u$ by $x^*$. Then note that $\eta_p(u;1)=0$ implies 
\[
\frac{u^2}{2} \leq \frac{1}{2}(x^*-u)^2+(x^*)^{p}=\frac{(x^*)^2}{2}+\frac{u^2}{2}-x^*(x^*+p(x^*)^{p-1})+(x^*)^p,
\]
which yields that $x^* \leq [2(1-p)]^{\frac{1}{2-p}}$. Since $x^*$ is an increasing function of $u$, we know $u \leq [2(1-p)]^{\frac{1}{2-p}} +p [2(1-p)]^{\frac{p-1}{2-p}}$. On the other hand, it is straightforward to show that $\eta_p(u;1)=0$ when $u=[2(1-p)]^{\frac{1}{2-p}} +p [2(1-p)]^{\frac{p-1}{2-p}}$. Combining with the definition of $c_p$ in \eqref{c_p:def} gives us its analytical formula.

\end{proof}

\vspace{.2cm}

\begin{lemma}\label{lem:lowerboundonx}
For $0<p<1$, If $|\eta_p(u;\lambda)|>0$, then $|\eta_p(u;\lambda)|\geq [2(1-p)]^{\frac{1}{2-p}}\lambda^{\frac{1}{2-p}}$
\end{lemma}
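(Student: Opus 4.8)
The plan is to reduce the general statement to the normalized case $\lambda=1$ and then read the bound off the analysis already carried out in the proof of Lemma \ref{lem:threshform}. First I would apply the scale invariance of Lemma \ref{lem:scaleinv}(ii) with $\alpha=\lambda^{1/(2-p)}$, writing $\eta_p(u;\lambda)=\alpha\,\eta_p(u/\alpha;1)$, so that $|\eta_p(u;\lambda)|=\lambda^{1/(2-p)}\,|\eta_p(u/\alpha;1)|$ and $|\eta_p(u;\lambda)|>0$ holds if and only if $|\eta_p(u/\alpha;1)|>0$. Hence it suffices to prove that $|\eta_p(u;1)|>0$ forces $|\eta_p(u;1)|\geq[2(1-p)]^{1/(2-p)}$; multiplying through by $\lambda^{1/(2-p)}$ then yields the stated inequality. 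By the odd symmetry in Lemma \ref{lem:scaleinv}(i) I may also assume $u\geq 0$, so that $\eta_p(u;1)\geq 0$.

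Next I would invoke the structure of the minimizer established in the proofs of Lemmas \ref{lem:lowerbound} and \ref{lem:threshform}: for $u>0$ the objective $\xi_p(x,u)$ has only the two candidate minimizers $x=0$ and $x=x^*$, where $x^*>\zeta^*$ is the larger root of $g(\zeta)=\zeta+p\zeta^{p-1}=u$, so that $\eta_p(u;1)\in\{0,x^*\}$ according to which of $\xi_p(0,u)$, $\xi_p(x^*,u)$ is smaller. The key computation is already performed inside the proof of Lemma \ref{lem:threshform}: substituting $u=x^*+p(x^*)^{p-1}$ into the comparison $\xi_p(0,u)=u^2/2$ versus $\xi_p(x^*,u)$ shows that $\xi_p(0,u)\leq\xi_p(x^*,u)$ is \emph{equivalent} to $(x^*)^{2-p}\leq 2(1-p)$. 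Reading this equivalence in the reverse direction, $\eta_p(u;1)>0$ (i.e.\ $\xi_p(x^*,u)<\xi_p(0,u)$) forces $(x^*)^{2-p}>2(1-p)$, hence $\eta_p(u;1)=x^*>[2(1-p)]^{1/(2-p)}$.

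An equivalent and perhaps cleaner route is to combine Lemma \ref{lem:threshform} with the monotonicity already proved there. On the branch where it is nonzero, $\eta_p(u;1)=x^*(u)$ is increasing in $u$, since $g$ is increasing above $\zeta^*$; and the proof of Lemma \ref{lem:threshform} shows that the jump value at the threshold $u=\tilde{\lambda}_p=c_p$ is exactly $x^*=[2(1-p)]^{1/(2-p)}$ (indeed $c_p=[2(1-p)]^{1/(2-p)}+p[2(1-p)]^{(p-1)/(2-p)}$ is precisely $g$ evaluated at this point). Therefore, for every $u$ with $\eta_p(u;1)>0$, $\eta_p(u;1)\geq\lim_{u\searrow c_p}\eta_p(u;1)=[2(1-p)]^{1/(2-p)}$. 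Either route gives the normalized bound, which scales up to the stated one.

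The algebraic substitution and the monotonicity claim are both routine and essentially appear already in the proof of Lemma \ref{lem:threshform}, so the only point requiring genuine care—the crux—is the \emph{direction} of the inequality. That proof is phrased as ``$\eta_p(u;1)=0$ implies $x^*\leq[2(1-p)]^{1/(2-p)}$'', so I must check that the comparison of $\xi_p(0,u)$ and $\xi_p(x^*,u)$ is a true equivalence rather than a one-way implication. This holds because the global minimizer is always $0$ or $x^*$, making the two regimes complementary, so the nonvanishing of $\eta_p$ yields the strict reverse inequality $(x^*)^{2-p}>2(1-p)$ needed to close the argument.
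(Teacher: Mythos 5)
Your proof is correct. Your ``cleaner route'' is essentially the paper's own argument: the paper also observes that on the nonzero branch $\eta_p(\cdot;\lambda)$ is increasing, so its infimum is the right-limit at the threshold, and then extracts $\eta_p^+(c_p;1)=[2(1-p)]^{1/(2-p)}$ from the first-order condition $c_p=x+px^{p-1}$ together with the explicit formula for $c_p$; your rescaling to $\lambda=1$ mirrors the paper's division by $\lambda^{1/(2-p)}$. Your first route --- reading the comparison $\xi_p(0,u)\lessgtr\xi_p(x^*,u)\Leftrightarrow (x^*)^{2-p}\lessgtr 2(1-p)$ in the reverse direction --- is a genuine, if small, variant, and it has one concrete advantage: the paper's proof cites Lemma \ref{lem:derivativeproperties} part (ii) for the monotonicity, but that lemma is itself proved using Lemma \ref{lem:lowerboundonx}, a circular citation; both of your routes avoid this, the first by bypassing derivatives entirely and the second by deriving monotonicity directly from $g$ being increasing above $\zeta^*$ as in the proof of Lemma \ref{lem:threshform}. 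One cosmetic point: at a tie $\xi_p(0,u)=\xi_p(x^*,u)$ you can only conclude $(x^*)^{2-p}\geq 2(1-p)$, not a strict inequality, but the weak inequality is all the lemma claims.
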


\begin{proof}
According to Lemma \ref{lem:threshform} and Lemma \ref{lem:derivativeproperties} part (ii), when $\eta_p(u;\lambda)>0$, we know $\eta_p(u;\lambda)\geq \eta^+_p(c_p\lambda^{\frac{1}{2-p}};\lambda)$. Furthermore, from the proof of Lemma \ref{lem:threshform}, it is straightforward to confirm the following equation,
\begin{eqnarray} \label{pro:one}
c_p\lambda^{\frac{1}{2-p}}=\eta^+_p(c_p\lambda^{\frac{1}{2-p}};\lambda)+\lambda p (\eta^+_p(c_p\lambda^{\frac{1}{2-p}};\lambda))^{p-1},
\end{eqnarray}
where the equation $c_p\lambda^{\frac{1}{2-p}}=x+\lambda p x^{p-1}$ has two roots and $x=\eta^+_p(c_p\lambda^{\frac{1}{2-p}};\lambda)$ is the larger one. 
Dividing both sides of the above equation by $\lambda^{\frac{1}{2-p}}$ gives,
\begin{eqnarray}\label{pro:two}
c_p=\eta^+_p(c_p;1)+p (\eta^+_p(c_p;1))^{p-1}
\end{eqnarray}
According to the explicit form of $c_p$ in Lemma \ref{lem:threshform}, we can obtain $\eta^+_p(c_p;1)=[2(1-p)]^{\frac{1}{2-p}}$.
\end{proof}

\vspace{.3cm}

So far we have studied some of the main properties of $\eta_p(u;\lambda)$. In this paper, we will also work with the derivatives of $\eta_p(u; \lambda)$. Note that the derivative of this function with respect to $u$ exists for every $u$ except at $u= c_p\lambda^{\frac{1}{2-p}}$. Furthermore, its derivative with respect to $\lambda$ exists everywhere except for $\lambda=(u/c_p)^{2-p}$. For notational simplicity, we use the following notations for the partial derivatives:
\begin{eqnarray*}
{\partial _1}{\eta _p}(u;\lambda ) &\triangleq& \frac{{\partial {\eta _p}(u;\lambda )}}{{\partial u}}, \ \ \ \  \ \ \ \ \
\partial _1^2{\eta _p}(u;\lambda ) \triangleq \frac{{{\partial ^2}{\eta _p}(u;\lambda )}}{{\partial {u^2}}}, \nonumber \\
{\partial _2}{\eta _p}(u;\lambda )&\triangleq & \frac{{\partial {\eta _p}(u;\lambda )}}{{\partial \lambda}},  \ \ \ \ \ \ \ \ \
\partial _2^2{\eta _p}(u;\lambda ) \triangleq \frac{{{\partial ^2}{\eta _p}(u;\lambda )}}{{\partial {\lambda^2}}}.
\end{eqnarray*}
 Whenever we use these notations, we refer to the derivative of the function for the values of $u$ and $\lambda$ at which $|{\eta _p}(u;\lambda )|>0$. 

\vspace{.2cm}

\begin{lemma}\label{lem:derivativeproperties}
If ${\eta _p}(u;\lambda ) > 0$, then for $0 < p < 1$ and $\lambda  > 0$, ${\eta _p}(u;\lambda )$ satisfies
\begin{itemize}
\item[      (i)] ${\eta _p}(u;\lambda ) < u$.
\item[      (ii)] $1 < \sup_{\eta_p(u;\lambda)>0}{\partial _1}{\eta _p}(u;\lambda ) < \infty$.
\item[      (iii)] $\partial _1^2{\eta _p}(u;\lambda ) < 0$.
\end{itemize}
Furthermore, since ${\eta _p}(u;\lambda )$ is an odd function, ${\partial _1}{\eta _p}(u;\lambda )$ and $\partial _1^2{\eta _p}(u;\lambda )$ are even and odd functions respectively. Therefore, for ${\eta _p}(u;\lambda ) < 0$, we have (i) ${\eta _p}(u;\lambda ) > u$, (ii) $1 <\sup_{\eta_p(u;\lambda)<0} {\partial _1}{\eta _p}(u;\lambda ) <  \infty$, (iii) $\partial _1^2{\eta _p}(u;\lambda ) > 0$. \\
\end{lemma}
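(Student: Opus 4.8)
The plan is to work on $u \ge 0$ with $\eta_p(u;\lambda)>0$ (the case $\eta_p<0$ following by odd symmetry) and to exploit the implicit characterization already established in the proofs of Lemmas \ref{lem:lowerbound} and \ref{lem:threshform}: writing $\zeta = \eta_p(u;\lambda)$, the stationarity condition reads $g(\zeta)=u$, where $g(\zeta)=\zeta+\lambda p \zeta^{p-1}$, and moreover $\zeta$ is the \emph{larger} root, so $\zeta>\zeta^*$ with $\zeta^*$ the minimizer of $g$. Part (i) is then immediate: from $g(\zeta)=u$ we get $\zeta = u - \lambda p \zeta^{p-1}$, and since $\lambda, p>0$ and $\zeta>0$ the correction term $\lambda p \zeta^{p-1}$ is strictly positive, whence $\eta_p(u;\lambda)=\zeta<u$.

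For parts (ii) and (iii) I would differentiate the relation $g(\zeta)=u$ implicitly in $u$. This gives $g'(\zeta)\,\partial_1\zeta = 1$, i.e. $\partial_1\eta_p = 1/g'(\zeta)$, and a second differentiation yields $\partial_1^2\eta_p = -g''(\zeta)/[g'(\zeta)]^3$. A direct computation gives $g'(\zeta) = 1-\lambda p(1-p)\zeta^{p-2}$ and $g''(\zeta)=\lambda p(1-p)(2-p)\zeta^{p-3}$. Since $0<p<1$, every factor in $g''$ is positive, so $g''(\zeta)>0$; and because $\zeta>\zeta^*$ and $g$ is increasing past its minimizer, $g'(\zeta)>0$. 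Combining these, $\partial_1^2\eta_p = -g''/[g']^3<0$, which is part (iii). For the lower bound in part (ii), note that $\zeta>\zeta^*$ forces $0<g'(\zeta)<1$ (the subtracted term is strictly positive), so $\partial_1\eta_p = 1/g'(\zeta)>1$.

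The delicate point --- and the step I expect to be the main obstacle --- is the \emph{finiteness} of the supremum in part (ii). Naively $\partial_1\eta_p = 1/g'(\zeta)$ blows up as $\zeta\downarrow\zeta^*$, since $g'(\zeta^*)=0$. The resolution is that $\eta_p$ never takes values arbitrarily close to $\zeta^*$ from above: because of the discontinuity at the threshold, whenever $\eta_p(u;\lambda)>0$ Lemma \ref{lem:lowerboundonx} guarantees $\zeta\ge \zeta_0$, where $\zeta_0 := [2(1-p)]^{\frac{1}{2-p}}\lambda^{\frac{1}{2-p}}$, and one checks $\zeta_0>\zeta^*$ strictly (equivalently $2>p$). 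Since $g'$ is increasing on $(\zeta^*,\infty)$, the map $\zeta\mapsto 1/g'(\zeta)$ is decreasing there, so its supremum over the admissible range $\zeta\ge\zeta_0$ is attained at $\zeta_0$. Substituting $\zeta_0^{p-2}=[2(1-p)\lambda]^{-1}$ gives $g'(\zeta_0)=1-\tfrac{p}{2}=\tfrac{2-p}{2}$, hence $\sup_{\eta_p>0}\partial_1\eta_p = \tfrac{2}{2-p}<\infty$, which indeed lies in $(1,2)$ as required.

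Finally, the statements for $\eta_p(u;\lambda)<0$ follow with no extra work from the oddness of $\eta_p$ established in Lemma \ref{lem:scaleinv}(i): differentiating an odd function once gives an even function and twice gives an odd function, so the value of $\partial_1\eta_p$ and the sign of its bounds are preserved while the sign of $\partial_1^2\eta_p$ flips, yielding exactly the three mirrored claims.
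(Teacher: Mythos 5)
Your proposal is correct and follows essentially the same route as the paper: both start from the stationarity equation $\eta_p - u + \lambda p \eta_p^{p-1}=0$ (your $g(\zeta)=u$ is just a repackaging), obtain $\partial_1\eta_p = 1/\bigl(1+\lambda p(p-1)\eta_p^{p-2}\bigr)$ and the second-derivative formula by implicit differentiation, and invoke Lemma \ref{lem:lowerboundonx} to bound the denominator away from zero, yielding the same value $2/(2-p)$ for the supremum. The oddness argument for the $\eta_p<0$ case is also identical.
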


\textit{Proof:} In this proof, we only consider the case ${\eta _p}(u;\lambda ) > 0$. Note that ${\eta _p}(u;\lambda )$ satisfies
\[
{\eta _p}(u;\lambda ) - u + \lambda p\eta _p^{p - 1}(u;\lambda ) = 0
\]
Since ${\eta _p}(u;\lambda ) > 0$, we have ${\eta _p}(u;\lambda ) < u$. Taking the derivative with respect to $u$ from both sides of the equation above, we obtain
\begin{equation}\label{eq:partial1etap}
{\partial _1}{\eta _p}(u;\lambda ) - 1 + \lambda p(p - 1)\eta _p^{p - 2}(u;\lambda ){\partial _1}{\eta _p}(u;\lambda ) = 0.
\end{equation}
Therefore, the derivative of ${\eta _p}(u;\lambda )$ is
\[
{\partial _1}{\eta _p}(u;\lambda ) = \frac{1}{{1 + \lambda p(p - 1)\eta _p^{p - 2}(u;\lambda )}}.
\]
Furthermore, based on Lemma \ref{lem:lowerboundonx}, we have 
 \[
0 > \lambda p(p - 1)\eta _p^{p - 2}(u;\lambda ) \geq \lambda p(p-1)([2(1-p)]^{\frac{1}{2-p}}\lambda^{\frac{1}{2-p}})^{p-2}=p(p-1)a_p^{p-2}=-\frac{p}{2} > -1.
 \]
 Note the inequality above holds for every possible $u$ and $\lambda$ such that $\eta_p(u;\lambda)>0$, which hence shows (ii). 
We now prove the third part of the lemma. By taking another derivative from \eqref{eq:partial1etap} with respect to $u$, we obtain
\[
\partial _1^2{\eta _p}(u;\lambda ) + \lambda p(p - 1)(p - 2)\eta _p^{p - 3}(u;\lambda ){\left( {{\partial _1}{\eta _p}(u;\lambda )} \right)^2} + \lambda p(p - 1)\eta _p^{p - 2}(u;\lambda )\partial _1^2{\eta _p}(u;\lambda ) = 0.
\]
Hence
\[
\partial _1^2{\eta _p}(u;\lambda ) = \frac{{ - \lambda p(p - 1)(p - 2)\eta _p^{p - 3}(u;\lambda ){{\left( {{\partial _1}{\eta _p}(u;\lambda )} \right)}^2}}}{{1 + \lambda p(p - 1)\eta _p^{p - 2}(u;\lambda )}}.
\]
Again by employing Lemma \ref{lem:lowerboundonx}, we can conclude that the second derivative is negative. We may also claim that
\[
\sup_u |\partial _1^2{\eta _p}(u;\lambda )| < \infty. 
\]
 $\hfill \Box$

\vspace{.2cm}

The next lemma is concerned with the properties of $\eta_p(u; \lambda)$ as a function of $\lambda$. 

\vspace{.2cm}

\begin{lemma}\label{lem:derivativeproperties2}
If $\left| {{\eta _p}(u;\lambda )} \right| > 0$ and $0 < p < 1$, we have
\[
{\partial _2}{\eta _p}(u;\lambda ) =  - p{ {\left| {{\eta _p}(u;\lambda )} \right|}^{p - 1}}{\partial _1}{\eta _p}(u;\lambda )\cdot \mbox{sign}(u).
\]
In particular, ${\partial _2}{\eta _p}(u;\lambda ) < 0$ \mbox{when} $u>0$ and ${\partial _2}{\eta _p}(u;\lambda ) > 0$ \mbox{if} $u <0$.
\end{lemma}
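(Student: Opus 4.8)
The plan is to differentiate the first-order stationarity condition of the proximal objective with respect to $\lambda$ and then recycle the expression for $\partial_1\eta_p$ already obtained in Lemma \ref{lem:derivativeproperties}. I would first treat the case $u>0$ with $\eta_p(u;\lambda)>0$. As recalled in the proof of Lemma \ref{lem:derivativeproperties}, on this regime the minimizer of $\xi_p(x,u)$ satisfies
\[
\eta_p(u;\lambda) - u + \lambda p\,\eta_p^{p-1}(u;\lambda) = 0,
\]
and $\eta_p$ is differentiable in $\lambda$ wherever $|\eta_p(u;\lambda)|>0$ (the only non-differentiability in $\lambda$ sits at $\lambda=(u/c_p)^{2-p}$, as noted just before the lemma).

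Next I would differentiate this identity with respect to $\lambda$, holding $u$ fixed, to obtain
\[
\partial_2\eta_p(u;\lambda) + p\,\eta_p^{p-1}(u;\lambda) + \lambda p(p-1)\eta_p^{p-2}(u;\lambda)\,\partial_2\eta_p(u;\lambda) = 0,
\]
so that
\[
\bigl(1 + \lambda p(p-1)\eta_p^{p-2}(u;\lambda)\bigr)\,\partial_2\eta_p(u;\lambda) = -p\,\eta_p^{p-1}(u;\lambda).
\]
The key observation is that the parenthesized factor is exactly the reciprocal of $\partial_1\eta_p(u;\lambda)$, since Lemma \ref{lem:derivativeproperties} established $\partial_1\eta_p(u;\lambda) = \bigl(1 + \lambda p(p-1)\eta_p^{p-2}(u;\lambda)\bigr)^{-1}$. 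Substituting yields $\partial_2\eta_p(u;\lambda) = -p\,\eta_p^{p-1}(u;\lambda)\,\partial_1\eta_p(u;\lambda)$, which is the claimed formula once I write $\eta_p=|\eta_p|$ and $\mathrm{sign}(u)=1$ in this regime.

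To cover $u<0$ I would invoke the oddness $\eta_p(-u;\lambda)=-\eta_p(u;\lambda)$ from Lemma \ref{lem:scaleinv}(i): writing $u=-v$ with $v>0$ and differentiating in $\lambda$ gives $\partial_2\eta_p(u;\lambda)=-\partial_2\eta_p(v;\lambda)$, while $|\eta_p(u;\lambda)|=|\eta_p(v;\lambda)|$ and the evenness of $\partial_1\eta_p$ reproduce the formula with the extra factor $\mathrm{sign}(u)=-1$. Finally, the sign statement is immediate: $\partial_1\eta_p>1>0$ by Lemma \ref{lem:derivativeproperties}(ii) and $p|\eta_p|^{p-1}>0$, so the sign of $\partial_2\eta_p$ is opposite to that of $u$. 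There is no serious obstacle here; the only point requiring care is the legitimacy of the implicit differentiation, i.e.\ that $\eta_p$ is a differentiable function of $\lambda$ on the open set $\{|\eta_p|>0\}$. This follows from the implicit function theorem applied to the stationarity equation, whose relevant coefficient $1+\lambda p(p-1)\eta_p^{p-2}$ is nonzero throughout the regime (being equal to $1/\partial_1\eta_p$).
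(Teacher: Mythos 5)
Your proposal is correct and follows essentially the same route as the paper: differentiate the stationarity identity $\eta_p - u + \lambda p\,\eta_p^{p-1}=0$ in $\lambda$, and combine with the expression for $\partial_1\eta_p$ obtained by differentiating in $u$. Your added remarks on the odd-symmetry reduction for $u<0$ and the implicit-function-theorem justification are fine but not materially different from what the paper does.
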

\vspace{.2cm}

\textit{Proof.}
We prove the result for the case of ${\eta _p}(u;\lambda ) > 0$. The other case can be proved in exactly the same way. Note that since ${\eta _p}(u;\lambda )>0$, it satisfies
\begin{equation}\label{eq:etapequationmain}
{\eta _p}(u;\lambda ) - u + \lambda p  \eta _p^{p - 1}(u;\lambda ) = 0.
\end{equation}
By taking the derivative of \eqref{eq:etapequationmain} with respect to $u$ we obtain
\begin{equation}\label{eq:partial1implicit}
{\partial _1}{\eta _p}(u;\lambda ) - 1 + \lambda p(p - 1)\eta _p^{p - 2}(u;\lambda ){\partial _1}{\eta _p}(u,\lambda ) = 0.
\end{equation}
By taking the derivative of \eqref{eq:etapequationmain} with respect to $\lambda$ we obtain
\begin{equation}\label{eq:partial2implicit}
{\partial _2}{\eta _p}(u;\lambda ) + p\eta _p^{p - 1}(u;\lambda ) + \lambda p(p - 1)\eta _p^{p - 2}(u;\lambda ){\partial _2}{\eta _p}(u,\lambda ) = 0.
\end{equation}
The final result can be obtained by combining \eqref{eq:partial1implicit} and \eqref{eq:partial2implicit}.
$\hfill \Box$

\vspace{.2cm}

Below we summarize two straightforward corollaries of the above results. These two corollaries enable us to compare $\eta_p$ with $\eta_0$ and $\eta_1$. First note that according to Lemma \ref{lem:threshform}, the threshold at which $\eta_p(u; \lambda)$ switches from zero to a positive number is different for different values of $p$. This makes the comparison of these proximal functions complicated. However, according to Lemma \ref{lem:threshform}, if we set the parameter $\lambda_p \triangleq \left( \frac{\tilde{\lambda}}{c_p} \right)^{2-p}$ with $\tilde{\lambda}$ being a fixed constant, then for every $0\leq p \leq 1$, we have $\eta_p(u; \lambda)= 0$ for $|u|< \tilde{\lambda}$ and $\eta_p(u; \lambda)\neq 0$ for $|u| > \tilde{\lambda}$. Based on this new parametrization, we would like to compare $\eta_p$ with $\eta_0$ and $\eta_1$.

\vspace{.2cm}

\begin{corollary}\label{cor:etap1eta0}
Define $\lambda_p  \triangleq \left( \frac{\tilde{\lambda}}{c_p} \right)^{2-p}$. Then
 $${\eta _p}(u;{\lambda _p}) > {\eta _1}(u;{\lambda _1}), \ \ \  \forall u>\tilde{\lambda}, 0 \leq p<1.$$
\end{corollary}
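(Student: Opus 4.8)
The plan is to reduce this two-parameter comparison to a statement about a single scalar difference function and then close it using only the derivative estimates already in hand. The first step is to unwind the parametrization $\lambda_p \triangleq (\tilde{\lambda}/c_p)^{2-p}$. By Lemma \ref{lem:threshform} the threshold of $\eta_p(\,\cdot\,;\lambda_p)$ is exactly $\tilde{\lambda}_p = c_p\lambda_p^{1/(2-p)} = \tilde{\lambda}$, and this holds uniformly in $p$; in particular at $p=1$ we have $c_1 = 1$, so $\lambda_1 = \tilde{\lambda}$. Hence for $u>\tilde{\lambda}$ the right-hand side is the explicit affine branch $\eta_1(u;\lambda_1) = u - \tilde{\lambda}$, and both functions vanish on $[-\tilde{\lambda},\tilde{\lambda}]$. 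The case $p=0$ I would then dispose of in one line, since $\eta_0(u;\lambda_0) = u$ for $u>\tilde{\lambda}$ gives $\eta_0(u;\lambda_0) - \eta_1(u;\lambda_1) = \tilde{\lambda} > 0$.

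For $0<p<1$ I would introduce $D(u) \triangleq \eta_p(u;\lambda_p) - \eta_1(u;\lambda_1) = \eta_p(u;\lambda_p) - (u-\tilde{\lambda})$ on the interval $(\tilde{\lambda},\infty)$ and establish two facts. The boundary value: as $u \searrow \tilde{\lambda}$ we have $\eta_1 \to 0$, while the right limit $\eta_p^+(\tilde{\lambda};\lambda_p)$ is strictly positive and bounded below by Lemma \ref{lem:lowerboundonx}, so $\lim_{u\searrow\tilde{\lambda}} D(u) > 0$. The monotonicity: on $(\tilde{\lambda},\infty)$ both terms are differentiable and $D'(u) = \partial_1\eta_p(u;\lambda_p) - 1$, so it suffices to know $\partial_1\eta_p > 1$ pointwise. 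This comes from the explicit formula $\partial_1\eta_p = (1+\lambda p(p-1)\eta_p^{p-2})^{-1}$ together with the denominator estimate $0 > \lambda p(p-1)\eta_p^{p-2} \geq -p/2$ derived in the proof of Lemma \ref{lem:derivativeproperties}, which forces the denominator into $(1/2,1)$ and hence $\partial_1\eta_p \in (1,2)$. Thus $D' > 0$, and since $D$ is continuous and strictly increasing on $(\tilde{\lambda},\infty)$ with strictly positive right limit at $\tilde{\lambda}$, it remains strictly positive throughout, which is the claim.

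The argument is essentially mechanical once the parametrization is unwound, so the substantive point is not a hard estimate but a matter of reading the earlier lemma correctly: the whole proof hinges on $D'(u)>0$ at \emph{every} $u>\tilde{\lambda}$, so I must use Lemma \ref{lem:derivativeproperties}(ii) in its pointwise form $\partial_1\eta_p > 1$ rather than merely as a statement about the supremum. Its proof does deliver this via the denominator bound, but I would make that explicit rather than quoting only the displayed supremum inequality. A secondary check is the degenerate value $c_1 = 1$ at $p=1$, which I would justify directly from the explicit soft-threshold form of $\eta_1$ (whose zero set is exactly $|u|\le \lambda$), since the closed-form expression for $c_p$ yields the indeterminate type $0^0$ in that limit and should not be invoked there.
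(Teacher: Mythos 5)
Your proof is correct and follows essentially the same route as the paper's: compare the two functions at the common threshold $\tilde{\lambda}$ (where $\eta_1$ vanishes but $\eta_p^+(\tilde{\lambda};\lambda_p)>0$) and then compare derivatives on $(\tilde{\lambda},\infty)$ using $\partial_1\eta_1=1$ versus $\partial_1\eta_p>1$. Your added care in invoking the pointwise bound $\partial_1\eta_p(u;\lambda)=\bigl(1+\lambda p(p-1)\eta_p^{p-2}\bigr)^{-1}>1$ from the proof of Lemma \ref{lem:derivativeproperties} (rather than the supremum form of its statement), and in handling $c_1=1$ directly from the soft-threshold formula, only tightens the same argument.
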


\vspace{.2cm}

\textit{Proof:} Note that ${\eta _1}(\tilde{\lambda};{\lambda _1}) = 0$ and ${\eta _p}(\tilde{\lambda};{\lambda _p}) > 0$ for every $0 \leq p < 1$. The derivative of the soft thresholding function is ${\partial _1}{\eta _1}(u;{\lambda _1}) = 1$ for $u > \tilde{\lambda}$. According to Lemma 7, the derivative of ${\eta _p}(u;{\lambda _p})$ is ${\partial _1}{\eta _p}(u;{\lambda _p}) > 1$ for $u>\tilde{\lambda}$. Therefore, we have ${\eta _1}(u;{\lambda _1}) < {\eta _p}(u;{\lambda _p})$ when $u>\tilde{\lambda}$ and $0 < p < 1$. It is straightforward to check that the result also holds for $p=0$, i.e., ${\eta _1}(u;{\lambda _1}) < {\eta _0}(u;{\lambda _0})$. $\hfill \Box$

\vspace{.2cm}

\begin{corollary}\label{cor:hardvsellp}
Let $\lambda_p = \left( \frac{\tilde{\lambda}}{c_p} \right)^{2-p}$. We have
$${\eta _p}(u;{\lambda _p}) < {\eta _0}(u;{\lambda _0}), \ \ \ \ \forall u> \tilde{\lambda}, 0<p \leq 1.$$
\end{corollary}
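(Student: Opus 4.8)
The plan is to reduce the whole statement to the single inequality $\eta_p(u;\lambda_p) < u$ from Lemma \ref{lem:derivativeproperties}(i), combined with the observation that under the chosen normalization $\eta_0(u;\lambda_0)$ is exactly the identity for $u>\tilde\lambda$. The normalization $\lambda_p = (\tilde\lambda/c_p)^{2-p}$ is designed (via Lemma \ref{lem:threshform}) so that $\tilde\lambda_p = c_p\lambda_p^{1/(2-p)} = \tilde\lambda$ for every $p$; hence for $u>\tilde\lambda$ all the functions $\eta_p(u;\lambda_p)$ simultaneously leave zero and are strictly positive, which is what makes the comparison meaningful. This mirrors the proof of Corollary \ref{cor:etap1eta0}, except that the slope bound ${\partial_1}\eta_p>1$ used there is replaced here by the fact that $\eta_0$ coincides with the identity above the common threshold.

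First I would pin down the $p=0$ endpoint. From the closed form of $c_p$ in Lemma \ref{lem:threshform}, $c_0 = [2(1-0)]^{1/2} = \sqrt{2}$, so $\lambda_0 = \tilde\lambda^2/2$ and the hard-threshold parameter satisfies $\sqrt{2\lambda_0} = \tilde\lambda$. Therefore $\eta_0(u;\lambda_0) = u\,\mathbb{I}(|u|>\tilde\lambda) = u$ for every $u>\tilde\lambda$, i.e. the right-hand side of the claimed inequality is simply $u$ on the relevant range. Next, for $0<p<1$ and $u>\tilde\lambda$ we have $\eta_p(u;\lambda_p)>0$ by Lemma \ref{lem:threshform}, so Lemma \ref{lem:derivativeproperties}(i) gives the strict inequality $\eta_p(u;\lambda_p) < u = \eta_0(u;\lambda_0)$, which is exactly the assertion. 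For the remaining endpoint $p=1$ I would instead use the explicit soft-threshold form: since $c_1 = 1$, one gets $\lambda_1 = \tilde\lambda$ and $\eta_1(u;\lambda_1) = u - \tilde\lambda < u = \eta_0(u;\lambda_0)$ for $u>\tilde\lambda$, since $\tilde\lambda>0$.

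The only steps requiring any care — and the closest thing to an obstacle in an otherwise immediate argument — are the two boundary evaluations $c_0=\sqrt{2}$ and $c_1=1$, which are needed both to align all thresholds at $\tilde\lambda$ and to identify $\eta_0(\cdot;\lambda_0)$ with the identity. Both follow directly from the formula $c_p=[2(1-p)]^{1/(2-p)}+p[2(1-p)]^{(p-1)/(2-p)}$ of Lemma \ref{lem:threshform}, with the value at $p=1$ understood as the limit $p\to 1$ (where the first term tends to $0$ and the second to $1$). Once these are in hand, the inequality is a one-line consequence of the bound $\eta_p(u;\lambda)<u$, and no further estimation is required.
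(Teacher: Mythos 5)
Your proof is correct and follows essentially the same route as the paper: the case $0<p<1$ is reduced to Lemma \ref{lem:derivativeproperties}(i) via the observation that $\eta_0(u;\lambda_0)=u$ above the common threshold $\tilde\lambda$, and the case $p=1$ is handled by the explicit soft-thresholding formula. You merely spell out the boundary evaluations $c_0=\sqrt{2}$ and $c_1=1$ that the paper leaves implicit.
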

\textit{Proof:} Since $\eta_1(u,\lambda_1)$ admits an explicit form, it is straightforward to verify the result. For $0<p<1$, it is a direction result of Lemma \ref{lem:derivativeproperties} part (i).   $\hfill \Box$

\vspace{.2cm}

Another type of result that we will use in this paper is about the behavior of $\eta_p(u; \lambda)$ and its derivative for large values of $u$. The rest of this section is devoted to such results. 

\vspace{.2cm}

\begin{lemma}\label{lem:asymtoteta_p}
Let $\lambda>0$ and $0 < p < 1$ be two fixed numbers. Then for large value of $u$, we have
\begin{equation*}
\label{eq:3-5}
{\eta _p}(u;\lambda ) = u - \lambda p \  {\rm sign}(u){\left| u \right|^{p - 1}} + o(\left| u \right|^{p - 1}).
\end{equation*}
\end{lemma}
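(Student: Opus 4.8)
The plan is to work directly from the stationarity equation for $\eta_p$ and bootstrap an asymptotic expansion out of it. First I would reduce to the case $u>0$: by Lemma \ref{lem:scaleinv}(i) the function $\eta_p$ is odd, so once the expansion $\eta_p(u;\lambda)=u-\lambda p\, u^{p-1}+o(u^{p-1})$ is established for large positive $u$, replacing $u$ by $-u$ and invoking oddness immediately produces the ${\rm sign}(u)|u|^{p-1}$ form for large negative $u$. So fix $u>\tilde\lambda_p$, where by Lemma \ref{lem:threshform} we have $\eta_p(u;\lambda)>0$, and abbreviate $\zeta\triangleq\eta_p(u;\lambda)$. As derived in the proof of Lemma \ref{lem:lowerbound}, $\zeta$ is the root larger than $\zeta^*$ of $g(\zeta)=u$, i.e. it satisfies
\[
\zeta+\lambda p\,\zeta^{p-1}=u.
\]

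Second, I would establish that $\zeta\to\infty$ and in fact $\zeta/u\to1$ as $u\to\infty$. Since $g$ is increasing on $(\zeta^*,\infty)$ with $g(\zeta)\to\infty$ (facts recorded in the proof of Lemma \ref{lem:lowerbound}), the relation $g(\zeta)=u\to\infty$ together with $\zeta\geq\zeta^*$ forces $\zeta\to\infty$. Rewriting the stationarity equation as $\zeta=u-\lambda p\,\zeta^{p-1}$ and using $p-1<0$, the correction $\lambda p\,\zeta^{p-1}\to0$, so $\zeta=u-o(1)$ and hence $\zeta/u\to1$.

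Third comes the bootstrap that pins down the first-order term. Write $r\triangleq u-\zeta=\lambda p\,\zeta^{p-1}$, which tends to $0$. Substituting $\zeta=u-r$ gives $r=\lambda p\,(u-r)^{p-1}$. Expanding $(u-r)^{p-1}=u^{p-1}(1-r/u)^{p-1}=u^{p-1}\bigl(1+O(r/u)\bigr)$, valid since $r/u\to0$, I obtain $r=\lambda p\,u^{p-1}+O\!\bigl(u^{p-2}r\bigr)$. Because $u^{p-2}r=u^{p-1}\cdot(r/u)$ and $r/u\to0$, the remainder is $o(u^{p-1})$, so $r=\lambda p\,u^{p-1}+o(u^{p-1})$. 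Therefore $\zeta=u-r=u-\lambda p\,u^{p-1}+o(u^{p-1})$, which is the claimed expansion for $u>0$; oddness then settles the negative case.

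The main obstacle I anticipate is the asymptotic bookkeeping in the third step: one must ensure that the self-referential appearance of $\zeta$ (equivalently $r$) inside $\zeta^{p-1}$ does not contaminate the $o(u^{p-1})$ error. The key is to first secure $\zeta/u\to1$ in the second step, so that $\zeta^{p-1}\sim u^{p-1}$, and only then feed this estimate back to upgrade the crude bound $O(u^{p-1})$ into the sharp leading constant $\lambda p$ with a genuinely lower-order remainder. Everything else follows directly from the implicit equation $g(\eta_p)=u$ together with the monotonicity and limit properties of $g$ recorded in Lemmas \ref{lem:lowerbound} and \ref{lem:threshform}.
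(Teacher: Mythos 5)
Your proposal is correct and follows essentially the same route as the paper: both start from the stationarity equation $\eta_p(u;\lambda)+\lambda p\,\eta_p^{p-1}(u;\lambda)=u$, first establish $\eta_p(u;\lambda)/u\to 1$, and then use $\eta_p^{p-1}\sim u^{p-1}$ to identify the correction term $\lambda p\,u^{p-1}$ with an $o(|u|^{p-1})$ remainder. The only cosmetic difference is that you obtain $\eta_p\to\infty$ from the monotonicity of $g$ on $(\zeta^*,\infty)$, whereas the paper cites the comparison $\eta_p(u;\lambda_p)>\eta_1(u;\lambda_1)$ from Corollary \ref{cor:etap1eta0}; both are valid.
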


\vspace{.2cm}

\textit{Proof:} For simplicity, we only consider the case $u>0$. First note that Corollary \ref{cor:etap1eta0} shows
\[
\eta_p(u;\lambda_p) > \eta_1(u;\lambda_1) \rightarrow \infty, \mbox{u} \rightarrow \infty.
\]
Moreove, we know for large enough $u$, ${\eta _p}(u;\lambda )$ satisfies
\begin{equation}
\label{eq:3-6}
{\eta _p}(u;\lambda ) - u + \lambda p { {\eta _p^{p - 1}(u;\lambda )}} = 0.
\end{equation}
Define
\begin{equation}\label{eq:definedelta}
 \upsilon_p(u; \lambda) \triangleq u-{\eta _p}(u;\lambda ).
 \end{equation}
 If we plug \eqref{eq:definedelta} in (\ref{eq:3-6}) then we have
\begin{equation*}
\label{eq:3-7}
\upsilon_p(u; \lambda) = \lambda p \eta_p^{p-1}(u; \lambda).
\end{equation*}
Finally,
\begin{eqnarray*}
\lim_{u \rightarrow \infty} \frac{\upsilon_p(u; \lambda) - \lambda p u^{p-1}}{u^{p-1}} = \lambda p \left(\lim_{u \rightarrow \infty} \frac{\eta_p^{p-1}(u; \lambda)}{u^{p-1}}-1\right)= 0.
\end{eqnarray*}
The last equality is due to the fact that $\lim_{u \rightarrow \infty} \frac{\eta_p(u;\lambda)}{ u} - 1 = - \lambda p \lim_{u \rightarrow \infty} \frac{{\eta _p^{p - 1}(u;\lambda )}}{u}=0$.
$\hfill \Box$

\vspace{.5cm}

\begin{lemma}\label{lem:asymptoticbehaviorder1}
Let $ 0 < p<1$ and $\lambda >0$ be two fixed numbers. For large values of $u$ we have
\begin{equation*}
\label{eq:3-9}
{\partial _1}{\eta _p}(u;\lambda ) = 1+ \lambda p(1-p)|u|^{p-2} +o(|u|^{p-2})
\end{equation*}
\end{lemma}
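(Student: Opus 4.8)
The plan is to start from the closed-form expression for the derivative that was already obtained inside the proof of Lemma \ref{lem:derivativeproperties}. Differentiating the stationarity condition $\eta_p(u;\lambda) - u + \lambda p \eta_p^{p-1}(u;\lambda) = 0$ (valid whenever $\eta_p(u;\lambda)>0$) with respect to $u$ gives equation \eqref{eq:partial1etap}, from which
\[
\partial_1 \eta_p(u;\lambda) = \frac{1}{1 + \lambda p (p-1) \eta_p^{p-2}(u;\lambda)}.
\]
Thus the entire lemma reduces to pinning down the asymptotic size of the single quantity $\eta_p^{p-2}(u;\lambda)$ as $u \to \infty$ and then expanding the reciprocal.

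As usual I would restrict to $u > 0$, the general case following from oddness. From Lemma \ref{lem:asymtoteta_p} we have $\eta_p(u;\lambda) = u - \lambda p u^{p-1} + o(u^{p-1})$, so in particular $\eta_p(u;\lambda) = u\bigl(1 + o(1)\bigr)$ since $u^{p-1}/u = u^{p-2} \to 0$. Raising to the power $p - 2 < 0$ and using continuity of $t \mapsto t^{p-2}$ near $1$, this yields
\[
\eta_p^{p-2}(u;\lambda) = u^{p-2}\bigl(1 + o(1)\bigr) = |u|^{p-2} + o(|u|^{p-2}).
\]
In particular this quantity tends to $0$, so the denominator above has the form $1 + x$ with $x = \lambda p(p-1)\eta_p^{p-2}(u;\lambda) \to 0$.

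Finally I would apply the elementary expansion $\tfrac{1}{1+x} = 1 - x + O(x^2)$. Since $x = \lambda p(p-1)|u|^{p-2} + o(|u|^{p-2})$ and $x^2 = O(|u|^{2p-4}) = o(|u|^{p-2})$ (because $2p - 4 < p - 2$ for $p < 2$), collecting terms and using $-\lambda p (p-1) = \lambda p (1-p)$ gives exactly
\[
\partial_1 \eta_p(u;\lambda) = 1 + \lambda p(1-p)|u|^{p-2} + o(|u|^{p-2}).
\]
The only mildly delicate point is the bookkeeping of error orders: one must verify that the $o(u^{p-1})$ remainder inherited from Lemma \ref{lem:asymtoteta_p} contributes only at relative order $o(1)$ after raising to the power $p-2$, and that the quadratic Taylor remainder $O(|u|^{2p-4})$ is genuinely $o(|u|^{p-2})$. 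This is where I expect the main, though still routine, care to be needed.
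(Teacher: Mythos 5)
Your proposal is correct and takes essentially the same route as the paper: both start from the closed-form $\partial_1\eta_p(u;\lambda) = \bigl(1+\lambda p(p-1)\eta_p^{p-2}(u;\lambda)\bigr)^{-1}$ obtained by implicit differentiation, and both reduce the claim to $\eta_p^{p-2}(u;\lambda)/u^{p-2}\to 1$, which follows from Lemma \ref{lem:asymtoteta_p}. The paper evaluates the limit of $\bigl(\partial_1\eta_p(u;\lambda)-1-\lambda p(1-p)u^{p-2}\bigr)/u^{p-2}$ directly by putting everything over the common denominator, whereas you Taylor-expand $1/(1+x)$; these are the same computation, and your order bookkeeping (in particular $x^2=O(|u|^{2p-4})=o(|u|^{p-2})$) is sound.
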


\textit{Proof:} We only consider $u>0$ for simplicity. Taking the derivative of (\ref{eq:3-6}) with respect to $u$ leads to
\begin{equation}
\label{eq:3-11}
{\partial _1}{\eta _p}(u;\lambda ) = \frac{1}{1 + \lambda p(p - 1)\eta _p^{p - 2}(u;\lambda ) }.
\end{equation}
Hence we have
\begin{eqnarray}
\lefteqn{ \lim_{u \rightarrow \infty} \frac{\partial_1 \eta_p(u;\lambda) -1 - \lambda p(1-p) u^{p-2}}{u^{p-2}} } \nonumber \\
&=& \lim_{u \rightarrow \infty} \frac{\lambda p (1-p) u^{p-2} - \lambda p(1-p) \eta_p^{p-2}(u; \lambda)+ \lambda^2 p^2 (1-p)^2 u^{p-2} \eta_p^{p-2}(u; \lambda)  }{u^{p-2}(1+ \lambda p (p-1) \eta_p^{p-2}(u;\lambda)) } =0.
\end{eqnarray}
To obtain the last equality, we have employed the following equalities that are proved in the last lemma:
\begin{eqnarray*}
\lim_{u \rightarrow \infty} \eta_p(u;\lambda)=\infty; \quad \lim_{u \rightarrow \infty} \frac{\eta_p(u;\lambda)}{u} =1.
\end{eqnarray*}

$\hfill \Box$
\vspace{.2cm}


\subsection{Smoothness of state evolution function $\Psi_{\lambda, p} (\sigma^2)$}

In the paper there are many instances at which we require the derivatives of $\Psi_{\lambda_*, p} (\sigma^2)$ or $\Psi_{\lambda_*, p_*} (\sigma^2)$. In this section, we prove all the smoothness properties that are require throughout the paper. For simplicity we define the following notations:
\begin{eqnarray*}
 H_p(\sigma,\lambda)&\triangleq&\mathbb{E}[\eta_p(X+\sigma Z;\lambda)-X]^2,  \\
 \lambda_*(\sigma)&\triangleq&\arg\min_{\lambda \geq 0} H_p(\sigma, \lambda). 
\end{eqnarray*}
Note that 
\[
\Psi_{\lambda_*, p}(\sigma^2)= \frac{1}{\delta}H_p(\sigma,\lambda_*(\sigma)).
\]

\begin{lemma}\label{lemma:partialderivative}
If $\sigma_0>0$ and $\lambda_0>0$, then $\frac{\partial H_p(\sigma,\lambda)}{\partial \sigma}$  exists at $\sigma_0$ and $\lambda_0$ and is equal to
\[
\left. \frac{\partial H_p(\sigma,\lambda)}{\partial \sigma} \right|_{(\sigma_0, \lambda_0)} = \frac{1}{\sigma} \mathbb{E} [(\eta_p(X+ \sigma Z; \lambda) -X)^2(Z^2-1)]. 
\] 
\end{lemma}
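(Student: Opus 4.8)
The plan is to differentiate under the expectation, but the naive route of writing $\partial_\sigma H_p = \mathbb{E}[2(\eta_p(X+\sigma Z;\lambda)-X)\,\partial_1\eta_p(X+\sigma Z;\lambda)\,Z]$ is unavailable, because $\eta_p(\cdot;\lambda)$ is discontinuous at $|u|=\tilde\lambda_p$ for $p<1$ (Lemma \ref{lem:threshform}), so $\partial_1\eta_p$ fails to exist everywhere. Instead I would relocate all the $\sigma$-dependence onto the Gaussian kernel, where differentiation is harmless. Conditioning on $X=x$ and writing $\phi$ for the standard normal density, I would change variables $u=x+\sigma z$ to obtain
\[
H_p(\sigma,\lambda)=\mathbb{E}_X\!\left[\int_{\mathbb{R}}(\eta_p(u;\lambda)-x)^2\,\psi_\sigma(u;x)\,du\right],\qquad \psi_\sigma(u;x)\triangleq\frac{1}{\sigma\sqrt{2\pi}}e^{-(u-x)^2/(2\sigma^2)}.
\]
In this representation $\eta_p$ no longer depends on $\sigma$, so its jump discontinuity becomes irrelevant to the differentiation.

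The key computation is the elementary heat-kernel identity
\[
\frac{\partial}{\partial\sigma}\psi_\sigma(u;x)=\frac{1}{\sigma}\left(\frac{(u-x)^2}{\sigma^2}-1\right)\psi_\sigma(u;x),
\]
obtained by differentiating $\log\psi_\sigma=-\log\sigma-\tfrac12\log(2\pi)-(u-x)^2/(2\sigma^2)$. Granting for the moment that differentiation and integration may be interchanged, this yields
\[
\frac{\partial H_p}{\partial\sigma}=\mathbb{E}_X\!\left[\int_{\mathbb{R}}(\eta_p(u;\lambda)-x)^2\,\frac{1}{\sigma}\left(\frac{(u-x)^2}{\sigma^2}-1\right)\psi_\sigma(u;x)\,du\right],
\]
and changing variables back via $z=(u-x)/\sigma$ (so that $(u-x)^2/\sigma^2=z^2$ and $\psi_\sigma(u;x)\,du=\phi(z)\,dz$) produces exactly
\[
\frac{1}{\sigma}\,\mathbb{E}\big[(\eta_p(X+\sigma Z;\lambda)-X)^2(Z^2-1)\big],
\]
which is the claimed expression.

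The only real work is justifying the interchange, which I would carry out by dominated convergence on a compact neighborhood $[\sigma_0-\delta',\sigma_0+\delta']$ with $\sigma_0-\delta'>0$. For such $\sigma$ one bounds $\psi_\sigma(u;x)\le C_1\,e^{-(u-x)^2/(2(\sigma_0+\delta')^2)}$ and hence $|\partial_\sigma\psi_\sigma(u;x)|$ by a fixed polynomial-times-Gaussian in $(u-x)$, uniformly in $\sigma$. Using $\eta_p(u;\lambda)^2\le u^2$ (Lemma \ref{lem:derivativeproperties}(i), extended by oddness) one gets, after returning to the $z$-variable, $(\eta_p(x+\sigma z;\lambda)-x)^2\le 6x^2+4(\sigma_0+\delta')^2 z^2$; integrating the Gaussian against this polynomial gives an $x$-dependent bound of the form $C(x^2+1)$, whose $X$-expectation is finite because $p_X$ has bounded second moment. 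This furnishes an integrable dominating function on the product space, so Fubini together with the dominated-convergence form of the Leibniz rule legitimizes differentiating under both the $du$-integral and $\mathbb{E}_X$. The main obstacle is thus purely this domination bookkeeping; the conceptual content is the change of variables that moves the $\sigma$-dependence to the smooth kernel and the identity for $\partial_\sigma\psi_\sigma$, which together render the discontinuity of $\eta_p$ a non-issue.
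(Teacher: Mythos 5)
Your proposal is correct and follows essentially the same route as the paper's proof: both relocate the $\sigma$-dependence onto the Gaussian kernel via the change of variables $u=x+\sigma z$ (precisely so that the discontinuity of $\eta_p$ becomes irrelevant), differentiate the kernel, and justify the interchange by a mean-value-theorem plus dominated-convergence argument using $|\eta_p(u;\lambda)|\le|u|$ and the bounded second moment of $p_X$. The only cosmetic difference is that you absorb the $1/\sigma$ normalization into the kernel and differentiate it in one step, whereas the paper factors $1/\sigma$ out and differentiates the product by parts; the two computations are identical.
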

\textit{Proof:}
Let $F$ denote the CDF of $X$. Then,
\begin{eqnarray*}
H_p(\sigma,\lambda)&=&\int_{-\infty}^{\infty}\int_{-\infty}^{\infty}(\eta_p(x+\sigma z;\lambda)-x)^2\phi(z)dzdF(x) \\
&=& \frac{1}{\sigma} \int_{-\infty}^{\infty}\int_{-\infty}^{\infty}(\eta_p(z;\lambda)-x)^2\phi((z-x)/\sigma)dzdF(x).
\end{eqnarray*}
Hence our first goal is to show that $ \int_{-\infty}^{\infty}\int_{-\infty}^{\infty}(\eta_p(z;\lambda)-x)^2\phi((z-x)/\sigma)dzdF(x)$ is differentiable and that the derivative may move inside the integral. For the moment we assume that $\sigma> \sigma_0$ and we calculate
\[
\lim_{\sigma \rightarrow \sigma_0} \int_{-\infty}^{\infty}\int_{-\infty}^{\infty}(\eta_p(z;\lambda)-x)^2\frac{\phi((z-x)/\sigma)-\phi((z-x)/\sigma_0) }{\sigma- \sigma_0}dzdF(x).
\]
From mean value theorem we conclude that
\[
\frac{\phi((z-x)/\sigma)-\phi((z-x)/\sigma_0) }{\sigma- \sigma_0} = \frac{|z-x|^2}{ \tilde{\sigma}^3} \phi((z-x)/\tilde{\sigma}),
\]
where $\tilde{\sigma} \in [\sigma_0, \sigma]$. It is straightforward to confirm that 
\begin{eqnarray}
\lefteqn{\int_{-\infty}^{\infty}\int_{-\infty}^{\infty}(\eta_p(z;\lambda)-x)^2\frac{\phi((z-x)/\sigma)-\phi((z-x)/\sigma_0) }{\sigma- \sigma_0}dzdF(x)} \nonumber \\
&=& \int_{-\infty}^{\infty}\int_{-\infty}^{\infty}(\eta_p(z;\lambda)-x)^2\frac{|z-x|^2}{ \tilde{\sigma}^3} \phi((z-x)/\tilde{\sigma}) dzdF(x)  \nonumber \\
&=& \int_{-\infty}^{\infty}\int_{-\infty}^{\infty}(\eta_p(x+z;\lambda)-x)^2\frac{|z|^2}{ \tilde{\sigma}^3} \phi(z/\tilde{\sigma}) dzdF(x) \nonumber \\
&\leq& \int_{-\infty}^{\infty}\int_{-\infty}^{\infty}(2|x|+|z|)^2\frac{|z|^2}{ \tilde{\sigma}^3} \phi(z/\tilde{\sigma}) dzdF(x) \leq \infty. \nonumber
\end{eqnarray}
Hence, the condition of dominated convergence theorem holds and we can switch the integrals and the derivative to obtain
\begin{eqnarray}
\frac{\partial H_p(\sigma, \lambda)}{\partial \sigma}&=&\frac{-1}{\sigma^2}\int_{-\infty}^{\infty}\int_{-\infty}^{\infty}(\eta_p(z;\lambda)-x)^2\phi((z-x)/\sigma)f(x)dzdx + \nonumber \\
&&\frac{1}{\sigma} \int_{-\infty}^{\infty}\int_{-\infty}^{\infty}(\eta_p(z;\lambda)-x)^2\phi((z-x)/\sigma)f(x)\frac{(z-x)^2}{\sigma^3}dzdx \nonumber \\
&=&\frac{1}{\sigma}\int_{-\infty}^{\infty}\int_{-\infty}^{\infty}(\eta_p(x+\sigma z;\lambda)-x)^2(z^2-1)\phi(z)f(x)dzdx \nonumber \\
&=&\frac{1}{\sigma} \mathbb{E}[(\eta_p(X+\sigma Z;\lambda)-X)^2(Z^2-1)].  \label{avoid:one}
\end{eqnarray}

\begin{lemma}\label{dev:cont}
$\frac{\partial H_p(\sigma, \lambda)}{\partial \sigma}$ is a continuous function of $(\lambda, \sigma)$ for any $\lambda>0$ and $\sigma>0$. 
\end{lemma}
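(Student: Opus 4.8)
The plan is to start from the closed-form expression for the derivative established in Lemma \ref{lemma:partialderivative},
\[
\frac{\partial H_p(\sigma,\lambda)}{\partial \sigma} = \frac{1}{\sigma}\, \mathbb{E}\big[(\eta_p(X+\sigma Z;\lambda)-X)^2(Z^2-1)\big],
\]
and to prove that the right-hand side is jointly continuous in $(\sigma,\lambda)$ on $\{\sigma>0,\lambda>0\}$. Since the prefactor $1/\sigma$ is continuous for $\sigma>0$, it suffices to establish continuity of the expectation $G(\sigma,\lambda) \triangleq \mathbb{E}[(\eta_p(X+\sigma Z;\lambda)-X)^2(Z^2-1)]$. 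I would fix an arbitrary $(\sigma_0,\lambda_0)$ with $\sigma_0,\lambda_0>0$, take an arbitrary sequence $(\sigma_n,\lambda_n)\to(\sigma_0,\lambda_0)$ lying in a neighborhood with $\sigma_n\le\bar\sigma$, and apply the dominated convergence theorem to the representation $G(\sigma,\lambda)=\int\int (\eta_p(x+\sigma z;\lambda)-x)^2(z^2-1)\phi(z)\,dz\,dF(x)$, where $F$ is the CDF of $X$.

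For the dominating function I would use the bound $|\eta_p(u;\lambda)|\le |u|$, which follows from Lemma \ref{lem:derivativeproperties}(i) together with the oddness of $\eta_p$, giving $|\eta_p(x+\sigma z;\lambda)-x|\le 2|x|+\bar\sigma|z|$ uniformly over $\sigma\le\bar\sigma$ and over all $\lambda$. Hence the integrand is dominated by $(2|x|+\bar\sigma|z|)^2|z^2-1|\phi(z)$, which is independent of $(\sigma,\lambda)$; expanding the square into the terms $4x^2$, $4\bar\sigma|x||z|$, and $\bar\sigma^2 z^2$, each integrates to a finite quantity because $p_X$ has bounded second moment (Definition \ref{def:convseq}), because $\mathbb{E}|X|\le\sqrt{\mathbb{E} X^2}<\infty$, and because the standard Gaussian has all moments. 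Thus the dominating function is integrable with respect to $\phi(z)\,dz\,dF(x)$.

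The main obstacle is the pointwise a.e.\ convergence of the integrand, because for $p<1$ the map $\eta_p(\cdot;\cdot)$ is discontinuous along the threshold curves $u=\pm\tilde\lambda_p=\pm c_p\lambda^{1/(2-p)}$ identified in Lemma \ref{lem:threshform}. Away from these curves $\eta_p$ is jointly continuous in $(u,\lambda)$ (for $|u|>\tilde\lambda_p$ it is the smooth larger root of $g(\eta)=u$, and for $|u|<\tilde\lambda_p$ it is identically zero), so for every $(x,z)$ with $x+\sigma_0 z\neq\pm c_p\lambda_0^{1/(2-p)}$ the integrand evaluated at $(\sigma_n,\lambda_n)$ converges to its value at $(\sigma_0,\lambda_0)$. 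It therefore remains to show that the exceptional set $B=\{(x,z):x+\sigma_0 z=\pm c_p\lambda_0^{1/(2-p)}\}$ is null for the measure $\phi(z)\,dz\,dF(x)$, and this is where the Gaussian smoothing enters: by Fubini, $\mu(B)=\int\big(\int \mathbbm{1}\{z=(\pm c_p\lambda_0^{1/(2-p)}-x)/\sigma_0\}\phi(z)\,dz\big)\,dF(x)$, and the inner integral vanishes for every fixed $x$ since it is the $\phi(z)\,dz$-measure of at most two points. Hence $\mu(B)=0$, the integrand converges a.e., and dominated convergence yields $G(\sigma_n,\lambda_n)\to G(\sigma_0,\lambda_0)$. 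Combined with the continuity of $1/\sigma$, this shows $\partial H_p/\partial\sigma$ is continuous; for $p=1$ the map $\eta_1$ is continuous everywhere, so the exceptional set is empty and continuity follows a fortiori.
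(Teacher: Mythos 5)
Your proof is correct, but it takes a genuinely different route from the paper's. The paper establishes continuity of the inner integral $J(x,\sigma,\lambda)=\mathbb{E}[(\eta_p(x+\sigma Z;\lambda)-x)^2(Z^2-1)]$ for each fixed $x$ by a change of variables that transfers $(\sigma,\lambda)$ out of the argument of $\eta_p$ and into the Gaussian weight, recognizing the result as an integral against a two-parameter exponential family and invoking the classical continuity theorem for such integrals (Theorem 2.7.1 of Lehmann); only then does it apply dominated convergence over $X$ with the bound $|J|\le c_1x^2+c_2$. You instead apply dominated convergence once to the full double integral and confront the discontinuity of $\eta_p(\cdot;\lambda)$ along the threshold curves $|u|=c_p\lambda^{1/(2-p)}$ directly, observing that the exceptional set $\{x+\sigma_0 z=\pm c_p\lambda_0^{1/(2-p)}\}$ is null for $\phi(z)\,dz\otimes dF(x)$ because for each fixed $x$ it pins $z$ to at most two points. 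Your argument is more elementary (no exponential-family machinery) and your domination step is clean: $|\eta_p(u;\lambda)|\le|u|$ does follow from Lemma \ref{lem:derivativeproperties}(i) together with oddness and the sign-preservation of the proximal map, and it is uniform in $\lambda$, which is all you need. What the paper's route buys in exchange is continuity of the inner integral for \emph{every} $x$ (not merely convergence a.e.\ in the joint measure) and, via the same exponential-family theorem, differentiability in the parameters essentially for free — which is reused elsewhere in that section. One point worth making explicit in your write-up is the joint continuity of $(u,\lambda)\mapsto\eta_p(u;\lambda)$ off the threshold curves: for $|u_0|>c_p\lambda_0^{1/(2-p)}$ this follows from the implicit function theorem applied to $\zeta+\lambda p\zeta^{p-1}=u$ on the branch $\zeta>\zeta^*$ where $g'>0$, and for $|u_0|<c_p\lambda_0^{1/(2-p)}$ the function is identically zero on a neighborhood since the threshold depends continuously on $\lambda$; you assert this correctly but a referee would want the one-line justification.
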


\textit{Proof:} Define $J(x, \sigma, \lambda) \triangleq \mathbb{E}[(\eta_p(x+\sigma Z;\lambda)-x)^2(Z^2-1)]$. Lemma \ref{lemma:partialderivative} proves that
\[
\frac{\partial H_p(\sigma, \lambda)}{\partial \sigma}=\frac{1}{\sigma} \mathbb{E}[J(X,\sigma, \lambda)].
\]
We first show that $J(x,\sigma, \lambda)$ is continuous for any $\lambda>0, \sigma>0$, given any fixed $x$. We start by rewriting $J(x,\sigma,\lambda)$:
\begin{eqnarray*}
J(x,\sigma,\lambda)=\underbrace{\mathbb{E}[\eta^2_p(x+\sigma Z;\lambda)(Z^2-1)]}_{\triangleq \tilde{J}(x,\sigma,\lambda)}-2x\underbrace{\mathbb{E}[\eta_p(x+\sigma Z;\lambda)(Z^2-1)]}_{\triangleq\bar{J}(x,\sigma,\lambda)}.
\end{eqnarray*}
Regarding $\tilde{J}(x,\sigma,\lambda)$ we have
\begin{eqnarray}\label{exponential:trick}
\tilde{J}(x,\sigma,\lambda)&\overset{(a)}{=}&\lambda^{\frac{2}{2-p}}\mathbb{E}[\eta^2_p(\lambda^{\frac{1}{p-2}}(x+\sigma Z);1)(Z^2-1)] \nonumber \\
&=&\lambda^{\frac{2}{2-p}} \int_{-\infty}^{\infty}\eta^2_p(\lambda^{\frac{1}{p-2}}(x+\sigma z);1)(z^2-1)\phi(z)dz. \nonumber \\
&=&\frac{\lambda^{\frac{3}{2-p}}}{\sigma} \int_{-\infty}^{\infty} \eta^2_p(z;1)\Big[ \Big(\frac{\lambda^{\frac{1}{2-p}}z-x}{\sigma}\Big)^2 -1\Big]\phi\Big(\frac{\lambda^{\frac{1}{2-p}}z-x}{\sigma}\Big)dz \nonumber \\
&=&\frac{\lambda^{\frac{5}{2-p}}e^{\frac{-x^2}{2\sigma^2}}}{\sigma^3} \int_{-\infty}^{\infty} \eta^2_p(z;1)z^2{\rm exp}\Big(\frac{\lambda^{\frac{2}{2-p}}}{-2\sigma^2}\cdot z^2+\frac{x\lambda^{\frac{1}{2-p}}}{\sigma^2}\cdot z \Big)dz+\nonumber \\
&&\frac{-2x\lambda^{\frac{4}{2-p}}e^{\frac{-x^2}{2\sigma^2}}}{\sigma^2}\int_{-\infty}^{\infty} \eta^2_p(z;1)z{\rm exp}\Big(\frac{\lambda^{\frac{2}{2-p}}}{-2\sigma^2}\cdot z^2+\frac{x\lambda^{\frac{1}{2-p}}}{\sigma^2}\cdot z \Big)dz+ \nonumber \\
&&\frac{(x^2-\sigma^2)\lambda^{\frac{3}{2-p}}e^{\frac{-x^2}{2\sigma^2}}}{\sigma^3} \int_{-\infty}^{\infty} \eta^2_p(z;1){\rm exp}\Big(\frac{\lambda^{\frac{2}{2-p}}}{-2\sigma^2}\cdot z^2+\frac{x\lambda^{\frac{1}{2-p}}}{\sigma^2}\cdot z \Big)dz. 
\end{eqnarray}
We have used Lemma \ref{lem:scaleinv} (ii) to derive $(a)$. Denote $\xi_1= \frac{\lambda^{\frac{2}{2-p}}}{-2\sigma^2}, \xi_2=\frac{x\lambda^{\frac{1}{2-p}}}{\sigma^2}$. Then $p_{(\xi_1,\xi_2)}(z)\triangleq c(\xi_1,\xi_2){\rm exp}(\xi_1 z^2+\xi_2 z)$ defines a two-parameter exponential family with natural parameter space $\{(\xi,\xi_2) \mid \xi_1 <0, \xi_2 \in \mathbb{R})\}$, where $c(\xi_1,\xi_2)$ is the normalization constant. Hence according to Theorem 2.7.1 in \cite{lehmann1986testing}, $ \int_{-\infty}^{\infty}\eta^2_p(z;1)z^2{\rm exp} (\xi_1 z^2 +\xi_2 z )dz$ is continuous with respect to $(\xi_1, \xi_2)$ in the natural parameter space. It further implies that $ \int_{-\infty}^{\infty}\eta^2_p(z;1)z^2{\rm exp}\Big ( \frac{\lambda^{\frac{2}{2-p}}}{-2\sigma^2}\cdot z^2 +\frac{x\lambda^{\frac{1}{2-p}}}{\sigma^2}\cdot z \Big)dz$ is continuous for $\lambda >0, \sigma>0$. Therefore, we can conclude the first term on the right hand side of \eqref{exponential:trick} is continuous. Similar arguments work for the second and third terms. Showing the continuity of $\bar{J}(x,\sigma,\lambda)$ is also similar and is skipped. Now consider any given $\sigma_0>0,\lambda_0>0$. It is straightforward to verify the existance of $c_1, c_2>0$ such that
\begin{eqnarray}\label{uniform:dct}
|J(x,\sigma, \lambda)| \leq \mathbb{E} [(2|x+\sigma Z|^2+2x^2)(Z^2+1)]\leq \mathbb{E} [4\sigma^2 Z^2+6x^2)(Z^2+1)]=c_1x^2+c_2,
\end{eqnarray}
Hence we can apply dominated convergence theorem to obtain
\[
\lim_{\substack{\lambda \rightarrow \lambda_0\\ \sigma \rightarrow \sigma_0}}\mathbb{E}[J(X,\sigma,\lambda)]=\mathbb{E}\lim_{\substack{ \lambda \rightarrow \lambda_0 \\ \sigma \rightarrow \sigma_0}}J(X,\sigma,\lambda)=\mathbb{E}[J(X,\sigma_0,\lambda_0)].
\]

\begin{lemma}\label{dervlambda:cont}
$\frac{\partial H_p(\sigma, \lambda)}{\partial \lambda}$ is a continuous function of $(\lambda, \sigma)$ for any $\lambda>0$ and $\sigma>0$. 
\end{lemma}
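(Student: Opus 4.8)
The plan is to follow the same strategy as in the proof of Lemma~\ref{dev:cont}: use the scale invariance of $\eta_p$ to push all of the $\lambda$- and $\sigma$-dependence into the parameters of a fixed two-parameter exponential family, and then read off continuity from the analyticity of exponential-family integrals (Theorem~2.7.1 in \cite{lehmann1986testing}) together with the dominated convergence theorem. The reason a direct differentiation of $(\eta_p(X+\sigma Z;\lambda)-X)^2$ under the expectation is not viable is that, for fixed $u$, the map $\lambda \mapsto \eta_p(u;\lambda)$ is \emph{discontinuous}: as $\lambda$ crosses the threshold $(|u|/c_p)^{2-p}$ the value jumps from $\eta_p^{+} = [2(1-p)]^{1/(2-p)}\lambda^{1/(2-p)}$ down to $0$. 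Consequently the naive ``interior'' formula $\mathbb{E}[2(\eta_p-X)\,\partial_2\eta_p]$ omits genuine boundary contributions; the hard-thresholding case $p=0$, where $\partial_2\eta_0=0$ wherever it is defined and yet $H_0$ clearly depends on $\lambda$, shows that these boundary terms cannot be neglected. The scaling argument sidesteps this entirely by \emph{freezing} the discontinuity of $\eta_p(\cdot;1)$ at the fixed points $w=\pm c_p$.

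Concretely, I would first expand $(\eta_p(X+\sigma Z;\lambda)-X)^2 = \eta_p^2(X+\sigma Z;\lambda) - 2X\eta_p(X+\sigma Z;\lambda) + X^2$, so that $H_p(\sigma,\lambda) = \mathbb{E}[\eta_p^2] - 2\mathbb{E}[X\eta_p] + \mathbb{E}[X^2]$, the last term being constant in $(\sigma,\lambda)$. Conditioning on $X=x$ and substituting $w=(x+\sigma z)\lambda^{-1/(2-p)}$ exactly as in \eqref{exponential:trick}, Lemma~\ref{lem:scaleinv}(ii) gives
\[
\int \eta_p^2(x+\sigma z;\lambda)\phi(z)\,dz = \frac{\lambda^{3/(2-p)}}{\sigma}\frac{e^{-x^2/(2\sigma^2)}}{\sqrt{2\pi}}\int \eta_p^2(w;1)\,e^{\xi_1 w^2+\xi_2 w}\,dw,
\]
with $\xi_1 = -\lambda^{2/(2-p)}/(2\sigma^2)$ and $\xi_2 = x\lambda^{1/(2-p)}/\sigma^2$, and an analogous identity for $\int \eta_p(x+\sigma z;\lambda)\phi(z)\,dz$ with prefactor $\lambda^{2/(2-p)}/\sigma$ and integrand $\eta_p(w;1)e^{\xi_1 w^2+\xi_2 w}$. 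The crucial point is that $\eta_p(\cdot;1)$ is now a \emph{fixed} function whose only discontinuity sits at $w=\pm c_p$, independent of $(\sigma,\lambda)$; all of the dependence on $(\sigma,\lambda)$ has been pushed into the smooth prefactors and into $(\xi_1,\xi_2)$, which are real-analytic functions of $(\sigma,\lambda)$ on the open quadrant $\sigma>0,\lambda>0$ (where $\xi_1<0$).

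I would then differentiate the fixed-$x$ representation in $\lambda$. Since $(\xi_1,\xi_2)$ lie in the natural parameter space $\{\xi_1<0,\ \xi_2\in\mathbb{R}\}$ and $\eta_p(w;1)$ grows linearly at infinity (Lemma~\ref{lem:asymtoteta_p} with $\lambda=1$), Theorem~2.7.1 of \cite{lehmann1986testing} guarantees that the integrals $\int \eta_p^2(w;1)w^k e^{\xi_1 w^2+\xi_2 w}\,dw$ and $\int \eta_p(w;1)w^k e^{\xi_1 w^2+\xi_2 w}\,dw$ ($k=0,1,2$) are finite and continuously differentiable in $(\xi_1,\xi_2)$. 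By the chain rule this makes the fixed-$x$ integrand continuously differentiable in $(\sigma,\lambda)$, and in particular establishes both the \emph{existence} of $\partial_\lambda$ of the inner expression and its continuity. Finally I would move $\partial_\lambda$ through the outer $dF(x)$ integral by dominated convergence: working in a neighborhood of an arbitrary $(\sigma_0,\lambda_0)$ with $\sigma_0,\lambda_0>0$, the parameter $\xi_1$ stays bounded away from $0$, so one obtains a dominating bound polynomial in $x$ exactly as in \eqref{uniform:dct}, which is integrable because $p_X$ has bounded second moment. This yields continuity of $\partial_\lambda H_p(\sigma,\lambda)$.

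The main obstacle is the one already flagged: the jump of $\eta_p$ in $\lambda$ means that the existence and the correct form of $\partial_\lambda H_p$ are not obtainable by naive differentiation under the integral, and any argument that tries to track the moving threshold directly would have to carry boundary (Leibniz) terms. The substantive content of the proof is therefore the scaling reduction, which converts $H_p$ into an exact, smooth-in-$(\sigma,\lambda)$ integral against a fixed exponential-family kernel; once that representation is in place, continuity of $\partial_\lambda H_p$ is a routine consequence of Lehmann's analyticity theorem and dominated convergence, and requires only the verification of the polynomial growth and integrability conditions.
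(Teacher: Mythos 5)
Your proposal is correct and follows essentially the route the paper intends: the paper skips this proof, stating only that it is "similar to the proof of Lemma~\ref{dev:cont}," and your argument is precisely that proof's scale-invariance/exponential-family reduction (as in \eqref{exponential:trick}) combined with Theorem~2.7.1 of \cite{lehmann1986testing} and dominated convergence for the outer integral over $p_X$. Your added observation that the jump of $\lambda\mapsto\eta_p(u;\lambda)$ at the threshold makes naive differentiation under the expectation invalid, and that the substitution $w=(x+\sigma z)\lambda^{-1/(2-p)}$ freezes the discontinuity at $w=\pm c_p$, correctly identifies the point that makes this reduction necessary rather than merely convenient.
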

The proof is similar to the proof of Lemma \ref{dev:cont} and is hence skipped here.

\begin{lemma}\label{optima:cont}
For a given $\sigma_0>0$, suppose the optimal thresholding value $\lambda_*(\sigma_0)$ satisfies the condition:
\[
\inf_{\lambda \geq 0} H_p(\sigma_0,\lambda) < \inf_{|\lambda-\lambda_*(\sigma_0)|>c} H_p(\sigma_0,\lambda),
\]
for any $c>0$. Then $\lambda_*(\sigma)$ is continuous at $\sigma = \sigma_0$.
\end{lemma}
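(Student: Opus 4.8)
The plan is to argue by contradiction, exploiting the standard principle that a \emph{well-separated} minimizer depends continuously on a parameter once the objective is jointly continuous. Suppose $\lambda_*$ is not continuous at $\sigma_0$. Then there are a constant $c>0$ and a sequence $\sigma_n \to \sigma_0$ such that $\lambda_n \triangleq \lambda_*(\sigma_n)$ satisfies $|\lambda_n - \lambda_*(\sigma_0)| > c$ for all $n$. The first thing I would establish is that $H_p(\sigma,\lambda)$ is jointly continuous in $(\sigma,\lambda)$ on $\{\sigma>0,\ \lambda \ge 0\}$, together with the two boundary values $\lim_{\lambda\to 0^+} H_p(\sigma,\lambda)=\sigma^2$ and $\lim_{\lambda\to\infty} H_p(\sigma,\lambda)=\mathbb{E}X^2$. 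Joint continuity on the open region $\{\sigma>0,\ \lambda>0\}$ follows immediately from Lemmas \ref{lemma:partialderivative}, \ref{dev:cont} and \ref{dervlambda:cont}: since both partials $\partial_\sigma H_p$ and $\partial_\lambda H_p$ exist and are continuous there, $H_p$ is $C^1$ and in particular jointly continuous. The two boundary limits I would get by dominated convergence, using $|\eta_p(u;\lambda)| \le |u|$ from Lemma \ref{lem:derivativeproperties}(i) to produce the integrable dominating function $(2|X|+(\sigma_0+1)|Z|)^2$, together with the pointwise limits $\eta_p(u;\lambda)\to u$ as $\lambda\to 0^+$ and $\eta_p(u;\lambda)\to 0$ as $\lambda\to\infty$ (the latter because the threshold $\tilde\lambda_p=c_p\lambda^{1/(2-p)}$ of Lemma \ref{lem:threshform} diverges).

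With joint continuity in hand, I would extract a convergent subsequence of $\{\lambda_n\}$ in the compactified half-line $[0,\infty]$; after relabeling, $\lambda_n \to \lambda_\infty$ with $\lambda_\infty$ still separated from $\lambda_*(\sigma_0)$ (either $|\lambda_\infty-\lambda_*(\sigma_0)|\ge c$ when $\lambda_\infty<\infty$, or $\lambda_\infty=\infty$). Optimality of $\lambda_n$ at $\sigma_n$ gives $H_p(\sigma_n,\lambda_n)\le H_p(\sigma_n,\lambda_*(\sigma_0))$. Letting $n\to\infty$ and using joint continuity on the right and the relevant continuity/boundary value on the left yields $H_p(\sigma_0,\lambda_\infty)\le H_p(\sigma_0,\lambda_*(\sigma_0))=\inf_{\lambda\ge0}H_p(\sigma_0,\lambda)$, where for $\lambda_\infty=\infty$ the left side is read as $\mathbb{E}X^2$. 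But since $\lambda_\infty$ lies at distance at least $c$ from $\lambda_*(\sigma_0)$ (and $\lambda=\infty$ belongs to the complement of every bounded neighborhood), the separation hypothesis applied with $c/2$ gives $H_p(\sigma_0,\lambda_\infty)\ge \inf_{|\lambda-\lambda_*(\sigma_0)|>c/2}H_p(\sigma_0,\lambda)>\inf_{\lambda\ge0}H_p(\sigma_0,\lambda)$, a contradiction. Hence $\lambda_*(\sigma)\to\lambda_*(\sigma_0)$ as $\sigma\to\sigma_0$.

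The routine part is the contradiction itself; the delicate part is the behaviour of $H_p$ at the two ends of the $\lambda$-axis, which is exactly where a careless argument breaks down. The case $\lambda_n\to\infty$ cannot be excluded a priori, so I need the limiting value $\mathbb{E}X^2$ to sit strictly above the global infimum — and this is precisely what the separation condition supplies, because the set $\{|\lambda-\lambda_*(\sigma_0)|>c\}$ contains arbitrarily large $\lambda$, so its infimum is at most $\liminf_{\lambda\to\infty}H_p(\sigma_0,\lambda)=\mathbb{E}X^2$. Making the compactification precise (equivalently, treating $\lambda_n\to\infty$ as a separate case and showing $H_p(\sigma_n,\lambda_n)\to\mathbb{E}X^2$ by dominated convergence while $\sigma_n$ ranges over a bounded set) is the main technical obstacle; the rest reduces to the uniform integrability bounds already exploited in Lemmas \ref{dev:cont} and \ref{dervlambda:cont}. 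A minor additional point, which I would note in passing, is that the minimizer $\lambda_*(\sigma_0)$ is finite and unique: the separation condition forces every minimizing sequence of $H_p(\sigma_0,\cdot)$ to converge to $\lambda_*(\sigma_0)$, so the argmin is unambiguously defined.
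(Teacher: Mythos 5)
Your proof is correct, but it takes a genuinely different route from the paper's. The paper never compactifies the $\lambda$-axis: it observes that the bound \eqref{upper:bound} on $\partial H_p/\partial\sigma$ coming from Lemma \ref{lemma:partialderivative} is \emph{independent of $\lambda$}, hence $\sup_{\lambda}|H_p(\sigma,\lambda)-H_p(\sigma_0,\lambda)|\leq K(\sigma_0)|\sigma-\sigma_0|$ for $\sigma$ near $\sigma_0$. This uniform-in-$\lambda$ Lipschitz estimate transfers verbatim to the two infima $\inf_{\lambda\geq0}H_p(\sigma,\lambda)$ and $\inf_{|\lambda-\lambda_*(\sigma_0)|>\varepsilon}H_p(\sigma,\lambda)$, so the gap $d>0$ supplied by the separation hypothesis at $\sigma_0$ persists (at least $d/2$) for $\sigma$ close to $\sigma_0$, which immediately forces $|\lambda_*(\sigma)-\lambda_*(\sigma_0)|\leq\varepsilon$. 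You instead run the standard argmax-continuity argument: extract $\lambda_n\to\lambda_\infty\in[0,\infty]$, use joint continuity of $H_p$ together with the boundary limits at $\lambda=0$ and $\lambda=\infty$, and note that $\{|\lambda-\lambda_*(\sigma_0)|>c\}$ is unbounded so its infimum is at most $\mathbb{E}X^2$ — which correctly disposes of the escape-to-infinity case. The paper's route buys simplicity: one never has to analyze the ends of the $\lambda$-axis, because the uniform bound controls the restricted infimum wherever it is (nearly) attained. Your route buys generality: it would survive even without a $\lambda$-uniform bound on $\partial_\sigma H_p$, provided joint continuity and the boundary limits hold. One point to tighten: when you pass to the limit in $H_p(\sigma_n,\lambda_n)$ by dominated convergence, the pointwise convergence of $\eta_p(X+\sigma_n Z;\lambda_n)$ fails on the threshold set $\{|X+\sigma_0 Z|=c_p\lambda_\infty^{1/(2-p)}\}$; since $\sigma_0>0$ and $Z$ is Gaussian this set has probability zero, so the argument goes through, but the remark should be made explicitly (the paper makes the analogous remark in the proof of Theorem \ref{thm:smoothplambdaopt}).
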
 
\begin{proof}
According to Lemma \ref{lemma:partialderivative} we have
\begin{eqnarray*}
\frac{\partial H_p(\sigma, \lambda)}{\partial \sigma} = \frac{1}{\sigma} \mathbb{E}[(\eta_p(X+\sigma Z;\lambda)-X)^2(Z^2-1)] 
\end{eqnarray*}
Furthermore,
\begin{eqnarray}\label{upper:bound}
\Big |\frac{\partial H_p(\sigma, \lambda)}{\partial \sigma} \Big | \leq \frac{1}{\sigma }\mathbb{E} [(Z^2+1)(2\eta^2_p(X+\sigma Z;\lambda)+2X^2)] \leq \frac{1}{\sigma}\mathbb{E}[(6X^2+4\sigma^2 Z^2)(Z^2+1)]
\end{eqnarray}
Note that the upper bound above does not depend on $\lambda$. This implies that for any given $\sigma_0>0$, there exists a neighborhood $B_{r}(\sigma_0)$ such that the following holds for any $\sigma \in B_r(\sigma_0)$:
\begin{eqnarray*}
\sup_{\lambda} |H_p(\sigma,\lambda)-H_p(\sigma_0,\lambda)| \leq K(\sigma_0) \cdot |\sigma-\sigma_0|,
\end{eqnarray*}
where $K(\sigma_0)$ is a constant depending on $\sigma_0$. We then have
\begin{eqnarray*}
&& H_p(\sigma, \lambda_*(\sigma))-H_p(\sigma_0,\lambda_*(\sigma_0)) \\
&=&[H_p(\sigma,\lambda_*(\sigma))-H_p(\sigma,\lambda_*(\sigma_0))]+[H_p(\sigma,\lambda_*(\sigma_0))-H_p(\sigma_0,\lambda_*(\sigma_0))] \\
&\leq& \sup_{\lambda} |H_p(\sigma,\lambda)-H_p(\sigma_0,\lambda)|  \leq K(\sigma_0) \cdot |\sigma-\sigma_0|
\end{eqnarray*}
On the other hand, 
\begin{eqnarray*}
&& H_p(\sigma, \lambda_*(\sigma))-H_p(\sigma_0,\lambda_*(\sigma_0)) \\
&=&[H_p(\sigma,\lambda_*(\sigma))-H_p(\sigma_0,\lambda_*(\sigma))]+[H_p(\sigma_0,\lambda_*(\sigma))-H_p(\sigma_0,\lambda_*(\sigma_0))]  \\
&\geq&- \sup_{\lambda} |H_p(\sigma,\lambda)-H_p(\sigma_0,\lambda)|  \geq -K(\sigma_0) \cdot |\sigma-\sigma_0|
\end{eqnarray*}
Therefore, we obtain 
\begin{eqnarray}\label{side:one}
|\inf_{\lambda \geq 0}H_p(\sigma,\lambda)-\inf_{\lambda \geq 0}H_p(\sigma_0,\lambda) | \leq K(\sigma_0) \cdot |\sigma-\sigma_0|
\end{eqnarray}
Similarly we can get
\begin{eqnarray}\label{side:two}
|\inf_{|\lambda-\lambda_*(\sigma_0)| \geq \epsilon}H_p(\sigma,\lambda)-\inf_{|\lambda-\lambda_*(\sigma_0)| \geq \epsilon}H_p(\sigma_0,\lambda) | \leq K(\sigma_0) \cdot |\sigma-\sigma_0|
\end{eqnarray}
Now for any given $\varepsilon >0$, by the condition we impose, there exists a constant $d>0$ such that 
\[
\inf_{|\lambda-\lambda_*(\sigma_0)|>\varepsilon} H_p(\sigma_0,\lambda) -\inf_{\lambda \geq 0} H_p(\sigma_0,\lambda) >d
\]
This combined with Equations \eqref{side:one} and \eqref{side:two} yields,
\begin{eqnarray*}
\inf_{|\lambda-\lambda_*(\sigma_0)|>\varepsilon} H_p(\sigma,\lambda) -\inf_{\lambda \geq 0} H_p(\sigma,\lambda)> d-2K(\sigma_0) \cdot | \sigma-\sigma_n| > d/2>0
\end{eqnarray*}
for $\sigma \in B_r(\sigma_0)$ with sufficiently small $r$. It implies that 
\[
|\lambda_*(\sigma)-\lambda_*(\sigma_0)|\leq \varepsilon, \mbox{~for~} \sigma \in B_r(\sigma_0).
\]
This finishes the proof of the continuity.
\end{proof}

\vspace{0.4cm}

\begin{theorem}\label{final:smooth}
Suppose for any $\sigma_0>0$, the global optima $\lambda_*(\sigma_0)$ is isolated \footnote{This assumption turns out to be very mild. Based on our simulations, $H_p(\sigma,\lambda)$, as a function of $\lambda$, has quasi-convex shapes.}, i.e., 
\[
\inf_{\lambda \geq 0} H_p(\sigma_0,\lambda) < \inf_{|\lambda-\lambda_*(\sigma_0)|>c} H_p(\sigma_0,\lambda)
\]
for any $c >0$. Then $\Psi_{\lambda_*,p}(\sigma^2)$ is differentiable with respect to $\sigma$ over $(0,\infty)$ with continuous derivative and
\[
\frac{d \Psi_{\lambda_*,p}(\sigma^2)}{d \sigma}=\frac{\partial H_p(\sigma, \lambda_*(\sigma))}{\partial \sigma}
\]
\end{theorem}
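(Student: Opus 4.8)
The plan is to prove this by an envelope-theorem (Danskin-type) argument, treating $\psi(\sigma) \triangleq H_p(\sigma,\lambda_*(\sigma))$, so that $\Psi_{\lambda_*,p}(\sigma^2) = \tfrac{1}{\delta}\psi(\sigma)$, and showing that $\psi$ is continuously differentiable with $\psi'(\sigma) = \partial H_p(\sigma,\lambda_*(\sigma))/\partial\sigma$; the stated identity for $\Psi$ then follows up to the harmless constant $1/\delta$. The whole difficulty is that the minimizing threshold $\lambda_*(\sigma)$ moves with $\sigma$. The two ingredients that defeat this are the continuity of $\lambda_*$ at every $\sigma_0>0$ (Lemma \ref{optima:cont}, which is exactly where the isolated-minimizer hypothesis enters) and the joint continuity of $\partial H_p/\partial\sigma$ in $(\sigma,\lambda)$ (Lemma \ref{dev:cont}).

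First I would fix $\sigma_0>0$ and exploit the two optimality inequalities defining $\lambda_*$: since $\lambda_*(\sigma)$ minimizes $H_p(\sigma,\cdot)$ and $\lambda_*(\sigma_0)$ minimizes $H_p(\sigma_0,\cdot)$, we have $H_p(\sigma,\lambda_*(\sigma)) \le H_p(\sigma,\lambda_*(\sigma_0))$ and $H_p(\sigma_0,\lambda_*(\sigma_0)) \le H_p(\sigma_0,\lambda_*(\sigma))$. For $\sigma>\sigma_0$, subtracting $H_p(\sigma_0,\lambda_*(\sigma_0))$ and dividing by $\sigma-\sigma_0>0$ sandwiches the difference quotient of $\psi$:
\[
\frac{H_p(\sigma,\lambda_*(\sigma)) - H_p(\sigma_0,\lambda_*(\sigma))}{\sigma-\sigma_0} \;\le\; \frac{\psi(\sigma)-\psi(\sigma_0)}{\sigma-\sigma_0} \;\le\; \frac{H_p(\sigma,\lambda_*(\sigma_0)) - H_p(\sigma_0,\lambda_*(\sigma_0))}{\sigma-\sigma_0}.
\]
In each outer quotient the second argument is frozen, so Lemma \ref{lemma:partialderivative} lets me apply the one-variable mean value theorem in $\sigma$: the right-hand side equals $\partial_\sigma H_p(\tilde\sigma_2,\lambda_*(\sigma_0))$ and the left-hand side equals $\partial_\sigma H_p(\tilde\sigma_1,\lambda_*(\sigma))$ for some $\tilde\sigma_1,\tilde\sigma_2\in(\sigma_0,\sigma)$.

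Next I would let $\sigma\downarrow\sigma_0$. The right-hand term tends to $\partial_\sigma H_p(\sigma_0,\lambda_*(\sigma_0))$ directly by continuity of $\partial_\sigma H_p$. For the left-hand term, $\tilde\sigma_1\to\sigma_0$ and, crucially, $\lambda_*(\sigma)\to\lambda_*(\sigma_0)$ by Lemma \ref{optima:cont}; joint continuity (Lemma \ref{dev:cont}) then forces $\partial_\sigma H_p(\tilde\sigma_1,\lambda_*(\sigma))\to \partial_\sigma H_p(\sigma_0,\lambda_*(\sigma_0))$. By the squeeze theorem the right derivative of $\psi$ at $\sigma_0$ exists and equals $\partial_\sigma H_p(\sigma_0,\lambda_*(\sigma_0))$. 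Repeating the argument for $\sigma<\sigma_0$, where dividing by $\sigma-\sigma_0<0$ reverses both inequalities but the sandwich still closes onto the same limit, yields the matching left derivative, so $\psi'(\sigma_0)$ exists and equals $\partial_\sigma H_p(\sigma_0,\lambda_*(\sigma_0))$. Finally, continuity of $\psi'$ is immediate, since it is the composition $\sigma\mapsto \partial_\sigma H_p(\sigma,\lambda_*(\sigma))$ of the continuous map $\sigma\mapsto(\sigma,\lambda_*(\sigma))$ (Lemma \ref{optima:cont}) with the jointly continuous function $\partial_\sigma H_p$ (Lemma \ref{dev:cont}).

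The only genuine obstacle is the moving minimizer, and it is entirely absorbed by the continuity of $\lambda_*$; everything else is the standard envelope/squeeze mechanism. I would emphasize that the isolated-minimizer hypothesis is used solely to invoke Lemma \ref{optima:cont}, and that the mean value theorem step requires only the $\sigma$-differentiability of $H_p$ at fixed $\lambda$ guaranteed by Lemma \ref{lemma:partialderivative}. One subtlety worth flagging in the writeup is that $\lambda_*(\sigma)$ need not be unique as a set-valued object, but the sandwich argument is insensitive to this: any measurable selection satisfying the two optimality inequalities works, and the limit is pinned down by the continuity lemmas regardless of the selection.
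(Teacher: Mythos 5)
Your proposal is correct and follows essentially the same route as the paper's proof: the identical two-sided envelope sandwich obtained by freezing the threshold at $\lambda_*(\sigma_0)$ for the upper bound and at $\lambda_*(\sigma)$ for the lower bound, combined with the mean value theorem in $\sigma$, the continuity of $\lambda_*$ from Lemma \ref{optima:cont} (where the isolated-minimizer hypothesis enters), and the joint continuity of $\partial H_p/\partial\sigma$ from Lemma \ref{dev:cont}. The only cosmetic difference is that you apply the mean value theorem to both sides of the sandwich, whereas the paper uses it only for the lower bound and handles the upper bound directly via $\limsup$; the continuity of the derivative is then concluded identically in both arguments.
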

\textit{Proof:} Consider a given $\sigma_0>0$. Then
\begin{eqnarray*}
\frac{d \Psi_{\lambda_*,p}(\sigma_0^2)}{d \sigma}=\lim_{\sigma \rightarrow \sigma_0} \frac{H_p(\sigma,\lambda_*(\sigma))-H_p(\sigma_0,\lambda_*(\sigma_0))}{\sigma-\sigma_0}
\end{eqnarray*}
We first assume $\sigma > \sigma_0$. Note that
\begin{eqnarray*}
 \frac{H_p(\sigma,\lambda_*(\sigma))-H_p(\sigma_0,\lambda_*(\sigma_0))}{\sigma-\sigma_0}&=&\frac{[H_p(\sigma,\lambda_*(\sigma))-H_p(\sigma,\lambda_*(\sigma_0))]+[H_p(\sigma,\lambda_*(\sigma_0))-H_p(\sigma_0,\lambda_*(\sigma_0))]}{\sigma-\sigma_0} \\
 &\leq& \frac{H_p(\sigma,\lambda_*(\sigma_0))-H_p(\sigma_0,\lambda_*(\sigma_0))}{\sigma-\sigma_0}
 \end{eqnarray*}
 Hence we have
 \begin{eqnarray}\label{twoside:one}
 \limsup_{\sigma \rightarrow \sigma^+_0}\frac{H_p(\sigma,\lambda_*(\sigma))-H_p(\sigma_0,\lambda_*(\sigma_0))}{\sigma-\sigma_0} \leq \frac{\partial H_p(\sigma_0, \lambda_*(\sigma_0))}{\partial \sigma}
 \end{eqnarray}
 On the other hand, we have
 \begin{eqnarray*}
  \frac{H_p(\sigma,\lambda_*(\sigma))-H_p(\sigma_0,\lambda_*(\sigma_0))}{\sigma-\sigma_0}&=&\frac{[H_p(\sigma,\lambda_*(\sigma))-H_p(\sigma_0,\lambda_*(\sigma))]+[H_p(\sigma_0,\lambda_*(\sigma))-H_p(\sigma_0,\lambda_*(\sigma_0))]}{\sigma-\sigma_0} \\
 &\geq& \frac{H_p(\sigma,\lambda_*(\sigma))-H_p(\sigma_0,\lambda_*(\sigma))}{\sigma-\sigma_0}=\frac{\partial H_p(\tilde{\sigma}, \lambda_*(\sigma))}{\partial \sigma},
 \end{eqnarray*}
where $\tilde{\sigma}$ is between $\sigma$ and $\sigma_0$. Since we have showed from Lemma \ref{dev:cont} and \ref{optima:cont} that $\lambda_*(\sigma)$ and $\frac{\partial H_p(\sigma, \lambda)}{\partial \sigma}$ are both continuous, we can conclude from the above inequality that
\begin{eqnarray}\label{twoside:two}
 \liminf_{\sigma \rightarrow \sigma^+_0}\frac{H_p(\sigma,\lambda_*(\sigma))-H_p(\sigma_0,\lambda_*(\sigma_0))}{\sigma-\sigma_0} \geq \frac{\partial H_p(\sigma_0, \lambda_*(\sigma_0))}{\partial \sigma}.
\end{eqnarray}
Inequalities \eqref{twoside:one} and \eqref{twoside:two} together show that
\[
 \lim_{\sigma \rightarrow \sigma^+_0}\frac{H_p(\sigma,\lambda_*(\sigma))-H_p(\sigma_0,\lambda_*(\sigma_0))}{\sigma-\sigma_0} =\frac{\partial H_p(\sigma_0, \lambda_*(\sigma_0))}{\partial \sigma}
\]
Similarly, we can prove the same equality when $\sigma \rightarrow \sigma^-_0$. Thus we can obtain that 
\[
\frac{d \Phi_{\lambda_*,p}(\sigma_0^2)}{d \sigma}=\frac{\partial H_p(\sigma_0, \lambda_*(\sigma_0))}{\partial \sigma}
\]
Since $\frac{\partial H_p(\sigma, \lambda)}{\partial \sigma}$ and $\lambda_*(\sigma)$ are both continuous, we know $\frac{\partial H_p(\sigma, \lambda_*(\sigma))}{\partial \sigma}$ is continuous as well.

\vspace{0.4cm}

\begin{theorem}\label{thm:smoothplambdaopt}
Denote $\theta=(\lambda, p)$. Suppose for any $\sigma_0>0$, the global optima $\theta_*(\sigma_0)$ is isolated, i.e., 
\[
\inf_{\lambda \geq 0, 0 \leq p \leq 1} H_p(\sigma_0,\lambda) < \inf_{||\theta-\theta_*(\sigma_0)|| >c} H_p(\sigma_0,\lambda),
\]
for any $c >0$. Then $\Psi_{\lambda_*,p_*}(\sigma^2)$ is differentiable with respect to $\sigma$ over $(0,\infty)$ with continuous derivative and 
\[
\frac{ d \Psi_{\lambda_*,p_*}(\sigma^2)}{d \sigma}=\frac{ \partial H_{p_*(\sigma)}(\sigma,\lambda_*(\sigma))}{\partial \sigma}.
\]
\end{theorem}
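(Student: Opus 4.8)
The plan is to follow the proof of Theorem~\ref{final:smooth} almost verbatim, treating the two-dimensional parameter $\theta=(\lambda,p)$ in place of the scalar $\lambda$ and exploiting the same envelope-type squeeze on the difference quotient. The roles played there by Lemma~\ref{dev:cont} (continuity of $\partial_\sigma H_p$) and Lemma~\ref{optima:cont} (continuity of the minimizer) will be played here by their two-parameter analogues, which I would establish first.

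First I would record two continuity facts. (a) The map $(\sigma,\lambda,p)\mapsto \frac{\partial H_p(\sigma,\lambda)}{\partial \sigma}=\frac{1}{\sigma}\mathbb{E}[(\eta_p(X+\sigma Z;\lambda)-X)^2(Z^2-1)]$ is jointly continuous on $(0,\infty)\times(0,\infty)\times[0,1]$; this extends Lemma~\ref{dev:cont} by additionally tracking the dependence on $p$. The exponential-family argument used there applies once one notes that $\eta_p(z;1)$ varies continuously in $p$ away from its single threshold (a Lebesgue-null set), so dominated convergence still moves the limit inside the integral. (b) The minimizer $\theta_*(\sigma)=(\lambda_*(\sigma),p_*(\sigma))$ is continuous at every $\sigma_0>0$, the two-dimensional version of Lemma~\ref{optima:cont}. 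The crucial ingredient is the $\theta$-uniform Lipschitz estimate
\[
\sup_{\theta}\,|H_p(\sigma,\lambda)-H_p(\sigma_0,\lambda)|\le K(\sigma_0)\,|\sigma-\sigma_0|,
\]
which holds because the bound \eqref{upper:bound} on $|\frac{\partial H_p}{\partial\sigma}|$ relies only on $|\eta_p(u;\lambda)|\le|u|$ and is therefore independent of both $\lambda$ and $p$. Combining this with the isolation hypothesis $\inf_{\|\theta-\theta_*(\sigma_0)\|>c}H_p(\sigma_0,\lambda)>\inf_\theta H_p(\sigma_0,\lambda)$ exactly as in Lemma~\ref{optima:cont} forces $\theta_*(\sigma)\to\theta_*(\sigma_0)$.

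With these in hand I would run the squeeze on the difference quotient. For $\sigma>\sigma_0$, optimality of $\theta_*(\sigma)$ gives $H_{p_*(\sigma)}(\sigma,\lambda_*(\sigma))\le H_{p_*(\sigma_0)}(\sigma,\lambda_*(\sigma_0))$, so the quotient is bounded above by $\frac{H_{p_*(\sigma_0)}(\sigma,\lambda_*(\sigma_0))-H_{p_*(\sigma_0)}(\sigma_0,\lambda_*(\sigma_0))}{\sigma-\sigma_0}$, whose $\limsup$ as $\sigma\to\sigma_0^+$ equals $\frac{\partial H_{p_*(\sigma_0)}(\sigma_0,\lambda_*(\sigma_0))}{\partial\sigma}$. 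For the lower bound I would use optimality in the other direction, $H_{p_*(\sigma_0)}(\sigma_0,\lambda_*(\sigma_0))\le H_{p_*(\sigma)}(\sigma_0,\lambda_*(\sigma))$, to bound the quotient below by $\frac{H_{p_*(\sigma)}(\sigma,\lambda_*(\sigma))-H_{p_*(\sigma)}(\sigma_0,\lambda_*(\sigma))}{\sigma-\sigma_0}=\frac{\partial H_{p_*(\sigma)}(\tilde\sigma,\lambda_*(\sigma))}{\partial\sigma}$ via the mean value theorem with $\tilde\sigma\in(\sigma_0,\sigma)$. Letting $\sigma\to\sigma_0^+$ and invoking fact (a) together with continuity of $\theta_*$ from fact (b) sends this to $\frac{\partial H_{p_*(\sigma_0)}(\sigma_0,\lambda_*(\sigma_0))}{\partial\sigma}$. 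The matching $\limsup$ and $\liminf$ give the right derivative; the left derivative is identical, and continuity of the resulting derivative follows from (a) and (b).

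The main obstacle is fact (b), continuity of the joint minimizer $\theta_*(\sigma)$: unlike the $p=0$ and $p=1$ endpoints, $\eta_p$ for $p\in(0,1)$ is only defined implicitly, so I must verify that the continuity-in-$p$ claim underpinning fact (a) holds uniformly enough to feed both the dominated-convergence and the isolation arguments, particularly as $p$ approaches the endpoints where the form of $\eta_p$ changes character. A secondary point worth checking is that $\lambda\ge 0$ is noncompact, so the localization to a neighborhood of $\theta_*(\sigma_0)$ supplied by the isolation hypothesis is genuinely needed to prevent minimizing mass from escaping to $\lambda\to\infty$; the $\theta$-uniform Lipschitz estimate above is precisely what makes this localization legitimate.
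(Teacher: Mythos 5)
Your proposal follows the paper's own proof essentially verbatim: the paper likewise extends Lemma~\ref{optima:cont} to the joint minimizer $\theta_*(\sigma)$ by noting that the bound \eqref{upper:bound} is independent of both $\lambda$ and $p$, establishes joint continuity of $\partial H_p(\sigma,\lambda)/\partial\sigma$ in $(\sigma,\lambda,p)$ (via a $(p,\lambda)$-uniform Lipschitz-in-$\sigma$ bound plus dominated convergence in $(p,\lambda)$, with the single threshold point of $\eta_p$ handled as a null set), and then runs the same envelope squeeze as in Theorem~\ref{final:smooth}. The argument is correct and no genuinely different route is taken.
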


\begin{proof}
First note that we can prove $\theta^*(\sigma)$ is continuous over $(0,\infty)$. It follows the same route as the proof of Lemma \ref{optima:cont}. The key observation is that the upper bound on $\frac{\partial H_p(\sigma, \lambda)}{\partial \sigma}$ we showed in \eqref{upper:bound} does not depend on either $p$ or $\lambda$. For the sake of brevity we skip the complete proof. 

The rest of the proof is also very similar to the proof of Theorem \ref{final:smooth}. Note that the key ingredient in the proof of Theorem \ref{final:smooth} is the continuity $\frac{\partial H_p(\sigma, \lambda)}{\partial \sigma}  $ with respect to $(\sigma, \lambda)$. In order to extend that proof to Theorem \ref{thm:smoothplambdaopt}, we should show that $\frac{\partial H_p(\sigma, \lambda)}{\partial \sigma}  $ is continuous with respect to $(\sigma, \lambda, p)$. Recall from \eqref{avoid:one} that
\begin{eqnarray*}
\frac{\partial H_p(\sigma, \lambda)}{\partial \sigma} =\frac{1}{\sigma} \mathbb{E}[(\eta_p(X+\sigma Z;\lambda)-X)^2(Z^2-1)].
\end{eqnarray*}
We can use the same arguments as presented for proving Lemma \ref{lemma:partialderivative} to calculate 
\begin{eqnarray*}
\frac{\partial^2 H_p(\sigma, \lambda)}{\partial \sigma^2}=\frac{1}{\sigma^2}\mathbb{E} [(\eta_p(X+\sigma Z;\lambda)-X)^2(Z^4-5Z^2+2)].
\end{eqnarray*}
Hence, it is straightforward to verify that 
\[
\Big |\frac{\partial^2 H_p(\sigma, \lambda)}{\partial \sigma^2}\Big| \leq \frac{1}{\sigma^2} \mathbb{E} [(6X^2+4\sigma^2 Z^2)(Z^4+5Z^2+2)].
\]
Note that the upper bound above is independent of both $p$ and $\lambda$. Thus according to mean value theorem, $\frac{\partial H_p(\sigma, \lambda)}{\partial \sigma} $ is Lipschitz continuous (with a Lipschitz constant that does not depend on $\lambda$ and $p$)  over $(p, \lambda)$ with respect to $\sigma>0$. If we can further show $ \mathbb{E}[(\eta_p(X+\sigma Z;\lambda)-X)^2(Z^2-1)] $ is continuous with respect to $(p, \lambda)$ for any given $\sigma >0$, we are done. For that purpose, we do the analysis in two steps:
\begin{itemize}
\item  Firstly, we will show $ \mathbb{E}[(\eta_p(X+\sigma Z;\lambda)-X)^2(Z^2-1)] $ is continuous with respect to $p$, for any given $\lambda >0$.
\item We then prove $ \mathbb{E}[(\eta_p(X+\sigma Z;\lambda)-X)^2(Z^2-1)] $ is continuous with respect to $\lambda$ uniformly over $p$.
\end{itemize}
Regarding the first step, note that as $\tilde{p} \rightarrow p$, 
\begin{eqnarray*}
&&(\eta_{\tilde{p}}(X+\sigma Z;\lambda)-X)^2(Z^2-1)\mathbbm{1}(|X+\sigma Z| \neq c_p \lambda^{\frac{1}{2-p}}) \\
& \rightarrow &(\eta_p(X+\sigma Z;\lambda)-X)^2(Z^2-1)\mathbbm{1}(|X+\sigma Z| \neq c_p \lambda^{\frac{1}{2-p}})
\end{eqnarray*}
 Also since $|(\eta_p(X+\sigma Z;\lambda)-X)^2(Z^2-1)|\leq 2(|X+\sigma Z|^2+X^2)(Z^2+1)$, we can apply DCT to conclude it. For the second step, recall the definition in the proof of Lemma \ref{dev:cont}:
 \[ 
 J(x,\lambda, p) =\mathbb{E}[(\eta_p(x+\sigma Z;\lambda)-x)^2(Z^2-1) ]. 
\] 
 We then have $ \mathbb{E}[(\eta_p(X+\sigma Z;\lambda)-X)^2(Z^2-1)] =\mathbb{E}_X J(X,\lambda, p)$. If we can show $\lim_{\lambda \rightarrow \lambda_0}J(X,\lambda, p)= J(X,\lambda_0, p) $ uniformly over $p$, then by \eqref{uniform:dct} we can apply uniform DCT to finish the proof. Hence, what left to prove is $J(x,\lambda, p)$ is uniformly continuous over $p$ with respect to $\lambda>0$, for any given $x$. We rewrite $J(x,\lambda, p)$ as 
\begin{eqnarray*}
J(x, \lambda, p)&=&\int_{-\infty}^{\infty} (\eta_p(x+\sigma z;\lambda)-x)^2 (z^2-1)\phi(z)dz \\
&=&\int_{-\infty}^{\infty} (\lambda^{\frac{1}{2-p}} \eta_p(\lambda^{\frac{1}{p-2}}(x+\sigma z);1)-x)^2 (z^2-1)\phi(z)dz \\
&=& \int_{-\infty}^{\infty}\underbrace{\frac{\lambda^{\frac{1}{2-p}}}{\sigma} (\lambda^{\frac{1}{2-p}}\eta_p(z;1)-x)^2\Big [\Big(\frac{z\lambda^{\frac{1}{2-p}}-x}{\sigma} \Big)^2-1\Big]\phi\Big ( \frac{z\lambda^{\frac{1}{2-p}}-x}{\sigma}   \Big)}_{\triangleq K(x,\lambda,p,z)}dz
\end{eqnarray*}
Since $\sup_{0\leq p \leq 1}|\eta_p(z;1)| \leq |z|$ and 
\[
\sup_{0\leq p \leq 1}|\lambda^{\frac{1}{2-p}} | \leq \max(1, \lambda),  \quad \inf_{0\leq p \leq 1}|\lambda^{\frac{1}{2-p}} | \geq \min(\lambda,\lambda^{1/2}),
\]
 for a given small neighbor $B_{r}(\lambda_0)$, we can easily find an upper bound $L(x,z)$ such that 
\[
\sup_{0\leq p \leq 1}|K(x,\lambda, p, z)| \leq L(x, z)
\]
holds for all $\lambda \in B_r(\lambda_0)$ and $\int_{-\infty}^{\infty}L(x,z)dz < \infty$. Moreover, note that $\lambda^{\frac{1}{2-p}}$ is uniformly continuous at any $\lambda >0$, thus it is easy to see $K(x, \lambda, p,z)$ is uniformly continuous as well. We can then apply uniform DCT again to show $J(x, \lambda, p)$ is uniformly continuous. 

\end{proof}


\subsection{Proof of Theorem \ref{conj:se}}\label{proof:thm1}
According to \eqref{eq:smoothetap}, we have
\begin{equation*}
\tilde{\eta}_{p,h} (u;\lambda) = S_p(u;\lambda) + \tilde{D}_{p,h}(u; \lambda).  
\end{equation*}
Hence,
\begin{equation*}
\tilde{\eta}'_{p,h} (u;\lambda) = S_p'(u;\lambda) + \tilde{D}_{p,h}'(u; \lambda),
\end{equation*}
where $(\cdot)'$ denotes the derivative with respect to the first argument of the function.  Let $\tilde{\lambda}_p$ denote the threshold specified in Lemma \ref{lem:threshform}. 
According to \eqref{eq:defS_p}, the derivative of $S_p(u;\lambda)$ is the same as the derivative of ${\eta}_{p} (u;\lambda)$ for every $|u| > \tilde{\lambda}_p$. Moreover, from Lemma \ref{lem:derivativeproperties} part (ii) we already know that $\sup_{u}|{\eta}'_{p} (u;\lambda)| <\infty$. Hence, our first conclusion is the following:
\begin{equation}\label{eq:ST}
\sup_{u} |S_p'(u;\lambda)| = \sup_u \left| {\eta}'_{p} (u;\lambda)  \right|< \infty. 
\end{equation}
Next we claim that the derivative of $\tilde{D}_{p,h}(u; \lambda)$ with respect to $u$ is bounded as well. To prove this claim, first note that
\begin{equation}
 \tilde{D}_{p,h}(u; \lambda) = {\eta}_p^+(\tilde{\lambda}_p;\lambda )\int \frac{1}{\sqrt{2\pi} h} {\rm e}^{-\frac{(u-s)^2}{2h^2}} \mathbb{I} (s> \tilde{\lambda}_p)ds+  {\eta}_p^-(-\tilde{\lambda}_p;\lambda )\int \frac{1}{\sqrt{2\pi} h} {\rm e}^{-\frac{(u-s)^2}{2h^2}} \mathbb{I} (s<- \tilde{\lambda}_p)ds. \label{inside:limit}
\end{equation}
Therefore, it is straightforward to use the dominated convergence theorem to show that
\[
 \tilde{D}'_{p,h}(u; \lambda) = {\eta}^+_p(\tilde{\lambda}_p;\lambda ) \int \frac{1}{\sqrt{2\pi} h^3} (s-u){\rm e}^{-\frac{(u-s)^2}{2h^2}} \mathbb{I} (s> \tilde{\lambda}_p)ds+{\eta}_p^-(\tilde{\lambda}_p;\lambda ) \int \frac{1}{\sqrt{2\pi} h^3} (s-u){\rm e}^{-\frac{(u-s)^2}{2h^2}} \mathbb{I} (s <- \tilde{\lambda}_p)ds.  
\]
Hence,
\begin{eqnarray}\label{eq:D_p}
| \tilde{D}'_{p,h}(u; \lambda)| \leq 2 {\eta}_p^+(\tilde{\lambda}_p;\lambda )  \int \frac{1}{\sqrt{2\pi} h^3} |u-s|{\rm e}^{-\frac{(u-s)^2}{2h^2}} ds = 4{\eta}_p^+(\tilde{\lambda}_p;\lambda )  \int_{0}^\infty \frac{1}{\sqrt{2\pi} h^3} z {\rm e}^{-\frac{z^2}{2h^2}} dz = \frac{4{\eta}_p^+(\tilde{\lambda}_p;\lambda )}{\sqrt{2\pi} h}. 
\end{eqnarray}
Combining \eqref{eq:ST} and \eqref{eq:D_p} proves that $\sup_{u} |\tilde{\eta}'_{p,h} (u;\lambda)|$ is bounded. Hence, by the mean value theorem we can conclude that $\tilde{\eta}_{p,h} (u;\lambda)$ is Lipschitz continuous. Under the Lipschitz continuity of $\tilde{\eta}_{p,h} (u;\lambda)$, we can employ Theorem 1 of \cite{bayati2012lasso} to show that:
\begin{equation}\label{eq:setemp1}
\lim_{N \rightarrow \infty}  \frac{\|x^{t+1}(N,h)-x_o(N)\|_2^2}{N} \overset{\rm a.s.}{=} \mathbb{E} \left( {{{\left| {\tilde{\eta}_{p,h} (X + {\sigma _{t,h}}Z;{\lambda _t}) - X} \right|}^2}} \right),
\end{equation}
where ${\sigma _{t,h}}$ satisfies the following equation:
\begin{equation}\label{eq:sefirstdraft}
\sigma _{t+1,h}^2 = \sigma_w^2 + \frac{1}{\delta} \mathbb{E} \left( {{{\left| {\tilde{\eta}_{p,h} (X + {\sigma _{t,h}}Z;{\lambda _t}) - X} \right|}^2}} \right).
\end{equation}
It is straightforward to employ \eqref{eq:setemp1} and conclude that 
\[
\lim_{i \rightarrow \infty} \lim_{N \rightarrow \infty}  \frac{\|x^{t+1}(N,h_i)-x_o(N)\|_2^2}{N} \overset{\rm a.s.}{=} \lim_{i \rightarrow \infty} \mathbb{E} \left( {{{\left| {\tilde{\eta}_{p,h_i} (X + {\sigma _{t,h_i}}Z;{\lambda _t}) - X} \right|}^2}} \right).
\]
The last step is to prove that 
\begin{equation}\label{eq:hzero}
\lim_{i \rightarrow \infty} \mathbb{E} \left( {{{\left| {\tilde{\eta}_{p,h_i} (X + {\sigma _{t,h_i}}Z;{\lambda _t}) - X} \right|}^2}} \right) = \mathbb{E} \left( {{{\left| {{\eta}_{p} (X + {\sigma _t}Z;{\lambda _t}) - X} \right|}^2}} \right),
\end{equation}
with $\sigma_t$ satisfying
\begin{equation}\label{eq:sefirstdraft}
\sigma _{t+1}^2 = \sigma_w^2 + \frac{1}{\delta} \mathbb{E} \left( {{{\left| {{\eta}_{p} (X + {\sigma _{t}}Z;{\lambda _t}) - X} \right|}^2}} \right).
\end{equation}
We use an induction on $t$ to prove \eqref{eq:hzero}. 
\begin{enumerate}

\item[(i)] Base of the induction: First note that $\sigma_0 = \sigma_{0,h}$. Hence, we have to prove that 
\begin{equation}\label{eq:inductionbase}
\lim_{i \rightarrow \infty} \mathbb{E} \left( {{{\left| {\tilde{\eta}_{p,h_i} (X + \sigma_0 Z;{\lambda _0}) - X} \right|}^2}} \right) = \mathbb{E} \left( {{{\left| {{\eta}_{p} (X + {\sigma _0}Z;{\lambda _0}) - X} \right|}^2}} \right)
\end{equation}
According to Lemma \ref{lem:derivativeproperties}, we have
\begin{equation*}
|\tilde{\eta}_{p,h} (u;\lambda)| \leq |S_p(u;\lambda) |+ |\tilde{D}_{p,h}(u; \lambda)| \leq |u|+  {\eta}^+_p(\tilde{\lambda}_p;\lambda ).
\end{equation*}
Define $\tilde{\lambda}_{0,p} \triangleq c_p \lambda_0^{1/(2-p)}$, where $c_p$ is the constant we defined in Lemma \ref{lem:threshform}. We have 
\[
|\tilde{\eta}_{p,h} (X + {\sigma _0}Z;{\lambda _0})| \leq |X + \sigma_0 Z| + {\eta}^+_p(\tilde{\lambda}_{0,p};\lambda_0 ). 
\]
Hence,
\[
 {\left| {\tilde{\eta}_{p,h} (X + {\sigma _0}Z;{\lambda _0}) - X} \right|}^2 \leq (|X+ \sigma_0 Z|+|X|+{\eta}^+_p(\tilde{\lambda}_{0,p};\lambda_0 ))^2.
\]
Since $ \mathbb{E}(|X+ \sigma_0 Z|+|X|+ {\eta}^+_p(\tilde{\lambda}_{0,p};\lambda_0 ))^2<\infty$, if we can show 
\begin{equation}
\lim_{h\rightarrow 0+}\tilde{\eta}_{p,h}(u;\lambda)=\eta_{p}(u;\lambda),  \label{inside:limit2}
\end{equation}
then by the dominated convergence theorem we can conclude \eqref{eq:inductionbase}. To show \eqref{inside:limit2}, first notice\begin{eqnarray*}
\int \frac{1}{\sqrt{2\pi} h} {\rm e}^{-\frac{(u-s)^2}{2h^2}} \mathbb{I} (s> \tilde{\lambda}_p)ds=\int_{\tilde{\lambda}_p-u}^{\infty}\frac{1}{\sqrt{2\pi}h}e^{-\frac{z^2}{2h^2}}dz=\int_{\frac{\tilde{\lambda}_p-u}{h}}^{\infty}\frac{1}{\sqrt{2\pi}}e^{-\frac{z^2}{2}}dz.
\end{eqnarray*}
Therefore, it is straightforward to confirm,
\begin{eqnarray*}
\lim_{h\rightarrow 0+}\int \frac{1}{\sqrt{2\pi} h} {\rm e}^{-\frac{(u-s)^2}{2h^2}} \mathbb{I} (s> \tilde{\lambda}_p)ds=
\begin{cases}
1 & \text{if } u>\tilde{\lambda}_p, \\
0 & \text{if } u< \tilde{\lambda}_p.  
\end{cases}
\end{eqnarray*}
Similarly, we can show
\begin{eqnarray*}
\lim_{h\rightarrow 0+}\int \frac{1}{\sqrt{2\pi} h} {\rm e}^{-\frac{(u-s)^2}{2h^2}} \mathbb{I} (s<- \tilde{\lambda}_p)ds=
\begin{cases}
1 & \text{if } u< -\tilde{\lambda}_p, \\
0 & \text{if } u> - \tilde{\lambda}_p.  
\end{cases}
\end{eqnarray*}
Combining the two equalities above with \eqref{inside:limit} proves that $\lim_{h\rightarrow 0+}\tilde{D}_{p,h}(u;\lambda)=D_{p}(u;\lambda)$, which in turn shows $\lim_{h\rightarrow 0+}\tilde{\eta}_{p,h}(u;\lambda)=\eta_{p}(u;\lambda)$. This completes the proof. 

\item[(ii)] Inductive step: Now we assume that \eqref{eq:hzero} is true for iteration $t$ and our goal is to show it for iteration $t+1$. First note that
\begin{equation}
\sigma_{t,h}^2 = \sigma_w^2 + \frac{1}{\delta} \mathbb{E} \left( {{{\left| {\tilde{\eta}_{p,h} (X + {\sigma _{t-1,h}}Z;{\lambda _{t-1}}) - X} \right|}^2}} \right).
\end{equation}
According to the assumption of induction:
 \[
\mathbb{E} \left( {{{\left| {\tilde{\eta}_{p,h_i} (X + {\sigma _{t-1,h_i}}Z;{\lambda _{t-1}}) - X} \right|}^2}} \right) \rightarrow \mathbb{E} \left( {{{\left| {{\eta}_{p} (X + {\sigma _{t-1}}Z;{\lambda _{t-1}}) - X} \right|}^2}} \right). 
\]
Hence, $\sigma_{t,h_i}^2 \rightarrow \sigma_{t}^2$ as $i \rightarrow \infty$. Moreover, note that
\begin{eqnarray*}
\tilde{\eta}_{p,h_i}(X+\sigma_{t,h_i}Z;\lambda_t)&=&S_{p}(X+\sigma_{t,h_i}Z;\lambda_t)+\tilde{D}_{p,h_i}(X+\sigma_{t,h_i}Z;\lambda_t) \\
\eta_{p}(X+\sigma_{t}Z;\lambda_t)&=&S_{p}(X+\sigma_{t}Z;\lambda_t)+D_{p}(X+\sigma_{t}Z;\lambda_t)
\end{eqnarray*}
Since $S_p(u;\lambda)$ is a continuous function of $u$, we have
\[
\lim_{i \rightarrow \infty} S_{p}(X+\sigma_{t,h_i}Z;\lambda_t)=S_{p}(X+\sigma_{t}Z;\lambda_t).
\]
Furthermore, it is not hard to see that the arguments we used to prove $\tilde{D}_{p,h}(u;\lambda)\rightarrow D_p(u;\lambda)$ in step (i) can be applied to show $\tilde{D}_{p,h}(u_h;\lambda)\rightarrow D_p(u;\lambda)$, if $u_h \rightarrow u$, as $h\rightarrow 0+$. Therefore, we can obtain
\[
\lim_{i \rightarrow \infty}  \tilde{D}_{p,h_i}(X+\sigma_{t,h_i}Z;\lambda_t)  = D_{p}(X+\sigma_{t}Z;\lambda_t).
\]
\end{enumerate}
Combining the last two equalities, we have showed that
\begin{eqnarray*}
\lim_{i \rightarrow \infty} \tilde{\eta}_{p,h_i}(X+\sigma_{t,h_i}Z;\lambda_t)=\eta_{p}(X+\sigma_{t}Z;\lambda_t).
\end{eqnarray*}
Since $\sigma_{t,h_i}$ is bounded, we can use similar calculations as in step (i) to bound $|\tilde{\eta}_{p,h_i}(X+\sigma_{t,h_i}Z;\lambda_t)-X|^2$. Hence dominated convergence theorem can be applied to conclude 
\[
\lim_{i\rightarrow \infty} \mathbb{E} \left( \left| \tilde{\eta}_{p,h_i} (X + \sigma _{t,h_i}Z;\lambda _t) - X \right| ^2 \right)=\mathbb{E} \left( \left| \eta_p (X + \sigma _t Z; \lambda _t ) - X \right|^2 \right).
\]


\subsection{Proof of Proposition \ref{proof:existsncestable}} \label{sec:proofLemmaexistence}

We have already proved in Theorem \ref{final:smooth} that $\Psi_{\lambda_*,p} (\sigma^2)$ is a continuous function of $\sigma^2$ (we have in fact proved that it is differentiable). We consider the noiseless setting $\sigma_w^2=0$. The proof for the noisy setting is essentially the same. First note that for the case $\sigma=0$, we have ${\Psi _{{0},p}}(0) = 0$.
Hence, ${\Psi _{{\lambda _*},p}}(0) = 0$. Therefore $\sigma^2 =0$ is a fixed point of $\Psi_{\lambda_*,p}$. If it is a stable fixed point, it will establish the lemma. We assume that it is an unstable fixed point. Then there exists a value of $\sigma$, called $\sigma_u$ for which 
\begin{equation}\label{eq:sigmasmall}
\Psi_{\lambda_*, p} (\sigma_u^2)> \sigma_u^2. 
\end{equation}
Furthermore, we will show that for $\sigma^2>\frac{1}{\delta} [\mathbb{E} (X^2)+1]$ we have 
\begin{equation}\label{eq:sigmainf}
{\Psi _{{\lambda _*},p}}({\sigma ^2})< \sigma^2.
\end{equation} 
Since ${\Psi _{{\lambda _*},p}}({\sigma ^2})$ is continuous, we can combine \eqref{eq:sigmasmall} and \eqref{eq:sigmainf} and conclude the existence of the stable fixed point in the range $[\sigma_u^2, \frac{1}{\delta}(\mathbb{E} (X^2)+ 2)]$. Hence, the only step that is left to prove is \eqref{eq:sigmainf}. Note that from Lemma \ref{lem:derivativeproperties} we have $|\eta_p(X+ \sigma Z; \lambda )-X| \leq |X+ \sigma Z|+|X| \leq 2 |X|  + \sigma |Z|$. Since $\mathbb{E} (2 |X|  + \sigma |Z|)^2$ is bounded, we can employ the dominated convergence theorem to get
\[
\lim_{\lambda \rightarrow \infty} \mathbb{E} (\eta_p(X+ \sigma Z; \lambda)-X)^2= \mathbb{E}  \lim_{\lambda \rightarrow \infty} (\eta_p(X+ \sigma Z; \lambda)-X)^2= \mathbb{E} (X^2). 
\]
Hence, there exists a value of $\lambda_u< \infty$ such that $\mathbb{E} (\eta_p(X+ \sigma Z; \lambda_u)-X)^2 \leq \mathbb{E} (X^2)+ 1$. Therefore,
\[
{\Psi _{{\lambda _*},p}}({\sigma ^2}) < \frac{\mathbb{E} X^2+1}{\delta},
\]
which implies \eqref{eq:sigmainf} and completes the proof.


\subsection{Proof of Theorem \ref{thm:highestfpnoiseless}}  \label{proof:theorem3noiseless}
Let $X \sim (1- \epsilon)\Delta_0+ \epsilon G$, where $G$ is an arbitrary distribution that does not have any mass at zero. Also, let $U$ denote a random variable with distribution $G$ and $\mathbb{E}_U$ be the expectation with respect to $U$. Define
\[
\psi_{\tau,p} (\sigma^2) \triangleq \frac{1}{\delta} \mathbb{E} (\eta_p(X+ \sigma Z; \tau \sigma^{2-p})-X). 
\]
Note that if $\tau = \frac{\lambda(\sigma)}{\sigma ^{2 - p}}$, then
\begin{eqnarray}\label{eq:equivpsiPsi}
\psi_{\tau, p} (\sigma^2) = \Psi _{\lambda, p}(\sigma ^2). 
\end{eqnarray}

We have
\begin{eqnarray*}
\begin{gathered}
  {\psi _{\tau ,p}}({\sigma ^2}) = \frac{1}{\delta }\left[ {(1 - \epsilon ) \mathbb{E} \left({\eta _p^2(\sigma Z;\tau {\sigma ^{2 - p}})} \right) + \epsilon \mathbb{E} {{\left( {{\eta _p}(U + \sigma Z;\tau {\sigma ^{2 - p}}) - U} \right)}^2}} \right] \hfill \\
   \overset{(a)}{=} \frac{{{\sigma ^2}}}{\delta }\left[ {(1 - \epsilon ) \mathbb{E} \left( {\eta _p^2(Z;\tau )} \right) + \epsilon \mathbb{E} {{\left( {{\eta _p}(U/\sigma  + Z;\tau ) - U/\sigma } \right)}^2}} \right] \hfill \\
   = \frac{{{\sigma ^2}}}{\delta }\left[ {(1 - \epsilon ) \mathbb{E} \left( {\eta _p^2(Z;\tau )} \right) + \epsilon \mathbb{E}_U \left[ {{\mathbb{E}_Z}{{\left( {{\eta _p}(U/\sigma  + Z;\tau ) - U/\sigma } \right)}^2}} \right]} \right] \hfill \\
   \leqslant \frac{{{\sigma ^2}}}{\delta }(1 - \epsilon )\mathbb{E}\left( {\eta _p^2(Z;\tau )} \right) + \frac{{{\sigma ^2}}}{\delta }\epsilon \   {\mathop {\sup }\limits_U {\mathbb{E}_Z{\left( {{\eta _p}(U/\sigma  + Z;\tau ) - U/\sigma } \right)}^2}},  \hfill \\
\end{gathered}
\end{eqnarray*}
where Equality (a) is due to Lemma \ref{lem:scaleinv}. It is straightforward to employ \eqref{eq:equivpsiPsi} and derive  
\begin{eqnarray*}
\frac{\Psi _{\lambda_*, p}(\sigma ^2)}{\sigma^2} = \inf_{\tau\geq 0}  \frac{\psi _{\tau ,p}(\sigma^2)}{\sigma^2} \leqslant  \inf_{\tau \geq 0}  \frac{(1 - \epsilon )}{\delta }\mathbb{E}\left[ \eta _p^2(Z;\tau ) \right] + \frac{\epsilon }{\delta }  \sup_U \mathbb{E}_Z \left[ { \left( \eta _p(U/\sigma  + Z;\tau ) - U/\sigma  \right)^2} \right] = \frac{\overline{M}_p(\epsilon)}{\delta}.
\end{eqnarray*}

Hence, if $\overline{M}_p(\epsilon)< \delta$, then the inequality above implies that $\Psi _{\lambda_*, p}(\sigma ^2) <\sigma^2$ for any $\sigma >0$, meaning ${\Psi _{\lambda_*,p}}({\sigma ^2}) $ does not have any fixed point except at zero and that fixed point is stable. Now we prove the second part of the theorem. Suppose that
\[
\underline{M}_p(\epsilon)> \delta.
\]

We would like to show that there exist certain distributions in $\mathcal{F}_{\epsilon}$ for which $\Psi _{\lambda_*, p}(\sigma ^2)$ has a non-zero stable fixed point. Suppose that $X$ has the distribution $(1 - \epsilon ){\Delta _0} + \epsilon {\Delta _1}$, where $\Delta_a$  denotes a point mass at $a$. Then $\frac{{\psi _{\tau,p}(\sigma ^2)}}{\sigma ^2}$ can be written as
\begin{eqnarray*}
\frac{{{\psi _{{\tau },p}}({\sigma ^2})}}{{{\sigma ^2}}} &=& \frac{{(1 - \epsilon )}}{\delta }\mathbb{E}\left[ {\eta _p^2(Z;\tau )} \right] + \frac{\epsilon }{\delta }\mathbb{E}\left[ {{{\left( {{\eta _p}(1/\sigma  + Z;\tau ) - 1/\sigma } \right)}^2}} \right] \nonumber \\
&\geq& \inf_{\tau \geq 0} \frac{{(1 - \epsilon )}}{\delta }\mathbb{E}\left[ {\eta _p^2(Z;\tau )} \right] + \frac{\epsilon }{\delta }\mathbb{E}\left[ {{{\left( {{\eta _p}(1/\sigma  + Z;\tau ) - 1/\sigma } \right)}^2}} \right]
\end{eqnarray*}
For notational simplicity assume that  $\mathop {\sup  }\limits_{\mu \geq 0}  \mathop {\inf } \limits_{\tau \geq 0} \frac{{(1 - \epsilon )}}{\delta }\mathbb{E}\left[ {\eta _p^2(Z;\tau )} \right] + \frac{\epsilon }{\delta }\mathbb{E}\left[ {{{\left( {{\eta _p}(\mu  + Z;\tau ) - \mu } \right)}^2}} \right]$ is achieved at $\mu_*$ and define ${\sigma _0} \triangleq 1/{\mu _*}$.\footnote{If $\mu^*$ is infinite, then we can use the same technique, but we should show that zero is an unstable fixed point. } We then have
\begin{eqnarray}
\frac{{{\psi _{{\tau },p}}(\sigma _0^2)}}{{\sigma _0^2}} \geq \frac{\underline{M}_p(\epsilon)}{\delta}>1.
\end{eqnarray}
This, combined with \eqref{eq:equivpsiPsi}, implies that 
\[
\Psi _{{\lambda_*},p}(\sigma _0^2)>\sigma^2_0. 
\]
Also, according to  \eqref{eq:sigmainf} we know that if $\sigma^2> \frac{1}{\delta}[\mathbb{E} (X^2)+1]$, then 
\[
\Psi _{{\lambda_*},p}(\sigma^2) < \sigma^2
\]
Hence, by the continuity of $\Psi _{{\lambda_*},p}(\sigma^2)$ (proved in Theorem \ref{final:smooth}) we conclude that $\Psi _{\lambda_* ,p}(\sigma ^2)$ has a stable fixed point at some  $\sigma^2> \sigma_0^2$. Therefore, for the distribution $(1 - \epsilon ){\Delta _0} + \epsilon {\Delta _1}$, $\Psi_{\lambda_*,p}(\sigma^2)$ has at least one non-zero stable fixed point. 



\subsection{Proof of Corollary \ref{cor:noiselessell_p}}\label{ssec:cornoiselessproof}

 Define
\[
\bar{\epsilon}_p^*(\delta) \triangleq \inf \{  \epsilon : \overline{M}_p(\epsilon) \geq \delta\}.
\]
First note that it is straightforward to show that $\overline{M}_p(1)=1$. Hence, $ \{  \epsilon : \overline{M}_p(\epsilon) \geq \delta\}$ is not empty. It is clear that for $\epsilon< \bar{\epsilon}_p^*(\delta)$ we have  $\overline{M}_p(\epsilon)< \delta$. Combining this with Theorem \ref{thm:highestfpnoiseless} establishes the first part of our result. For the second part of the corollary, define
\[
\underline{\epsilon}_p^*(\delta) \triangleq \sup  \{  \epsilon : \underline{M}_p(\epsilon)\leq \delta\}.
\]
Since $\underline{M}_p(0)=0$, $\{  \epsilon : \underline{M}_p(\epsilon)\leq \delta\}$ is not empty. Furthermore, if $\epsilon> \underline{\epsilon}_p^*(\delta)$, then  $\underline{M}_p(\epsilon)> \delta$. According to Theorem \ref{thm:highestfpnoiseless}, there exists a distribution for which the recovery of optimally tuned $\ell_p$-AMP is not successful.

\subsection{Proof of Lemma \ref{lem:ell1pteq}} \label{ssec:prooflemmaell1pteq}

For any given $\tau >0$, we have
\begin{eqnarray*}
\frac{d \mathbb{E} (\eta_1(\mu+Z; \tau) - \mu)^2}{d\mu} &=& 2 \mathbb{E}[ (\eta_1(\mu+Z; \tau) - \mu) (\mathbb{I} (|\mu+Z|> \tau) - 1 )] \nonumber \\
&=&2\mu \mathbb{E} (  \mathbb{I} (|\mu+Z|\leq \tau)) > 0,
\end{eqnarray*}
for any $\mu>0$. Hence $ \mathbb{E} (\eta_1(\mu+Z; \tau) - \mu)^2$ is an increasing function of $\mu$ over $[0,\infty)$. This implies that 
\begin{eqnarray*}
\overline{M}_1(\epsilon) &=& \inf_\tau \sup_\mu (1-\epsilon) \mathbb{E} (\eta_1(Z; \tau))^2 + \epsilon\mathbb{E} (\eta_1(\mu+Z; \tau) - \mu)^2 \nonumber \\
&=& \inf_\tau \lim_{\mu \rightarrow \infty}   (1-\epsilon) \mathbb{E} (\eta_1(Z; \tau))^2 + \epsilon\mathbb{E} (\eta_1(\mu+Z; \tau) - \mu)^2 \nonumber \\
&=&  \inf_\tau  (1-\epsilon) \mathbb{E} (\eta_1(Z; \tau))^2 + \epsilon(1+ \tau^2). 
\end{eqnarray*}
The last equality is obtained by dominated convergence theorem (the details can be found in the proof of Theorem \ref{lem:noiselesslowfpless1}). On the other hand, we know
\begin{eqnarray*}
\underline{M}_1(\epsilon) &=&  \sup_\mu \inf_\tau (1-\epsilon) \mathbb{E} (\eta_1(Z; \tau))^2 + \epsilon\mathbb{E} (\eta_1(\mu+Z; \tau) - \mu)^2 \\
&\geq& \lim_{\mu \rightarrow \infty} \inf_\tau (1-\epsilon) \mathbb{E} (\eta_1(Z; \tau))^2 + \epsilon\mathbb{E} (\eta_1(\mu+Z; \tau) - \mu)^2 \\
&\overset{(a)}{=}&\inf_{\beta}  (1-\epsilon) \mathbb{E} (\eta_1(Z; \beta))^2 + \epsilon(1+ \beta^2),
\end{eqnarray*}
where $(a)$ is a direct implication from the proof of Lemma \ref{lem:ratiofinite} (by setting $X=(1-\epsilon)\Delta_0+\epsilon \Delta_1$). Thus, we have showed $\overline{M}_1(\epsilon) \leq \underline{M}_1(\epsilon)$. Moreover, we can easily see $\overline{M}_1(\epsilon) \geq \underline{M}_1(\epsilon)$ from their definitions. So we can conclude $\overline{M}_1(\epsilon) = \underline{M}_1(\epsilon)$. Now we would like to prove that $\overline{M}_1(\epsilon)$ is an increasing function of $\epsilon$. First note that $ (1-\epsilon) \mathbb{E} (\eta_1(Z; \tau))^2 + \epsilon(1+ \tau^2)$ is a strictly convex function of $\tau$ and has a unique global minima, denoted by $\tau_*>0$. Note that the subgradient of $\overline{M}_1(\epsilon)$ with respect to $\epsilon$ is $1+\tau_*^2- \mathbb{E} (\eta_1(Z; \tau_*))^2>0$. Therefore, $\overline{M}_1(\epsilon)$ is a strictly increasing continuous function. Hence, $\overline{\epsilon}_1^*(\delta) = \overline{M}_1^{-1}(\delta) = \underline{M}_1^{-1}(\delta) = \underline{\epsilon}_1^*(\delta)$.

\subsection{Proof of Theorem \ref{lem:noiselesslowfpless1}}\label{proof sec:thmnoiselesslfp}

\subsubsection{Main part}\label{sssec:thm4mainpart}

For any $\sigma>0$ and any thresholding policy $\lambda (\sigma )$, define $\tau(\sigma)\triangleq \frac{\lambda (\sigma )}{\sigma ^{2 - p}}$. Also, let ${\tau _*}(\sigma )$ denote the optimal value of $\tau (\sigma )$ given by  $\tau_*(\sigma) \triangleq \frac{\lambda_* (\sigma )}{\sigma ^{2 - p}}$. In the rest of the proof, we write ${\eta _p}\left( {X + \sigma Z;\lambda (\sigma )} \right)$ as ${\eta _p}\left( {X + \sigma Z;\tau (\sigma ){\sigma ^{2 - p}}} \right)$. This will enable us to employ the scale invariance properties of the proximal function, proved in Lemma \ref{lem:scaleinv}, more efficiently. Since it is easier to work with ${\tau }(\sigma )$, we use the notation 
\begin{equation}\label{eq:defpsismall}
{\psi_{{\tau },p}}({\sigma ^2}) \triangleq \frac{1}{\delta}\mathbb{E} (\eta_p(X + \sigma Z; \tau(\sigma) \sigma^{2-p}) -X)^2.
\end{equation}
Clearly, we have
\[
\psi_{\tau ,p}(\sigma ^2) =\Psi _{\lambda ,p}(\sigma^2), \quad \psi_{\tau_* ,p}(\sigma ^2) =\Psi _{\lambda_* ,p}(\sigma^2).
\]
Note that $\sigma^2=0$ is actually a fixed point of ${\psi _{{\tau _*},p}}({\sigma ^2})$. Furthermore, it is straightforward to see that 0 is a stable fixed point if and only if
\begin{equation}
\label{eq:A-4-1}
{\left. {\frac{{d{\psi _{{\tau _*},p}}({\sigma ^2})}}{{d{\sigma ^2}}}} \right|_{{\sigma ^2} = 0}} < 1.
\end{equation}
Consider a specific thresholding policy $\lambda(\sigma) = \beta \sigma^{2-p}$, where $\beta \geq 0$ is a fixed number and define
\begin{equation}\label{eq:defpsismallbar}
\bar{\psi}_{\beta, p} (\sigma^2)  \triangleq \frac{1}{\delta}\mathbb{E} (\eta_p(X + \sigma Z; \beta \sigma^{2-p}) -X)^2.
\end{equation}
We then have 
\begin{equation}\label{eq:upperbarpsi}
{\left. {\frac{{d{\psi _{{\tau _*},p}}({\sigma ^2})}}{{d{\sigma ^2}}}} \right|_{{\sigma ^2} = 0}} = \mathop {\lim }\limits_{{\sigma ^2} \to 0} \frac{{{\psi _{{\tau _*},p}}({\sigma ^2})}}{{{\sigma ^2}}} \leqslant \mathop {\lim }\limits_{{\sigma ^2} \to 0} \frac{\bar{\psi} _{\beta ,p}(\sigma ^2)}{\sigma ^2},
\end{equation}
where the last inequality is due to the fact that $\lambda_*$ (or $\tau_*$) is the optimal thresholding policy and hence $\psi _{\tau _*,p} \leq \bar{\psi} _{\beta,p}$ for every $\beta \geq 0$ and $\sigma^2$.  Since \eqref{eq:upperbarpsi} holds for every $\beta \geq 0$ we have
\begin{eqnarray}\label{eq:dervativezero1}
\left. {\frac{d\psi _{\tau _*,p}(\sigma ^2)}{d\sigma ^2}} \right|_{{\sigma ^2} = 0} \leq \inf_{\beta \geq 0}  \lim_{\sigma ^2 \to 0} \frac{\bar{\psi} _{\beta ,p}(\sigma ^2)}{\sigma ^2}.
\end{eqnarray}
Let $X \sim (1- \epsilon) \Delta_0+ \epsilon G$, where $G$ is an arbitrary distribution that does not have any point mass at zero. Also, let $U$ denote a random variable with distribution $G$. Then we know
\begin{eqnarray*}
\label{eq:A-4-5}
  \bar{\psi} _{\beta ,p}(\sigma ^2) &=& \frac{1}{\delta }\left[ {(1 - \epsilon )\mathbb{E}{\left( \eta _p(\sigma Z;\beta\sigma ^{2 - p}) \right)^2} + \epsilon \mathbb{E}{\left( {\eta _p(U + \sigma Z;\beta \sigma ^{2 - p}) - U} \right)}^2} \right] \nonumber \\
   &=& \frac{\sigma ^2}{\delta }\left[ {(1 - \epsilon )\mathbb{E}{\left( \eta _p(Z;\beta ) \right)^2} + \epsilon \mathbb{E}{\left( {\eta _p(U/\sigma  + Z;\beta) - U/\sigma } \right)^2}} \right],
\end{eqnarray*}
where the second equality is due to Lemma \ref{lem:scaleinv}. Hence we have
\begin{equation}\label{eq:dervativezero2}
\lim_{\sigma^2 \rightarrow 0} \frac{\bar{\psi} _{\beta ,p}({\sigma ^2})}{\sigma^2} =  \frac{1}{\delta } {(1 - \epsilon )\mathbb{E}{{\left( {{\eta _p}(Z;{\beta })} \right)}^2} +\frac{\epsilon}{\delta}\lim_{\sigma^2 \rightarrow 0} \mathbb{E}{{\left( {{\eta _p}(U/\sigma  + Z;{\beta }) - U/\sigma } \right)}^2}}.
\end{equation}
Our next goal is to show that we can interchange the limit and expectation above. Define ${\upsilon _p}(u;\beta ) \triangleq {\eta _p}(u;\beta) - u$. So we can write
\begin{eqnarray}\label{eq:dervativezero4}
\mathbb{E}{{\left( {{\eta _p}(U/\sigma  + Z;{\beta }) - U/\sigma } \right)}^2} &=& \mathbb{E} (Z+{\upsilon _p}(U/\sigma +Z;\beta )  )^2 \nonumber \\
&=& 1+ \mathbb{E} ({\upsilon _p}(U/\sigma +Z;\beta )  )^2+ 2 \mathbb{E} (Z {\upsilon _p}(U/\sigma +Z;\beta ) ). 
\end{eqnarray}    
From Corollary \ref{cor:etap1eta0} and \ref{cor:hardvsellp}, we know $|{\upsilon _p}(u;\beta)| \leq c_p \beta^{\frac{1}{2-p}}$. So we can get that $({\upsilon _p}(U/\sigma +Z;\beta )  )^2 \leq c_p^2\beta^{\frac{2}{2-p}}$ and $|Z {\upsilon _p}(U/\sigma +Z;\beta ) |  \leq c_p\beta^{\frac{1}{2-p}} |Z|$. We can then employ the dominated convergence theorem to conclude that
\begin{eqnarray}\label{eq:dervativezero3}
\lim_{\sigma^2 \rightarrow 0} \mathbb{E} ({\upsilon _p}(U/\sigma +Z;\beta )  )^2 &=& \mathbb{E}  \lim_{\sigma^2 \rightarrow 0} ({\upsilon _p}(U/\sigma +Z;\beta )  )^2 =0, \nonumber \\
\lim_{\sigma^2 \rightarrow 0} \mathbb{E} (Z {\upsilon _p}(U/\sigma +Z;\beta ) ) &=& \mathbb{E} \lim_{\sigma^2 \rightarrow 0} (Z {\upsilon _p}(U/\sigma +Z;\beta ) ) =0,
\end{eqnarray}
where the second equalities in the two lines above is a straightforward result of Lemma \ref{lem:asymtoteta_p}. Combining \eqref{eq:dervativezero1}, \eqref{eq:dervativezero2}, \eqref{eq:dervativezero4}, and \eqref{eq:dervativezero3} implies that 
\begin{equation}\label{eq:upperboundder0}
\left. {\frac{d\psi _{\tau _*,p}(\sigma ^2)}{d\sigma ^2}} \right|_{\sigma^2 =0} \leq \inf_{\beta \geq 0} \frac{1-\epsilon}{\delta } {\mathbb{E}{{\left( {{\eta _p}(Z;{\beta })} \right)}^2} }+ \frac{\epsilon}{\delta}= \frac{\epsilon}{\delta}. 
\end{equation}
So far we have proved an upper bound for the derivative of $\psi_{\tau_*,p}(\sigma^2)$ at $\sigma=0$. Our next step is to  show that 
\begin{eqnarray*}
\left. {\frac{d\psi _{\tau _*,p}(\sigma ^2)}{d\sigma ^2}} \right|_{\sigma^2 =0} \geq \frac{\epsilon}{\delta}.
\end{eqnarray*}
Note that
\begin{eqnarray}
\lefteqn{\left. {\frac{d\psi _{\tau _*,p}(\sigma ^2)}{d\sigma ^2}} \right|_{\sigma^2 =0} }\nonumber \\
&=&\lim_{\sigma \rightarrow 0} \inf_{\beta \geq 0} \frac{1}{\delta }\left[ {(1 - \epsilon )\mathbb{E}{\left( \eta _p(Z;\beta ) \right)^2} + \epsilon \mathbb{E}{\left( {\eta _p(U/\sigma  + Z;\beta) - U/\sigma } \right)^2}} \right], \nonumber \\
&\geq& \frac{1}{\delta }  \lim_{\sigma \rightarrow 0} \inf_{\beta \geq 0}\left\{ {(1 - \epsilon )\mathbb{E}{\left( \eta _p(Z;\beta ) \right)^2} \mathbb{P} (|U| \geq \mu) + \epsilon \mathbb{E}{\left[ \left( {\eta _p(U/\sigma  + Z;\beta) - U/\sigma } \right)^2\cdot \mathbbm{1}(|U|\geq \mu) \right]}} \right \},  \nonumber \\
&\geq&\frac{ \mathbb{P} (|U| \geq \mu) }{\delta }  \lim_{\sigma \rightarrow 0} \inf_{\beta \geq 0}\left\{ (1 - \epsilon )\mathbb{E}{\left( \eta _p(Z;\beta ) \right)^2} + \epsilon \frac{\mathbb{E}{\left[ \left( {\eta _p(U/\sigma  + Z;\beta) - U/\sigma } \right)^2\cdot \mathbbm{1}(|U|\geq \mu) \right]}}{\mathbb{P} (|U| \geq \mu)} \right \},   \label{eq:lowerboundder0}
\end{eqnarray}
where $\mu$  is an arbitrary positive number that satisfies $\mathbb{P} (|U| \geq \mu) > 0$. Our next step is to prove that
\begin{equation}\label{eq:thm4lb2}
 \lim_{\sigma \rightarrow 0} \inf_{\beta \geq 0}\left\{ (1 - \epsilon )\mathbb{E}{\left( \eta _p(Z;\beta ) \right)^2} + \epsilon \frac{\mathbb{E}{\left[ \left( {\eta _p(U/\sigma  + Z;\beta) - U/\sigma } \right)^2\cdot \mathbbm{1}(|U|\geq \mu) \right]}}{\mathbb{P} (|U| \geq \mu)} \right \} = \epsilon.
\end{equation}
Since this requires more work, we postpone its proof until Section \ref{ssec:thm4aux}, and we discuss how \eqref{eq:lowerboundder0} and \eqref{eq:thm4lb2} finish the proof of Theorem \ref{lem:noiselesslowfpless1}.  By combining \eqref{eq:lowerboundder0} and \eqref{eq:thm4lb2} we obtain
\begin{equation}\label{eq:lowerboundgood1}
 \left. \frac{{d{\psi _{{\tau _*},p}}({\sigma ^2})}}{{d{\sigma ^2}}}  \right|_{\sigma^2=0} \geq \lim_{\mu \rightarrow 0} \frac{\epsilon}{\delta} \mathbb{P} (|U|\geq \mu)= \frac{\epsilon}{\delta}.  
\end{equation}
 Combining \eqref{eq:upperboundder0} and \eqref{eq:lowerboundgood1} proves that
\[
\left. \frac{{d{\psi _{{\tau _*},p}}({\sigma ^2})}}{{d{\sigma ^2}}}  \right|_{\sigma^2=0}= \frac{\epsilon}{\delta}. 
\]
As we discussed before, $0$ is a stable fixed point if and only if
\[
\left. \frac{{d{\psi _{{\tau _*},p}}({\sigma ^2})}}{{d{\sigma ^2}}}  \right|_{\sigma^2=0}= \frac{\epsilon}{\delta} <1. 
\]
The only step that is still unresolved in the proof of Theorem \ref{lem:noiselesslowfpless1} is \eqref{eq:thm4lb2}. Since the proof is different for $0<p<1$ and $p=0$, we prove them in two different sections below, i.e., Section \ref{ssec:thm4aux} and \ref{ssec:thm4auxp0} respectively.


\vspace{.2cm}

\subsubsection{Auxiliary result for $0<p<1$}\label{ssec:thm4aux}
As we discussed before our goal in this section is to prove Equation \eqref{eq:thm4lb2} for every $0<p<1$. Below we prove a stronger result, since this stronger version will be used in other proofs throughout the paper. Define
 \begin{equation*}
 R_p(\tau, \sigma)\triangleq(1-\epsilon)\mathbb{E}\eta_p^2(Z;\tau)+\epsilon \mathbb{E}(\eta_p(U/\sigma+Z;\tau)-U/\sigma)^2,
 \end{equation*}
where $Z \sim N(0,1)$ and $U \sim G$ are independent. Denote the optimal $\tau$ that minimizes $R_p(\tau,\sigma)$ by $\tau_*(\sigma)$. Also, let $X \sim (1-\epsilon)\Delta_0+ \epsilon G$, and define $\mathbb{E}_G(f(U)) \triangleq \int f(u) dG(u)$ and $\mathbb{P}_G(U \in A) \triangleq \mathbb{E}_G (\mathbb{I} (U \in B))$.

\vspace{.2cm}

\begin{proposition}\label{lem:riskbehsmallsigma}
Suppose $\mathbb{P}_G(|U|>\mu)=1$ with $\mu$ being a fixed positive number and $\mathbb{E}_G|U|^2 < \infty$. Then, for $0<p<1$, we have
\begin{eqnarray*}
 R_p(\tau_*(\sigma),\sigma)=\epsilon+\epsilon p^2 \mathbb{E}|U|^{2p-2} (\tau_*(\sigma))^2\sigma^{2-2p}+o((\tau_*(\sigma))^2\sigma^{2-2p}),
\end{eqnarray*}
where the convergence rate of $\tau_*(\sigma)$ can be characterized by 
\begin{eqnarray*}
 \lim_{\sigma \rightarrow 0} \frac{\sigma^{2-2p}}{(\tau_*(\sigma))^{\frac{2p-1}{2-p}}\phi(c_p(\tau_*(\sigma))^{\frac{1}{2-p}})} = \frac{(1-\epsilon)c_p\eta^2_p(c_p;1)}{\epsilon p^2(2-p)\mathbb{E}|U|^{2p-2}}.
\end{eqnarray*}
\end{proposition}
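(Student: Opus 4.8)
The plan is to split the risk into its two natural pieces,
$$R_p(\tau,\sigma) = \underbrace{(1-\epsilon)\,\mathbb{E}\,\eta_p^2(Z;\tau)}_{=:F(\tau)} \;+\; \underbrace{\epsilon\,\mathbb{E}\big(\eta_p(U/\sigma + Z;\tau) - U/\sigma\big)^2}_{=:D(\tau,\sigma)},$$
and to obtain one-sided sharp asymptotics for each: $F$ as $\tau\to\infty$ and $D$ as $\sigma\to 0$. Anticipating the answer, I expect the optimal threshold to satisfy $\tau_*(\sigma)\to\infty$ (since $|U|/\sigma\to\infty$, aggressive thresholding is cheap on the signal part), so $F$ and $D$ will be balanced at $\tau_*$ through a first-order stationarity condition. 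I would first record that $\inf_\tau R_p(\tau,\sigma)\to\epsilon$ and that the minimizer grows, $\tau_*(\sigma)\to\infty$, by testing a slowly growing threshold, which licenses using the large-$\tau$ expansions at the minimizer.

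For the false-alarm term I would use the scale invariance of Lemma \ref{lem:scaleinv}(ii), $\eta_p(z;\tau)=\tau^{1/(2-p)}\eta_p(z\tau^{-1/(2-p)};1)$, and the substitution $z=\tau^{1/(2-p)}w$ together with the threshold formula of Lemma \ref{lem:threshform} to write
$$F(\tau) = 2(1-\epsilon)\,\tau^{3/(2-p)}\int_{c_p}^{\infty}\eta_p^2(w;1)\,\phi\big(\tau^{1/(2-p)}w\big)\,dw.$$
For large $\tau$ the integrand concentrates at the left endpoint $w=c_p$; a Laplace-type estimate — replacing $\eta_p^2(w;1)$ by its boundary value $\eta_p^2(c_p;1)=[2(1-p)]^{2/(2-p)}$ (from the proof of Lemma \ref{lem:lowerboundonx}) and linearizing the exponent of $\phi$ about $w=c_p$ — yields
$$F(\tau) = (1-\epsilon)\,\frac{2\,\eta_p^2(c_p;1)}{c_p}\,\tau^{1/(2-p)}\,\phi\big(c_p\tau^{1/(2-p)}\big)\,(1+o(1)).$$
For the detection term, since $\mathbb{P}_G(|U|>\mu)=1$ the argument $U/\sigma+Z$ tends to $\pm\infty$, so I invoke the expansion $\eta_p(u;\tau)-u=-\tau p\,{\rm sign}(u)|u|^{p-1}+o(|u|^{p-1})$ of Lemma \ref{lem:asymtoteta_p}. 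Writing $\eta_p(U/\sigma+Z;\tau)-U/\sigma = Z+\upsilon_p(U/\sigma+Z;\tau)$ and squaring, the cross term $\mathbb{E}[Z\,\upsilon_p]$ is negligible to leading order by independence of $Z$ and $U$ with $\mathbb{E}Z=0$, while $\mathbb{E}\,\upsilon_p^2$ contributes $\tau^2p^2\sigma^{2-2p}\mathbb{E}|U|^{2p-2}$. The uniform bound $|\upsilon_p|\le c_p\tau^{1/(2-p)}$ from Corollaries \ref{cor:etap1eta0}--\ref{cor:hardvsellp} supplies the dominating function for dominated convergence, giving
$$D(\tau,\sigma) = \epsilon + \epsilon\,p^2\,\mathbb{E}|U|^{2p-2}\,\tau^2\sigma^{2-2p}\,(1+o(1)).$$

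It remains to optimize $R_p \approx \epsilon + A\,\tau^{1/(2-p)}\phi(c_p\tau^{1/(2-p)}) + B\,\tau^2$ with $A=(1-\epsilon)2\eta_p^2(c_p;1)/c_p$ and $B=\epsilon p^2\mathbb{E}|U|^{2p-2}\sigma^{2-2p}$. Since $\tfrac{d}{d\tau}\big[\tau^{\beta}\phi(c_p\tau^{\beta})\big]=\beta\tau^{\beta-1}\phi(c_p\tau^{\beta})(1-c_p^2\tau^{2\beta})$ with $\beta=1/(2-p)$, and $(1-c_p^2\tau^{2\beta})\approx -c_p^2\tau^{2\beta}$ for large $\tau$, the condition $\partial_\tau R_p=0$ reduces to
$$2B = A\,\frac{c_p^2}{2-p}\,\tau_*^{(2p-1)/(2-p)}\,\phi\big(c_p\tau_*^{1/(2-p)}\big),$$
which rearranges precisely into the claimed characterization $\frac{\sigma^{2-2p}}{\tau_*^{(2p-1)/(2-p)}\phi(c_p\tau_*^{1/(2-p)})}\to \frac{(1-\epsilon)c_p\eta_p^2(c_p;1)}{\epsilon p^2(2-p)\mathbb{E}|U|^{2p-2}}$. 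Feeding this relation back shows the false-alarm value $A\tau_*^{1/(2-p)}\phi(\cdots)$ is smaller than $B\tau_*^2$ by a factor of order $\tau_*^{-2/(2-p)}\to 0$, so it is absorbed into $o(\tau_*^2\sigma^{2-2p})$, leaving $R_p(\tau_*,\sigma)=\epsilon+\epsilon p^2\mathbb{E}|U|^{2p-2}\tau_*^2\sigma^{2-2p}+o(\tau_*^2\sigma^{2-2p})$.

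The main obstacle will be making these two one-sided estimates rigorous and, above all, \emph{uniform} enough to survive the optimization. The Laplace estimate for $F$ must control the contribution of the tail $w\gg c_p$, where $\eta_p(w;1)\sim w$ grows, against the Gaussian decay; and the expansion of $D$ must hold uniformly over the growing range of $\tau$ coupled to $\sigma$ rather than merely for fixed $\tau$. Coupled to this is verifying that the minimizer of the exact $R_p$ is well approximated by the minimizer of the leading-order surrogate — i.e.\ that differentiation and the asymptotic expansions commute at $\tau_*$ — which is where most of the technical care will be required.
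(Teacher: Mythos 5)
Your proposal is correct and follows essentially the same route as the paper's own proof (Lemmas \ref{rough:rate}, \ref{sharp:rate} and \ref{sharp:rate:risk}): establish that $\tau_*(\sigma)\to\infty$ while $\tau_*(\sigma)\sigma^{2-p}\to 0$, derive the boundary Laplace asymptotic $F(\tau)\sim \frac{2(1-\epsilon)(\eta_p^+(c_p;1))^2}{c_p}\tau^{1/(2-p)}\phi(c_p\tau^{1/(2-p)})$ for the noise term and the expansion $\mathbb{E}\,\upsilon_p^2\sim p^2\tau^2\sigma^{2-2p}\mathbb{E}|U|^{2p-2}$ for the signal term, and balance them via the first-order condition in $\tau$, which yields exactly the stated rate and shows the noise term is $O(\tau_*^{-2/(2-p)})$ relative to the signal term. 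The one local slip is your justification of the cross term: $\mathbb{E}[Z\,\upsilon_p(U/\sigma+Z;\tau)]$ does not vanish by independence (since $\upsilon_p$ depends on $Z$), but a Stein-type integration by parts shows it is of order $\tau\sigma^{2-p}$, which is still $o(\tau^2\sigma^{2-2p})$ because $\sigma^{p}/\tau_*\to 0$ --- this is precisely how the paper disposes of its term $Q_2$ in the proof of Lemma \ref{sharp:rate:risk}.
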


Before we prove this result note that as $\sigma \rightarrow 0$, $ R_p(\tau_*(\sigma),\sigma) \rightarrow \epsilon$, and this implies \eqref{eq:thm4lb2} we required to prove Theorem \ref{lem:noiselesslowfpless1}. In this proposition we go one step further, and characterize the second dominant term as well (in terms of $\sigma$), since it will be used in the proofs of other results later in our paper.  

We prove Proposition \ref{lem:riskbehsmallsigma} in three steps. We first show $\tau_*(\sigma)$ goes off to infinity, but not very fast, as $\sigma \rightarrow 0$. This will be done in Lemma \ref{rough:rate}. Then, we characterize the exact rate of $\tau_*(\sigma)$ in terms of $\sigma \rightarrow 0$. This will be performed in Lemma \ref{sharp:rate}. Finally we use this result to prove Proposition \ref{lem:riskbehsmallsigma}.  
 
\vspace{0.3cm}

\begin{lemma}\label{rough:rate}
Suppose $\mathbb{E}|X|^2 < \infty$, then for $0<p<1$, $\tau_*(\sigma) \rightarrow \infty$ and $\tau_*(\sigma)\sigma^{2-p} \rightarrow 0$, as $\sigma \rightarrow 0$.
\end{lemma}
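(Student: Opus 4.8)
The plan is to play the optimality of $\tau_*(\sigma)$ against the constant policies $\tau\equiv\beta$, together with the scale invariance of $\eta_p$ (Lemma \ref{lem:scaleinv}) and the large-argument asymptotics of Lemma \ref{lem:asymtoteta_p}. First I would record two consequences of optimality. Since $\tau_*(\sigma)$ minimizes $R_p(\cdot,\sigma)$, for every fixed $\beta\geq0$ we have $R_p(\tau_*(\sigma),\sigma)\leq R_p(\beta,\sigma)$. The first summand of $R_p(\beta,\sigma)$ is independent of $\sigma$, while in the second summand $U/\sigma\to\pm\infty$ almost surely as $\sigma\to0$ (here I use that $G$ has no mass at zero); writing $\eta_p(U/\sigma+Z;\beta)-U/\sigma=Z+\big(\eta_p(U/\sigma+Z;\beta)-(U/\sigma+Z)\big)$ and invoking Lemma \ref{lem:asymtoteta_p} (the correction tends to $0$ and is uniformly bounded by $c_p\beta^{1/(2-p)}$, by Corollaries \ref{cor:etap1eta0}--\ref{cor:hardvsellp}), dominated convergence gives $\lim_{\sigma\to0}R_p(\beta,\sigma)=(1-\epsilon)\mathbb{E}\eta_p^2(Z;\beta)+\epsilon$. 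Letting $\beta\to\infty$ and using $\mathbb{E}\eta_p^2(Z;\beta)\to0$ (DCT, dominant $Z^2$) yields $\limsup_{\sigma\to0}R_p(\tau_*(\sigma),\sigma)\leq\epsilon$; in particular $R_p(\tau_*(\sigma),\sigma)$ stays bounded as $\sigma\to0$.

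To show $\tau_*(\sigma)\to\infty$ I argue by contradiction. If some $\sigma_n\to0$ had $\tau_*(\sigma_n)\to\tau_\infty<\infty$, then $(1-\epsilon)\mathbb{E}\eta_p^2(Z;\tau_*(\sigma_n))\to(1-\epsilon)\mathbb{E}\eta_p^2(Z;\tau_\infty)$ by continuity and DCT, while the second summand tends to $\epsilon\mathbb{E}Z^2=\epsilon$ exactly as above, since a bounded $\tau_n$ keeps the correction $\eta_p(U/\sigma_n+Z;\tau_n)-(U/\sigma_n+Z)$ bounded by $c_p\tau_n^{1/(2-p)}$ and tending to $0$. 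Hence $R_p(\tau_*(\sigma_n),\sigma_n)\to(1-\epsilon)\mathbb{E}\eta_p^2(Z;\tau_\infty)+\epsilon$, and because a finite threshold forces $\mathbb{E}\eta_p^2(Z;\tau_\infty)>0$, this limit exceeds $\epsilon$, contradicting the $\limsup\leq\epsilon$ bound. Thus $\tau_*(\sigma)\to\infty$.

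For $\lambda_*(\sigma)\triangleq\tau_*(\sigma)\sigma^{2-p}\to0$ I again argue by contradiction, after rewriting the second summand on the original scale: by Lemma \ref{lem:scaleinv}, $\sigma\,\eta_p(U/\sigma+Z;\tau)=\eta_p(U+\sigma Z;\tau\sigma^{2-p})=\eta_p(U+\sigma Z;\lambda)$, so the second summand equals $\frac{\epsilon}{\sigma^2}\,\mathbb{E}\big(\eta_p(U+\sigma Z;\lambda)-U\big)^2$. Suppose $\lambda_*(\sigma_n)\geq a>0$ along some $\sigma_n\to0$; passing to a subsequence, either $\lambda_*(\sigma_n)\to\lambda_\infty\in[a,\infty)$ or $\lambda_*(\sigma_n)\to\infty$. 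In the first case $\eta_p(U+\sigma_n Z;\lambda_*(\sigma_n))\to\eta_p(U;\lambda_\infty)$ for $G$-a.e.\ $U$, and since $U\neq0$ a.s.\ the limit differs from $U$ (Lemma \ref{lem:derivativeproperties}(i) when $\eta_p>0$, trivially when $\eta_p=0$), so by Fatou $\liminf_n\mathbb{E}(\eta_p(U+\sigma_n Z;\lambda_*(\sigma_n))-U)^2\geq\mathbb{E}(\eta_p(U;\lambda_\infty)-U)^2>0$. In the second case the threshold $c_p\lambda_*(\sigma_n)^{1/(2-p)}\to\infty$, whence $\mathbb{E}(\eta_p(U+\sigma_n Z;\lambda)-U)^2\geq\mathbb{E}[U^2\,\mathbbm{1}(|U+\sigma_n Z|<c_p\lambda_*(\sigma_n)^{1/(2-p)})]\to\mathbb{E}U^2>0$. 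Either way this quantity is bounded below by a positive constant, so the second summand, equal to it over $\sigma_n^2\to0$, diverges, contradicting the boundedness of $R_p(\tau_*(\sigma),\sigma)$. Hence $\tau_*(\sigma)\sigma^{2-p}\to0$.

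The main obstacle is the bookkeeping of the dominated-convergence steps while $\tau$ (equivalently $\lambda$) varies with $\sigma$: the uniform-in-$\tau$ bound $|\eta_p(u;\tau)-u|\leq c_p\tau^{1/(2-p)}$ from Corollaries \ref{cor:etap1eta0}--\ref{cor:hardvsellp} is precisely what legitimizes the dominations in the bounded-$\tau$ regime, whereas in the unbounded regime the identity $(\eta_p-U)^2=U^2$ on the thresholded event $\{\eta_p=0\}$ is what supplies a clean positive lower bound; keeping these two regimes apart through the subsequence split is the delicate part. All integrability used reduces to $\mathbb{E}U^2=\mathbb{E}|X|^2/\epsilon<\infty$.
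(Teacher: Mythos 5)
Your proof is correct and follows essentially the same route as the paper's: contradiction via convergent subsequences, Fatou's lemma to force the second summand of $R_p$ to blow up when $\tau_*(\sigma)\sigma^{2-p}\not\to 0$, and optimality against constant policies $\tau\equiv\beta$ (letting $\beta\to\infty$) to rule out bounded $\tau_*(\sigma)$. The one place the paper is slightly more careful is the Fatou step: it lower-bounds the liminf of the integrand by $\mathbb{E}\min\big((\eta_p(X;\alpha)-X)^2,X^2\big)$ instead of asserting pointwise convergence of $\eta_p(U+\sigma_n Z;\lambda_*(\sigma_n))$ to $\eta_p(U;\lambda_\infty)$, which sidesteps the (harmless, easily patched) possibility that $G$ places an atom exactly at the discontinuity point $c_p\lambda_\infty^{1/(2-p)}$ of $\eta_p(\cdot;\lambda_\infty)$, where your claimed a.e.\ convergence could fail even though the liminf of the squared difference remains positive.
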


\vspace{.3cm}

\begin{proof}
If $\tau_*(\sigma)\sigma^{2-p} \nrightarrow 0$, then there exists a sequence $\sigma_k \rightarrow 0$ and a constant $c>0$ such that $\tau_*(\sigma_k)\sigma_k^{2-p}\geq c$, for all $k$. Choose a convergent subsequence $\{\sigma_{k_n}\}$ and denote $\lim_{k_n\rightarrow \infty}\tau_*(\sigma_{k_n})\sigma_{k_n}^{2-p}=\alpha \geq c$ (note $\alpha$ can be $+\infty$). We use Fatou's lemma to get
\begin{eqnarray*}
\liminf_{k_n\rightarrow \infty} \mathbb{E}(\eta_p(X+\sigma_{k_n} Z; \sigma_{k_n}^{2-p}\tau_*(\sigma_{k_n}))-X)^2 &\geq& \mathbb{E} \liminf_{k_n\rightarrow \infty}(\eta_p(X+\sigma_{k_n} Z; \sigma_{k_n}^{2-p}\tau_*(\sigma_{k_n}))-X)^2 \\
&\geq&\mathbb{E} \min((\eta_p(X;\alpha)-X)^2,X^2)>0
\end{eqnarray*}
Hence, we have 
\begin{eqnarray*}
\liminf_{k_n\rightarrow \infty} \mathbb{E}(\eta_p(X/\sigma_{k_n}+Z;\tau_*(\sigma_{k_n}))-X/\sigma_{k_n})^2&=&\lim_{k_n\rightarrow \infty}\frac{1}{\sigma^2_{k_n}} \cdot \liminf_{k_n\rightarrow \infty} \mathbb{E}(\eta_p(X+\sigma_{k_n}Z;\sigma_{k_n}^{2-p}\tau_*(\sigma_{k_n}))-X)^2 \\
&=&+\infty
\end{eqnarray*}
which implies $\liminf_{k_n\rightarrow \infty} R_p(\tau_*(\sigma_{k_n}),\sigma_{k_n})=+\infty$. However, since $\tau_*({\sigma_{k_n}})$ is the optimal thresholding value, we know $R_p(\tau_*(\sigma_{k_n}),\sigma_{k_n}) \leq R_p(0,\sigma_{k_n})=1$, for every $k_n$. This is a contradiction. Similarly, if $\tau_*(\sigma) \nrightarrow \infty$, there exists a sequence $\sigma_k \rightarrow 0$ and a finite constant $\alpha \geq 0$ such that $\tau_*(\sigma_k) \rightarrow \alpha$. By similar arguments as in the previous proof (see \eqref{eq:upperboundder0} for example), we can apply dominated convergence theorem to obtain,
\begin{equation}\label{contra:beta}
\lim_{k\rightarrow \infty} R_p(\tau_*(\sigma_k),\sigma_k)=(1-\epsilon)\mathbb{E}\eta^2_p(Z;\alpha)+\epsilon > \epsilon.
\end{equation}
On the other hand, since $\tau_*(\sigma_k)$ is the optimal thresholding value, we know
\[
\lim_{k\rightarrow \infty} R_p(\tau_*(\sigma_k),\sigma_k) \leq \lim_{k \rightarrow \infty}R_p(\beta,\sigma_k)=(1-\epsilon)\mathbb{E}\eta^2_p(Z;\beta)+\epsilon,
\]
for any finite $\beta$. Letting $\beta \rightarrow \infty$ on both sides of the above inequality yields 
\[
\lim_{k\rightarrow \infty} R_p(\tau_*(\sigma_k),\sigma_k) \leq \epsilon,
\] 
which contradicts \eqref{contra:beta}.
\end{proof}

\vspace{0.4cm}

\begin{lemma}\label{sharp:rate}
Suppose $\mathbb{P}_G(|U|>\mu)=1$ with $\mu$ being a fixed positive number and $\mathbb{E}_G|U|^2 < \infty$, then for every $0<p<1$,
\begin{eqnarray*}
\lim_{\sigma \rightarrow 0} \frac{\sigma^{2-2p}}{(\tau_*(\sigma))^{\frac{2p-1}{2-p}}\phi(c_p(\tau_*(\sigma))^{\frac{1}{2-p}})} = \frac{(1-\epsilon)c_p(\eta^+_p(c_p;1))^2}{\epsilon p^2(2-p)\mathbb{E}|U|^{2p-2}}.
\end{eqnarray*}
\end{lemma}

\vspace{0.3cm}

\begin{proof}
We recall some properties of the proximal operator $\eta_p(u;\lambda)$ that will be used multiple times in the proof. For further information, see the proofs of Lemmas \ref{lem:derivativeproperties} and \ref{lem:derivativeproperties2}.
\begin{enumerate}
\item[(a)] $\frac{\partial \eta_p(u;\lambda)}{\partial \lambda}=\frac{-p\eta^{p-1}_p(u;\lambda)}{1+\lambda p(p-1)\eta^{p-2}_p(u;\lambda)}$, for $u>c_p \lambda^{\frac{1}{2-p}}$
\item[(b)] $\frac{\partial \eta_p(u;\lambda)}{\partial u}=\frac{1}{1+\lambda p(p-1)\eta^{p-2}_p(u;\lambda)}$, for $u>c_p\lambda^{\frac{1}{2-p}}$.
\item[(c)] $u-\eta_p(u;\lambda)=p\lambda \eta^{p-1}_p(u;\lambda)$, for $u>c_p\lambda^{\frac{1}{2-p}}$.
\end{enumerate} 
Let $F(u)$ denote the CDF of $|U|$. We first decompose $R_p(\tau,\sigma)$ to the following terms:
\begin{eqnarray}
R_p(\tau,\sigma)&=& 2(1-\epsilon)\int_{c_p\tau^{\frac{1}{2-p}}}^{\infty}\eta^2_p(z;\tau)\phi(z)dz+\epsilon \int_{\mu}^{\infty}\int_{-u/\sigma+c_p\tau^{\frac{1}{2-p}}}^{\infty}(\eta_p(u/\sigma+z;\tau)-u/\sigma)^2\phi(z)dzdF(u)+   \nonumber \\
&&\epsilon \int_{\mu}^{\infty} \int_{u/\sigma+c_p\tau^{\frac{1}{2-p}}}^{\infty}(\eta_p(-u/\sigma+z;\tau)+u/\sigma)^2\phi(z)dzdF(u)+\epsilon \int_{\mu}^{\infty} \int_{-u/\sigma-c_p\tau^{\frac{1}{2-p}}}^{-u/\sigma+c_p\tau^{\frac{1}{2-p}}}\frac{u^2}{\sigma^2}\phi(z)dzdF(u)   \nonumber  \\
&\triangleq&R_1+R_2+R_3+R_4 \label{leq1:one}
\end{eqnarray}
From the proof of Lemma \ref{rough:rate}, it is straightforward to see that $\tau_*(\sigma)$ is non-zero and finite. Since $\tau_*(\sigma)$ is the optimal thresholding value and$\frac{\partial R_p(\tau, \sigma)}{\partial \tau}$ is differentiable (according to Lemma \ref{dervlambda:cont}), we conclude that $\tau_*(\sigma)$ satisfies $\frac{\partial R_p(\tau_*(\sigma),\sigma)}{\partial \tau}=0$. For notational simplicity, below we use $\tau$ and $\tau_*$  interchangeably. Now we analyze the partial derivative of the four terms in \eqref{leq1:one} separately. For the first term,
\begin{eqnarray}\label{eq:R1formulatotal}
\frac{\partial R_1}{\partial \tau}= \frac{-2(1-\epsilon)c_p\tau^{\frac{p-1}{2-p}}}{2-p}(\eta^+_p(c_p\tau^{\frac{1}{2-p}};\tau))^2\phi(c_p\tau^{\frac{1}{2-p}})-4(1-\epsilon)p\int_{c_p\tau^{\frac{1}{2-p}}}^{\infty}\frac{\eta^p_p(z;\tau)}{1+\tau p(p-1)\eta_p^{p-2}(z;\tau)}\phi(z)dz,
\end{eqnarray}
where we have used property (a). We now compare the order of the two terms on the right hand side of the above equality. According to Lemma \ref{lem:lowerboundonx}, we can conclude that $1+\tau p(p-1)\eta_p^{p-2}(z;\tau)$ is bounded away from zero, for $z\geq c_q\tau^{\frac{1}{2-p}}$.  
Hence, combining with the fact $|\eta_p(z;\tau)|\leq |z|$ (according to Lemma \ref{lem:derivativeproperties}), we know there exists a positive constant $C$ such that
\begin{eqnarray*}
&& \Big |\int_{c_p\tau^{\frac{1}{2-p}}}^{\infty}\frac{\eta^p_p(z;\tau)}{1+\tau p(p-1)\eta_p^{p-2}(z;\tau)}\phi(z)dz\Big |\\
 &\leq& C\int_{c_p\tau^{\frac{1}{2-p}}}^{\infty} z^p\phi(z)dz \overset{(i)}{=} Cc_p^{p-1}\tau^{\frac{p-1}{2-p}}\phi(c_p\tau^{\frac{1}{2-p}})+C\int_{c_p\tau^{\frac{1}{2-p}}}^{\infty}(p-1)z^{p-2}\phi(z)dz  \\
&\leq&Cc_p^{p-1}\tau^{\frac{p-1}{2-p}}\phi(c_p\tau^{\frac{1}{2-p}})+C(p-1)c_p^{p-2}\tau^{-1}\int_{c_p\tau^{\frac{1}{2-p}}}^{\infty}\phi(z)dz \\
&\overset{(ii)}{\leq}& O(\tau^{\frac{p-1}{2-p}}\phi(c_p\tau^{\frac{1}{2-p}}))+O(\tau^{\frac{p-3}{2-p}}\phi(c_q\tau^{\frac{1}{2-p}}))=O(\tau^{\frac{p-1}{2-p}}\phi(c_p\tau^{\frac{1}{2-p}})).
\end{eqnarray*}
To obtain Equality (i) we used integration by parts. To obtain Inequality (ii), we have used $\int_{t}^{\infty}\phi(z)dz\sim \frac{1}{t}\phi(t)$, as $t \rightarrow \infty$. Now we discuss the order of the first term in \eqref{eq:R1formulatotal}. Since according to Lemma \ref{lem:scaleinv}, $\eta^+_p(c_p\tau^{\frac{1}{2-p}};\tau)=\tau^{\frac{1}{2-p}}\eta^+_p(c_p;1)$, we know the first term is of order $\tau^{\frac{p+1}{2-p}}\phi(c_p\tau^{\frac{1}{2-p}})$. Hence, we can conclude that
\begin{equation}
\lim_{\sigma \rightarrow 0} \frac{\partial R_1}{\partial \tau} / (\tau^{\frac{p+1}{2-p}}\phi(c_p\tau^{\frac{1}{2-p}}))=\frac{-2(1-\epsilon)c_p(\eta^+_p(c_p;1))^2}{2-p}.  \label{leq1:two}
\end{equation}
For the last term $R_4$, we can do the following calculations:
\begin{eqnarray*}
\frac{\partial R_4}{\partial \tau}&= &\mathbb{E} \Bigg [\frac{\epsilon U^2c_p\tau^{\frac{p-1}{2-p}}}{\sigma^2(2-p)}(\phi(-U/\sigma+c_p\tau^{\frac{1}{2-p}})+\phi(-U/\sigma-c_p\tau^{\frac{1}{2-p}})) \Bigg ] \\
&{\leq}& \frac{2\epsilon c_p\tau^{\frac{p-1}{2-p}}}{\sigma^2(2-p)} \mathbb{E}[U^2\phi(c_p\tau^{\frac{1}{2-p}}-|U|/\sigma)] {\leq}  \frac{2\epsilon c_p\tau^{\frac{p-1}{2-p}}}{\sigma^2(2-p)}\phi(c_p\tau^{\frac{1}{2-p}}-\mu/\sigma) \mathbb{E}U^2,
\end{eqnarray*}
where the last inequality is based on $|U|/\sigma \geq \mu /\sigma \gg c_p  \tau^{\frac{1}{2-p}}$ from Lemma \ref{rough:rate}. Again using $|\mu|/\sigma \gg c_p\tau^{\frac{1}{2-p}}$, it is straightforward to confirm that 
\begin{eqnarray*}
\lim_{\sigma \rightarrow 0}  \frac{\phi(-\mu/\sigma+c_p\tau^{\frac{1}{2-p}})}{\sigma^2\phi(c_p\tau^{\frac{1}{2-p}})}=0.
\end{eqnarray*}
Therefore, we have
\begin{eqnarray}\label{leq1:three}
\lim_{\sigma \rightarrow 0} \frac{\partial R_4}{\partial \tau}/ (\tau^{\frac{p+1}{2-p}}\phi(c_p\tau^{\frac{1}{2-p}}))=0.
\end{eqnarray}
We now discuss the calculation of $\frac{\partial R_2}{\partial \tau}$. We have
\begin{eqnarray*}
\frac{\partial R_2}{\partial \tau} &=&\frac{-\epsilon c_p}{2-p}\tau^{\frac{p-1}{2-p}}\int_{\mu}^{\infty}(\eta^+_p(c_p\tau^{\frac{1}{2-p}};\tau)-u/\sigma)^2\phi(-u/\sigma+c_p\tau^{\frac{1}{2-p}})dF(u)+  \\
&&2\epsilon \int_{\mu}^{\infty} \int_{-u/\sigma+c_p\tau^{\frac{1}{2-p}}}^{\infty}(\eta_p(u/\sigma+z;\tau)-u/\sigma-z)\partial_2\eta_p(u/\sigma+z;\tau)\phi(z)dz dF(u) \\
&&+ 2\epsilon  \int_{\mu}^{\infty} \int_{-u/\sigma+c_p\tau^{\frac{1}{2-p}}}^{\infty} z\partial_2\eta_p(u/\sigma+z;\tau)\phi(z)dzdF(u)  \\
&=&\frac{-\epsilon c_p}{2-p}\tau^{\frac{p-1}{2-p}}\int_{\mu}^{\infty}(\eta^+_p(c_p\tau^{\frac{1}{2-p}};\tau)-u/\sigma)^2\phi(-u/\sigma+c_p\tau^{\frac{1}{2-p}})dF(u)+ \\
&&2\epsilon \tau p^2 \int_{\mu}^{\infty}\int_{-u/\sigma+c_p\tau^{\frac{1}{2-p}}}^{\infty}\frac{\eta^{2p-2}_p(u/\sigma+z;\tau)}{1+\tau p(p-1)\eta_p^{p-2}(u/\sigma+z;\tau)}\phi(z)dzdF(u) + \\
&& -2\epsilon p \int_{\mu}^{\infty} \int_{-u/\sigma+c_p\tau^{\frac{1}{2-p}}}^{\infty}\frac{\eta^{p-1}_p(u/\sigma+z;\tau)}{1+\tau p(p-1)\eta_p^{p-2}(u/\sigma+z;\tau)}z\phi(z)dzdF(u)   \\
&\triangleq& S_1+S_2+S_3.
\end{eqnarray*}
We have used properties (a) and (c) in the above derivations. We then analyze the above three terms separately. For $S_3$, integration by parts combined with property (b) gives
\begin{eqnarray*}
\hspace{-0.3cm}
S_3&=& \frac{-2\epsilon p(\eta_p^+(c_p\tau^{\frac{1}{2-p}};\tau))^{p-1}}{1+\tau p(p-1)(\eta_p^+(c_p\tau^{\frac{1}{2-p}};\tau))^{p-2}} \int_{\mu}^{\infty} \phi(-u/\sigma+c_p\tau^{\frac{1}{2-p}})dF(u) \\
&&-2\epsilon p(p-1)  \int_{\mu}^{\infty}\int_{-u/\sigma+c_p\tau^{\frac{1}{2-p}}}^{\infty} \frac{\eta^{p-2}_p(u/\sigma+z;\tau)}{(1+\tau p(p-1)\eta_p^{p-2}(u/\sigma+z;\tau))^2}\phi(z)dzdF(u) \\
&&+2\epsilon p^2(p-1)(p-2)\tau \int_{\mu}^{\infty} \int_{-u/\sigma+c_p\tau^{\frac{1}{2-p}}}^{\infty} \frac{\eta^{2p-4}_p(u/\sigma+z;\tau)}{(1+\tau p(p-1)\eta_p^{p-2}(u/\sigma+z;\tau))^3}\phi(z)dzdF(u) \triangleq T_1+T_2+T_3.
\end{eqnarray*}
Choosing a positive constant $0<v<\mu$, note that
\begin{eqnarray}\label{eq:T2bound1}
\frac{T_2}{\sigma^{2-p}}&=&-2\epsilon p(p-1)  \int_{\mu}^{\infty} \int_{-u/\sigma+c_p\tau^{\frac{1}{2-p}}}^{-u/\sigma+v/\sigma} \frac{\sigma^{p-2}\eta^{p-2}_p(u/\sigma+z;\tau)}{(1+\tau p(p-1)\eta_p^{p-2}(u/\sigma+z;\tau))^2}\phi(z)dzdF(u) \nonumber  \\
&&-2\epsilon p(p-1) \mathbb{E}\Bigg [\frac{\mathbbm{1}(Z+|U|/\sigma>v/\sigma)\eta^{p-2}_p(|U|+\sigma Z;\sigma^{2-p}\tau)}{(1+\tau p(p-1)\eta_p^{p-2}(|U|/\sigma+Z;\tau))^2} \Bigg].  
\end{eqnarray}
It is straightforward to check that when $u>c_p \tau^{\frac{1}{2-p}}$, there exists a positive constant $C_0$ such that $1+\tau p(p-1)\eta_p^{p-2}(u;\tau)>C_0>0$. Also since $\eta_p(u;\tau)$ is a non-decreasing function of $u>0$, we can have
\begin{eqnarray*}
\Bigg |\frac{\mathbbm{1}(Z+|U|/\sigma>v/\sigma)\eta^{p-2}_p(|U|+\sigma Z;\sigma^{2-p}\tau)}{(1+\tau p(p-1)\eta_p^{p-2}(|U|/\sigma+Z;\tau))^2}\Bigg | \leq (C_0^2)^{-1}\eta^{p-2}_p(v;\sigma^{2-p}\tau)\leq C_0^{-2}\eta^{p-2}_p(v;1),
\end{eqnarray*}
for sufficiently small $\sigma$ (recall $\eta_p(u;\tau)$ is a non-increasing function of $\tau$ when $\eta_p(u;\tau)>0$). Because $\sigma^{2-p}\tau_*(\sigma)\rightarrow 0$, as $\sigma \rightarrow 0$ from Lemma \ref{rough:rate}, we can easily see 
\begin{eqnarray*}
\lim_{\sigma \rightarrow 0} \frac{\mathbbm{1}(Z+|U|/\sigma>v/\sigma)\eta^{p-2}_p(|U|+\sigma Z;\sigma^{2-p}\tau)}{(1+\tau p(p-1)\eta_p^{p-2}(|U|/\sigma+Z;\tau))^2}=\lim_{\sigma \rightarrow 0}  \frac{\mathbbm{1}(\sigma Z+|U|>v)\eta^{p-2}_p(|U|+\sigma Z;\sigma^{2-p}\tau)}{(1+\sigma^{2-p}\tau p(p-1)\eta_p^{p-2}(|U|+\sigma Z;\sigma^{2-p}\tau))^2}=|U|^{p-2}.
\end{eqnarray*}
We can then use dominated convergence theorem to conclude,
\begin{eqnarray}\label{eq:T2bound2}
\lim_{\sigma \rightarrow 0} \mathbb{E}\frac{\mathbbm{1}(Z+|U|/\sigma>v/\sigma)\eta^{p-2}_p(|U|+\sigma Z;\sigma^{2-p}\tau)}{(1+\tau p(p-1)\eta_p^{p-2}(|U|/\sigma+Z;\tau))^2} =\mathbb{E} |U|^{p-2}.
\end{eqnarray}
Moreover, we can use similar arguments to obtain,
\begin{eqnarray}\label{eq:T2bound3}
&& \Bigg|\int_{\mu}^{\infty} \int_{-u/\sigma+c_p\tau^{\frac{1}{2-p}}}^{-u/\sigma+v/\sigma} \frac{\sigma^{p-2}\eta^{p-2}_p(u/\sigma+z;\tau)}{(1+\tau p(p-1)\eta_p^{p-2}(u/\sigma+z;\tau))^2}\phi(z)dzdF(u) \Bigg | \nonumber \\
&\leq& C_0^{-2}\tau^{-1}(\eta^+_p(c_p;1))^{p-2}\sigma^{p-2}\int_{\mu}^{\infty}\int_{-u/\sigma+c_p\tau^{\frac{1}{2-p}}}^{-u/\sigma+v/\sigma}\phi(z)dzdF(u) \nonumber \\
&\leq & C_0^{-2}\tau^{-1}(\eta^+_p(c_p;1))^{p-2}\sigma^{p-2} (v/\sigma-c_p\tau^{\frac{1}{2-p}} )\phi(-\mu/\sigma+v/\sigma)\rightarrow  0, \mbox{~as~} \sigma \rightarrow 0,
\end{eqnarray}
where the last inequality uses the fact that $\int_{-u/\sigma+c_p\tau^{\frac{1}{2-p}}}^{-u/\sigma+v/\sigma}\phi(z)dz < (v/\sigma-c_p\tau^{\frac{1}{2-p}} )\phi(-u/\sigma+v/\sigma)$ and that $u> \mu$. 
Combining \eqref{eq:T2bound1}, \eqref{eq:T2bound2} and \eqref{eq:T2bound3} we have
\begin{eqnarray*}
\lim_{\sigma \rightarrow 0}\frac{T_2}{\sigma^{2-p}}=-2\epsilon p(p-1)\mathbb{E}|U|^{p-2}.
\end{eqnarray*}
Since $T_3$ and $S_2$ admit similar integral forms as $T_2$'s, we can follow similar calculation steps to derive,
\begin{eqnarray}\label{leq1:four}
\lim_{\sigma \rightarrow 0}\frac{T_3}{\sigma^{4-2p}\tau_*}=2\epsilon p^2(p-1)(p-2)\mathbb{E}|U|^{2p-4}, \quad \lim_{\sigma \rightarrow 0}\frac{S_2}{\sigma^{2-2p}\tau_*}=2\epsilon p^2\mathbb{E}|U|^{2p-2}.
\end{eqnarray}

Furthermore, by applying Lemma \ref{rough:rate}, it is not hard to see
\begin{eqnarray}\label{leq1:five}
\lim_{\sigma \rightarrow 0}\frac{T_1}{\sigma^{2-p}}=0, \quad \lim_{\sigma \rightarrow 0}\frac{S_1}{\sigma^{2-p}}=0.
\end{eqnarray}
Combing the results about $T_1,T_2$ and $T_3$, we have
\begin{eqnarray}\label{leq1:six}
\lim_{\sigma \rightarrow 0}\frac{S_3}{\sigma^{2-p}}=\lim_{\sigma\rightarrow 0}\frac{T_1}{\sigma^{2-p}}+\lim_{\sigma\rightarrow 0} \frac{T_2}{\sigma^{2-p}}+\lim_{\sigma\rightarrow 0}\frac{T_3}{\sigma^{4-2p}\tau_*} \cdot \tau_* \sigma^{2-p}=   -2\epsilon p(p-1)\mathbb{E}|U|^{p-2}.
\end{eqnarray}
Putting \eqref{leq1:four}, \eqref{leq1:five} and \eqref{leq1:six} together, we obtain the order of $\frac{\partial R_2}{\partial \tau}$,
\begin{eqnarray}\label{leq1:eight}
\lim_{\sigma \rightarrow 0}\frac{\partial R_2}{\partial \tau}/ (\sigma^{2-2p}\tau_*)=2\epsilon p^2\mathbb{E}|U|^{2p-2}.
\end{eqnarray}
From Equation \eqref{leq1:one}, we observe that $R_3$ is only different from $R_2$ by a sign of $u$, hence we can follow the same derivation strategy as the one presented for analyzing $\partial R_2/\partial \tau$. We only highlight the differences for calculating $T_2/\sigma^{2-p}$ (we are using the same notations):
\begin{enumerate}
\item $\lim_{\sigma \rightarrow 0} \frac{\mathbbm{1}(Z-|U|/\sigma>v/\sigma)\eta^{p-2}_p(-|U|+\sigma Z;\sigma^{2-p}\tau)}{(1+\tau p(p-1)\eta_p^{p-2}(-|U|/\sigma+Z;\tau))^2}=0$,
\item $\Bigg|\int_{\mu}^{\infty} \int_{u/\sigma+c_p\tau^{\frac{1}{2-p}}}^{u/\sigma+v/\sigma} \frac{\sigma^{p-2}\eta^{p-2}_p(-u/\sigma+z;\tau)}{(1+\tau p(p-1)\eta_p^{p-2}(-u/\sigma+z;\tau))^2}\phi(z)dzdF(u) \Bigg | \leq C_0^{-2}\tau^{-1}\eta^{p-2}_p(c_p;1)\sigma^{p-2} (v/\sigma-c_p\tau^{\frac{1}{2-p}} )\phi(\mu/\sigma+c_p\tau^{\frac{1}{2-p}})=o(1)$.
\end{enumerate}
Therefore, we can conclude $\lim_{\sigma \rightarrow 0}\frac{T_2}{\sigma^{2-p}}=0$. Similar arguments hold for other integral calculations. We finally obtain
\begin{eqnarray}\label{leq1:seven}
\lim_{\sigma \rightarrow 0} \frac{\partial R_3}{\partial \tau} / (\sigma^{2-2p}\tau_*)=0.
\end{eqnarray}
Collecting the results from \eqref{leq1:one}, \eqref{leq1:two}, \eqref{leq1:three}, \eqref{leq1:eight}, and \eqref{leq1:seven}, we achieve
\begin{eqnarray*}
\lim_{\sigma \rightarrow 0}\sigma^{2-2p}\tau_*2\epsilon p^2\mathbb{E}|U|^{2p-2}\cdot  \left[\frac{(\tau_*)^{\frac{p+1}{2-p}}\phi(c_p(\tau_*)^{\frac{1}{2-p}})2(1-\epsilon)c_p(\eta^+_p(c_p;1))^2}{2-p} \right]^{-1}=1.
\end{eqnarray*}
After a simplification, we reach the conclusion 
\[
\lim_{\sigma \rightarrow 0} \frac{ \sigma^{2-2p}}{(\tau_*)^{\frac{2p-1}{2-p}}\phi(c_p(\tau_*)^{\frac{1}{2-p}})}= \frac{(1-\epsilon)c_p(\eta^+_p(c_p;1))^2}{\epsilon p^2(2-p)\mathbb{E}|U|^{2p-2}}.
\]
\end{proof}


\vspace{0.3cm}

\begin{lemma} \label{sharp:rate:risk}
Suppose $\mathbb{P}_G(|U|>\mu)=1$ with $\mu$ being a fixed positive number and $\mathbb{E}_G|U|^2 < \infty$, then for $0<p<1$,
\[
R_p(\tau_*(\sigma),\sigma)=\epsilon+\epsilon p^2 \mathbb{E}|U|^{2p-2} (\tau_*)^2\sigma^{2-2p}+o((\tau_*)^2\sigma^{2-2p}).
\]
\end{lemma}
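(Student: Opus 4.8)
The plan is to evaluate $R_p(\tau_*(\sigma),\sigma)$ directly using the four-term decomposition $R_p=R_1+R_2+R_3+R_4$ already recorded in \eqref{leq1:one}, and to show that all of the non-trivial behaviour is carried by $R_2$. First I would dispose of $R_3$ and $R_4$: in both, the region of integration forces $z$ to lie near $\mp u/\sigma$, and since Lemma \ref{rough:rate} gives $\tau_*\sigma^{2-p}\to 0$ (hence $c_p\tau_*^{1/(2-p)}=o(u/\sigma)$ for every $u\geq\mu$), the corresponding Gaussian mass is of order $\phi(u/\sigma)$, which is exponentially small and therefore $o(\tau_*^2\sigma^{2-2p})$. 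For $R_1$ I would invoke the rate from Lemma \ref{sharp:rate}: combining $\eta_p(z;\tau)\leq z$ with the tail estimate $\int_t^\infty z^2\phi(z)\,dz\sim t\phi(t)$ shows $R_1=O\!\big(\tau_*^{1/(2-p)}\phi(c_p\tau_*^{1/(2-p)})\big)$, and substituting $\phi(c_p\tau_*^{1/(2-p)})\asymp\sigma^{2-2p}\tau_*^{(1-2p)/(2-p)}$ from Lemma \ref{sharp:rate} yields $R_1=O\!\big(\sigma^{2-2p}\tau_*^{(2-2p)/(2-p)}\big)$. Since $(2-2p)/(2-p)<2$ for $0<p<1$, this is $o(\tau_*^2\sigma^{2-2p})$ and hence negligible.

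The heart of the argument is the expansion of $R_2$. On its region of integration one has $u/\sigma+z>c_p\tau^{1/(2-p)}$, so $\eta_p(u/\sigma+z;\tau)>0$ and property (c) from the proof of Lemma \ref{sharp:rate} gives
\[
\eta_p(u/\sigma+z;\tau)-u/\sigma = z - p\tau\,\eta_p^{p-1}(u/\sigma+z;\tau).
\]
Squaring and integrating splits $R_2$ into three pieces: a quadratic-in-$z$ term $\epsilon\int_\mu^\infty\!\int z^2\phi$, a cross term $-2\epsilon p\tau\int_\mu^\infty\!\int z\,\eta_p^{p-1}\phi$, and the term $\epsilon p^2\tau^2\int_\mu^\infty\!\int \eta_p^{2p-2}\phi$. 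Because $\mathbb{P}_G(|U|>\mu)=1$ and the region exhausts $\mathbb{R}$ up to an exponentially small tail, the first piece converges to $\epsilon\,\mathbb{E}Z^2=\epsilon$, the claimed leading term.

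For the remaining two pieces I would use large-argument asymptotics. Since $\tau_*\sigma^{2-p}\to 0$, for fixed $z$ and $u\geq\mu$ one has $\eta_p(u/\sigma+z;\tau_*)/(u/\sigma)\to 1$ by Lemma \ref{lem:asymtoteta_p} (indeed $\tau_*(u/\sigma)^{p-2}=u^{p-2}\sigma^{2-p}\tau_*\to 0$). Hence $\eta_p^{2p-2}(u/\sigma+z;\tau_*)\sim u^{2p-2}\sigma^{2-2p}$, and after dominated convergence together with $\int\phi=1$ the third piece becomes $\epsilon p^2\tau_*^2\sigma^{2-2p}\,\mathbb{E}|U|^{2p-2}+o(\tau_*^2\sigma^{2-2p})$, matching the target second term. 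For the cross term, writing $z\phi(z)=-\phi'(z)$ and integrating by parts converts it into $\int(\eta_p^{p-1})'\phi=(p-1)\int\eta_p^{p-2}\partial_1\eta_p\,\phi$, which is of order $\tau_*\sigma^{2-p}\,\mathbb{E}|U|^{p-2}$; since $\tau_*\sigma^{2-p}/(\tau_*^2\sigma^{2-2p})=\sigma^{p}/\tau_*\to 0$, this is negligible. Collecting $R_1,\dots,R_4$ then gives the stated expansion with $\tau=\tau_*(\sigma)$.

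The main obstacle is the rigorous control of the $\eta_p^{2p-2}$ (and, in the cross term, $\eta_p^{p-2}$) integrals, since these quantities blow up as the argument approaches the threshold $c_p\tau^{1/(2-p)}$ where $\eta_p$ vanishes. I would handle this exactly as in the proof of Lemma \ref{sharp:rate}: the denominators $1+\tau p(p-1)\eta_p^{p-2}$ stay bounded below by a positive constant $C_0$ on $\{\eta_p>0\}$ by Lemma \ref{lem:lowerboundonx}, and the monotonicity of $\eta_p$ in its first argument together with the scaling of Lemma \ref{lem:scaleinv} produces an integrable, $\sigma$-uniform dominating function. Splitting the region near the threshold (where the Gaussian weight is exponentially small, since $u/\sigma\gg c_p\tau^{1/(2-p)}$) from the bulk (where $\eta_p\asymp u/\sigma$) is the device that makes the dominated-convergence passage go through and pins down the constant $\epsilon p^2\mathbb{E}|U|^{2p-2}$.
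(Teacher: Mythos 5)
Your proposal is correct and follows essentially the same route as the paper: the same four-term decomposition of $R_p$, the same identity $\eta_p(u/\sigma+z;\tau)-u/\sigma-z=-p\tau\eta_p^{p-1}(u/\sigma+z;\tau)$ to split $R_2$ into the $\epsilon$, cross, and $p^2\tau^2\eta_p^{2p-2}$ pieces (the paper's $Q_3$, $Q_2$, $Q_1$), the same dominated-convergence device near the threshold, and the same use of Lemma \ref{sharp:rate} to show $R_1=O(\sigma^{2-2p}\tau_*^{(2-2p)/(2-p)})=o(\tau_*^2\sigma^{2-2p})$. The only cosmetic difference is that you bound $R_1$ directly via the Gaussian tail estimate rather than first extracting its exact leading constant by integration by parts as the paper does.
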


\vspace{0.3cm}

\textit{Proof:}
We will use the same notation that was introduced in \eqref{leq1:one}, and we analyze $R_1, R_2, R_3$ and $R_4$ separately. Regarding $R_2$, we have
\begin{eqnarray*}
R_2-\epsilon&=&\epsilon \int_{\mu}^{\infty} \int_{-u/\sigma+c_p\tau^{\frac{1}{2-p}}}^{\infty} (\eta_p(u/\sigma+z;\tau)-u/\sigma-z)^2\phi(z)dzdF(u)+ \\
&&\epsilon \int_{\mu}^{\infty} \int_{-u/\sigma+ c_p\tau^{\frac{1}{2-p}}}^{\infty} 2z(\eta_p(u/\sigma+z;\tau)-u/\sigma-z)\phi(z)dzdF(u) -\epsilon \int_{\mu}^{\infty} \int_{-\infty}^{-u/\sigma+c_p\tau^{\frac{1}{2-p}}}z^2\phi(z)dzdF(u) \\
& \triangleq& Q_1+Q_2+Q_3 
\end{eqnarray*}
By property (c) listed in the proof of Lemma \ref{sharp:rate}, we have
\[
Q_1=\epsilon p^2\tau^2\int_{\mu}^{\infty} \int_{-u/\sigma+c_p \tau^{\frac{1}{2-p}}}^{\infty}\eta^{2p-2}_p(u/\sigma+z;\tau)\phi(z)dzdF(u).
\]
Using the same arguments (see the analysis of $T_2$) as in the proof of Lemma \ref{sharp:rate}, it is straightforward to show that
\begin{eqnarray*}
\lim_{\sigma \rightarrow 0} \frac{Q_1}{\tau^2\sigma^{2-2p}}=\epsilon p^2\mathbb{E}|U|^{2p-2}.
\end{eqnarray*}
Regarding $Q_2$, using integration by parts and property (b) given at the beginning of the proof of Lemma \ref{sharp:rate}, we obtain
\begin{eqnarray*}
Q_2&=&2\epsilon(\eta^+_p(c_p\tau^{\frac{1}{2-p}};\tau)-c_p \tau^{\frac{1}{2-p}}) \int_{\mu}^{\infty} \phi(-u/\sigma+c_p\tau^{\frac{1}{2-p}})dF(u)- \\
&&2\epsilon\int_{\mu}^{\infty} \int_{-u/\sigma+c_p\tau^{\frac{1}{2-p}}}^{\infty}\frac{\tau p(p-1)\eta_p^{p-2}(u/\sigma+z;\tau)}{1+\tau p(p-1)\eta_p^{p-2}(u/\sigma+z;\tau)}\phi(z)dzdF(u).
\end{eqnarray*}
We can directly see the first term on the right hand side of the above equation is bounded by $O(\tau^{\frac{1}{2-p}}\phi(\mu/(2\sigma)))$. By using the same technique applied for analyzing $T_2$, we then know the second term is of order $\tau \sigma^{2-p}$. Hence, we have
\begin{eqnarray*}
\lim_{\sigma \rightarrow 0}\frac{Q_2}{\tau \sigma^{2-p}}=2\epsilon p(1-p)\mathbb{E}|U|^{p-2}.
\end{eqnarray*}
We now analyze $Q_3$. A simple integration by parts yields,
\begin{eqnarray*}
Q_3=-\epsilon \left [\int_{\mu}^{\infty}(u/\sigma-c_p\tau^{\frac{1}{2-p}})\phi(u/\sigma-c_p\tau^{\frac{1}{2-p}})dF(u)+\int_{\mu}^{\infty} \int_{u/\sigma-c_p\tau^{\frac{1}{2-p}}}^{\infty}\phi(z)dzdF(u)  \right ].
\end{eqnarray*}
Using the fact that $\int_{t}^{\infty}\phi(z)dz\sim \frac{1}{t}\phi(t)$ and $\mu/\sigma-c_p \tau^{\frac{1}{2-p}}\rightarrow +\infty$, we can derive
\begin{eqnarray*}
&& \int_{\mu}^{\infty}(u/\sigma-c_p\tau^{\frac{1}{2-p}})\phi(u/\sigma-c_p\tau^{\frac{1}{2-p}})dF(u) \leq \int_{\mu}^{\infty}(u/\sigma)\phi(u/(2\sigma))dF(u) \leq \phi(\mu/(2\sigma)) \mathbb{E}|U|, \\
&&\int_{\mu}^{\infty} \int_{u/\sigma-c_p\tau^{\frac{1}{2-p}}}^{\infty}\phi(z)dzdF(u) \leq \int_{\mu/\sigma-c_p\tau^{\frac{1}{2-p}}}^{\infty}\phi(z)dz \leq O(1/(\mu/\sigma-c_p\tau^{\frac{1}{2-p}})\phi(\mu/\sigma-c_p\tau^{\frac{1}{2-p}})).
\end{eqnarray*}
It is then straightforward to confirm that
\begin{eqnarray*}
\lim_{\sigma \rightarrow 0} \frac{Q_3}{\tau \sigma^{2-p}}=0.
\end{eqnarray*}
Combing the results of $Q_1,Q_2$ and $Q_3$, we obtain
\begin{eqnarray}\label{risk:one}
\lim_{\sigma \rightarrow 0 } \frac{R_2-\epsilon}{\tau^2\sigma^{2-2p}}=\epsilon p^2\mathbb{E}|U|^{2p-2}.
\end{eqnarray}
Because of the minor difference between $R_2$ and $R_3$ (the sign of $u$, see more explanations in the proof of Lemma \ref{sharp:rate}), it is not hard to get
\begin{eqnarray}\label{risk:two}
\lim_{\sigma \rightarrow 0}\frac{R_3}{\tau^2 \sigma^{2-2p}}=0.
\end{eqnarray}
Regarding $R_4$, we first derive an upper bound in the following way:
\begin{eqnarray*}
R_4&=&\epsilon \int_{\mu}^{\infty} \int_{-u/\sigma-c_p\tau^{\frac{1}{2-p}}}^{-u/\sigma+c_p\tau^{\frac{1}{2-p}}}\frac{u^2}{\sigma^2}\phi(z)dzdF(u) \\
&\leq& 2\epsilon c_p\tau^{\frac{1}{2-p}}\sigma^{-2} \int_{\mu}^{\infty} u^2\phi(-u/\sigma+c_p\tau^{\frac{1}{2-p}})dF(u) \leq 2\epsilon c_p\tau^{\frac{1}{2-p}}\sigma^{-2}\phi(-\mu/\sigma+c_p\tau^{\frac{1}{2-p}})\mathbb{E}|U|^2.
\end{eqnarray*}
Since $\sigma^{2-p}\tau \rightarrow 0,$ as $\sigma \rightarrow 0$, we have
\begin{eqnarray}\label{risk:three}
\lim_{\sigma \rightarrow 0}\frac{R_4}{\tau^2\sigma^{2-2p}}=0.
\end{eqnarray}
We fianlly analyze $R_1$. A simple integration by parts proves
\begin{eqnarray}
\lefteqn{2(1-\epsilon)\int_{c_p\tau^{\frac{1}{2-p}}}^{\infty}\eta^2_p(z;\tau)\phi(z)dz = -2(1-\epsilon)\int_{c_p\tau^{\frac{1}{2-p}}}^{\infty} \frac{\eta^2_p(z;\tau)}{z}d\phi(z)} \nonumber \\
& =& -2(1-\epsilon) \left[ \frac{\eta^2_p(z;\tau)}{z}\phi(z)\right]_{c_p \tau^{\frac{1}{2-p}}}^\infty + 2(1-\epsilon)\int_{c_p \tau^{\frac{1}{2-p}}}^{\infty} \frac{2z\eta_p(z;\tau)\partial_1\eta_p (z;\tau)- \eta^2_p(z;\tau)}{z^2}\phi(z)dz. \label{r1:two}
\end{eqnarray}
Since $|\eta_p(z; \tau)| \leq |z|$, for the second integral in \eqref{r1:two} we have
\begin{eqnarray}
\lefteqn{\Bigg |\int_{c_p \tau^{\frac{1}{2-p}}}^{\infty} \frac{2z\eta_p(z;\tau)\partial_1\eta_p(z;\tau)- \eta^2_p(z;\tau)}{z^2}\phi(z)dz \Bigg | \leq \int_{c_p \tau^{\frac{1}{2-p}}}^{\infty} \frac{2}{|1+\tau p(p-1)\eta_p^{p-2}(z;\tau)|}\phi(z)dz }  \nonumber \\
&&+ \int_{c_p\tau^{\frac{1}{2-p}}}^{\infty} \phi(z)dz  \overset{(1)}{\leq}  \int_{c_p \tau^{\frac{1}{2-p}}}^{\infty} \frac{2}{C}\phi(z)dz  + \int_{c_p \tau^{\frac{1}{2-p}}}^{\infty} \phi(z)dz \leq (2C^{-1}+1) \int_{c_p\tau^{\frac{1}{2-p}}}^{\infty} \phi(z)dz \nonumber  \\
&\leq& O(\tau^{\frac{1}{p-2}}\phi(c_p \tau^{\frac{1}{2-p}})),
\end{eqnarray}
where $(1)$ is due to Lemma \ref{lem:lowerboundonx}. Hence the dominant term in \eqref{r1:two} is the first term. More specifically, we have
\begin{eqnarray}\label{risk:four}
\lim_{\sigma \rightarrow 0} \frac{R_1}{\tau^{\frac{1}{2-p}}\phi(c_p \tau^{\frac{1}{2-p}})}=\frac{2(1-\epsilon)(\eta^+_p(c_p;1))^2}{c_p}.
\end{eqnarray}
Putting the results from \eqref{risk:one}, \eqref{risk:two}, \eqref{risk:three}, \eqref{risk:four}, and Lemma \ref{sharp:rate}, we can conclude
\begin{eqnarray*}
\lim_{\sigma \rightarrow 0} \frac{R_p(\tau_*,\sigma)-\epsilon}{\tau^2\sigma^{2-2p}}=\epsilon p^2 \mathbb{E}|U|^{2p-2}.
\end{eqnarray*}
$\hfill \Box$


\vspace{.2cm}

\subsubsection{Auxiliary result for $p=0$}\label{ssec:thm4auxp0}
In this previous section we characterized the risk of $ R_p(\tau_*(\sigma),\sigma)$ for every $0<p<1$. The bounds we derived and the analysis we provided are not correct for $p=0$. In this section we derive the corresponding expansion for $p=0$.  Similar to the previous section consider two random variable $X \sim (1-\epsilon)\Delta_0+ \epsilon G$ and $U \sim G$,  and define $\mathbb{E}_G(f(U)) \triangleq \int f(u) dG(u)$ and $\mathbb{P}_G(U \in A) \triangleq \mathbb{E}_G (\mathbb{I} (U \in B))$.

\begin{proposition}\label{lem:riskbehsmallsigmazero}
Suppose $\mathbb{E}|U|^2 < \infty$ and $\mathbb{P}(|U|>\mu)=1$, where $\mu=\sup_v \{v : \mathbb{P}(|U|>v)=1 \} >0$, then for $p=0$,
\begin{eqnarray*}
 R_p(\tau_*(\sigma),\sigma)=\epsilon+o(\phi(\tilde{\mu}\sigma^{-1})),
\end{eqnarray*}
where $\tilde{\mu}$ is any constant that smaller than $\frac{\mu}{2}$.
\end{proposition}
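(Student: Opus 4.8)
The plan is to work directly with the hard-thresholding form $\eta_0(u;\tau) = u\,\mathbb{I}(|u| > \sqrt{2\tau})$ and to exploit the fact that, unlike the case $0<p<1$, hard thresholding introduces no bias on the retained coordinates. Writing $t \triangleq \sqrt{2\tau}$ and $s \triangleq |U|/\sigma$ (by the oddness of $\eta_0$ from Lemma \ref{lem:scaleinv}(i) and the symmetry of $Z$ we may assume $U \geq \mu > 0$), evaluating $(\eta_0(s+Z;\tau) - s)^2$ on the events $\{|s+Z|>t\}$ and $\{|s+Z|\le t\}$ and using $\mathbb{E}[Z^2\mathbb{I}(|s+Z|>t)] = 1 - \mathbb{E}[Z^2\mathbb{I}(|s+Z|\le t)]$ gives the decomposition
\[
R_0(\tau,\sigma) - \epsilon = (1-\epsilon)\,\mathbb{E}[Z^2\mathbb{I}(|Z|>t)] + \epsilon\,\mathbb{E}\big[(s^2 - Z^2)\,\mathbb{I}(|s+Z|\le t)\big].
\]
The whole proof reduces to controlling these two tail terms, together with a crude localization of the optimal threshold $t_* = \sqrt{2\tau_*}$.

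For the upper bound I would not use $\tau_*$ directly; since $\tau_*$ is the minimizer, $R_0(\tau_*,\sigma) \le R_0(\tau_0,\sigma)$ for the explicit choice $t_0 = \mu/(2\sigma)$. With this choice the first term is $\sim 2(1-\epsilon)t_0\phi(t_0) = O(\sigma^{-1}\phi(\mu/(2\sigma)))$ (the integration-by-parts/Mills-ratio asymptotics $\mathbb{E}[Z^2\mathbb{I}(|Z|>t)]\sim 2t\phi(t)$), while the cross term is bounded above by $\epsilon\,\mathbb{E}[s^2\mathbb{I}(|s+Z|\le t_0)]$; since $s \ge \mu/\sigma$ forces $Z \le t_0 - s \le -\mu/(2\sigma)$, this is at most $\sigma^{-2}\mathbb{E}U^2\cdot\Phi(-\mu/(2\sigma)) = O(\sigma^{-1}\phi(\mu/(2\sigma)))$, using $\mathbb{E}_G|U|^2<\infty$. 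Hence $R_0(\tau_*,\sigma) - \epsilon = O(\sigma^{-1}\phi(\mu/(2\sigma)))$, which is $o(\phi(\tilde\mu/\sigma))$ for every $\tilde\mu < \mu/2$, because $\phi(\mu/(2\sigma))/\phi(\tilde\mu/\sigma) = \exp(-(\mu^2/4 - \tilde\mu^2)/(2\sigma^2))$ decays faster than any power of $\sigma$.

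The lower bound $R_0(\tau_*,\sigma) \ge \epsilon$ (whence $R_0 - \epsilon \ge 0$) follows once the cross term is shown to be nonnegative at $\tau_*$. The key algebraic observation is that, setting $W = s + Z$ (so $W \sim N(s,1)$ and $s^2 - Z^2 = W(2s-W)$), the cross term equals $\epsilon\,\mathbb{E}_U\int_{-t}^t W(2s - W)\phi(W - s)\,dW$; pairing $W$ with $-W$ reduces its integrand to $W[(2s-W)\phi(s-W) - (2s+W)\phi(s+W)]$, which is strictly positive for $0 < W < t$ whenever $t<s$, since the ratio $\tfrac{2s-W}{2s+W}e^{2sW}$ exceeds $1$. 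Thus the cross term is $\ge 0$ provided $t_* < \mu/\sigma$. To secure this I localize $t_*$ by a comparison argument: if $\sigma t_* \to \ell > \mu/2$ along a subsequence, then restricting the miss term $\epsilon\,\mathbb{E}[s^2\mathbb{I}(|s+Z|\le t_*)]$ to the event $\{\mu < |U| < \mu+\delta\}$ (of positive probability for every $\delta>0$, since $\mu$ is the essential infimum of $|U|$) yields a contribution of strictly larger exponential order than $\phi(\mu/(2\sigma))$ — governed by $\phi((\mu-\ell)/\sigma)$ when $\ell<\mu$ and blowing up polynomially when $\ell\ge\mu$ — contradicting the upper bound just established. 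Hence $\limsup_{\sigma\to 0}\sigma t_* \le \mu/2 < \mu$, so $t_* < \mu/\sigma$ eventually.

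I expect the localization of $t_*$ to be the main obstacle: the nonnegativity of the cross term is clean but holds only below the signal level $\mu/\sigma$, so the lower bound hinges entirely on ruling out an over-aggressive optimal threshold, and this must be done through the optimality comparison rather than by solving the stationarity equation $\partial R_0/\partial\tau = 0$ explicitly (which, while computable as in Lemma \ref{sharp:rate}, is cumbersome to invert at $p=0$). Combining the two bounds gives $\epsilon \le R_0(\tau_*,\sigma) \le \epsilon + o(\phi(\tilde\mu\sigma^{-1}))$ for every $\tilde\mu < \mu/2$, which is precisely the claim.
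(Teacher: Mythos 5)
Your proposal is correct, but it reaches the conclusion by a genuinely different route from the paper. The paper proves Proposition \ref{lem:riskbehsmallsigmazero} in two stages: Lemma \ref{exact:rate:zero} pins down the optimal threshold exactly, $\sqrt{\tau_*(\sigma)}\,\sigma \to \mu/(2c_0)$, by differentiating $R_0$ in $\tau$, decomposing $\partial R_0/\partial\tau$ into four pieces, and extracting the limit from the stationarity equation $\partial R_0(\tau_*,\sigma)/\partial\tau=0$ via Fatou-type arguments; Lemma \ref{exact:riskrate:zero} then substitutes this rate into a four-term decomposition of the risk to get $R_0(\tau_*,\sigma)-\epsilon=O(\sqrt{\tau_*}\phi(c_0\sqrt{\tau_*}))$. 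You avoid the stationarity equation entirely: the upper bound comes from the suboptimality comparison $R_0(\tau_*,\sigma)\le R_0(\tau_0,\sigma)$ at the explicit threshold $t_0=\mu/(2\sigma)$, and the lower bound $R_0(\tau_*,\sigma)\ge\epsilon$ comes from your identity $R_0(\tau,\sigma)-\epsilon=(1-\epsilon)\mathbb{E}[Z^2\mathbb{I}(|Z|>t)]+\epsilon\,\mathbb{E}[W(2s-W)\mathbb{I}(|W|\le t)]$ together with the $W\leftrightarrow -W$ pairing, which makes the sign structure transparent in a way the paper's $R_1+\cdots+R_4$ decomposition does not. Your route is more elementary (no differentiation under the integral, no inversion of a first-order condition) and needs only the one-sided localization $\limsup_{\sigma\to 0}\sigma t_*\le \mu/2$; the paper's route buys the exact asymptotics of $\tau_*(\sigma)$, which it reuses in the proof of Theorem \ref{thm:lownoisehardthresh1} to show $\tau_*(\sigma_\ell)/\tau_*(\sigma_w)\to 1$, so your weaker localization would not be a drop-in replacement there. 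One step of your write-up needs tightening: in the contradiction argument for $\ell\in(\mu/2,\mu)$ you compare the \emph{miss term} $\epsilon\,\mathbb{E}[s^2\mathbb{I}(|s+Z|\le t_*)]$ against the upper bound on $R_0(\tau_*,\sigma)-\epsilon$, but the miss term does not by itself lower-bound $R_0-\epsilon$ (it is offset by $-\epsilon\,\mathbb{E}[Z^2\mathbb{I}(|s+Z|\le t_*)]$, which is of the same order since $Z^2\asymp s^2$ on that event). The fix is already in your toolbox: since $t_*<s$ eventually when $\ell<\mu$, the pairing argument makes the per-$U$ \emph{cross term} nonnegative, so $R_0-\epsilon$ is bounded below by the cross term restricted to $\{\mu<|U|<\mu+\delta\}$, and a direct estimate of $\int_0^{t_*}w(2s-w)\phi(s-w)\,dw$ there gives order $\sigma^{-1}\phi((\mu+\delta-\ell)/\sigma)\gg\sigma^{-1}\phi(\mu/(2\sigma))$ for $\delta$ small, which yields the contradiction; the case $\ell\ge\mu$ is handled as you say since the per-$U$ cross term is always $\ge -1$ and blows up polynomially on the restricted event.
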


The roadmap of the proof is similar to that of Proposition \ref{lem:riskbehsmallsigma}. We characterize the convergence rate of $\tau_*(\sigma)$ and derive the asymptotic formula for $R_0(\tau_*(\sigma),\sigma)$ in Lemma \ref{exact:rate:zero} and Lemma \ref{exact:riskrate:zero}, respectively. Proposition \ref{lem:riskbehsmallsigmazero} then follows directly by combing the results of these two lemmas. For the sake of brevity, we will skip some calculation details. 

\vspace{0.3cm}

\begin{lemma}\label{exact:rate:zero}
Suppose $\mathbb{E}|U|^2 < \infty$ and $P(|U|>\mu)=1$, where $\mu=\sup_v \{v : P(|U|>v)=1 \} >0$. Then for $p=0$, 
\[
\lim_{\sigma \rightarrow 0} \sqrt{\tau_*(\sigma)} \sigma =\frac{\mu}{2c_0},
\]
where $c_0$ is the constant $c_p$ with $p=0$ introduced in Lemma \ref{lem:threshform}.
\end{lemma}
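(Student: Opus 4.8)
The plan is to mimic the strategy of Lemma \ref{sharp:rate}, working from the first-order optimality condition $\partial R_0(\tau_*(\sigma),\sigma)/\partial\tau = 0$ (justified since, by Lemma \ref{dervlambda:cont}, $R_0$ is differentiable in $\tau$ and $\tau_*$ is interior), but tracking the different dominant balance that arises for $p=0$. First I would record the ingredients special to $p=0$: here $\eta_0(z;\tau)=z\,\mathbb{I}(|z|>\sqrt{2\tau})$ and, by Lemma \ref{lem:threshform}, $c_0=\sqrt2$, so the cutoff $c_0\tau^{1/2}$ appearing in the decomposition \eqref{leq1:one} is exactly the hard threshold $\sqrt{2\tau}$. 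Writing $t:=\sqrt{2\tau_*(\sigma)}$, the claim $\sqrt{\tau_*}\sigma\to\mu/(2c_0)$ is equivalent to $\sigma t\to\mu/2$. Before invoking the optimality condition I would establish rough a priori bounds: (i) $\tau_*\to\infty$, and (ii) $\limsup_{\sigma\to0}\sigma t<\mu$. Both follow from $R_0(\tau_*,\sigma)\to\epsilon$, which holds because the comparison policy $\sqrt{2\tau}\sigma=\mu/2$ drives false alarms and missed detections to zero while the detected-signal error ($Z$ on the detected set) contributes $\epsilon$. Indeed, if $\tau_*$ stayed bounded the false-alarm term $R_1=2(1-\epsilon)\int_{\sqrt{2\tau_*}}^\infty z^2\phi(z)\,dz$ would be bounded below away from zero; and if $\sigma t\to\mu$ (or larger) the missed-detection term $R_4$ would blow up like $(\mu/\sigma)^2$ times a non-vanishing probability, since by the definition $\mu=\sup\{v:\mathbb{P}(|U|>v)=1\}$ there is positive mass of the law of $|U|$ arbitrarily close to $\mu$. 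Bound (ii) guarantees $\mu/\sigma-t\to+\infty$, which is exactly what legitimizes the Gaussian-tail expansions below.

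Next, differentiating the four pieces of \eqref{leq1:one} (moving $\partial_\tau$ inside the integrals via Lemma \ref{dervlambda:cont}), I would compute that the false-alarm piece gives $\partial R_1/\partial\tau = -2(1-\epsilon)\sqrt{2\tau}\,\phi(\sqrt{2\tau})\asymp e^{-\tau}$ up to a polynomial factor, whereas the three signal pieces $R_2,R_3,R_4$ all produce derivatives whose integrands carry the Gaussian factor $\phi(u/\sigma-\sqrt{2\tau})$. Using the monotonicity of the exponent $(u/\sigma-t)^2$ in $u\ge\mu$ together with $\mathbb{E}|U|^2<\infty$ (to control the upper tail) and the accumulation of mass at $\mu$ (to bound from below), a Laplace/Mills-ratio estimate shows that $\partial(R_2+R_3+R_4)/\partial\tau$ has leading exponential order $\exp\!\big(-\tfrac12(\mu/\sigma-t)^2\big)$. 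The precise polynomial prefactors of these signal terms are irrelevant for the final limit; only the exponent matters.

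The optimality condition then forces the two exponential rates to agree to leading order: taking logarithms in $-\partial R_1/\partial\tau=\partial(R_2+R_3+R_4)/\partial\tau$, where both sides are of scale $\Theta(\sigma^{-2})$, yields $\tau_* = \tfrac12(\mu/\sigma - t)^2\,(1+o(1))$, i.e. $t^2=(\mu/\sigma-t)^2(1+o(1))$. Since $\mu/\sigma-t>0$, taking square roots gives $2t=(\mu/\sigma)(1+o(1))$, hence $\sigma t\to\mu/2$ and therefore $\sqrt{\tau_*}\sigma\to\mu/(2c_0)$. (This argument also upgrades the rough bounds: a subsequence with $\sigma t\to0$ is excluded because then $|\partial R_1/\partial\tau|$ would strictly dominate the signal terms, contradicting $\partial_\tau R_0=0$.) I expect the main obstacle to be the Laplace-type estimate of the signal integrals under the weak hypotheses available — establishing the log-asymptotics $-\tfrac12(\mu/\sigma-t)^2(1+o(1))$ using only $\mu=\operatorname{essinf}|U|$ and $\mathbb{E}|U|^2<\infty$, and verifying carefully that the sub-exponential prefactors and the (smaller) $R_2,R_3$ contributions cannot perturb the leading balance. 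The rough bound $\mu/\sigma-t\to+\infty$ is the key enabler that keeps every tail expansion in its asymptotic regime.
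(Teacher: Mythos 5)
Your overall strategy is the same as the paper's: both proofs extract the limit of $\sigma\sqrt{\tau_*}$ from the stationarity condition $\partial_\tau R_0(\tau_*,\sigma)=0$ by balancing the exponential rate $e^{-\tau_*}$ of the false-alarm derivative $\partial R_1/\partial\tau$ against the rate $\exp\big(-\tfrac12(\mu/\sigma-c_0\sqrt{\tau_*})^2\big)$ carried by the signal terms. The paper's device of dividing the stationarity equation by $\sqrt{\tau}\,\phi(c_0\sqrt{\tau})$ and demanding that $\mathbb{E}\big[|U|\exp\big(|U|(|U|-2\sigma c_0\sqrt{\tau_*})/(-2\sigma^2)\big)\big]$ have a finite positive limit is exactly the ratio of your two exponential scales, and its Fatou-based dichotomy ($\alpha>\mu/(2c_0)$ versus $\alpha<\mu/(2c_0)$) is your log-asymptotic balance; your diagonal argument for the lower Laplace bound using only the essential infimum of $|U|$ plays the role of the paper's Fatou step.

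There is, however, one step that fails as stated: the a priori \emph{strict} bound $\limsup_{\sigma\to0}\sigma t<\mu$ (with $t=\sqrt{2\tau_*}=c_0\sqrt{\tau_*}$), which you justify by saying that if $\sigma t\to\mu$ then $R_4$ blows up like $(\mu/\sigma)^2$ times a non-vanishing probability. That probability is not non-vanishing: for a fixed realization with $|U|>\mu$ the missed-detection probability is about $\Phi\big((\sigma t-|U|)/\sigma\big)$, which tends to $0$ for every such $U$ when $\sigma t\to\mu$, and whether $\mathbb{E}[|U|^2\sigma^{-2}\Phi(\cdot)]$ diverges then depends on how fast the law of $|U|$ accumulates at $\mu$ — something the hypotheses do not control (e.g.\ if $P(|U|<\mu+\delta)=e^{-1/\delta}$ one checks that $R_4\to0$ even with $\sigma t=\mu$). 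From $R_0(\tau_*,\sigma)\le R_0(0,\sigma)=1$ one only gets the non-strict bound $\limsup\sigma t\le\mu$, which is precisely what the paper proves ($\alpha\le\mu/c_0$). Consequently $\mu/\sigma-t\to+\infty$ — which you correctly identify as the enabler of every tail expansion you use — is not available a priori, and your expansion-based balance is unjustified on subsequences with $\sigma t\to\mu$. The gap is repairable without new ideas: in that regime crude bounds give $\sigma^2\log\partial_\tau(R_2+R_3+R_4)\to0$ while $\sigma^2\log|\partial_\tau R_1|\to-\mu^2/2<0$, so stationarity already fails; this is effectively how the paper proceeds, folding all of $\alpha\in(\mu/(2c_0),\mu/c_0]$ into a single Fatou contradiction rather than excluding $\alpha=\mu/c_0$ beforehand. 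You should restructure your preliminary bounds accordingly.
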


\vspace{.3cm}

\begin{proof}
By using the same arguments presented in the proof of Lemma \ref{rough:rate}, we can obtain $\tau_*(\sigma) \rightarrow \infty$, as $\sigma \rightarrow 0$. Now we consider an arbitrary convergent sequence $\sigma_k \rightarrow 0$, as $k \rightarrow \infty$, and show $\sqrt{\tau_*(\sigma_k)}\sigma_k \rightarrow \mu/(2c_0)$. Denote $\lim_{k\rightarrow \infty} \sqrt{\tau_*(\sigma_k)}\sigma_k  =\alpha$. For notational simplicity, below we use exchangeably $\tau$ and $\tau_*$. Suppose $\alpha > \mu/c_0$, then by Fatou's lemma, we have
\[
\liminf_{k\rightarrow \infty} \mathbb{E} (\eta_0(U/\sigma_k+Z;\tau_*)-U/\sigma_k)^2 \geq \liminf_{k\rightarrow \infty} \mathbb{E} [\mathbbm{1}(|U+\sigma_kZ| \leq c_0\sqrt{\tau_*}\sigma_k)U^2/\sigma^2_k]=\infty.
\]
On the other hand, $R_0(\tau_*,\sigma_k)\leq R_0(0, \sigma_k)=1$. This is a contradiction. Hence we get $\alpha \leq \mu/c_0$. Next we aim to show $\alpha \leq \mu/(2c_0)$. Due to the explicit formula $\eta_0(u;\tau)=u\mathbbm{1}(|u|>c_0\sqrt{\tau})$, it is straightforward to derive
\begin{eqnarray}
R_0(\tau,\sigma)&=&2(1-\epsilon)\Big [c_0\sqrt{\tau}\phi(c_0\sqrt{\tau})+\int_{c_0\sqrt{\tau}}^{\infty}\phi(z)dz  \Big]+\epsilon \mathbb{E} \Bigg [ \Big(c_0\sqrt{\tau}-\frac{|U|}{\sigma}\Big)\phi \Big(c_0\sqrt{\tau}-\frac{|U|}{\sigma}\Big)+\int_{c_0\sqrt{\tau}-\frac{|U|}{\sigma}}^{\infty}\phi(z)dz \Bigg ] \nonumber \\ 
&&+ \epsilon \mathbb{E} \Bigg [ \Big(c_0\sqrt{\tau}+\frac{|U|}{\sigma}\Big)\phi \Big(c_0\sqrt{\tau}+\frac{|U|}{\sigma}\Big)+\int_{c_0\sqrt{\tau}+\frac{|U|}{\sigma}}^{\infty}\phi(z)dz \Bigg ]+\epsilon \mathbb{E} \int_{-c_0\sqrt{\tau}-\frac{|U|}{\sigma}}^{c_0\sqrt{\tau}-\frac{|U|}{\sigma}}\frac{|U|^2}{\sigma^2}\phi(z)dz, \nonumber \\
&\triangleq& R_1+R_2+R_3+R_4.  \label{pzero:eq:one}
\end{eqnarray}
Moreover, it is straightforward to show that $\tau_*(\sigma_k)$, the optimal thresholding value, is finite and non-zero, and hence we have $\frac{\partial R_0(\tau_*(\sigma_k),\sigma_k)}{\partial \tau}=0$, i.e., 
\begin{equation}
\frac{\partial R_1}{\partial \tau}+\frac{\partial R_2}{\partial \tau}+\frac{\partial R_3}{\partial \tau}+\frac{\partial R_4}{\partial \tau}=0, \label{pzero:eq:two}
\end{equation}
where we know\footnote{The condition $\mathbb{E}|U|^2<\infty$ enables us to apply dominated convergence theorem to exchange the differentiation and expectation in the calculation of the partial derivatives.}
\begin{eqnarray*}
&&\frac{\partial R_1}{\partial \tau}=(\epsilon-1)c_0^3\sqrt{\tau}\phi(c_0\sqrt{\tau}), \quad \frac{\partial R_2}{\partial \tau}= \frac{-\epsilon c_0}{2\sqrt{\tau}}\mathbb{E}\Bigg[ \Big(c_0\sqrt{\tau}-\frac{|U|}{\sigma_k}\Big)^2\phi \Big(c_0\sqrt{\tau}-\frac{|U|}{\sigma_k}\Big)  \Bigg], \\
&&\frac{\partial R_3}{\partial \tau}= \frac{-\epsilon c_0}{2\sqrt{\tau}}\mathbb{E}\Bigg[ \Big(c_0\sqrt{\tau}+\frac{|U|}{\sigma_k}\Big)^2\phi \Big(c_0\sqrt{\tau}+\frac{|U|}{\sigma_k}\Big)  \Bigg], \\
&&\frac{\partial R_4}{\partial \tau} = \frac{\epsilon c_0}{2\sqrt{\tau}\sigma_k^2} \mathbb{E}\Bigg\{ |U|^2\Bigg[ \phi \Big(c_0\sqrt{\tau}-\frac{|U|}{\sigma_k}\Big) +\phi \Big(c_0\sqrt{\tau}+\frac{|U|}{\sigma_k}\Big)  \Bigg] \Bigg \}.
\end{eqnarray*}
A few more algebra calculations yields,
\begin{eqnarray*}
&& \frac{\partial R_2}{\partial \tau}+\frac{\partial R_3}{\partial \tau}+\frac{\partial R_4}{\partial \tau}=\frac{-\epsilon c^2_0}{2\sigma_k}\mathbb{E}\Bigg[ \Big(c_0\sqrt{\tau}\sigma_k-2|U|)\phi \Big(c_0\sqrt{\tau}-\frac{|U|}{\sigma_k}\Big)  \Bigg] + \\
&&\frac{-\epsilon c^2_0}{2\sigma_k}\mathbb{E}\Bigg[ \Big(c_0\sqrt{\tau}\sigma_k+2|U|)\phi \Big(c_0\sqrt{\tau}+\frac{|U|}{\sigma_k}\Big)  \Bigg]  \sim \frac{1}{\sigma_k}\mathbb{E} \Bigg [|U|\phi \Big(c_0\sqrt{\tau}-\frac{|U|}{\sigma_k}\Big) \Bigg],
\end{eqnarray*}
where the notation $\sim$ indicates that they have the same orders in terms of $\sigma_k \rightarrow 0$. Hence, dividing both sides of Equation \eqref{pzero:eq:two} by $\sqrt{\tau}\phi(c_0\sqrt{\tau})$ and letting $k \rightarrow \infty$ shows 
\begin{eqnarray}
0 < \lim_{k\rightarrow \infty} \mathbb{E}\Bigg[ |U|\mbox{exp}\Big(\frac{|U|(|U|-2\sigma_kc_0\sqrt{\tau}) }{-2\sigma^2_k}\Big)\Bigg ]  < \infty.  \label{pzero:eq:three}
\end{eqnarray}
If $\alpha > \mu/(2c_0)$, then we see
\begin{eqnarray*}
\mathbb{E}\Bigg[ |U|\mbox{exp}\Big(\frac{|U|(|U|-2\sigma_kc_0\sqrt{\tau}) }{-2\sigma^2_k}\Big)\Bigg ]  \geq 
\mathbb{E}\Bigg[ |U|\mbox{exp}\Big(\frac{|U|(|U|-2\sigma_kc_0\sqrt{\tau}) }{-2\sigma^2_k}\Big)\cdot \mathbbm{1}(|U|<2\alpha c_0)\Bigg ]  \rightarrow +\infty.
\end{eqnarray*}
We have used Fatou's lemma to obtain the last limit. Obviously the inequality above contradicts \eqref{pzero:eq:three}. Thus we obtain an upper bound $\mu/(2c_0)$ for $\alpha$. Finally we would like to derive $\alpha \geq \mu/(2c_0)$. First note that since $\alpha \leq \mu/(2c_0)$, it is not hard to confirm that when $k$ is large,
\[
\frac{\partial R_4}{\partial \tau} \leq \frac{\epsilon c_0\mathbb{E}|U|^2}{\sqrt{\tau}\sigma^2_k}\phi \Big(c_0\sqrt{\tau}-\frac{\mu}{\sigma_k}\Big)=O\Bigg(\frac{1}{\sqrt{\tau}\sigma_k^2}\phi \Big(c_0\sqrt{\tau}-\frac{\mu}{\sigma_k}\Big) \Bigg).
\]
Based on the inequality above, we can further obtain
\begin{equation}
\Bigg |\frac{\partial R_2}{\partial \tau}+\frac{\partial R_3}{\partial \tau}+\frac{\partial R_4}{\partial \tau}\Bigg | \leq O\Bigg(\frac{1}{\sqrt{\tau}\sigma_k^2}\phi \Big(c_0\sqrt{\tau}-\frac{\mu}{\sigma_k}\Big) \Bigg). \label{pzero:eq:four}
\end{equation}
Now suppose $\alpha < \mu/(2c_0)$, then it follows that
\begin{eqnarray*}
\frac{1}{\sqrt{\tau}\sigma_k^2}\phi \Big(c_0\sqrt{\tau}-\frac{\mu}{\sigma_k}\Big) \cdot \frac{1}{\sqrt{\tau}\phi(c_0\sqrt{\tau})}=\frac{1}{\tau \sigma_k^2}\mbox{exp}\Big(\frac{\mu (\mu-2c_0\sigma_k\sqrt{\tau})}{-2\sigma_k^2} \Big)=o(1).
\end{eqnarray*}
However, this fact combined with \eqref{pzero:eq:four} implies that if we divide Equation \eqref{pzero:eq:two} by $\sqrt{\tau}\phi(c_0\sqrt{\tau})$ and letting $k \rightarrow \infty$, we would get
\[
(\epsilon-1)c_0^3=0,
\]
which is a contradiction. Therefore, we have showed that for an arbitrary convergent sequence $\sigma_k \rightarrow 0$, we have $\sqrt{\tau_*(\sigma_k)}\sigma_k \rightarrow \mu/(2c_0)$, as $k \rightarrow \infty$. This completes the proof.
\end{proof}

\vspace{0.3cm}

\begin{lemma} \label{exact:riskrate:zero}
Suppose $\mathbb{E}|U|^2 < \infty$ and $P(|U|>\mu)=1$, where $\mu=\sup_v \{v : P(|U|>v)=1 \}>0$. Then, for $p=0$
\begin{eqnarray*}
 R_0(\tau_*(\sigma),\sigma)=\epsilon+O(\sqrt{\tau_*}\phi(c_0\sqrt{\tau_*})),
\end{eqnarray*}
\end{lemma}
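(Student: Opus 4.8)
The plan is to start from the four-term decomposition $R_0(\tau,\sigma)=R_1+R_2+R_3+R_4$ already recorded in \eqref{pzero:eq:one}, evaluated at the optimal threshold $\tau_*=\tau_*(\sigma)$, and to show that each term differs from its $\sigma\to0$ limit by at most $O(\sqrt{\tau_*}\phi(c_0\sqrt{\tau_*}))$. Using $c_0\sqrt{\tau_*}\to\infty$ and $|U|/\sigma-c_0\sqrt{\tau_*}\to+\infty$ (both consequences of Lemma \ref{exact:rate:zero}, which gives $\sqrt{\tau_*}\sigma\to\mu/(2c_0)$), one checks that $R_1,R_3,R_4\to0$ while $R_2\to\epsilon$, so that $R_0-\epsilon=R_1+(R_2-\epsilon)+R_3+R_4$, and it remains to bound each summand.

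The easy term is $R_1$: from its explicit form together with the standard tail estimate $\int_t^\infty\phi(z)\,dz\sim t^{-1}\phi(t)$, one gets $R_1=2(1-\epsilon)c_0\sqrt{\tau_*}\phi(c_0\sqrt{\tau_*})(1+o(1))=O(\sqrt{\tau_*}\phi(c_0\sqrt{\tau_*}))$ directly. The delicate terms are $R_2-\epsilon$ and $R_4$, since after simplification they involve $\phi(|U|/\sigma-c_0\sqrt{\tau_*})$, whose argument tends to $\mu/(2\sigma)$, the same leading order as $c_0\sqrt{\tau_*}$. I would therefore avoid comparing these Gaussian densities head-on (their ratio depends on the second-order behaviour of $\tau_*$, which Lemma \ref{exact:rate:zero} does not resolve) and instead exploit the first-order optimality condition $\partial R_0(\tau_*,\sigma)/\partial\tau=0$. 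Precisely, the computation already carried out in the proof of Lemma \ref{exact:rate:zero} equates $\partial R_1/\partial\tau=(\epsilon-1)c_0^3\sqrt{\tau_*}\phi(c_0\sqrt{\tau_*})$ with $\partial(R_2+R_3+R_4)/\partial\tau\sim\sigma^{-1}\mathbb{E}[|U|\phi(|U|/\sigma-c_0\sqrt{\tau_*})]$, which yields the key estimate
\[
\mathbb{E}\big[|U|\,\phi(|U|/\sigma-c_0\sqrt{\tau_*})\big]=O\big(\sigma\sqrt{\tau_*}\,\phi(c_0\sqrt{\tau_*})\big).
\]

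With this identity the remaining bounds are routine. Writing $s(U)\triangleq|U|/\sigma-c_0\sqrt{\tau_*}$, integration by parts gives $R_2-\epsilon=-\epsilon\,\mathbb{E}\int_{s(U)}^\infty z^2\phi(z)\,dz$; since $\int_s^\infty z^2\phi\le 2s\phi(s)$ for $s\ge1$ and $s(U)\le|U|/\sigma$, we obtain $|R_2-\epsilon|\le 2\epsilon\,\mathbb{E}[s(U)\phi(s(U))]\le 2\epsilon\sigma^{-1}\mathbb{E}[|U|\phi(s(U))]=O(\sqrt{\tau_*}\phi(c_0\sqrt{\tau_*}))$. For $R_4$ the mass $\int_{-c_0\sqrt{\tau_*}-|U|/\sigma}^{c_0\sqrt{\tau_*}-|U|/\sigma}\phi$ equals, by symmetry, $\int_{s(U)}^{s(U)+2c_0\sqrt{\tau_*}}\phi\le s(U)^{-1}\phi(s(U))$, so $R_4\le\epsilon\,\mathbb{E}\big[\tfrac{|U|^2}{\sigma^2}\,s(U)^{-1}\phi(s(U))\big]$; here I would use $|U|\ge\mu$ together with $\sigma c_0\sqrt{\tau_*}\to\mu/2<\mu$ (so that $|U|-\sigma c_0\sqrt{\tau_*}\ge|U|/2$) to reduce $\tfrac{|U|^2}{\sigma^2 s(U)}$ to $O(|U|/\sigma)$, and then invoke the key estimate once more. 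Finally $R_3=\epsilon\,\mathbb{E}\int_{c_0\sqrt{\tau_*}+|U|/\sigma}^\infty z^2\phi$ has a strictly larger lower limit than $R_2$'s and is negligible by comparison. Summing the four contributions gives $R_0(\tau_*,\sigma)-\epsilon=O(\sqrt{\tau_*}\phi(c_0\sqrt{\tau_*}))$.

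The main obstacle is exactly this near-cancellation in the signal terms: a naive bound would replace $\phi(|U|/\sigma-c_0\sqrt{\tau_*})$ by $\phi(\mu/\sigma-c_0\sqrt{\tau_*})$, and comparing that with $\phi(c_0\sqrt{\tau_*})$ would demand a second-order expansion of $\tau_*(\sigma)$ that is unavailable. Routing everything through the optimality condition circumvents this, and the one point requiring genuine care is the bookkeeping of the $|U|$-weights, so that the quantity $\mathbb{E}[|U|\phi(\cdot)]$ furnished by the first-order condition is precisely what dominates each of $R_2-\epsilon$, $R_3$ and $R_4$.
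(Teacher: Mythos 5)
Your proposal is correct and follows essentially the same route as the paper: the same four-term decomposition from \eqref{pzero:eq:one}, a direct tail bound for $R_1$, and crucially the same use of the first-order optimality condition (the paper's \eqref{pzero:eq:three}) to obtain $\mathbb{E}\big[|U|\phi(|U|/\sigma-c_0\sqrt{\tau_*})\big]=O(\sigma\sqrt{\tau_*}\phi(c_0\sqrt{\tau_*}))$, which then dominates $R_2-\epsilon$, $R_3$ and $R_4$ exactly as in the paper. Your explicit remark about why a naive comparison of the two Gaussian densities fails is a nice articulation of the point the paper handles implicitly.
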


\vspace{0.3cm}

\begin{proof}
We use the same notations from the proof of Lemma \ref{exact:rate:zero}. Then,
\begin{eqnarray*}
R_0(\tau_*(\sigma),\sigma)-\epsilon =R_1 + (R_2-\epsilon)+R_3+R_4.
\end{eqnarray*}
Using the fact that $\sqrt{\tau_*(\sigma)}\sigma \rightarrow \frac{\mu}{2c_0}$ according to Lemma \ref{exact:rate:zero}, from \eqref{pzero:eq:one} we can easily obtain
\begin{eqnarray*}
R_1 =O(\sqrt{\tau_*}\phi(c_0\sqrt{\tau_*})),~ R_2-\epsilon=O\Bigg(\mathbb{E}\Big[\frac{|U|}{\sigma}\phi \Big(c_0\sqrt{\tau_*}-\frac{|U|}{\sigma}\Big)\Big ]\Bigg),~R_3=O\Bigg(\mathbb{E}\Big[\frac{|U|}{\sigma}\phi \Big(c_0\sqrt{\tau_*}+\frac{|U|}{\sigma}\Big)\Big ]\Bigg)
\end{eqnarray*}
Regarding $R_4$, we have
\begin{eqnarray*}
R_4&=&\epsilon \mathbb{E} \Bigg [\frac{|U|^2}{\sigma^2}\cdot \Big(\int_{\frac{|U|}{\sigma}-c_0\sqrt{\tau_*}}^{\infty}\phi(z)dz-\int_{\frac{|U|}{\sigma}+c_0\sqrt{\tau_*}}^{\infty}\phi(z)dz\Big) \Bigg] \\
&\leq& \epsilon \mathbb{E} \Bigg [\frac{|U|^2}{\sigma^2}\cdot \int_{\frac{|U|}{\sigma}-c_0\sqrt{\tau_*}}^{\infty}\phi(z)dz\Bigg] \leq \epsilon \mathbb{E}\Bigg[ \frac{|U|}{\sigma}\phi \Big(\frac{|U|}{\sigma}-c_0\sqrt{\tau_*}\Big)\frac{|U|}{|U|-c_0\sigma \sqrt{\tau_*}}\Bigg]  \\
&\leq& O\Bigg(\mathbb{E}\Bigg[ \frac{|U|}{\sigma}\phi \Big(\frac{|U|}{\sigma}-c_0\sqrt{\tau_*}\Big) \Bigg] \Bigg).
\end{eqnarray*}
Furthermore, from \eqref{pzero:eq:three} we can see
\begin{eqnarray*}
\mathbb{E}\Bigg[ \frac{|U|}{\sigma}\phi \Big(\frac{|U|}{\sigma}-c_0\sqrt{\tau_*}\Big) \Bigg]  \cdot \frac{1}{\sqrt{\tau_*}\phi(c_0\sqrt{\tau_*})}=O(1).
\end{eqnarray*}
Putting together what we have derived so far shows
\[
R_0(\tau_*(\sigma),\sigma)-\epsilon =O(\sqrt{\tau_*}\phi(c_0\sqrt{\tau_*})).
\]
\end{proof}


\subsection{Proof of Proposition \ref{prop:ell_1pt}}\label{sec:proofprop1}

In order to prove this proposition, we require several preliminary results. Define
\[
\bar{\psi}_{\beta,1}(\sigma^2) \triangleq  \frac{1}{\delta}\mathbb{E} (\eta_1(X + \sigma Z; \beta \sigma) -X)^2,
\]
where $\beta$ denotes a fixed number greater than zero. 
\vspace{.2cm}

\begin{lemma}\label{lem:concavityell_1}\cite{donoho2009supporting}
For every $\beta>0$, $\bar{\psi}_{\beta,1}(\sigma^2)$ is a concave function of $\sigma^2$.
\end{lemma}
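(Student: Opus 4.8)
The plan is to reduce the statement to a one-dimensional fact about a single signal value and then certify concavity by a direct second-derivative computation that collapses to a sign check. Writing
\[
\bar\psi_{\beta,1}(\sigma^2) = \frac{1}{\delta}\,\mathbb{E}_X\, r(X,\sigma), \qquad r(x,\sigma) \triangleq \mathbb{E}_Z\big(\eta_1(x+\sigma Z;\beta\sigma)-x\big)^2,
\]
I would first observe that concavity in $\sigma^2$ is preserved under averaging against the law of $X$ and under multiplication by the positive constant $1/\delta$. Hence it suffices to show that for each fixed $x$ the map $\sigma^2\mapsto r(x,\sigma)$ is concave on $(0,\infty)$; continuity at the origin then extends concavity to $[0,\infty)$.

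Next I would exploit scale invariance. Applying Lemma~\ref{lem:scaleinv}(ii) with $p=1$ and $\alpha=\sigma$ gives $\eta_1(x+\sigma Z;\beta\sigma)=\sigma\,\eta_1(x/\sigma+Z;\beta)$, so that $r(x,\sigma)=\sigma^2\rho(x/\sigma)$, where $\rho(m)\triangleq\mathbb{E}_Z(\eta_1(m+Z;\beta)-m)^2$ is a smooth even function (smoothness because the Gaussian expectation mollifies the kink of $\eta_1$). Setting $v=\sigma^2$ and $m=xv^{-1/2}$, and using $dm/dv=-m/(2v)$, two applications of the chain rule on $r=v\rho(m)$ yield the compact identity
\[
\frac{d^2 r}{dv^2} = \frac{1}{4v}\big(m^2\rho''(m)-m\rho'(m)\big).
\]

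The heart of the argument is the evaluation of the bracket. Splitting $\mathbb{E}_Z$ over the three regions $\{m+Z>\beta\}$, $\{m+Z<-\beta\}$, $\{|m+Z|\le\beta\}$ and differentiating the resulting integrals, the boundary terms cancel and leave $\rho'(m)=2m\,\mathbb{P}(|m+Z|\le\beta)$. Differentiating once more, and again watching the boundary contributions cancel, gives $m^2\rho''(m)-m\rho'(m)=2m^3\big[\phi(\beta+m)-\phi(\beta-m)\big]$, so that
\[
\frac{d^2 r}{dv^2} = \frac{m^3}{2v}\big[\phi(\beta+m)-\phi(\beta-m)\big].
\]
For $m>0$ one has $|\beta+m|\ge|\beta-m|$, whence $\phi(\beta+m)\le\phi(\beta-m)$ and the product is $\le 0$; for $m<0$ both $m^3$ and the bracket change sign, again giving $\le 0$; and $m=0$ gives $0$. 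Thus $r''\le 0$ for every $m$, which establishes the claim.

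I expect the only genuine work—hence the main obstacle—to be the bookkeeping in computing $\rho'$ and $\rho''$: one must justify differentiating under the integral sign (the integrands are dominated by an integrable envelope of the form $\mathbb{E}(2|x|+\sigma|Z|)^2<\infty$, exactly as in the dominated-convergence arguments used elsewhere in Section~\ref{sec:proof}) and verify that the Leibniz boundary terms cancel precisely to produce the clean forms $\rho'(m)=2m\,\mathbb{P}(|m+Z|\le\beta)$ and $m^2\rho''-m\rho'=2m^3[\phi(\beta+m)-\phi(\beta-m)]$. Once those cancellations are confirmed, the concavity is immediate from the monotonicity of $\phi$ in $|\cdot|$ together with the triangle inequality $|\beta-m|\le|\beta+m|$ valid for $\beta,m\ge 0$.
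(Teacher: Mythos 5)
Your proof is correct. Note that the paper does not actually prove this lemma---it is imported verbatim from \cite{donoho2009supporting}---so there is no internal proof to compare against; what you have written is a complete, self-contained derivation of the cited fact, and it follows essentially the same route as the reference: reduce to a single scalar signal value by linearity of expectation, use the scale invariance $\eta_1(x+\sigma Z;\beta\sigma)=\sigma\eta_1(x/\sigma+Z;\beta)$ to write the risk as $v\rho(xv^{-1/2})$ with $v=\sigma^2$, and check the sign of the second derivative. I verified the two key cancellations: splitting $\rho(m)=\int_{\beta-m}^{\infty}(z-\beta)^2\phi(z)\,dz+\int_{-\infty}^{-\beta-m}(z+\beta)^2\phi(z)\,dz+m^2\,\mathbb{P}(|m+Z|\le\beta)$ and differentiating, the Leibniz boundary terms $\pm m^2\phi(\beta\mp m)$ from the first two integrals exactly cancel those produced by the third, giving $\rho'(m)=2m\,\mathbb{P}(|m+Z|\le\beta)$, and the second differentiation likewise collapses to $m^2\rho''(m)-m\rho'(m)=2m^3\bigl[\phi(\beta+m)-\phi(\beta-m)\bigr]$; since $(\beta+m)^2-(\beta-m)^2=4\beta m$, this expression is $\le 0$ for every $m$ when $\beta>0$, so $r''\le 0$ as claimed. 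The only cosmetic remark is that the degenerate case $x=0$ (where $m\equiv 0$ and the risk is exactly linear in $\sigma^2$) and the extension to $\sigma^2=0$ by continuity are worth stating explicitly, but both are handled by your argument as written.
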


\vspace{.2cm}

A simple corollary of this result is that $\bar{\psi}_{\beta,1}(\sigma^2)$ has a unique stable fixed point. Refer to \cite{donoho2009supporting} for more information on this lemma. 

\vspace{.2cm}

\begin{lemma}\label{lem:ratiofinite}
Let $X \sim (1-\epsilon) \Delta_0 + \epsilon G$ with $\epsilon<1$. Let $\lambda_*(\sigma)$ denote the optimal thresholding policy for $\ell_1$-{\rm AMP}. Then,
\[
0 < \lim_{\sigma^2 \rightarrow 0} \frac{\lambda_*(\sigma)}{ \sigma} <\infty. 
\]
\end{lemma}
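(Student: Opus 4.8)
The plan is to strip $\sigma$ out of the problem using the scale invariance of soft thresholding and then show that the rescaled optimal threshold converges to a strictly positive finite limit. Using Lemma~\ref{lem:scaleinv}(ii) with $p=1$ (so that $\eta_1(\alpha u;\alpha\lambda)=\alpha\eta_1(u;\lambda)$) and writing $X\sim(1-\epsilon)\Delta_0+\epsilon G$ with $U\sim G$, I would set $\beta\triangleq\lambda/\sigma$ and factor out $\sigma^2$:
\[
\mathbb{E}(\eta_1(X+\sigma Z;\lambda)-X)^2=\sigma^2 R_1(\beta,\sigma),\qquad R_1(\beta,\sigma)\triangleq(1-\epsilon)\mathbb{E}\,\eta_1(Z;\beta)^2+\epsilon\,\mathbb{E}_U\,g\!\left(\tfrac{|U|}{\sigma},\beta\right),
\]
where $g(\mu,\beta)\triangleq\mathbb{E}_Z(\eta_1(\mu+Z;\beta)-\mu)^2$. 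Since $\sigma^2$ is a constant multiplier, $\beta_*(\sigma)\triangleq\lambda_*(\sigma)/\sigma$ is exactly the minimizer of $R_1(\cdot,\sigma)$, and the claim reduces to $0<\lim_{\sigma\to0}\beta_*(\sigma)<\infty$.

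Next I would identify the limiting objective. For each fixed $U\neq0$ we have $|U|/\sigma\to\infty$, and since $|\eta_1(\mu+Z;\beta)-\mu|\le|Z|+\beta$ uniformly in $\mu$ (a one line check on the three branches of $\eta_1$), dominated convergence gives $g(\mu,\beta)\to 1+\beta^2$ as $\mu\to\infty$ and then, dominating in $U$ by the constant $1+\beta^2+2\beta\sqrt{2/\pi}$, $\mathbb{E}_U g(|U|/\sigma,\beta)\to1+\beta^2$. Hence $R_1(\beta,\sigma)\to M_1(\epsilon;\beta)\triangleq(1-\epsilon)\mathbb{E}\,\eta_1(Z;\beta)^2+\epsilon(1+\beta^2)$ pointwise. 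By the computation already carried out in the proof of Lemma~\ref{lem:ell1pteq}, $M_1(\epsilon;\cdot)$ is strictly convex with a unique global minimizer $\tau_*$, and since its derivative at $\beta=0$ equals $-4(1-\epsilon)/\sqrt{2\pi}<0$, we have $\tau_*\in(0,\infty)$. Thus it suffices to prove that $\beta_*(\sigma)\to\tau_*$.

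To pass from pointwise convergence to convergence of the minimizers I would establish two facts. First (local uniform convergence): $\partial_\beta R_1(\beta,\sigma)$ is bounded on $\beta\in[0,A]$ uniformly in $\sigma$, because $|\partial_\beta g(\mu,\beta)|\le 2\mathbb{E}|\eta_1(\mu+Z;\beta)-\mu|\le2(\sqrt{2/\pi}+A)$ independently of $\mu$; equicontinuity then upgrades the pointwise convergence to uniform convergence on every $[0,A]$. Second (coercivity/boundedness), which I expect to be the main obstacle: I would show $\limsup_{\sigma\to0}\beta_*(\sigma)<\infty$. The key elementary estimate is a coercive lower bound $g(\mu,\beta)\ge c\,\min(\mu^2,\beta^2)$ valid for $\mu\ge2$ and a universal constant $c>0$, obtained by a short case analysis (using the branch $\eta_1(\mu+Z;\beta)-\mu=Z-\beta$ on $\{\mu+Z>\beta\}$ when $\beta$ is at most comparable to $\mu$, and the killed-signal branch $\eta_1(\mu+Z;\beta)-\mu=-\mu$ on $\{|\mu+Z|\le\beta\}$ when $\beta$ exceeds $\mu$). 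If $\beta_*(\sigma_k)\to\infty$ along some $\sigma_k\to0$, then $\min(|U|^2/\sigma_k^2,\beta_*(\sigma_k)^2)\to\infty$ for a.e.\ $U$, so Fatou's lemma forces $R_1(\beta_*(\sigma_k),\sigma_k)\ge\epsilon c\,\mathbb{E}_U\min(|U|^2/\sigma_k^2,\beta_*(\sigma_k)^2)\to\infty$, contradicting $R_1(\beta_*(\sigma_k),\sigma_k)\le R_1(\tau_*,\sigma_k)\to M_1(\epsilon;\tau_*)<\infty$.

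With these two facts in hand, the conclusion is routine: $\beta_*(\sigma)$ eventually lies in a compact interval $[0,A]$, so any subsequential limit $\bar\beta\in[0,A]$ satisfies, by local uniform convergence, $M_1(\epsilon;\bar\beta)=\lim_k R_1(\beta_*(\sigma_k),\sigma_k)\le\lim_k R_1(\beta,\sigma_k)=M_1(\epsilon;\beta)$ for every $\beta$; hence $\bar\beta$ minimizes $M_1(\epsilon;\cdot)$ and equals $\tau_*$ by uniqueness. Since all subsequential limits coincide with $\tau_*\in(0,\infty)$, we obtain $\lim_{\sigma\to0}\lambda_*(\sigma)/\sigma=\tau_*$, which is both positive and finite. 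Notably, the positivity of the limit comes for free from $\tau_*>0$ once boundedness is secured, so the entire difficulty concentrates in the coercive estimate on $g$.
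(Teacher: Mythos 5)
Your proof is correct and follows essentially the same route as the paper's: rescale by $\sigma$ via the scale invariance of $\eta_1$, identify the limiting risk $\epsilon(1+\beta^2)+(1-\epsilon)\mathbb{E}\,\eta_1^2(Z;\beta)$, rule out $\beta_*(\sigma)\to\infty$ by a coercive lower bound plus Fatou's lemma, rule out a zero limit via the strictly negative derivative of the limiting risk at $\beta=0$, and pin the limit down as the unique minimizer using strict convexity and subsequential limits. Your explicit equicontinuity argument justifying $\lim_k R_1(\beta_*(\sigma_k),\sigma_k)=M_1(\epsilon;\bar\beta)$ when both arguments vary is a welcome tightening of a step the paper's proof treats only implicitly.
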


\textit{Proof:}
We start by assuming $ \lim_{\sigma^2 \rightarrow 0} \frac{\lambda_*(\sigma)}{ \sigma} $ exists. We first show that the limit is not zero. Note that for every $\beta \geq 0$, we have
\begin{eqnarray}\label{eq:ratiofixedthresh}
\frac{\Psi_{\lambda_*,1} (\sigma^2)}{\bar{\psi}_{\beta,1}(\sigma^2)} \leq 1.
\end{eqnarray}
This is due to the fact that $\lambda_*(\sigma)$ is the optimal thresholding policy and outperforms all the other thresholding policies including $\lambda(\sigma) = \beta \sigma$ for a fixed $\beta$. Define $\tau_*(\sigma)\triangleq \frac{\lambda_*(\sigma)}{\sigma}$. Our goal is to show that if $\tau_*(\sigma) \rightarrow 0$ as $\sigma \rightarrow 0$, then the ratio specified in \eqref{eq:ratiofixedthresh} will be larger than $1$ for all the $\beta$ around zero which is in contradiction with \eqref{eq:ratiofixedthresh}. Note that
\begin{eqnarray}\label{eq:proof1}
\lim_{\sigma^2 \rightarrow 0} \frac{\Psi_{\lambda_*,1} (\sigma^2)}{\bar{\psi}_{\beta,1}(\sigma^2)} = \lim_{\sigma^2 \rightarrow 0} \frac{\mathbb{E} (\eta_1(X+ \sigma Z; \tau_*(\sigma) \sigma) - X)^2}{\mathbb{E} (\eta_1(X+ \sigma Z; \beta \sigma) - X)^2}=  \lim_{\sigma^2 \rightarrow 0} \frac{\mathbb{E} (\eta_1(X/\sigma+  Z; \tau_*(\sigma) ) - X/\sigma)^2}{\mathbb{E} (\eta_1(X/\sigma+  Z; \beta ) - X/\sigma)^2},
\end{eqnarray}
Consider a random variable $U \sim G$. Then, \eqref{eq:proof1} can be simplified in the following way:\footnote{We have used dominated convergence theorem in these calculations. It is straightforward to prove that the conditions for this theorem hold. But, for the sake of brevity and since we have studied similar problems in the proof of Theorem \ref{lem:noiselesslowfpless1}, we do not check the conditions here. }
\begin{eqnarray}\label{eq:ratioproofell_1}
\lim_{\sigma^2 \rightarrow 0} \frac{\Psi_{\lambda_*,1} (\sigma^2)}{\bar{\psi}_{\beta,1}(\sigma^2)} &=&  \lim_{\sigma^2 \rightarrow 0} \frac{\epsilon \mathbb{E}  ( \eta_1(U/\sigma+  Z; \tau_*(\sigma) ) - U/\sigma)^2 + (1-\epsilon)\mathbb{E} (\eta_1^2(Z; \tau_*(\sigma)))}{ \epsilon \mathbb{E} (\eta_1(U/\sigma+  Z; \beta ) - U/\sigma)^2 + (1-\epsilon)\mathbb{E} (\eta_1^2(Z; \beta)) } \nonumber \\
&=& \frac{1}{\epsilon(1+ \beta^2) + (1-\epsilon) \mathbb{E} (\eta_1^2(Z; \beta))},
\end{eqnarray} 
where the last equality is due to the assumption that  $\tau_*(\sigma) \rightarrow 0$ as $\sigma \rightarrow 0$. Note that for $\beta=0$, the numerator and denominator will be the same and hence the ratio is equal to one. However,  a simple calculation shows that
\[
\left. \frac{d}{d\beta} \epsilon(1+ \beta^2) + (1-\epsilon) \mathbb{E} (\eta_1^2(Z; \beta)) \right|_{\beta=0} =-4 (1-\epsilon) \int_{0} ^{\infty} z \phi (z) dz <0. 
\]
Hence, for $\beta$ in the neighborhood of zero, the ratio in \eqref{eq:ratioproofell_1} will be greater than $1$, which is in contradiction with \eqref{eq:ratiofixedthresh} and consequently $\tau_*(\sigma) \nrightarrow 0$. 

\vspace{.2cm}

We now discuss the other part of the lemma, i.e., the proof of
\[
 \lim_{\sigma^2 \rightarrow 0} \frac{\lambda_*(\sigma)}{ \sigma}< \infty.
\]
As before, define $\tau_*(\sigma)  \triangleq \frac{\lambda_*(\sigma)}{ \sigma}$ and consider a random variable $U \sim G$. From the derivation in \eqref{eq:ratioproofell_1}, we know that 
\begin{eqnarray}\label{contra:inf}
\infty > \lim_{\sigma^2 \rightarrow 0} \frac{\Psi_{\lambda_*,1} (\sigma^2)}{\sigma^2} &=& \lim_{\sigma^2\rightarrow 0} \epsilon \mathbb{E}  ( \eta_1(U/\sigma+  Z; \tau_*(\sigma) ) - U/\sigma)^2 + (1-\epsilon)\mathbb{E} (\eta_1^2(Z; \tau_*(\sigma))) \nonumber \\
&\geq& \lim_{\sigma^2\rightarrow 0} \epsilon \mathbb{E}  ( \eta_1(U/\sigma+  Z; \tau_*(\sigma) ) - U/\sigma)^2. 
\end{eqnarray}
Suppose $\tau_*(\sigma) \rightarrow \infty$. Since $\eta_1(u;\lambda)=\mbox{sign}(u)(|u|-\lambda)_+$, we can easily see the following,
\[
( \eta_1(U/\sigma+  Z; \tau_*(\sigma) ) - U/\sigma)^2 \geq \min\{(Z-\tau_*(\sigma))^2, (Z+\tau_*(\sigma))^2, U^2/\sigma^2\} \rightarrow +\infty, \mbox{~as~} \sigma \rightarrow 0
\]
Hence, by Fatou's lemma, we conclude
\[
 \lim_{\sigma^2\rightarrow 0} \mathbb{E}  ( \eta_1(U/\sigma+  Z; \tau_*(\sigma) ) - U/\sigma)^2 \rightarrow \infty, 
\]
which contradicts Inequality \eqref{contra:inf}.

So far, we have proved if $ \lim_{\sigma^2 \rightarrow 0} \frac{\lambda_*(\sigma)}{ \sigma} $ exists, it must be a finite non-zero number. Now we consider any convergent sequence $\sigma_n \rightarrow 0$ such that $\lim_{n \rightarrow \infty} \frac{\lambda_*(\sigma_n)}{ \sigma_n}=\alpha$. Then all the arguments presented before work for the sequence. Hence $0<\alpha< \infty$. Similar to \eqref{eq:ratioproofell_1}, we can obtain
\begin{eqnarray}
1 \geq \lim_{n \rightarrow \infty} \frac{\Psi_{\lambda_*,1} (\sigma_n^2)}{\bar{\psi}_{\beta,1}(\sigma_n^2)} = \frac{\epsilon(1+ \alpha^2) + (1-\epsilon) \mathbb{E} (\eta_1^2(Z; \alpha))}{\epsilon(1+ \beta^2) + (1-\epsilon) \mathbb{E} (\eta_1^2(Z; \beta))}, \label{unique:convex}
\end{eqnarray}
for any $\beta \geq 0$. It is straightforward to confirm that $\epsilon(1+ \beta^2) + (1-\epsilon) \mathbb{E} (\eta_1^2(Z; \beta))$, as a function of $\beta$, is strictly convex and has a unique minimizer over $[0,\infty)$. Denote that global optima by $\beta_*$. If we choose $\beta=\beta_*$ in \eqref{unique:convex}, we can immediately conclude $\alpha=\beta_*$. Since we have been discussing an arbitrary convergent sequence, it implies that $\lim_{\sigma^2 \rightarrow 0}\frac{\lambda_*(\sigma)}{ \sigma}=\beta_*$. This completes the proof.

With this background information, we can now prove Proposition \ref{prop:ell_1pt}. 

\vspace{.2cm}

\textit{Proof:}
For simplicity, we only consider the noiseless setting in the proof. The uniqueness of the fixed point in the noisy case follows similar arguments. We start proving the uniqueness by contradiction. Suppose that $\Psi_{\lambda_*,1}$ has two fixed points $0<\sigma_1^2<\sigma_2^2$. Define $\beta_*= \frac{\lambda_*(\sigma_1)}{\sigma_1}$ and consider a new thresholding policy $\lambda(\sigma) = \beta_* \sigma$. According to Lemma \ref{lem:concavityell_1}, we know that $\bar{\psi}_{\beta_*,1}(\sigma^2)$ has only one stable fixed point. That fixed point is clearly $\sigma_1$. Therefore, $\bar{\psi}_{\beta^*,1}(\sigma^2)< \sigma^2$ for every $\sigma^2> \sigma_1^2$. Now since $\Psi_{\lambda_*, 1} (\sigma_2^2) = \sigma_2^2$ and $\sigma^2_2>\sigma^2_1$, we conclude that $\bar{\psi}_{\beta_*,1}(\sigma_2^2)<\Psi_{\lambda_*, 1} (\sigma_2^2)$. This is in contradiction with the fact that $\lambda_*(\sigma)$ is the optimal thresholding policy. Therefore, $\Psi_{\lambda_*,1}$ has at most one fixed point above zero. 

If zero is not a stable fixed point, then according to Proposition \ref{proof:existsncestable} $\Psi_{\lambda_*,1}$ has at least one non-zero stable fixed point, hence it has a unique stable fixed point above zero. Finally, we show that if zero is a stable fixed point, then $\Psi_{\lambda_*,1}$ does not have any other fixed point. Define $\tau_*(\sigma)\triangleq \frac{\lambda_*(\sigma)}{\sigma}$. Note that according to Lemma \ref{lem:ratiofinite},
\[
0 < \lim_{\sigma^2 \rightarrow 0} \tau_*(\sigma) < \infty.
\]
Let $ \lim_{\sigma^2 \rightarrow 0} \tau_*(\sigma) = \beta^*$ and $U \sim G$. Then we have
\begin{eqnarray}\label{eq:ptoofptell11}
\left. \frac{d \Psi_{\lambda_*,1} (\sigma^2)}{d \sigma^2} \right|_{\sigma^2=0} &=& \lim_{\sigma^2 \rightarrow 0} \frac{\Psi_{\lambda_*,1} (\sigma^2)}{\sigma^2}=\frac{1}{\delta} \lim_{\sigma^2 \rightarrow 0} \epsilon \mathbb{E}( \eta_1(U/\sigma +Z; \tau_*(\sigma))  -U/\sigma)^2 + (1-\epsilon) \mathbb{E} \left(\eta^2_1 (Z; \tau_*(\sigma)) \right)\nonumber \\
&=& \frac{\epsilon(1+(\beta^*)^2) + (1-\epsilon)  \mathbb{E} \left(\eta^2_1 (Z; \beta^*) \right)}{\delta} \nonumber \\
&=&\frac{\min_{\beta \geq 0}\epsilon(1+\beta^2) + (1-\epsilon)  \mathbb{E} \left(\eta^2_1 (Z; \beta) \right)}{\delta}. 
\end{eqnarray}
Note that the last two equalities above can be obtained from the arguments in the proof of Lemma \ref{lem:ratiofinite}. If $0$ is a stable fixed point, then 
\[
\left. \frac{d \Psi_{\lambda_*,1} (\sigma^2)}{d \sigma^2} \right|_{\sigma^2=0} <1. 
\]

It is straightforward to confirm that $\left. \frac{d \Psi_{\lambda_*,1} (\sigma^2)}{d \sigma^2} \right|_{\sigma^2=0}$ is the same as the derivative of $\bar{\psi}_{\beta^*,1} (\sigma^2)$ at zero. However, since $\bar{\psi}_{\beta^*,1} (\sigma^2)$ is concave and its derivative at zero is less than $1$, it will not have any other fixed point and $\bar{\psi}_{\beta^*,1} (\sigma^2) < \sigma^2$ for every $\sigma^2>0$. Hence if $\Psi_{\lambda_*,1}(\sigma^2)$ has another fixed point at $\sigma_0^2 >0$, we conclude that
\[
 \bar{\psi}_{\beta^*,1} (\sigma_0^2)  <\Psi_{\lambda_*,1}(\sigma_0^2),
\]
which is in contradiction with the optimality of $\lambda_*$.

It is now straightforward to characterize the phase transition of the optimal-$\lambda$ $\ell_1$-AMP. Note that according to our discussion, $\sigma^2=0$ is the unique fixed point if and only if 
\begin{equation}\label{eq:derivativelessthan1}
 \left. \frac{d \Psi_{\lambda_*,1} (\sigma^2)}{d \sigma^2} \right|_{\sigma^2=0} <1.
\end{equation}
Combining \eqref{eq:ptoofptell11} and \eqref{eq:derivativelessthan1} finishes the proof.

$\hfill \Box$


\subsection{Proof of Theorem \ref{thm:noisyoptimallambda}} \label{sec:proofthmnoisyfirst}

Let $\tilde{\sigma}^2$ denote the smallest value of $\sigma$ at which $\frac{d\Psi _{\lambda _*,p}(\sigma ^2)}{d\sigma ^2}$ is equal to one. If it does not exist, we set $\tilde{\sigma} = \infty$. According to Theorem \ref{lem:noiselesslowfpless1}, the derivative of  ${\Psi _{{\lambda _*},p}}({\sigma ^2})$ at ${\sigma ^2} = 0$ equals to $\frac{\epsilon}{\delta}$. Since $\epsilon  < \delta$, we conclude that $\frac{{d{\Psi _{{\lambda _*},p}}({\sigma ^2})}}{{d{\sigma ^2}}} < 1$ for every $\sigma^2< \tilde{\sigma}^2$. Define $\sigma _0^2 \triangleq {\tilde \sigma ^2} - {\Psi _{{\lambda _*},p}}({\tilde \sigma ^2})$ ($\sigma_0=\infty$ if $\tilde{\sigma}=\infty$). Note that $\sigma_0^2>0$, since ${\Psi _{{\lambda _*},p}}(0) = 0$ and the derivative of ${\Psi _{{\lambda _*},p}}({\sigma ^2})$ is less than one for every $\sigma^2< \tilde{\sigma}^2$. Our next step is to show that for every $\sigma_w^2< \sigma_0^2$, the equation
\[
\sigma^2 = \sigma_w^2 + {\Psi _{{\lambda _*},p}}({\sigma ^2})
\]
has one solution in $[0,\tilde{\sigma}^2]$. Define $\Gamma ({\sigma ^2}) \triangleq {\sigma ^2} - {\Psi _{{\lambda _*},p}}({\sigma ^2}) - \sigma _w^2$. Note that $\Gamma(0)<0$ and $\Gamma(\tilde{\sigma}^2)>0$. Furthermore, it is straightforward to see that the derivative of $\Gamma ({\sigma ^2})$ is positive and hence it is an increasing function. Thus ${\sigma ^2} - {\Psi _{{\lambda _*},p}}({\sigma ^2}) - \sigma _w^2 = 0$ has exactly one solution in the range $[0, \tilde{\sigma}^2]$. This is the lowest fixed point of $\sigma^2 = \sigma_w^2 + {\Psi _{{\lambda _*},p}}({\sigma ^2})$. By employing the implicit function theorem, we conclude that
\begin{eqnarray}\label{eq:derivatvesigmaell}
\frac{{d\sigma _\ell^2}}{{d\sigma _w^2}} = \frac{1}{1 - \left. \frac{d\Psi _{\lambda _*,p}(\sigma ^2)}{d\sigma ^2} \right|_{\sigma^2= \sigma_\ell^2}}.
\end{eqnarray}
Therefore, $\sigma_{\ell}^2$, as a function of $\sigma^2_w$, is differentiable and has finite derivative for any $\sigma_w^2< \sigma_0^2$. According to Theorem \ref{lem:noiselesslowfpless1}, $\epsilon<\delta$ and $\sigma_w^2=0$ leads to $\sigma_{\ell}^2=0$. Hence, the continuity of $\sigma_{\ell}^2$ implies that
\begin{equation}\label{eq:der2sigma}
\lim_{\sigma_w^2 \rightarrow 0} \sigma_{\ell}^2= 0.
\end{equation}
Combining \eqref{eq:derivatvesigmaell} and \eqref{eq:der2sigma} we conclude that
\begin{eqnarray*}
\lim_{\sigma_w^2 \rightarrow 0} \frac{{d\sigma _\ell^2}}{{d\sigma _w^2}} = \lim_{\sigma_w^2 \rightarrow 0} \frac{1}{1 - \left. \frac{d\Psi _{\lambda _*,p}(\sigma ^2)}{d\sigma ^2} \right|_{\sigma= \sigma_\ell} } =  \frac{1}{1 - \left. \frac{d\Psi _{\lambda _*,p}(\sigma ^2)}{d\sigma ^2} \right|_{\sigma= 0} } = \frac{1}{1-\frac{\epsilon}{\delta}},
\end{eqnarray*}
where the last equality is from the proof of Theorem \ref{lem:noiselesslowfpless1}.


\subsection{Proof of Theorem \ref{thm:noisyell_1limit}}\label{sec:proofnoisyell_1first}
This proof is essentially a combination of the results we obtained in the proofs of Theorem \ref{thm:noisyoptimallambda} and Proposition \ref{prop:ell_1pt}. Note that as we proved in Proposition \ref{prop:ell_1pt}, the stable fixed point of 
\[
\sigma^2 = \sigma_w^2+ \Psi_{\lambda_*,1} (\sigma^2) 
\]
is unique and we have used the notation $\sigma_\ell^2$ to refer to this unique fixed point. Moreover, since $M_1(\epsilon) <\delta$, we know $\sigma_{\ell}^2=0$ when $\sigma_w=0$ from Proposition \ref{prop:ell_1pt}. Similar to \eqref{eq:derivatvesigmaell}, we have
\begin{eqnarray}\label{eq:derivatvesigmaell1}
\frac{{d\sigma _\ell^2}}{{d\sigma _w^2}} = \frac{1}{1 -\left. \frac{d\Psi _{\lambda _*,1}(\sigma ^2)}{d\sigma ^2}\right|_{\sigma^2 = \sigma_\ell^2} }.
\end{eqnarray}
 Finally, we already know from Proposition \ref{prop:ell_1pt} that
\begin{eqnarray}\label{eq:derell1atzero}
\left. \frac{\partial \Psi_{\lambda_*,1} (\sigma^2)}{\partial \sigma^2} \right|_{\sigma^2=0} = \inf_{\alpha\geq0} \epsilon(1+\alpha^2) + (1-\epsilon)  \mathbb{E} [\eta^2_1 (Z; \alpha)]=M_1(\epsilon). 
\end{eqnarray}
Using the continuity arguments of $\sigma_{\ell}^2$ as in Theorem \ref{thm:noisyoptimallambda}, combined with \eqref{eq:derivatvesigmaell1} and \eqref{eq:derell1atzero}, completes the proof.


\subsection{Proof of Proposition \ref{prop:optimallassolargesigma} } \label{sec:prooflargenoiseell_1}

Define
\begin{equation*}
r_{\alpha, p} (\sigma^2) \triangleq \mathbb{E} (\eta_p (X+ \sigma Z; (\alpha \sigma/ c_p)^{2-p})-X)^2,
\end{equation*}
where the expected value is with respect to two independent random variables $X \sim (1- \epsilon)\Delta_0+ \epsilon G$ and $Z \sim N(0,1)$. $c_p$ is the constant introduced in Lemma \ref{lem:threshform}. $\alpha$ is a fixed positive number. Note that according to Lemma \ref{lem:threshform}, the thresholding policy $(\alpha \sigma/ c_p)^{2-p}$ that is used in the definition of  $r_{\alpha, p} (\sigma^2)$ ensures that $\eta_p (u; (\alpha \sigma/ c_p)^{2-p})=0$ for $|u|< \alpha \sigma$ and for every $0\leq p \leq 1$.  Furthermore, note that $\frac{1}{\delta}r_{\alpha, p} (\sigma^2)$ is equal to $\Psi_{\bar{\lambda}_{\alpha}(\sigma), p} (\sigma^2)$ for the thresholding policy $\bar{\lambda}_\alpha(\sigma) = (\alpha \sigma/ c_p)^{2-p}$. We start with several lemmas that are important in the proof of Proposition \ref{prop:optimallassolargesigma}. 

\vspace{.2cm}

\begin{lemma}\label{lem:largenoiseell_1versusothers}
For large values of $\sigma$, we have
\[
{r_{\alpha,p }}({\sigma ^2}) \sim {\Gamma _{\alpha,p }}{\sigma ^2},
\]
where $${\Gamma _{\alpha,p }} \triangleq \mathbb{E} \left( {\eta _p^2(Z;{{(\alpha /{c_p})}^{2 - p}})} \right). $$ Furthermore, ${\Gamma _{\alpha,1 }} < {\Gamma _{\alpha,p }}$ for every $0 \leq p < 1$ and $\alpha>0$.
\end{lemma}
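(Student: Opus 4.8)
The plan is to separate the claim into its two assertions and dispatch each using the scale invariance and comparison results already in hand. For the asymptotic equivalence ${r_{\alpha,p}}(\sigma^2) \sim {\Gamma_{\alpha,p}}\sigma^2$, the engine is Lemma~\ref{lem:scaleinv}(ii). First I would apply it with scaling factor $\sigma$ and inner threshold $(\alpha/c_p)^{2-p}$: since $(\alpha/c_p)^{2-p}\sigma^{2-p} = (\alpha\sigma/c_p)^{2-p}$, it yields
\[
\eta_p\big(X+\sigma Z; (\alpha\sigma/c_p)^{2-p}\big) = \sigma\,\eta_p\big(X/\sigma + Z; (\alpha/c_p)^{2-p}\big).
\]
Substituting into the definition of $r_{\alpha,p}$ and factoring out $\sigma^2$ gives
\[
\frac{r_{\alpha,p}(\sigma^2)}{\sigma^2} = \mathbb{E}\big(\eta_p(X/\sigma + Z; (\alpha/c_p)^{2-p}) - X/\sigma\big)^2,
\]
which reduces the problem to computing the limit of the right-hand side as $\sigma \to \infty$.

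Next I would justify interchanging the limit and the expectation. As $\sigma \to \infty$ we have $X/\sigma \to 0$ almost surely, and since $\eta_p(\cdot;\lambda)$ is continuous away from the null set $\{|Z| = \alpha\}$, the integrand converges pointwise a.s.\ to $\eta_p^2(Z; (\alpha/c_p)^{2-p})$. For domination I would invoke the bound $|\eta_p(u;\lambda)| \leq |u|$ from Lemma~\ref{lem:derivativeproperties}(i), giving $|\eta_p(X/\sigma+Z;\lambda) - X/\sigma| \leq 2|X| + |Z|$ for every $\sigma \geq 1$; because $X$ has bounded second moment and $Z$ is standard Gaussian, $\mathbb{E}(2|X|+|Z|)^2 < \infty$, so the dominated convergence theorem applies and the limit equals $\Gamma_{\alpha,p}$. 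Since $\eta_p^2(Z;\cdot)$ is strictly positive on $\{|Z|>\alpha\}$, we have $\Gamma_{\alpha,p}>0$, so the ratio $r_{\alpha,p}(\sigma^2)/(\Gamma_{\alpha,p}\sigma^2)\to 1$, which is exactly the asserted $\sim$.

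For the comparison $\Gamma_{\alpha,1} < \Gamma_{\alpha,p}$, the key observation is that the threshold $(\alpha/c_p)^{2-p}$ is precisely the parametrization of Corollary~\ref{cor:etap1eta0} with $\tilde{\lambda} = \alpha$: by Lemma~\ref{lem:threshform} this choice forces $\eta_p(u;(\alpha/c_p)^{2-p})=0$ exactly on $|u|<\alpha$ for every $0 \leq p \leq 1$, so all the proximal functions share the same cutoff $\alpha$ (note $c_1=1$, so the $p=1$ threshold is simply $\alpha$). Corollary~\ref{cor:etap1eta0} then gives $\eta_p(u;(\alpha/c_p)^{2-p}) > \eta_1(u;\alpha) > 0$ for all $u>\alpha$ and $0\leq p<1$, and the oddness of $\eta_p$ (Lemma~\ref{lem:scaleinv}(i)) extends this with reversed signs to $u<-\alpha$; on $|u|<\alpha$ both sides vanish. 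Squaring, I would conclude $\eta_p^2(Z;(\alpha/c_p)^{2-p}) \geq \eta_1^2(Z;\alpha)$ pointwise, with strict inequality on $\{|Z|>\alpha\}$, a set of positive probability, so taking expectations yields $\Gamma_{\alpha,p} > \Gamma_{\alpha,1}$.

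I do not anticipate a serious obstacle: both parts are short deductions built on the scale-invariance and comparison lemmas. The only points needing mild care are verifying the dominating-function hypothesis for the limit interchange and confirming that the exceptional set $\{|Z|=\alpha\}$ where $\eta_p$ loses continuity has Gaussian measure zero, so that almost-sure pointwise convergence (and the strict-inequality conclusion) is legitimate.
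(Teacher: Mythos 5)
Your proposal is correct and follows essentially the same route as the paper: both use the scale invariance of Lemma \ref{lem:scaleinv} to pull $\sigma^2$ out of $r_{\alpha,p}(\sigma^2)$, pass to the limit inside the expectation (the paper asserts the interchange and defers the justification, while you supply the dominated-convergence argument explicitly via $|\eta_p(u;\lambda)|\leq |u|$), and then derive $\Gamma_{\alpha,1}<\Gamma_{\alpha,p}$ from the pointwise comparison of Corollary \ref{cor:etap1eta0} under the common-cutoff parametrization. The only cosmetic difference is that the paper splits $X$ into its point mass at zero and its $G$-component before recombining, whereas you keep $X$ general; the substance is identical.
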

\vspace{.2cm}

\textit{Proof:} Let $X$ denote a random variable with distribution $(1- \epsilon)\delta_0+ \epsilon G$ and $U \sim G$ be another random variable. We have

\begin{eqnarray}
  \mathop {\lim }\limits_{{\sigma ^2} \to \infty } \frac{{r_{\alpha,p }}({\sigma ^2})}{\sigma^2} &=& \mathop {\lim }\limits_{{\sigma ^2} \to \infty } \frac{{(1 - \epsilon ) \mathbb{E}\left( {\eta _p^2(\sigma Z;{{(\alpha \sigma /{c_p})}^{2 - p}})} \right) + \epsilon \mathbb{E} {{\left( {{\eta _p}(U + \sigma Z;{{(\alpha \sigma /{c_p})}^{2 - p}}) - U} \right)}^2}}}{{{\sigma ^2}}} \nonumber \\
   &\overset{(a)}{=}& \mathop {\lim }\limits_{{\sigma ^2} \to \infty } \left[ {(1 - \epsilon ) \mathbb{E}\left( {\eta _p^2(Z;{{(\alpha /{c_p})}^{2 - p}})} \right) + \epsilon \mathbb{E}{{\left( {{\eta _p}(U/\sigma  + Z;{{(\alpha /{c_p})}^{2 - p}}) - U/\sigma } \right)}^2}} \right] \nonumber \\
   &\overset{(b)}{=}& (1 - \epsilon ) \mathbb{E} \left( {\eta _p^2(Z;{{(\alpha /{c_p})}^{2 - p}})} \right) + \epsilon \mathbb{E} \left( {\eta _p^2(Z;{{(\alpha /{c_p})}^{2 - p}})} \right) \nonumber  \\
   &=& \mathbb{E} \left( {\eta _p^2(Z;{{(\alpha /{c_p})}^{2 - p}})} \right), \nonumber
\end{eqnarray}
where Equality (a) is according to Lemma \ref{lem:scaleinv}. To obtain Equality (b), we have assumed that the limit and expectation are interchangeable.  The proof is similar to the proof we presented in Section \ref{proof sec:thmnoiselesslfp} and hence skipped. Furthermore, according to Corollary \ref{cor:etap1eta0},
\begin{eqnarray*}
&& \eta _p^2(u;{(\alpha /{c_p})^{2 - p}}) > \eta _1^2(u;(\alpha /{c_1})) \ \ \ \forall |u|>\alpha, 0 \leqslant p < 1, \\
&& \eta _p^2(u;{(\alpha /{c_p})^{2 - p}}) = \eta _1^2(u;(\alpha /{c_1})) \ \ \ \forall |u| <\alpha, 0 \leqslant p < 1.
\end{eqnarray*}
Hence,
\[
{\Gamma _{\alpha,1 }} < {\Gamma _{\alpha,p }}.
\]
$\hfill \Box$

\vspace{.2cm}
 We can employ this lemma to obtain the following result for the performance of the $\ell_p$-AMP with thresholding policy  $\bar{\lambda}_\alpha(\sigma) = (\alpha \sigma/ c_p)^{2-p}$. Although this lemma is not useful in our proof of Proposition \ref{prop:optimallassolargesigma}, since this is an interesting application of the above lemma, we include it here.

\vspace{.3cm}

\begin{corollary}\label{lem:largenoiselasso}
Suppose $\Gamma_{\alpha, p} < \delta$. Let $\sigma_{\ell}^2$ denote the lowest fixed point of $\ell_p$-AMP with thresholding policy  $\bar{\lambda}_\alpha(\sigma) = (\alpha \sigma/ c_p)^{2-p}$, where $c_p$ is the constant introduced in Lemma \ref{lem:threshform}, and $\alpha$ is a fixed number. For large values of $\sigma_w^2$ we have
\[
\frac{\sigma_\ell^2}{\sigma_w^2} = \frac{1}{1-\frac{1}{\delta} \Gamma_{\alpha,p}} + o(1),
\]
\end{corollary}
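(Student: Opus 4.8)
The plan is to work directly with the state-evolution fixed-point equation induced by the thresholding policy $\bar{\lambda}_\alpha(\sigma)=(\alpha\sigma/c_p)^{2-p}$. For this policy the state evolution \eqref{eq:seiterthreshpolicy} reads $\sigma_{t+1}^2=\sigma_w^2+\frac{1}{\delta}r_{\alpha,p}(\sigma_t^2)$, so any fixed point, and in particular the lowest one $\sigma_\ell^2$, satisfies
\[
\sigma_\ell^2=\sigma_w^2+\frac{1}{\delta}r_{\alpha,p}(\sigma_\ell^2).
\]
First I would record two elementary facts. Since $r_{\alpha,p}\ge 0$, every fixed point obeys $\sigma_\ell^2\ge\sigma_w^2$, so $\sigma_\ell^2\to\infty$ as $\sigma_w^2\to\infty$; this is precisely what lets me feed the large-argument asymptotics of $r_{\alpha,p}$ back into the fixed-point equation. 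Existence of the lowest fixed point follows from the continuity of $r_{\alpha,p}(\sigma^2)$ (the same dominated-convergence argument used for $H_p$ in Lemma \ref{dev:cont}) together with the intermediate value theorem applied to $\sigma^2\mapsto\sigma^2-\sigma_w^2-\frac{1}{\delta}r_{\alpha,p}(\sigma^2)$, which is negative at $\sigma^2=0$ and, because $\Gamma_{\alpha,p}<\delta$, eventually positive by Lemma \ref{lem:largenoiseell_1versusothers}.

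The core of the argument is a sandwich estimate. By Lemma \ref{lem:largenoiseell_1versusothers}, $r_{\alpha,p}(\sigma^2)/\sigma^2\to\Gamma_{\alpha,p}$ as $\sigma^2\to\infty$, so for any $\varepsilon>0$ there is $M$ with $\Gamma_{\alpha,p}-\varepsilon<r_{\alpha,p}(\sigma^2)/\sigma^2<\Gamma_{\alpha,p}+\varepsilon$ whenever $\sigma^2>M$. For $\sigma_w^2>M$ we have $\sigma_\ell^2\ge\sigma_w^2>M$, so substituting these two bounds into the fixed-point equation and solving the resulting linear inequalities for the ratio gives
\[
\frac{1}{1-\frac{\Gamma_{\alpha,p}-\varepsilon}{\delta}}\le\frac{\sigma_\ell^2}{\sigma_w^2}\le\frac{1}{1-\frac{\Gamma_{\alpha,p}+\varepsilon}{\delta}},
\]
where the upper bound is finite because $\varepsilon$ may be taken smaller than $\delta-\Gamma_{\alpha,p}>0$. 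Letting $\sigma_w^2\to\infty$ and then $\varepsilon\to 0$ collapses both sides to $1/(1-\Gamma_{\alpha,p}/\delta)$, which is exactly the claimed identity $\sigma_\ell^2/\sigma_w^2=\frac{1}{1-\Gamma_{\alpha,p}/\delta}+o(1)$.

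The step I expect to require the most care is not any single estimate but the coupling of the two limits: Lemma \ref{lem:largenoiseell_1versusothers} controls $r_{\alpha,p}$ only as its \emph{own} argument tends to infinity, whereas in the fixed-point equation that argument $\sigma_\ell^2$ is itself an unknown function of $\sigma_w^2$. The trivial bound $\sigma_\ell^2\ge\sigma_w^2$ is what dissolves this circularity, and since the two-sided estimate holds for \emph{every} fixed point above $M$, it automatically applies to the lowest one. A minor point to verify is that the interchange of limit and expectation invoked in the proof of Lemma \ref{lem:largenoiseell_1versusothers} is legitimate in this regime; this is handled exactly as in Section \ref{proof sec:thmnoiselesslfp} and introduces no new difficulty.
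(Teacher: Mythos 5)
Your proposal is correct and follows essentially the same route as the paper: both start from the fixed-point equation $\sigma_\ell^2=\sigma_w^2+\frac{1}{\delta}r_{\alpha,p}(\sigma_\ell^2)$, use the trivial bound $\sigma_\ell^2\ge\sigma_w^2$ to force $\sigma_\ell^2\to\infty$, and then invoke Lemma \ref{lem:largenoiseell_1versusothers} to replace $r_{\alpha,p}(\sigma_\ell^2)/\sigma_\ell^2$ by $\Gamma_{\alpha,p}$ in the limit. Your explicit $\varepsilon$-sandwich merely spells out the limit interchange that the paper performs in one line by dividing through by $\sigma_\ell^2$.
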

\vspace{.3cm}
\textit{Proof:}
First note that, according to the state evolution equation, $\sigma_{\ell}^2$ satisfies 
\begin{eqnarray}\label{eq:largesigmans1}
\sigma_{\ell}^2 = \sigma_w^2 + \frac{1}{\delta} r_{\alpha,p}(\sigma_\ell^2) \geq \sigma^2_{w}.
\end{eqnarray}
Thus $\sigma_{\ell}^2 \rightarrow \infty$, as $\sigma_w^2 \rightarrow \infty$. Dividing both sides of the equation in \eqref{eq:largesigmans1} by $\sigma_{\ell}^2$, combined with the result of Lemma \ref{lem:largenoiseell_1versusothers}, we have
\[
\lim_{\sigma_w^2 \rightarrow \infty} \frac{\sigma^2_w}{\sigma^2_{\ell}}=\lim_{\sigma_\ell^2 \rightarrow \infty} \frac{\sigma^2_w}{\sigma^2_{\ell}}=\lim_{\sigma_\ell^2 \rightarrow \infty}1-\frac{1}{\delta} \cdot \frac{ r_{\alpha,p}(\sigma_\ell^2)}{\sigma^2_{\ell}}=1- \frac{1}{\delta}\Gamma_{\alpha,p}.
\]
$\hfill \Box$

We state another corollary of Lemma \ref{lem:largenoiseell_1versusothers} that is important in our proof. Let $\alpha_{p}^{\rm opt}(\sigma^2)$ denote the value of $\alpha$ that minimizes $r_{\alpha, p}(\sigma^2)$. If as $\alpha \rightarrow \infty$, $r_{\alpha, p}(\sigma^2)$ reaches its infimum,  we set $\alpha_{p}^{\rm opt}(\sigma^2)$ to infinity. 

\vspace{.2cm}

\begin{corollary}\label{cor:optimalinf}
For every value of $p$, $\alpha^{\rm opt}_p(\sigma^2) \rightarrow \infty$ as $\sigma^2 \rightarrow \infty$.  
\end{corollary}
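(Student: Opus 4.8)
The plan is to normalize $r_{\alpha,p}(\sigma^2)$ by $\sigma^2$ and exploit the fact that for large $\sigma$ its dependence on $\alpha$ is governed entirely by $\Gamma_{\alpha,p}$, which is driven to its infimum only in the limit $\alpha\to\infty$. Using the scale invariance of Lemma \ref{lem:scaleinv} exactly as in the proof of Lemma \ref{lem:largenoiseell_1versusothers}, I would first write
\[
\frac{r_{\alpha,p}(\sigma^2)}{\sigma^2} = (1-\epsilon)\Gamma_{\alpha,p} + \epsilon\, \mathbb{E}\left(\eta_p(U/\sigma + Z;(\alpha/c_p)^{2-p}) - U/\sigma\right)^2 \;\triangleq\; R(\alpha,\sigma),
\]
where $U\sim G$, $Z\sim N(0,1)$ are independent. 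Since $\sigma^2$ is a fixed positive constant when minimizing over $\alpha$, the minimizer $\alpha_p^{\rm opt}(\sigma^2)$ of $r_{\alpha,p}(\sigma^2)$ is precisely the minimizer of $R(\cdot,\sigma)$, so it suffices to work with $R$.

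Next I would record the two elementary properties of $\Gamma_{\alpha,p} = \mathbb{E}\left(\eta_p^2(Z;(\alpha/c_p)^{2-p})\right)$ that drive the argument. First, $\Gamma_{\alpha,p}$ is non-increasing in $\alpha$: by Lemma \ref{lem:derivativeproperties2} the magnitude $|\eta_p(u;\lambda)|$ is non-increasing in $\lambda$, and the null interval $|u|<\alpha$ (recall the threshold here equals $\alpha$ by the choice $(\alpha/c_p)^{2-p}$ and Lemma \ref{lem:threshform}) widens as $\alpha$ grows, so $\eta_p^2(Z;(\alpha/c_p)^{2-p})$ decreases pointwise. Second, $\Gamma_{\alpha,p}\to 0$ as $\alpha\to\infty$, which follows from $|\eta_p(u;\lambda)|\le|u|$ (Lemma \ref{lem:derivativeproperties}) together with dominated convergence, since $\eta_p(Z;(\alpha/c_p)^{2-p})$ vanishes once $|Z|<\alpha$. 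In particular $\Gamma_{M,p}>0$ for every finite $M$, because $|Z|>M$ has positive probability.

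I would then argue by contradiction. If $\alpha_p^{\rm opt}(\sigma^2)\not\to\infty$, there exist a finite $M$ and a sequence $\sigma_k^2\to\infty$ with $\alpha_p^{\rm opt}(\sigma_k^2)\le M$. Discarding the nonnegative second term of $R$ and using the monotonicity of $\Gamma_{\alpha,p}$ gives the lower bound $R(\alpha_p^{\rm opt}(\sigma_k^2),\sigma_k)\ge(1-\epsilon)\Gamma_{M,p}>0$, a constant independent of $k$. On the other hand, fixing a large constant $\alpha_0$, the dominated-convergence computation of Lemma \ref{lem:largenoiseell_1versusothers} (the $U/\sigma$ term disappears as $\sigma\to\infty$) yields $R(\alpha_0,\sigma_k)\to\Gamma_{\alpha_0,p}$. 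By the second property I can pick $\alpha_0$ so large that $\Gamma_{\alpha_0,p}<(1-\epsilon)\Gamma_{M,p}$, and then for all sufficiently large $k$ one obtains $R(\alpha_0,\sigma_k)<(1-\epsilon)\Gamma_{M,p}\le R(\alpha_p^{\rm opt}(\sigma_k^2),\sigma_k)$, contradicting the optimality of $\alpha_p^{\rm opt}(\sigma_k^2)$. (If instead $\alpha_p^{\rm opt}(\sigma^2)=\infty$ for all large $\sigma$, the conclusion is immediate.)

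The routine parts are the scale rewriting and the monotonicity of $\Gamma_{\alpha,p}$. The only step requiring genuine care is justifying the limit $R(\alpha_0,\sigma_k)\to\Gamma_{\alpha_0,p}$, that is, interchanging limit and expectation in the second term of $R$; this is exactly the interchange already performed in the proof of Lemma \ref{lem:largenoiseell_1versusothers}, where $|\eta_p(u;\lambda)|\le|u|$ supplies the integrable dominating function, so I would invoke that computation rather than repeat it.
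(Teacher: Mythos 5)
Your proof is correct and follows the same overall strategy as the paper: argue by contradiction, normalize $r_{\alpha,p}(\sigma^2)$ by $\sigma^2$ via the scale invariance of Lemma \ref{lem:scaleinv}, and use the large-$\sigma$ limit computed in Lemma \ref{lem:largenoiseell_1versusothers}. The one place you diverge is in how the contradiction is closed. The paper extracts a subsequence with $\alpha_p^{\rm opt}(\sigma_n^2)\to\alpha^*<\infty$, asserts $r_{\alpha_p^{\rm opt},p}(\sigma_n^2)/\sigma_n^2\to\Gamma_{\alpha^*,p}>0$ (a joint limit in $\alpha$ and $\sigma$ that implicitly needs some continuity in $\alpha$), and contrasts this with the bound $r_{\alpha_p^{\rm opt},p}(\sigma_n^2)\le\lim_{\alpha\to\infty}r_{\alpha,p}(\sigma_n^2)=\epsilon\mu^2$, a constant, so the normalized risk must vanish. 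You instead lower-bound the normalized optimal risk by $(1-\epsilon)\Gamma_{M,p}>0$ using the monotonicity of $\Gamma_{\alpha,p}$ in $\alpha$, and beat it with a \emph{fixed} finite $\alpha_0$ chosen so that $\Gamma_{\alpha_0,p}<(1-\epsilon)\Gamma_{M,p}$. This sidesteps both the subsequence extraction and the joint-limit continuity issue, at the modest cost of needing the (easy) monotonicity and vanishing of $\Gamma_{\alpha,p}$; it also works verbatim for general $G$ with $\mathbb{E}_G U^2<\infty$ rather than relying on the point-mass form that produces the constant $\epsilon\mu^2$. Both arguments are sound; yours is marginally more self-contained on the technical point the paper labels ``straightforward to confirm.''
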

\vspace{.2cm}

\textit{Proof:}
Suppose this is not true. Then there exists a sequence $\sigma_n \rightarrow \infty$, as $n\rightarrow \infty$ such that $\alpha^{\rm opt}_p(\sigma_n^2) \rightarrow \alpha^* <\infty$. From the proof of Lemma \ref{lem:largenoiseell_1versusothers}, it is straightforward to confirm that,
\begin{equation}
\lim_{n\rightarrow \infty} \frac{r_{\alpha^{\rm opt}_p, p} (\sigma_n^2) }{\sigma_n^2} = \Gamma_{\alpha^*,p} > 0. \label{opt:one}
\end{equation}
However, since $\alpha^{\rm opt}_p(\sigma_n^2)$ is the optimal thresholding value, we know
\[
r_{\alpha^{\rm opt}_p, p}(\sigma_n^2) \leq \lim_{\alpha \rightarrow \infty} r_{\alpha, p} (\sigma_n^2) = \epsilon \mu^2, 
\]
which is in contradiction with \eqref{opt:one}.
$\hfill \Box$

\vspace{.2cm}

\begin{lemma}\label{lem:ell1isbetterlargesig}
Recall the notation $\eta_p^+(\cdot, \cdot)$ in \eqref{eq:defS_p}. If $\eta_p^+(\alpha; (\alpha/c_p)^{2-p}) > \frac{2 \mu}{\sigma}$ and $\alpha< \infty$, then we have
\[
r_{\alpha, 1}(\sigma^2) < r_{\alpha, p} (\sigma^2).
\]
\end{lemma}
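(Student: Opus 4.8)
The plan is to compute the difference $r_{\alpha,p}(\sigma^2) - r_{\alpha,1}(\sigma^2)$ directly and show it is positive, exploiting that the two thresholding policies are matched at the same cutoff. First I would invoke the scale-invariance identity of Lemma \ref{lem:scaleinv}(ii) to factor out $\sigma$: writing $\lambda_p \triangleq (\alpha/c_p)^{2-p}$, one has $\eta_p(\sigma w;\sigma^{2-p}\lambda_p) = \sigma\,\eta_p(w;\lambda_p)$, and by Lemma \ref{lem:threshform} the switching threshold $\tilde\lambda_p = c_p\lambda_p^{1/(2-p)} = \alpha$ is the same for every $p$ (in particular $c_1=1$, $\lambda_1=\alpha$, so $\eta_1(\cdot;\lambda_1)$ is soft thresholding at level $\alpha$). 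Taking $X\sim(1-\epsilon)\Delta_0+\epsilon\Delta_\mu$ (with $\mu>0$ without loss of generality, by the oddness of $\eta_p$) and $W\triangleq \mu/\sigma+Z$, I would split
\[
r_{\alpha,p}(\sigma^2) = (1-\epsilon)\sigma^2\,\mathbb{E}\,\eta_p(Z;\lambda_p)^2 + \epsilon\,\mathbb{E}\big(\sigma\eta_p(W;\lambda_p)-\mu\big)^2.
\]

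For the zero-signal term, Corollary \ref{cor:etap1eta0} gives $|\eta_p(u;\lambda_p)|>|\eta_1(u;\lambda_1)|$ for $|u|>\alpha$ while both vanish for $|u|<\alpha$, so $\mathbb{E}\,\eta_p(Z;\lambda_p)^2 \ge \mathbb{E}\,\eta_1(Z;\lambda_1)^2$ and this part of the difference is nonnegative. The crux is the signal term, which I would treat by the difference-of-squares factorization
\[
\mathbb{E}\big(\sigma\eta_p(W;\lambda_p)-\mu\big)^2 - \mathbb{E}\big(\sigma\eta_1(W;\lambda_1)-\mu\big)^2 = \mathbb{E}\big[D(W)\,S(W)\big],
\]
where $D(W)\triangleq \sigma\eta_p(W;\lambda_p)-\sigma\eta_1(W;\lambda_1)$ and $S(W)\triangleq \sigma\eta_p(W;\lambda_p)+\sigma\eta_1(W;\lambda_1)-2\mu$. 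I would then show $D\cdot S\ge 0$ pointwise by a three-region argument in $W$: on $\{|W|\le\alpha\}$ both proximal maps vanish, so $D=0$; on $\{W>\alpha\}$ Corollary \ref{cor:etap1eta0} gives $D>0$, and monotonicity of $\eta_p$ (Lemma \ref{lem:derivativeproperties}(ii), together with the positive jump guaranteed by Lemma \ref{lem:lowerboundonx}) gives $\sigma\eta_p(W;\lambda_p)\ge\sigma\eta_p^+(\alpha;\lambda_p)$; on $\{W<-\alpha\}$ the analogous statements hold with reversed signs by oddness.

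This is exactly where the hypothesis enters, and it is the one genuinely delicate point. On $\{W>\alpha\}$ I need $S>0$, i.e. $\sigma\eta_p(W;\lambda_p)+\sigma\eta_1(W;\lambda_1)>2\mu$. Since $\sigma\eta_1(W;\lambda_1)=\sigma(W-\alpha)\ge 0$ there, it suffices that $\sigma\eta_p(W;\lambda_p)\ge\sigma\eta_p^+(\alpha;\lambda_p)>2\mu$, which is precisely the assumption $\eta_p^+(\alpha;(\alpha/c_p)^{2-p})>2\mu/\sigma$; the symmetric estimate $S<-4\mu<0$ on $\{W<-\alpha\}$ follows by oddness. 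Hence $D\cdot S\ge 0$ everywhere and is strictly positive on $\{|W|>\alpha\}$, a set of positive Gaussian probability, so $\mathbb{E}[DS]>0$. Combining the nonnegative zero-signal contribution with this strictly positive signal contribution yields $r_{\alpha,p}(\sigma^2)-r_{\alpha,1}(\sigma^2)>0$. The main obstacle is therefore not the algebra but correctly pinning down the sign of $S$ near the thresholds $W=\pm\alpha$, which is the sole place the quantitative hypothesis comparing the jump height $\eta_p^+(\alpha;\lambda_p)$ to $\mu/\sigma$ is needed.
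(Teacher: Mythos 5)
Your proof is correct and follows essentially the same route as the paper's: your difference-of-squares factorization $D\cdot S$ is algebraically identical to the paper's expansion of the signal term into $2\eta_1\xi_p+\xi_p^2-2\mu_\sigma\xi_p$ with $\xi_p=\eta_p-\eta_1$, the zero-signal term is handled by Corollary \ref{cor:etap1eta0} in both, and the jump hypothesis $\eta_p^+(\alpha;(\alpha/c_p)^{2-p})>2\mu/\sigma$ enters at the same point (you lower-bound $\eta_p(W)$ by the jump to get $S>0$, while the paper lower-bounds $|\xi_p|$ by the jump; these are the same idea).
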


\vspace{.2cm}

\textit{Proof:}
Define $\mu_\sigma \triangleq \mu/\sigma$, $\alpha_{c,p} \triangleq (\alpha /c_p)^{2-p}$,  $\xi_p(\mu_\sigma+ z; \alpha_{c,p})= \eta_p(\mu_\sigma+ z; \alpha_{c,p})- \eta_1(\mu_\sigma+ z; \alpha_{c,1})$. Note that $\xi_p(\mu_\sigma+ z; \alpha_{c,p})=0$ for $|\mu_\sigma+ z|<\alpha$, $\xi_p(\mu_\sigma+ z; \alpha_{c,p})>0$ for $\mu_\sigma+ z>\alpha$, and $\xi_p(\mu_\sigma+ z; \alpha_{c,p})<0$ for $\mu_\sigma+ z<-\alpha$. We have
\begin{eqnarray}\label{eq:riskell_pforproof}
\frac{r_{\alpha, p} (\sigma^2) }{\sigma^2}&=& (1-\epsilon) \mathbb{E} \eta^2_p (Z; \alpha_{c,p}) + \epsilon \mathbb{E} (\eta_p(\mu_\sigma+Z ; \alpha_{c,p}) - \mu_\sigma )^2 \nonumber \\
&=&  (1-\epsilon) \mathbb{E} \eta^2_p (Z; \alpha_{c,p}) + \epsilon \mu_\sigma^2+ \epsilon \mathbb{E} [(\eta_p(\mu_\sigma+Z; \alpha_{c,p}) -2 \mu_\sigma) \eta_p(\mu_\sigma+Z; \alpha_{c,p})] \nonumber \\
&=& (1-\epsilon) \mathbb{E} \eta^2_p (Z; \alpha_{c,p}) + \epsilon \mu_\sigma^2+\epsilon \mathbb{E} [(\eta_1(\mu_\sigma+Z; \alpha_{c,1}) -2 \mu_\sigma) \eta_1(\mu_\sigma+Z; \alpha_{c,1})] \nonumber \\
&&+\epsilon \mathbb{E} [(2 \eta_1(\mu_\sigma+Z; \alpha_{c,1}) - 2 \mu_\sigma) (\xi_p(\mu_\sigma+ Z; \alpha_{c,p}))] + \epsilon\mathbb{E} (\xi_p(\mu_\sigma+ Z; \alpha_{c,p}))^2,
\end{eqnarray}
where the first equality is due to Lemma \ref{lem:scaleinv}. Note that  
\begin{eqnarray}\label{eq:riskell_1forproof}
\frac{r_{\alpha, 1} (\sigma^2)}{\sigma^2}  = (1-\epsilon) \mathbb{E} \eta^2_1 (Z; \alpha_{c,1}) + \epsilon \mu_\sigma^2+\epsilon \mathbb{E} [(\eta_1(\mu_\sigma+Z; \alpha_{c,1}) -2 \mu_\sigma) \eta_1(\mu_\sigma+Z; \alpha_{c,1})].
\end{eqnarray}
According to Corollary \ref{cor:etap1eta0}, we have $\eta^2_1 (Z; \alpha_{c,1}) \leq  \eta^2_p (Z; \alpha_{c,p})$ for every $Z$. Hence
\begin{equation}\label{eq:riskcomparisonellpell1}
 (1-\epsilon) \mathbb{E} \eta^2_p (Z; \alpha_{c,p}) \geq (1-\epsilon) \mathbb{E} \eta^2_1 (Z; \alpha_{c,1}).
\end{equation}
Combining \eqref{eq:riskell_pforproof}, \eqref{eq:riskell_1forproof}, and \eqref{eq:riskcomparisonellpell1}, we conclude that if we prove 
\begin{equation}\label{eq:laststepcomparerisks}
\mathbb{E} [(2 \eta_1(\mu_\sigma+Z; \alpha_{c,1}) - 2 \mu_\sigma) (\xi_p(\mu_\sigma+ Z; \alpha_{c,p}))] + \mathbb{E} (\xi_p(\mu_\sigma+ Z; \alpha_{c,p}))^2 > 0,
\end{equation}
then 
\[
r_{\alpha, 1}(\sigma^2) < r_{\alpha, p} (\sigma^2).
\]
Hence in the rest of the proof we focus on showing \eqref{eq:laststepcomparerisks}. First note that by employing corollary \ref{cor:etap1eta0}, it is straightforward to conclude
\begin{equation}\label{eq:finalstepproofcompell1ellp1}
\eta_1(\mu_\sigma+Z; \alpha_{c,1})\cdot \xi_p(\mu_\sigma+ Z; \alpha_{c,p}) \geq 0. 
\end{equation}
Furthermore, according to Lemma \ref{lem:derivativeproperties}, if $ |\eta_p(u; \lambda)|>0$, then $\partial _p \eta_p(u; \lambda) \geq 1$. Since $\partial _1 \eta_1(u; \lambda) = 1$, we conclude that if $|\mu_\sigma+ Z| > \alpha$, then 
\[
|\xi_p(\mu_\sigma+ Z; \alpha_{c,p})|\geq \eta_p^+(\alpha; \alpha_{c,p}). 
\]
Hence, if $|\mu_\sigma+ Z| > \alpha$
\begin{equation}
 (\xi_p(\mu_\sigma+ Z; \alpha_{c,p}))^2-2 \mu_\sigma \xi_p(\mu_\sigma+ Z; \alpha_{c,p})\geq |\xi_p(\mu_\sigma+ Z; \alpha_{c,p})| \cdot (\eta_p^+(\alpha; \alpha_{c,p})- 2\mu_\sigma)> 0.
\end{equation}
This completes the proof of our lemma. 
$\hfill \Box$

\vspace{.2cm}

The following lemma enables us to complete the proof. 

\vspace{.2cm}

\begin{lemma}[\cite{mousavi2013asymptotic}, Proposition 3.8]\label{lem:alispaper}
If $r_{\alpha, 1}(\sigma^2)$ denotes the risk of the soft thresholding function, then for large values of $\alpha$, $\frac{\partial r_{\alpha, 1}(\sigma^2)}{\partial \alpha}>0$.
\end{lemma}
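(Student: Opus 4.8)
The plan is to fix $\sigma>0$ and regard the threshold as $\lambda = \alpha\sigma$ (recall $c_1 = 1$, so the policy in the definition of $r_{\alpha,1}$ reduces to $(\alpha\sigma/c_1)^{2-1}=\alpha\sigma$), so that by the chain rule $\frac{\partial r_{\alpha,1}(\sigma^2)}{\partial\alpha}=\sigma\,\frac{\partial}{\partial\lambda}\mathbb{E}(\eta_1(X+\sigma Z;\lambda)-X)^2$ with $X\sim(1-\epsilon)\Delta_0+\epsilon G$. Hence it suffices to show the threshold-derivative of the risk is positive for large $\lambda$. First I would condition on $X=x$ and study the scalar risk $\rho(x,\lambda)\triangleq\mathbb{E}_Z(\eta_1(x+\sigma Z;\lambda)-x)^2$, which is even in $x$ by the oddness of $\eta_1$; the full risk is then $r_{\alpha,1}(\sigma^2)=(1-\epsilon)\rho(0,\lambda)+\epsilon\,\mathbb{E}_U\rho(U,\lambda)$ with $U\sim G$.

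The first substantive step is an exact formula for $\partial_\lambda\rho$. Using $\partial_\lambda\eta_1(u;\lambda)=-\mathrm{sign}(u)\,\mathbb{I}(|u|>\lambda)$ and differentiating under the Gaussian integral (justified by dominated convergence, the integrand being dominated by a quadratic in $x$ and $Z$ as in the bound used for \eqref{uniform:dct}), a direct evaluation of the two resulting Gaussian tail integrals gives
\[
\partial_\lambda\rho(x,\lambda)=2\lambda\Big[\bar\Phi\big(\tfrac{\lambda-x}{\sigma}\big)+\bar\Phi\big(\tfrac{\lambda+x}{\sigma}\big)\Big]-2\sigma\Big[\phi\big(\tfrac{\lambda-x}{\sigma}\big)+\phi\big(\tfrac{\lambda+x}{\sigma}\big)\Big],
\]
where $\bar\Phi(t)\triangleq\int_t^\infty\phi$. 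I would then extract the large-$\lambda$ behaviour of each piece through the Mills-ratio expansion $\bar\Phi(t)=\phi(t)(t^{-1}-t^{-3}+o(t^{-3}))$. For the null component this yields $\partial_\lambda\rho(0,\lambda)=4\sigma\big[\tfrac{\lambda}{\sigma}\bar\Phi(\tfrac{\lambda}{\sigma})-\phi(\tfrac{\lambda}{\sigma})\big]\sim-4\sigma^3\lambda^{-2}\phi(\lambda/\sigma)<0$, whereas for a signal value $x\neq0$ the term at $a=(\lambda-|x|)/\sigma$ dominates the one at $b=(\lambda+|x|)/\sigma$ and produces $\partial_\lambda\rho(x,\lambda)\sim 2\sigma\,\tfrac{|x|}{\lambda}\,\phi\big(\tfrac{\lambda-|x|}{\sigma}\big)>0$.

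The conclusion follows from a dominant-balance argument. Writing $\partial_\lambda r_{\alpha,1}(\sigma^2)=(1-\epsilon)\partial_\lambda\rho(0,\lambda)+\epsilon\,\mathbb{E}_U\,\partial_\lambda\rho(U,\lambda)$, the null term is negative but of order $\phi(\lambda/\sigma)$, while the signal term is positive and of the strictly larger order $\phi((\lambda-|U|)/\sigma)$, since $\phi((\lambda-|U|)/\sigma)/\phi(\lambda/\sigma)=\exp\{(2\lambda|U|-U^2)/(2\sigma^2)\}\to\infty$ whenever $|U|>0$. Hence for all sufficiently large $\lambda$ (equivalently large $\alpha$) the positive signal contribution overwhelms the negative null contribution and $\partial_\alpha r_{\alpha,1}(\sigma^2)>0$. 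I expect the main obstacle to be technical rather than conceptual: making the asymptotic comparison uniform when $G$ places mass arbitrarily close to $0$, so that the expectation over $U$ does not dilute the dominant order. For the prior actually used in Proposition \ref{prop:optimallassolargesigma}, namely $G=\Delta_\mu$ with $\mu>0$ fixed, this difficulty vanishes and the balance is immediate; in general one would split the expectation over $\{|U|\ge\mu_0\}$ for a small $\mu_0>0$ and bound the remainder.
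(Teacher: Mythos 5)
Your proposal is correct, and it is worth noting that the paper itself does not prove this lemma at all: it is imported verbatim from \cite{mousavi2013asymptotic} (Proposition 3.8), with the reader referred there for the explicit threshold beyond which the derivative is positive. So your argument is, in effect, the missing self-contained proof. I checked the key computation: writing $\lambda=\alpha\sigma$ (correct, since $c_1=1$) and differentiating under the integral with $\partial_\lambda\eta_1(u;\lambda)=-\mathrm{sign}(u)\,\mathbb{I}(|u|>\lambda)$ does give
\[
\partial_\lambda\rho(x,\lambda)=2\lambda\Big[\bar\Phi\big(\tfrac{\lambda-x}{\sigma}\big)+\bar\Phi\big(\tfrac{\lambda+x}{\sigma}\big)\Big]-2\sigma\Big[\phi\big(\tfrac{\lambda-x}{\sigma}\big)+\phi\big(\tfrac{\lambda+x}{\sigma}\big)\Big],
\]
and the Mills-ratio expansions you quote are right: the null component is negative of order $\lambda^{-2}\phi(\lambda/\sigma)$ while each fixed $x\neq 0$ contributes a positive term of the exponentially larger order $(|x|/\lambda)\,\phi((\lambda-|x|)/\sigma)$. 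The one genuine subtlety is exactly the one you flag: the asymptotic $\partial_\lambda\rho(x,\lambda)\sim 2\sigma(|x|/\lambda)\phi((\lambda-|x|)/\sigma)$ is not uniform as $x\downarrow 0$; in fact $\partial_\lambda\rho(x,\lambda)$ is still negative for $|x|\lesssim\sigma^2/\lambda$, so a prior $G$ with mass accumulating at the origin does contribute additional negative terms. Your proposed fix is the right one and closes the gap: on $\{|U|<\mu_0\}$ those negative contributions are still $O(\lambda^{-2}\phi(\lambda/\sigma))$ (since there $(\lambda-|U|)/\sigma\geq\lambda/\sigma-O(1/\lambda)$, so $\phi((\lambda-|U|)/\sigma)=O(\phi(\lambda/\sigma))$), while $\{|U|\geq\mu_0\}$ has positive $G$-probability for some $\mu_0>0$ (as $G$ has no atom at zero) and contributes a positive term of order $\lambda^{-1}\phi((\lambda-\mu_0)/\sigma)$, which wins. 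For the prior $X\sim(1-\epsilon)\Delta_0+\epsilon\Delta_\mu$ actually used in Proposition \ref{prop:optimallassolargesigma} this issue is vacuous, as you observe. The only caution I would add is about how the lemma is later invoked rather than about your proof: the threshold ``large $\alpha$'' your argument produces depends on $\mu/\sigma$ (it grows as $\sigma\to\infty$), so anyone using the lemma inside the large-$\sigma_w$ argument of Proposition \ref{prop:optimallassolargesigma} should keep the order of quantifiers straight; the lemma as stated, for fixed $\sigma$, is exactly what you have proved.
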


\vspace{.2cm}

See \cite{mousavi2013asymptotic} for the exact value of $\alpha$ above which the derivative is positive.

\vspace{.2cm}

\begin{corollary}
For every value of $\sigma$, we have
\[
\inf_{\alpha\geq 0} r_{\alpha, 1}(\sigma^2) < \epsilon \mu^2. 
\]
\end{corollary}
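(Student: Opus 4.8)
The plan is to compare the soft-thresholding risk $r_{\alpha,1}(\sigma^2)$ against its limiting value as $\alpha\to\infty$, which corresponds to the trivial estimator that always returns zero. First I would identify that limit. Since $\eta_1(u;\lambda)=\mathrm{sign}(u)(|u|-\lambda)_+$ satisfies $|\eta_1(u;\lambda)|\le|u|$, for $X\sim(1-\epsilon)\Delta_0+\epsilon\Delta_\mu$ and $Z\sim N(0,1)$ we have the pointwise bound $(\eta_1(X+\sigma Z;\alpha\sigma)-X)^2\le(2|X|+\sigma|Z|)^2$, whose expectation is finite. As $\alpha\to\infty$ the threshold $\alpha\sigma\to\infty$, so $\eta_1(X+\sigma Z;\alpha\sigma)\to 0$ pointwise; dominated convergence then gives
\[
\lim_{\alpha\to\infty} r_{\alpha,1}(\sigma^2)=\mathbb{E}(X^2)=\epsilon\mu^2.
\]

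Next I would invoke Lemma \ref{lem:alispaper}, which guarantees the existence of a threshold $\alpha_0$ such that $\frac{\partial r_{\alpha,1}(\sigma^2)}{\partial\alpha}>0$ for every $\alpha>\alpha_0$; consequently $r_{\alpha,1}(\sigma^2)$ is strictly increasing on $(\alpha_0,\infty)$. The key observation is elementary: a strictly increasing function on $(\alpha_0,\infty)$ that converges to the limit $\epsilon\mu^2$ must lie strictly below $\epsilon\mu^2$ throughout this interval. Indeed, if $r_{\alpha_1,1}(\sigma^2)\ge\epsilon\mu^2$ for some $\alpha_1>\alpha_0$, then strict monotonicity would force $r_{\alpha,1}(\sigma^2)>\epsilon\mu^2$ for all $\alpha>\alpha_1$, contradicting convergence to $\epsilon\mu^2$.

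Combining the two steps, I would fix any $\alpha_1>\alpha_0$ to conclude
\[
\inf_{\alpha\ge 0} r_{\alpha,1}(\sigma^2)\le r_{\alpha_1,1}(\sigma^2)<\epsilon\mu^2,
\]
which is the assertion. I expect no genuine obstacle here: the proof is a short composition of a dominated-convergence limit and the eventual-monotonicity fact borrowed from \cite{mousavi2013asymptotic}. The only points demanding a little care are exhibiting the integrable envelope $(2|X|+\sigma|Z|)^2$ needed to interchange the limit and the expectation, and the brief argument that a strictly increasing function converging to $L$ remains strictly below $L$ on its domain.
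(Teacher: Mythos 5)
Your proof is correct and follows essentially the same route as the paper, which states the result as a ``straightforward combination'' of Lemma \ref{lem:alispaper} and the limit $\lim_{\alpha\to\infty} r_{\alpha,p}(\sigma^2)=\epsilon\mu^2$; you have simply filled in the dominated-convergence envelope and the eventual-monotonicity argument explicitly. No issues.
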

The proof is a straightforward combination of Lemma \ref{lem:alispaper} and the fact that $\lim_{\alpha \rightarrow \infty} r_{\alpha, p} (\sigma^2) = \epsilon \mu^2$. 

\vspace{.2cm}

Now we return to the proof of Proposition \ref{prop:optimallassolargesigma}. We only mention the sketch of the proof since the details are straightforward. According to Corollary \ref{cor:optimalinf}, for large values of $\sigma$, $\alpha_{p}^{\rm opt}$ is large, hence $\alpha_{c,p}^{\rm opt}=(\alpha_{p}^{\rm opt}/c_p)^{2-p}$ is large as well. Hence by Lemma \ref{lem:lowerboundonx}, we can assume that $\eta_p^+({\alpha}_{p}^{\rm opt}; \alpha_{c,p}^{\rm opt}) > \frac{2 \mu}{\sigma}$ for large $\sigma$. Suppose that $\alpha_p^{\rm opt}< \infty$. Then according to Lemma \ref{lem:ell1isbetterlargesig}, we have
\[
\inf_\alpha r_{\alpha, 1}(\sigma^2) \leq  r_{\alpha_p^{\rm opt}, 1}(\sigma^2) < r_{\alpha_p^{\rm opt}, p} (\sigma^2).
\]
If $\alpha_p^{\rm opt}= \infty$, then 
\begin{equation*}
\inf_\alpha r_{\alpha, 1}(\sigma^2) <  r_{\alpha_p^{\rm opt}, 1}(\sigma^2) = r_{\alpha_p^{\rm opt}, p} (\sigma^2),
\end{equation*}
where the first inequality is due to Lemma \ref{lem:alispaper}. In any case, we have showed that,
\begin{equation}\label{eq:finalresullargenoise1}
r_{\alpha_1^{\rm opt}, 1} (\sigma^2) < r_{\alpha_p^{\rm opt}, p} (\sigma^2),
\end{equation}
for large values of $\sigma$. The last step of the proof is to connect this result with the fixed points of the state evolution equation.

Note that the fixed points of the state evolution must satisfy
\begin{equation*}
\sigma^2 = \sigma_w^2 + \frac{1}{\delta} r_{\alpha_p^{\rm opt}, p}(\sigma^2). 
\end{equation*}
Therefore, as $\sigma_w \rightarrow \infty$, the lowest fixed point $\sigma_{\ell}^2$ goes off to $\infty$. On the other hand, Inequality \eqref{eq:finalresullargenoise1} implies that the function $\sigma_w^2 + \frac{1}{\delta} r_{\alpha_p^{\rm opt}, p}(\sigma^2)$ is above $\sigma_w^2 + \frac{1}{\delta} r_{\alpha_1^{\rm opt}, 1}(\sigma^2)$ over a range $(\bar{\sigma}^2, \infty)$. Hence, we can increase $\sigma_w$ to make sure $\sigma^2_{\ell}$ of both $\ell_1$-AMP and $\ell_p$-AMP fall into that range. Then clearly the lowest fixed point of $\ell_1$-AMP is smaller than that of $\ell_p$-AMP.

\subsection{Proof of Theorem \ref{thm:riskbehsmallsigma}} \label{sec:prooflemmasmallnoise}

We first remind the reader the definition
 \begin{equation*}
 R_p(\tau, \sigma)\triangleq(1-\epsilon)\mathbb{E}\eta_p^2(Z;\tau)+\epsilon \mathbb{E}(\eta_p(U/\sigma+Z;\tau)-U/\sigma)^2,
 \end{equation*}
where $Z \sim N(0,1)$ and $U \sim G$ are independent. Also let $\tau_*(\sigma)$ denote the optimal $\tau$ that minimizes $R_p(\tau,\sigma)$. In Proposition \ref{lem:riskbehsmallsigma} we proved that as $\sigma \rightarrow 0$

\begin{eqnarray*}
 R_p(\tau_*(\sigma),\sigma)=\epsilon+\epsilon p^2 \mathbb{E}|U|^{2p-2} (\tau_*(\sigma))^2\sigma^{2-2p}+o((\tau_*(\sigma))^2\sigma^{2-2p}),
\end{eqnarray*}
where the convergence rate of $\tau_*(\sigma)$ can be characterized by 
\begin{eqnarray}\label{eq:tau*calc1}
 \lim_{\sigma \rightarrow 0} \frac{\sigma^{2-2p}}{(\tau_*(\sigma))^{\frac{2p-1}{2-p}}\phi(c_p(\tau_*(\sigma))^{\frac{1}{2-p}})} = \frac{(1-\epsilon)c_p\eta^2_p(c_p;1)}{\epsilon p^2(2-p)\mathbb{E}|U|^{2p-2}}.
\end{eqnarray}

Here we aim to analyze the lowest fixed point of the optimally tuned $\ell_p$-AMP, $\sigma_{\ell}$, that satisfies
\begin{eqnarray}\label{lownoise:ana}
\sigma_{\ell}^2=\sigma_w^2+ \frac{\sigma^2_{\ell}}{\delta}R_p(\tau_*(\sigma_{\ell}),\sigma_{\ell}).
\end{eqnarray}

We focus on the regime where the sparsity level $\epsilon$ is below the phase transition of the lowest fixed point, i.e. $\delta > \epsilon$. In Theorem \ref{thm:noisyoptimallambda} we have already proved that 
\begin{eqnarray}
\lim_{\sigma_w \rightarrow 0} \frac{\sigma_{\ell}^2}{\sigma_w^2}=\frac{\delta}{\delta-\epsilon}. \label{lownoise:prop}
\end{eqnarray}
Now we first characterize the following limit:
\begin{eqnarray}\label{eq:prooffinln1}
&& \lim_{\sigma_w \rightarrow 0}\frac{\sigma_\ell^2-\frac{\delta }{\delta - \epsilon}\sigma_w^2}{\sigma^{4-2p}_w (\tau_*(\sigma_{\ell}))^2}\overset{(a)}{=}\lim_{\sigma_w \rightarrow 0} \frac{\sigma_{\ell}^2-\frac{\delta }{\delta - \epsilon}(\sigma_{\ell}^2-\frac{\sigma^2_{\ell}}{\delta}R_p(\tau_*(\sigma_{\ell}),\sigma_{\ell}))}{\sigma^{4-2p}_w (\tau_*(\sigma_{\ell}))^2} \nonumber \\
&=&\frac{1}{\delta-\epsilon}\cdot  \lim_{\sigma_{\ell}\rightarrow 0} \frac{R_p(\tau_*(\sigma_{\ell}),\sigma_{\ell})-\epsilon}{(\tau_*(\sigma_{\ell}))^2\sigma^{2-2p}_{\ell}}\cdot \lim_{\sigma_{\ell}\rightarrow 0} \frac{\sigma^{4-2p}_{\ell}}{\sigma^{4-2p}_w} \overset{(b)}{=}\frac{\epsilon p^2\mathbb{E}|B|^{2p-2}\delta^{2-p}}{(\delta-\epsilon)^{3-p}}.
\end{eqnarray}
We have used  \eqref{lownoise:ana} to obtain (a); the derivation of (b) is due to \eqref{lownoise:prop} and Proposition \ref{lem:riskbehsmallsigma}. Our next step is to show that 
\[
 \lim_{\sigma_w \rightarrow 0}\frac{\sigma_\ell^2-\frac{\delta }{\delta - \epsilon}\sigma_w^2}{\sigma^{4-2p}_w (\tau_*(\sigma_{w}))^2}= \frac{\epsilon p^2\mathbb{E}|B|^{2p-2}\delta^{2-p}}{(\delta-\epsilon)^{3-p}}.
\]

First note that we have proved in Lemma \ref{rough:rate}, $\tau_*(\sigma_w) \rightarrow \infty$ as $\sigma_w \rightarrow 0$. We can use \eqref{eq:tau*calc1} to prove that $\tau_*(\sigma_\ell) / \tau_*(\sigma_w) \rightarrow 1$ as $\sigma_w \rightarrow 0$ in the following way. According to Lemma \ref{rough:rate}, $\tau_*(\sigma_w), \tau_*(\sigma_\ell) \rightarrow \infty$. Furthermore, by employing \eqref{eq:tau*calc1} we obtain

\begin{eqnarray*}
 \lim_{\sigma_w \rightarrow 0} \frac{\sigma_\ell^{2-2p}}{(\tau_*(\sigma_\ell))^{\frac{2p-1}{2-p}}\phi(c_p(\tau_*(\sigma_\ell))^{\frac{1}{2-p}})} \frac{(\tau_*(\sigma_w))^{\frac{2p-1}{2-p}}\phi(c_p(\tau_*(\sigma_w))^{\frac{1}{2-p}})}{\sigma_w^{2-2p}}=1. 
\end{eqnarray*}
By applying \eqref{lownoise:prop} we reach   
\begin{eqnarray*}
 \lim_{\sigma_w \rightarrow 0} \frac{(\tau_*(\sigma_w))^{\frac{2p-1}{2-p}}\phi(c_p(\tau_*(\sigma_w))^{\frac{1}{2-p}})}{(\tau_*(\sigma_\ell))^{\frac{2p-1}{2-p}}\phi(c_p(\tau_*(\sigma_\ell))^{\frac{1}{2-p}})} =\Big (1- \frac{\epsilon}{\delta} \Big)^{1-p},
\end{eqnarray*}
which implies
\begin{eqnarray*}
\lim_{\sigma_w \rightarrow 0} \frac{2p-1}{2-p} \log \frac{\tau_*(\sigma_w)}{\tau_*(\sigma_\ell)}-\frac{c_p^2}{2} (\tau_*(\sigma_w)^{2/(2-p)}- \tau_*(\sigma_\ell)^{2/(2-p)} ) = (1-p) \log(1- \epsilon/\delta). 
\end{eqnarray*}
If we combine this with the fact that $\tau_*(\sigma_\ell) \rightarrow \infty$ and $\tau_*(\sigma_w) \rightarrow \infty$, then we conclude that 
\begin{eqnarray*}
\lim_{\sigma_w \rightarrow 0} \frac{ \frac{2p-1}{2-p} \log \frac{\tau_*(\sigma_w)}{\tau_*(\sigma_\ell)}-\frac{c_p^2}{2} (\tau_*(\sigma_w)^{2/(2-p)}- \tau_*(\sigma_\ell)^{2/(2-p)} )}{\tau_*(\sigma_w)^{2/(2-p)}} = 0,
\end{eqnarray*}
that in turn implies that 
\begin{equation}\label{eq:prooffinln2}
\lim_{\sigma_w \rightarrow 0} \frac{\tau_*(\sigma_w)}{\tau_*(\sigma_\ell)} =1. 
\end{equation}
Combining \eqref{eq:prooffinln1} and \eqref{eq:prooffinln2} proves that
\[
 \lim_{\sigma_w \rightarrow 0}\frac{\sigma_\ell^2-\frac{\delta }{\delta - \epsilon}\sigma_w^2}{\sigma^{4-2p}_w (\tau_*(\sigma_{w}))^2}= \frac{\epsilon p^2\mathbb{E}|B|^{2p-2}\delta^{2-p}}{(\delta-\epsilon)^{3-p}},  
\]
where $\tau_*(\sigma_w)$ satisfies \eqref{eq:tau*calc1}. It is then straightforward to use \eqref{eq:tau*calc1} and the fact that $\tau_*(\sigma_w) \rightarrow \infty$ to show that
\[
 \lim_{\sigma_w \rightarrow 0} \frac{(\tau_*(\sigma_w))^{\frac{2}{2-p}}}{\log \frac{1}{\sigma_w}} = \frac{4 (1-p)}{c_p^2}. 
 \]

\subsection{Proof of Theorem \ref{thm:lownoisehardthresh1}} \label{ssec:prooflownosiehardthresh}

We first remind the reader the definition 
 \begin{equation*}
 R_0(\tau, \sigma)\triangleq(1-\epsilon)\mathbb{E}\eta_0^2(Z;\tau)+\epsilon \mathbb{E}(\eta_0(U/\sigma+Z;\tau)-U/\sigma)^2,
 \end{equation*}
where $Z \sim N(0,1)$ and $U \sim G$ are independent. Also let $\tau_*(\sigma)$ denote the optimal $\tau$ that minimizes $R_0(\tau,\sigma)$. In Proposition \ref{lem:riskbehsmallsigmazero} we proved that as $\sigma \rightarrow 0$ if $\mu=\sup_v \{v : P(|B|>v)=1 \} >0$, then for $p=0$,
\begin{eqnarray*}
 R_0(\tau_*(\sigma),\sigma)=\epsilon+o(\phi(\tilde{\mu}\sigma^{-1})),
\end{eqnarray*}
where $\tilde{\mu}$ is any constant that smaller than $\frac{\mu}{2}$.

The rest of the proof is similar to the proof of Theorem \ref{sec:prooflemmasmallnoise} and is hence skipped.

\subsection{Proof of Theorem \ref{thm:hfpnoisy}}\label{sec:proofnoisyhigh}
Let $X$ and $U$ denote two independent random variables with distributions $(1- \epsilon)\Delta_0+ \epsilon G$ and $G$, respectively. As before, our only assumption on $G$ is that it does not have any point mass at zero. Note that $\sigma_h^2$ satisfies the following fixed point equation:
\begin{eqnarray*}
\sigma_h^2 &=& \sigma_w^2 + \frac{1}{\delta}\inf_{\tau \geq 0}  \Big [ (1-\epsilon) \mathbb{E}  \eta^2_p(\sigma_h Z ; \tau)+ \epsilon \mathbb{E} ( \eta_p(U+ \sigma_h Z ; \tau) - U)^2  \Big ]\nonumber \\
&=& \sigma_w^2 +  \frac{\sigma_h^2}{\delta} \inf_{\tau \geq 0} \Big [(1-\epsilon) \mathbb{E}  \eta_p^2( Z ; \tau \sigma_h ^{p-2})+ \epsilon \mathbb{E}_U (\mathbb{E}_Z ( \eta_p(U/\sigma_h+ Z ; \tau \sigma_h^{p-2}) - U/\sigma_h)^2) \Big ] \nonumber \\
&\leq & \sigma_w^2 + \frac{\sigma_h^2}{\delta} \inf_{\tau \geq 0} \Big [(1-\epsilon) \mathbb{E}  \eta_p^2( Z ; \tau \sigma_h ^{p-2})+ \epsilon \sup_{\mu \geq 0} \mathbb{E}( \eta_p(\mu/\sigma_h+ Z ; \tau \sigma_h^{p-2}) - \mu /\sigma_h)^2 \Big ] \nonumber \\
&\leq& \sigma_w^2  + \frac{\sigma_h^2}{\delta}\inf_{\tau \geq 0}  \Big [(1-\epsilon) \mathbb{E}  \eta_p^2( Z ; \tau \sigma_h ^{p-2})+ \epsilon \sup_{\mu \geq 0} \mathbb{E}( \eta_p(\mu+ Z ; \tau \sigma_h^{p-2}) - \mu)^2 \Big ] \nonumber \\
&=& \sigma_w^2  + \frac{\sigma_h^2}{\delta}\inf_{\tau \geq 0}  \Big [(1-\epsilon) \mathbb{E}  \eta_p^2( Z ; \tau)+ \epsilon \sup_{\mu \geq 0} \mathbb{E}( \eta_p(\mu+ Z ; \tau ) - \mu)^2 \Big ] \nonumber \\
&= & \sigma_w^2 +  \frac{\sigma_h^2}{\delta}\overline{M}_p(\epsilon).
\end{eqnarray*}

The proof of the second part of the theorem consider the following definitions:
\begin{eqnarray}
\tau_{*,\mu} &\triangleq& \arg\min_\tau \mathbb{E}\left( \epsilon (\eta_p (\mu+ z;\tau)- \mu)^2 + (1-\epsilon) \mathbb{E} (\eta_p(z; \tau))^2 \right), \nonumber \\
\mu_* &\triangleq& \arg\max_\mu \mathbb{E}\left( \epsilon (\eta_p (\mu+ z;\tau_{*,\mu})- \mu)^2 + (1-\epsilon) \mathbb{E} (\eta_p(z; \tau_{*,\mu}))^2 \right).
\end{eqnarray}
Note for notational simplicity we have assumed the maximas and minimas are achieved. Also define
\[
(\sigma_h^*)^2 \triangleq \frac{\sigma_w^2}{1- \frac{\underline{M}_p(\epsilon)}{\delta}}. 
\]
 We consider a distribution $G$ that has a  point mass at $\mu^* \sigma_h^*$. For this distribution we have
\begin{eqnarray}
 \Psi_{\tau_*, p}((\sigma_h^*)^2) &=& \sigma_w^2  + \frac{(\sigma^*_h)^2}{\delta}\inf_{\tau \geq 0}  \Big [(1-\epsilon) \mathbb{E}  \eta_p^2( Z ; \tau (\sigma^*_h)^{p-2})+ \epsilon  \mathbb{E}( \eta_p(\mu_*+ Z ; \tau (\sigma^*_h)^{p-2}) - \mu_*)^2 \Big ] \nonumber \\
&=& \sigma_w^2  + \frac{(\sigma_h^*)^2}{\delta}  \Big [(1-\epsilon) \mathbb{E}  \eta_p^2( Z ; \tau_{*,{\mu_*}})+ \epsilon \mathbb{E}( \eta_p(\mu_*+ Z ; \tau_{*,{\mu_*}} ) - \mu_*)^2 \Big ] \nonumber\\
&=& \sigma_w^2 + \frac{\underline{M}_p (\epsilon)}{\delta}(\sigma_h^*)^2 = (\sigma_h^*)^2. 
\end{eqnarray}
Hence, $\sigma_h^*$ is a fixed point of the function. If it is an unstable fixed point we can use the argument presented for Proposition \ref{proof:existsncestable} to show that there is another stable fixed point above $(\sigma_h^*)^2$.

\subsection{Proof of Theorem \ref{thm:improvecontinuation}}\label{sec:proofcontlownoise}

Let $X \sim (1-\epsilon) \Delta_0+ \epsilon G$, where $G$ is an arbitrary distribution that does not have any mass at zero. Also let $U \sim G$ denote a random variable. Then we have
\begin{eqnarray*}
\sigma_h^2 = \sigma_w^2 +\frac{1}{\delta} \inf_{0 \leq p \leq 1, \lambda \geq 0} \mathbb{E} (\eta_p(X+ \sigma_h Z; \lambda) - X)^2  \overset{(a)}{\leq} \sigma_w^2 + \frac{M_p(\epsilon)\sigma_h^2}{\delta},
 \end{eqnarray*}
 where (a) follows similar arguments as in Section \ref{sec:proofnoisyhigh}. Hence, we conclude that
\[
\sigma_h^2 \leq \frac{\sigma_w^2}{1-M_p(\epsilon)/\delta}. 
\]
This implies that if $\sigma_w^2 \rightarrow  0$, then $\sigma_h^2 \rightarrow 0$. Moreover, it is straightforward to see that 
\begin{eqnarray}\label{need:one}
\lim_{\sigma_w^2 \rightarrow 0} \frac{\sigma_h^2}{\sigma_w^2} = \frac{1}{1-  \lim_{\sigma_h \rightarrow 0}\frac{ \Psi_{\lambda_*, p_*} (\sigma_h^2)}{ \sigma_h^2}}. 
\end{eqnarray}

From the proof of Theorem \ref{lem:noiselesslowfpless1}, we know that for every fixed $p<1$,
\[
\lim_{\sigma^2 \rightarrow 0} \frac{\Psi_{\lambda_*, p} (\sigma^2)}{\sigma^2} = \frac{\epsilon}{\delta}.  
\]
Hence, it is straightforward to show that 
\begin{eqnarray}\label{need:two}
\limsup_{\sigma^2 \rightarrow 0} \frac{ \Psi_{\lambda_*, p_*} (\sigma^2)}{ \sigma^2}  \leq \limsup_{\sigma^2 \rightarrow 0} \frac{ \Psi_{\lambda_*, 0} (\sigma^2)}{ \sigma^2} = \frac{\epsilon}{\delta}. 
\end{eqnarray}

Define $\Gamma(\sigma^2) \triangleq \frac{1}{\delta}\mathbb{E} (\mathbb{E}(X | X+\sigma Z) -X)^2$. Since $\mathbb{E} (X | X+ \sigma Z)$ is the minimum mean square error estimator we have
\[
\Psi_{\lambda_*, p_*}(\sigma) \geq \Gamma(\sigma^2).
\]
Hence,
\begin{eqnarray}\label{need:three}
\liminf_{\sigma \rightarrow 0} \frac{\Psi_{\lambda_*, p_*}(\sigma^2)}{\sigma^2} \geq \lim_{\sigma \rightarrow 0} \frac{\Gamma(\sigma^2)}{\sigma^2}= \frac{\epsilon}{\delta},
\end{eqnarray}
where the last equality is a combination of Theorems 5 and 8 in \cite{wu2011mmse}. Combing \eqref{need:one}, \eqref{need:two} and \eqref{need:three} together finishes the proof. Note that \eqref{need:two} and \eqref{need:three} together shows $\left. \frac{d \Psi_{\lambda_*,p_*}(\sigma^2)}{d\sigma^2} \right|_{\sigma^2=0}=\frac{\epsilon}{\delta}$. Then by using similar arguments as in Theorem \ref{thm:noisyoptimallambda}, we can show $\sigma_h$ is the unique stable fixed point when $\sigma_w$ is small enough. Hence implicit function theorem can be applied to claim the continuity of $\sigma_h$, as a function of $\sigma_w$ in a neighborhood of 0.

\section{Optimal $\ell_p$-AMP in practice}\label{sec:simulation}
\subsection{Stein Unbiased Risk Estimate}\label{ssec:simSURE}
The optimal $\ell_p$-AMP algorithm introduced in Section \ref{sec:optlambda} employs the following thresholding policy:
\[
\lambda_*(\sigma) \in \arg \min_{\lambda \geq 0} \mathbb{E} \left( {{{\left| {\eta_{p} (X + {\sigma }Z;{\lambda}) - X} \right|}^2}} \right),
\]
where the expected value is with respect to both $Z\sim N(0,1)$ and $X \sim p_X$. Note that $\lambda_*$ is a function of both $\sigma$ and $p_X$. While it is possible to provide a good estimate of $\sigma$, coming up with a good estimate of $p_X$ is very challenging, if not impossible, in many applications. The question we would like to answer in this section is whether we can provide an accurate estimate of $\lambda_*(\sigma)$ without any knowledge of $p_X$ in practice. Similar questions can be asked regarding the optimal choice of $p$, introduced in Section \ref{sec:optimalpandlambda}, or even the optimal choice of $h$ introduced in \eqref{eq:ellpamp}. We answer these questions in this section. Our approach is motivated by Stein Unbiased Risk Estimate (SURE), that we briefly summarize here. Let $x_o \in \mathbb{R}^N$  and suppose that we observe $\tilde{x} = x_o +\rho$ with $\rho \sim N(0, \sigma^2 I)$. To estimate $x_o$ we employ a denoiser $\mathcal{D}: \mathbb{R}^N \rightarrow \mathbb{R}^N$. Can we estimate the risk of this denoiser, i.e., $$r_{\mathcal{D}}  \triangleq \mathbb{E} \|\mathcal{D}(\tilde{x}) - x_o\|_2^2?$$ 
If the answer is affirmative, then the risk estimate can be employed for tuning the free parameters of the denoiser or in comparing different denoisers. Note that the main challenge for estimating the risk, $r_{\mathcal{D}}$, is that $x_o$ is not known. The following theorem due to Stein provides a simple way to find an unbiased estimate of $r_{\mathcal{D}}$.

\vspace{.2cm}

\begin{lemma} {\rm \cite{stein1981estimation}} \label{lem:stein}
Let $\mathcal{D}(\tilde{x})$ denote the denoiser. If $\mathcal{D}$ is weakly differentiable, then
\begin{equation}\label{eq:stein}
\mathbb{E} \| \mathcal{D}(\tilde{x}) - x_o\|^2/N = \mathbb{E} \|\mathcal{D}(\tilde{x}) - \tilde{x}\|_2^2/N+  \sigma^2 + 2 \sigma^2  \mathbb{E} (\mathbf{1}^T (\nabla \mathcal{D}(\tilde{x})-\mathbf{1}))/N,
\end{equation}
where $\nabla \mathcal{D}(\tilde{x})=(\frac{\partial \mathcal{D}_1(\tilde{x})}{\partial \tilde{x}_1},\ldots, \frac{\partial \mathcal{D}_N(\tilde{x})}{\partial \tilde{x}_N})^T$ and $\mathbf{1}$ is an all one vector.
\end{lemma}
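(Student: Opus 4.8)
The plan is to prove Stein's lemma, i.e.\ the identity \eqref{eq:stein}, by reducing it to a one-dimensional integration-by-parts computation. First I would expand the left-hand side by writing $\|\mathcal{D}(\tilde{x})-x_o\|^2 = \|\mathcal{D}(\tilde{x})-\tilde{x} + \tilde{x} - x_o\|^2$ and using $\tilde{x}-x_o = \rho$. Expanding the square gives three groups of terms:
\[
\|\mathcal{D}(\tilde{x})-x_o\|^2 = \|\mathcal{D}(\tilde{x})-\tilde{x}\|^2 + \|\rho\|^2 + 2\rho^T(\mathcal{D}(\tilde{x})-\tilde{x}).
\]
Taking expectations and dividing by $N$, the first term matches the first term on the right-hand side of \eqref{eq:stein}, and $\mathbb{E}\|\rho\|^2/N = \sigma^2$ handles the second. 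So the entire content of the lemma reduces to showing the cross-term identity
\[
\tfrac{2}{N}\,\mathbb{E}\,\rho^T(\mathcal{D}(\tilde{x})-\tilde{x}) = \tfrac{2\sigma^2}{N}\,\mathbb{E}\big(\mathbf{1}^T(\nabla \mathcal{D}(\tilde{x})-\mathbf{1})\big).
\]

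The key step is therefore a componentwise application of the Gaussian integration-by-parts formula. I would write $\rho^T(\mathcal{D}(\tilde{x})-\tilde{x}) = \sum_{i=1}^N \rho_i\big(\mathcal{D}_i(\tilde{x})-\tilde{x}_i\big)$ and treat each summand separately, conditioning on all coordinates other than $i$ so that the problem becomes genuinely one-dimensional in $\rho_i \sim N(0,\sigma^2)$. Using $\tilde{x}_i = x_{o,i}+\rho_i$ and the fact that $\mathcal{D}_i(\tilde{x})-\tilde{x}_i$ is a weakly differentiable function of $\rho_i$ (with $x_{o,i}$ fixed), the classical identity $\mathbb{E}[\rho_i\, g(\rho_i)] = \sigma^2\,\mathbb{E}[g'(\rho_i)]$ for $g$ weakly differentiable and $\rho_i \sim N(0,\sigma^2)$ gives
\[
\mathbb{E}\big[\rho_i(\mathcal{D}_i(\tilde{x})-\tilde{x}_i)\big] = \sigma^2\,\mathbb{E}\Big[\tfrac{\partial \mathcal{D}_i(\tilde{x})}{\partial \tilde{x}_i}-1\Big],
\]
since $\partial \tilde{x}_i/\partial \rho_i = 1$. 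Summing over $i$ and recalling the definition of $\nabla \mathcal{D}(\tilde{x})$ as the vector of diagonal partials yields exactly the cross-term identity, completing the proof.

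The main obstacle is the justification of the scalar Gaussian integration-by-parts step under only the weak differentiability hypothesis. I would establish it by the standard argument: for the one-dimensional Gaussian density $\phi_\sigma$, one has $\rho_i\,\phi_\sigma(\rho_i) = -\sigma^2\,\phi_\sigma'(\rho_i)$, so that $\mathbb{E}[\rho_i g(\rho_i)] = -\sigma^2\int g(\rho_i)\,\phi_\sigma'(\rho_i)\,d\rho_i$, and integration by parts transfers the derivative onto $g$ provided the boundary terms $g(\rho_i)\phi_\sigma(\rho_i)$ vanish at $\pm\infty$. This requires mild integrability/growth control on $\mathcal{D}$ (so that $\mathcal{D}_i$ does not grow faster than the Gaussian decays), which is implicit in the weak-differentiability assumption together with the finiteness of the risk $r_{\mathcal{D}}$; I would note that for weakly differentiable $g$ the integration-by-parts formula holds in the almost-everywhere sense and that Fubini's theorem legitimizes the conditioning argument once these boundary terms are controlled. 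Since this is Stein's classical result, I would cite \cite{stein1981estimation} for the delicate regularity details and present the computation above as the self-contained derivation of the stated form.
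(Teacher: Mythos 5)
Your proof is correct, but note that the paper itself offers no proof of this lemma at all: it is stated as a citation to Stein's 1981 paper and used as a black box. Your derivation --- expanding $\|\mathcal{D}(\tilde{x})-x_o\|^2$ around $\tilde{x}$, killing the cross term $\mathbb{E}[\rho^T(\mathcal{D}(\tilde{x})-\tilde{x})]$ coordinatewise via the scalar Gaussian integration-by-parts identity $\mathbb{E}[\rho_i g(\rho_i)]=\sigma^2\mathbb{E}[g'(\rho_i)]$ after conditioning on the other coordinates, and summing to recover $\mathbf{1}^T(\nabla\mathcal{D}(\tilde{x})-\mathbf{1})$ --- is the standard and essentially unique elementary argument, and every step checks out. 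Your handling of the regularity issue is also the right call: the identity genuinely requires integrability of the partials $\partial\mathcal{D}_i/\partial\tilde{x}_i$ and vanishing boundary terms beyond bare weak differentiability (a hypothesis the lemma as stated in the paper quietly suppresses), so flagging that gap and deferring the delicate measure-theoretic details to the original reference is appropriate rather than a defect.
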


\vspace{.2cm}

Note that the terms inside the expectation on the right hand side of \eqref{eq:stein} do not depend on $x_o$. This enables us to provide the following estimate of the risk function:
\[
\hat{r}_{\mathcal{D}} =  \| \mathcal{D}(\tilde{x}) - \tilde{x}\|_2^2/N+  \sigma^2 + 2 \sigma^2 (\mathbf{1}^T (\nabla \mathcal{D}(\tilde{x})-\mathbf{1}))/N.
\]  
According to Lemma \ref{lem:stein}, we have $r_{\mathcal{D}} = \mathbb{E} (\hat{r}_{\mathcal{D}})$. Hence, $\hat{r}_{\mathcal{D}}$ provides an unbiased estimate of $r_{\mathcal{D}}$. SURE has been used elsewhere for model selection \cite{hastie2009elements}.  Our next goal is to employ the idea of SURE for the $\ell_p$-AMP algorithm.

\subsection{$\ell_p$-AMP and SURE}\label{sec:ampsure}

Can SURE be employed to estimate $\lambda_*(\sigma)$ or $p_*(\sigma)$ for the optimal $\ell_p$-AMP? As we discussed in Section \ref{ssec:simSURE}, SURE can be used for denoising problems in which the noise is Gaussian. Also as we discussed in Section \ref{sec:firstgaussianity}, if we define $v^t \triangleq A^T z^t+x^t-x_o$, then we can write $x^t+ A^Tz^t = x_o+ v^t$, where $v^t$ resembles iid Gaussian random vector in the asymptotic settings. Hence, at the intuitive level, we should be able to use SURE to estimate the optimal parameters of $\ell_p$-AMP.  This intuition is in fact valid and we formalize it below. We only consider the estimation of $\lambda_*(\sigma)$. But, the approach can be extended to the tuning of the other parameters as well. \\

Consider the iterations of $\ell_p$-AMP with the optimal thresholding policy, $\lambda_*(\sigma)$. If the algorithm starts at $\sigma_t = \sigma_0$, then the first threshold is $\lambda_*(\sigma_0)$. Note that $\lambda_*(\sigma_0)$ is the value of $\lambda$ that minimizes $ \lim_{N \rightarrow \infty} \frac{1}{N}\| \tilde \eta_{p,h}(x_o+v^0; \lambda) -x_o \|_2^2$. 
Once we run $\ell_p$-AMP with this threshold, the standard deviation of the next iteration will be $\sigma_1$ and hence the next threshold will be $\lambda(\sigma_1)$, and again this is the value of $\lambda$ that minimizes $ \lim_{N \rightarrow \infty} \frac{1}{N}\| \tilde \eta_{p,h}(x_o+v^1; \lambda) -x_o \|_2^2$. This discussion reveals two main properties of the optimal thresholding policy:

\begin{enumerate}
\item[(i)] We do not have to estimate the entire function $\lambda_*(\sigma)$. We only need to estimate it at the values of $\sigma_0, \sigma_1, \sigma_2, \ldots$ that are actually observed in the $\ell_p$-AMP algorithm. 

\item[(ii)] At iteration $t$, $\lambda_*(\sigma_t)$ is the value of $\lambda$ that minimizes the risk $\lim_{N \rightarrow \infty} \frac{1}{N}\| \tilde \eta_{p,h}(x_o+v^t; \lambda) -x_o \|_2^2$. 
\end{enumerate}

These two conclusions imply that if at iteration $t$ we find the value of $\lambda$ that minimizes $\lim_{N \rightarrow \infty} \frac{1}{N}\| \tilde \eta_{p,h}(x_o+v^t; \lambda) -x_o \|_2^2$, then the resulting $\ell_p$-AMP algorithm will perform the same as optimal-$\lambda$ $\ell_p$-AMP. Hence, the problem of finding the optimal thresholding policy for $\ell_p$-AMP is simplified to the problem of tuning the parameters of $\ell_p$-AMP at a single iteration (without taking the other iterations into account).  If the iterations of $\ell_p$-AMP are given by:
\begin{eqnarray}
\label{eq:estimation}
{ x^t} &=& \tilde{\eta}_{p,h} ({A^T}{z^{t - 1}} + {x^{t - 1}};{\lambda _t}), \nonumber \\
{z^t} &=& y - A{x^t} + {z^{t - 1}}\frac{1}{\delta }\left\langle {\tilde{\eta}_{p,h}' }({A^T}{z^{t - 1}} + {{x}^{t - 1}};{\lambda _t}) \right\rangle.
\end{eqnarray}
The optimal value of $\lambda_t$ at iteration $t$ is the value of $\lambda$ that minimizes
\[
r_{p,h}^t (\lambda)\triangleq \lim_{N \rightarrow \infty} \frac{1}{N} \| \tilde \eta_{p,h}(x_o+v^t; \lambda)- x_o \|_2^2.
\]
Since we know that $v^t$ is almost Gaussian, inspired by Stein Unbiased Risk Estimate (SURE), we consider the following empirical estimate of the risk at iteration $t$:
\begin{eqnarray}
\hat{r}_{p,h, \lambda}^t = \frac{1}{N}  \| \tilde \eta_{p,h}(x_o+ v^t; \lambda)-x_o-v^t\|^2- \sigma_t^2 + \frac{2\sigma_t^2}{N} {\rm div} (\tilde \eta_{p,h}({x_o+v^t} ; \lambda)),
\end{eqnarray}
where ${\rm div}$ denotes the divergence of $\tilde \eta_{p,h}$ and is defined as ${\rm div} (\tilde \eta_{p,h}({x_o+v^t} ; \lambda))\triangleq \sum_{i=1}^N \frac{\partial (\tilde \eta_{p,h}({x_{o,i}+v_i^t} ; \lambda))}{\partial x_{o,i}}$. The following theorem confirms  that in the asymptotic setting $\hat{r}_{p,h, \lambda}^t $ provides an accurate estimate of the risk function,i.e.,
\[
\lim_{N \rightarrow \infty} \hat{r}_{p,h, \lambda}^t  \overset{a.s.}{=} r_{p,h}^t (\lambda).
\]

\vspace{.2cm}

\begin{theorem}\label{thm:empiricalriskconvergence}
Let $\{x_o(N), A(N), w(N)\}$ denote a converging sequence. Let the parameters $\lambda_1, \lambda_2, \ldots, \lambda_{t-1}$ denote the threshold parameters of $\ell_p$-AMP for the first $t-1$ iterations and $x^t(N)$ and $z^t(N)$ denote the estimates of $\ell_p$-AMP according to \eqref{eq:estimation}. Then,
\begin{eqnarray}
 \lim_{N \rightarrow \infty} \hat{r}_{p,h, \lambda}^t  \overset{\rm a.s.}{=} \mathbb{E} (\tilde \eta_{p,h}(X+ \sigma_t Z; \lambda) - X)^2.
\end{eqnarray}
where $\sigma_t$ satisfies the following iteration:
\begin{equation}
\sigma _{t }^2 = {\sigma_w ^2} + \frac{1}{\delta }\mathbb{E}\left( {{{\left| {\tilde{\eta}_{p,h} (X + {\sigma _{t-1}}Z;{\lambda _{t-1}}) - X} \right|}^2}} \right).
\end{equation}
Here the expected value is with respect to two independent random variables $Z \sim N(0,1)$ and $X\sim p_X$. $\sigma_0^2$ depends on the initialization of the algorithm. \end{theorem}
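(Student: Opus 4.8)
The plan is to recognize this as the standard consistency statement for the AMP--SURE risk estimate, and to prove it by reducing each of the three summands of $\hat{r}_{p,h,\lambda}^t$ to an empirical average of a test function evaluated at the ``effective observation'' $r^t \triangleq A^T z^t + x^t = x_o + v^t$, to which the master state-evolution convergence result applies. Concretely, I would first note that $v^t$ never has to be known: since $x_o + v^t = A^T z^t + x^t$, every term in $\hat{r}_{p,h,\lambda}^t$ is a function of the computable quantities $r^t$ and $x_o$ (and the deterministic constant $\sigma_t^2$). The engine of the proof is Theorem~1 of \cite{bayati2012lasso} (the same result underlying Theorem~\ref{conj:se}): for any pseudo-Lipschitz $\psi:\mathbb{R}^2\to\mathbb{R}$ of order $2$,
\[
\frac{1}{N}\sum_{i=1}^N \psi\big(r^t_i, x_{o,i}\big) \overset{\rm a.s.}{\longrightarrow} \mathbb{E}\big[\psi(X+\sigma_t Z, X)\big],
\]
where $\sigma_t$ is exactly the state-evolution standard deviation appearing in the theorem statement, so that $r^t$ behaves asymptotically like $X+\sigma_t Z$.

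For the first summand I would take $\psi_1(u,x)=\big(\tilde{\eta}_{p,h}(u;\lambda)-u\big)^2$. In the proof of Theorem~\ref{conj:se} it was shown that $\sup_u|\tilde{\eta}'_{p,h}(u;\lambda)|<\infty$ (combining \eqref{eq:ST} and \eqref{eq:D_p}), so $u\mapsto \tilde{\eta}_{p,h}(u;\lambda)-u$ is Lipschitz and hence $\psi_1$ is pseudo-Lipschitz of order $2$; this gives
$\tfrac1N\|\tilde{\eta}_{p,h}(r^t;\lambda)-r^t\|_2^2 \to \mathbb{E}\big(\tilde{\eta}_{p,h}(X+\sigma_t Z;\lambda)-X-\sigma_t Z\big)^2$. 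The middle term $-\sigma_t^2$ is a deterministic constant. The divergence term equals $\tfrac{2\sigma_t^2}{N}\sum_i \tilde{\eta}'_{p,h}(r^t_i;\lambda)$, i.e. the empirical average of $\psi_2(u,x)=\tilde{\eta}'_{p,h}(u;\lambda)$. Here $\psi_2$ is bounded but, crucially, \emph{not} continuous: by Lemma~\ref{lem:derivativeproperties}(ii) the one-sided derivative $S_p'(\tilde{\lambda}_p^+)=\eta_p'(\tilde{\lambda}_p^+;\lambda)>1$ while $S_p'(\tilde{\lambda}_p^-)=0$, so $\tilde{\eta}'_{p,h}$ has jump discontinuities at $\pm\tilde{\lambda}_p$ inherited from $S_p$. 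This is the main obstacle of the proof: the master theorem requires continuous (pseudo-Lipschitz) test functions, so I cannot apply it verbatim to $\psi_2$. The resolution I would use is a sandwiching argument: approximate $\tilde{\eta}'_{p,h}$ uniformly from above and below by bounded Lipschitz functions that differ from it only on $\kappa$-neighborhoods of $\pm\tilde{\lambda}_p$, apply the master theorem to each Lipschitz approximant, and then let $\kappa\to 0$. Since $\sigma_t>0$ the limiting law $X+\sigma_t Z$ is absolutely continuous and assigns zero mass to $\{\pm\tilde{\lambda}_p\}$, so the two-sided bounds close and yield $\tfrac1N\sum_i\tilde{\eta}'_{p,h}(r^t_i;\lambda)\to\mathbb{E}\,\tilde{\eta}'_{p,h}(X+\sigma_t Z;\lambda)$.

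Assembling the three limits gives
\[
\lim_{N\to\infty}\hat{r}_{p,h,\lambda}^t \overset{\rm a.s.}{=} \mathbb{E}\big(\tilde{\eta}_{p,h}(X+\sigma_t Z;\lambda)-X-\sigma_t Z\big)^2 - \sigma_t^2 + 2\sigma_t^2\,\mathbb{E}\,\tilde{\eta}'_{p,h}(X+\sigma_t Z;\lambda),
\]
and the final step is to verify this equals $\mathbb{E}\big(\tilde{\eta}_{p,h}(X+\sigma_t Z;\lambda)-X\big)^2$ via Stein's identity. Writing $D\triangleq\tilde{\eta}_{p,h}(X+\sigma_t Z;\lambda)-X$ and expanding the square, it suffices to show $\mathbb{E}[DZ]=\sigma_t\,\mathbb{E}\,\tilde{\eta}'_{p,h}(X+\sigma_t Z;\lambda)$. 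Conditioning on $X$ and applying the one-dimensional Gaussian integration-by-parts $\mathbb{E}[g(Z)Z]=\mathbb{E}[g'(Z)]$ to $g(z)=\tilde{\eta}_{p,h}(X+\sigma_t z;\lambda)$ (legitimate because $\tilde{\eta}_{p,h}$ is weakly differentiable with bounded derivative, the hypothesis of Lemma~\ref{lem:stein}) produces exactly this identity, so the cross terms cancel and the limit collapses to the desired risk $\mathbb{E}\big(\tilde{\eta}_{p,h}(X+\sigma_t Z;\lambda)-X\big)^2$. The induction on $t$ (tracking $\sigma_t$ through the state-evolution recursion, as in the proof of Theorem~\ref{conj:se}) then completes the argument.
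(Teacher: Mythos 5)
Your proposal is correct and follows essentially the same route as the paper's proof: decompose $\hat{r}_{p,h,\lambda}^t$ into the three summands, apply the Bayati--Montanari asymptotic convergence result to the (pseudo-)Lipschitz squared-residual term and to the bounded divergence term, and close with Stein's identity to collapse the limit to $\mathbb{E}(\tilde{\eta}_{p,h}(X+\sigma_t Z;\lambda)-X)^2$. The only differences are presentational: where you spell out the sandwiching argument for the jump discontinuities of $\tilde{\eta}'_{p,h}$ at $\pm\tilde{\lambda}_p$ and verify the Gaussian integration-by-parts identity directly, the paper delegates these steps to the argument for Equation (4.11) of the Bayati--Montanari LASSO paper and to its Lemma on SURE, respectively.
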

\vspace{.2cm}

The proof of this result can be found in the Appendix. According to the above theorem the empirical risk provides an accurate estimate of $\mathbb{E} (\tilde \eta_{p,h}(X+ \sigma_t Z; \lambda) - X)^2$ for large values of $N$. Hence one can estimate the optimal value of $\lambda_t$ and $p_t$ in the following way:
\[
(\hat{\lambda}_t, \hat{p}_t) \in \arg\min_{\lambda, p} \hat{r}_{p,h, \lambda}^t. 
\]
Note that the empirical risk can be even employed for finding the optimal value of the parameter $h$. However, to reduce the computational complexity, we set $h$ automatically. This approach will be explained in the next section.

\subsection{Simulation Result}

In this section, we would like to compare our asymptotic results with the simulations that are performed at finite values of $N$. As we will present later, it turns out that our asymptotic results provide accurate predictions of the performance of the algorithm even for not too large sample sizes, such as $N=5000$. Furthermore, we present the result of the tuning approach we  proposed for the $\ell_p$-AMP algorithm and we show that the tuning approach we proposed based on SURE is in fact accurate even in medium problem sizes. 

\subsubsection{State Evolution versus $\ell_p$-AMP}

\begin{figure}
  \centering
  \subfloat[][]{\includegraphics[width=2.5in]{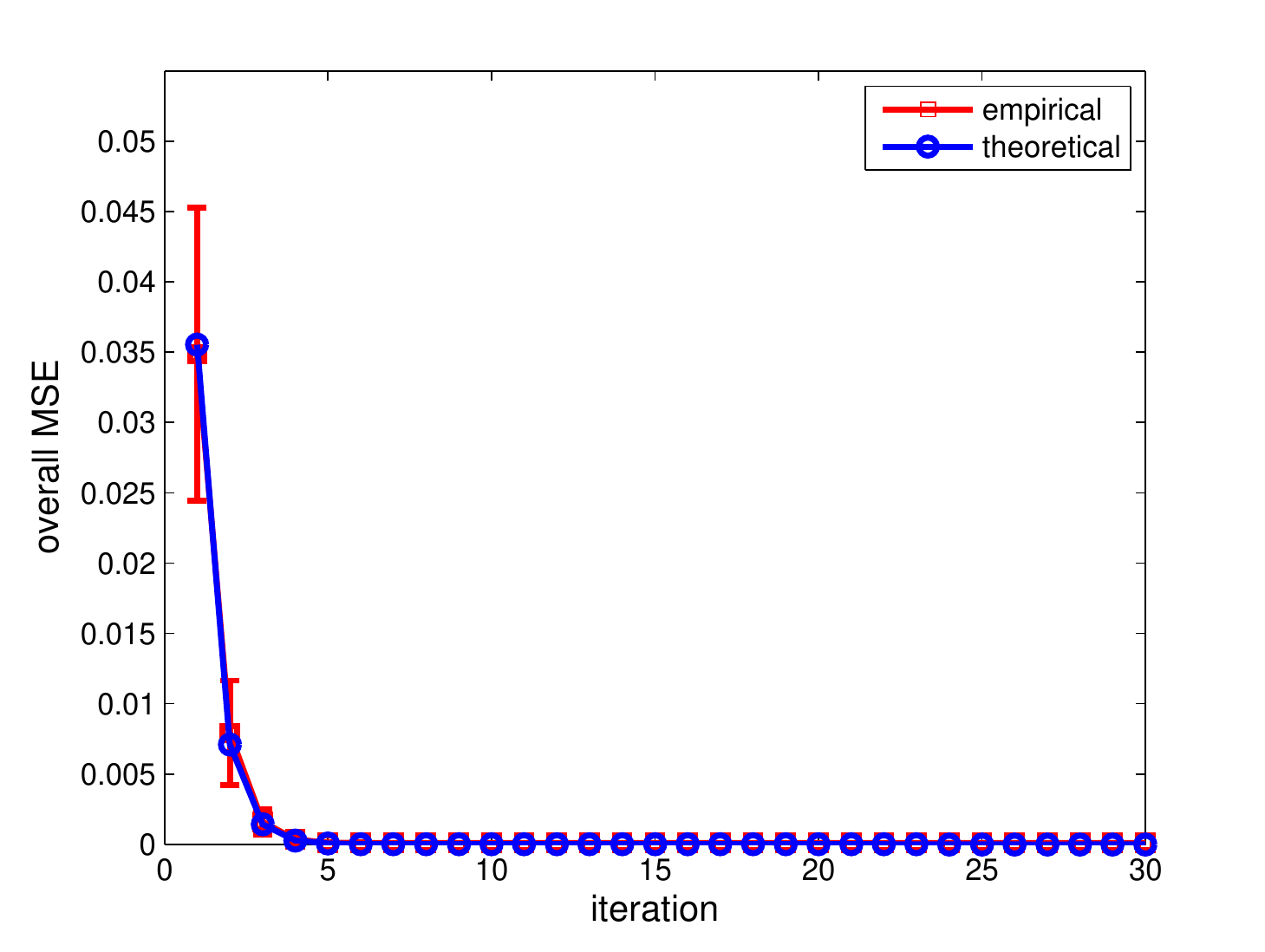}\label{fig:sim1_1a}}
  \subfloat[][]{\includegraphics[width=2.5in]{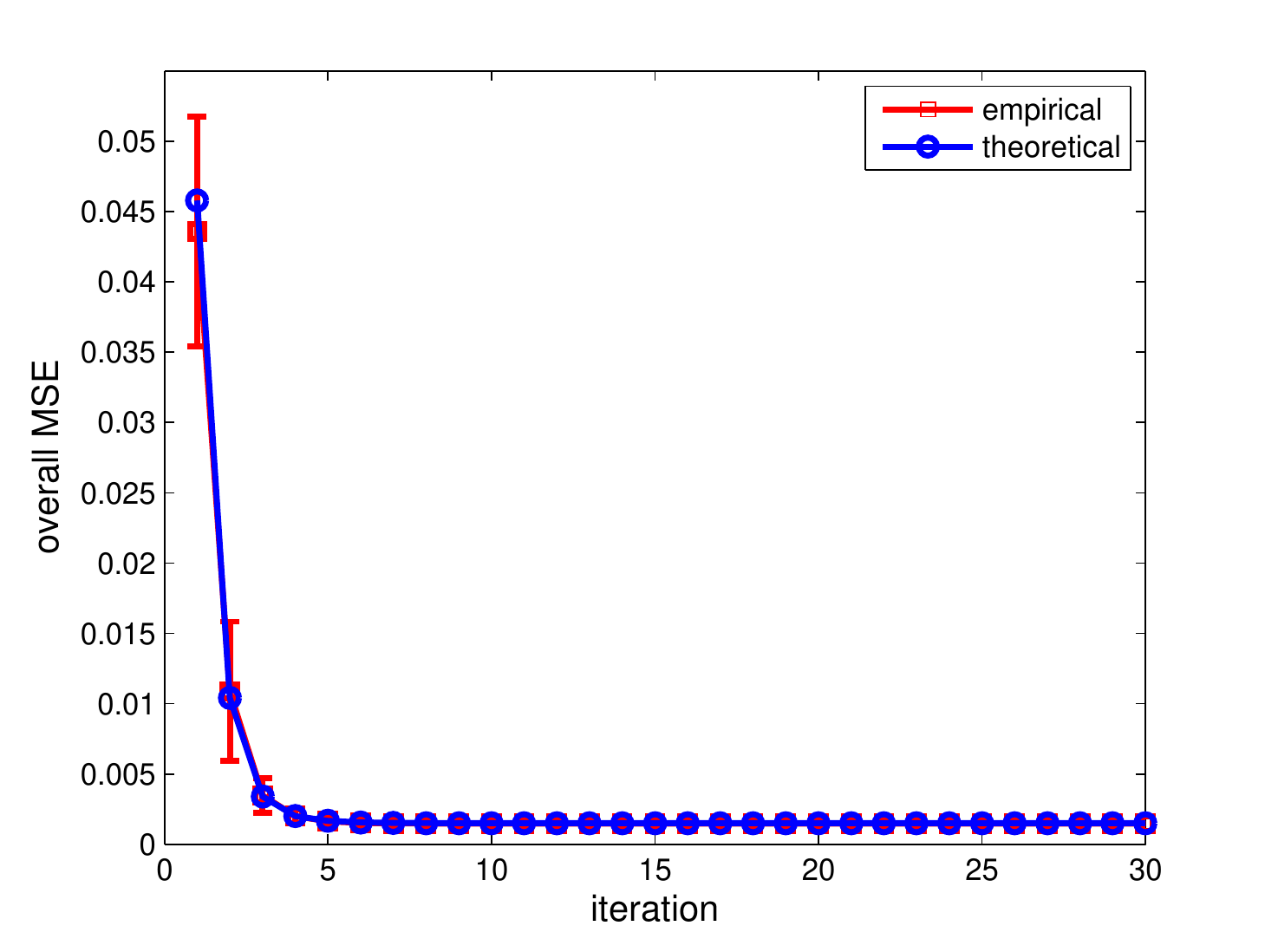}\label{fig:sim1_1b}}

  \subfloat[][]{\includegraphics[width=2.5in]{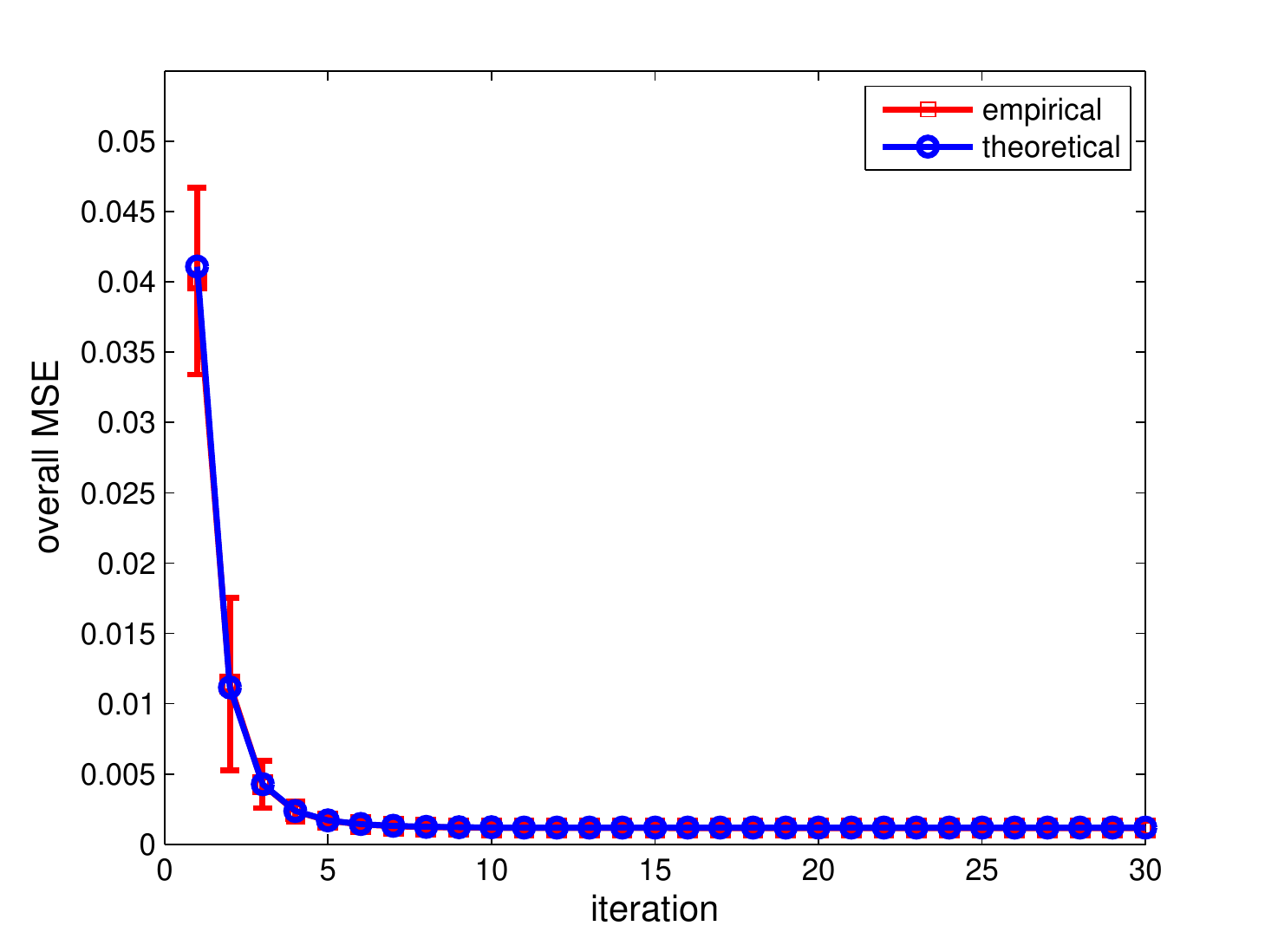}\label{fig:sim1_1c}}
  \subfloat[][]{\includegraphics[width=2.5in]{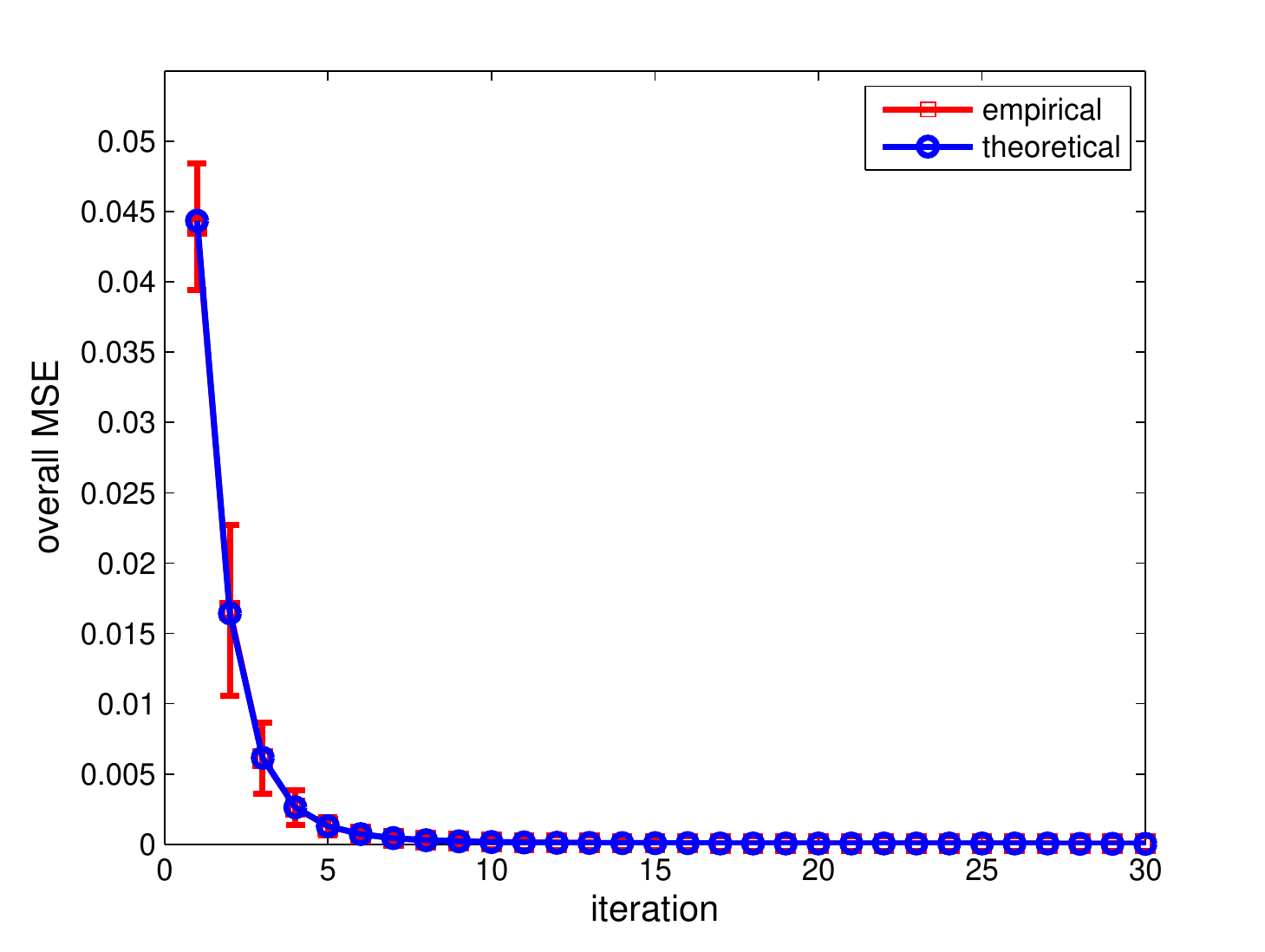}\label{fig:sim1_1d}}
  \caption{Comparison of the theoretical prediction and the Monte Carlo simulation result for (a) $p=0$, (b) $p=0.3$, (c) $p=0.5$ and (d) $p=0.8$. In the four cases, parameter $\tau$ is  set as 0.2, 0.3, 0.5 and 1 in Figures (a), (b), (c), and (d) respectively. The nonzero elements of the sparse vector are $\pm 1$ equiprobable in this scenario. In Figures (b) and (c), the error of the recovery does not converge to 0 under the parameter setting of the scenario, but SE still provides an accurate prediction.}
  \label{fig:sim1_1}
\end{figure}

\begin{figure}
  \centering
  \subfloat[][]{\includegraphics[width=2.5in]{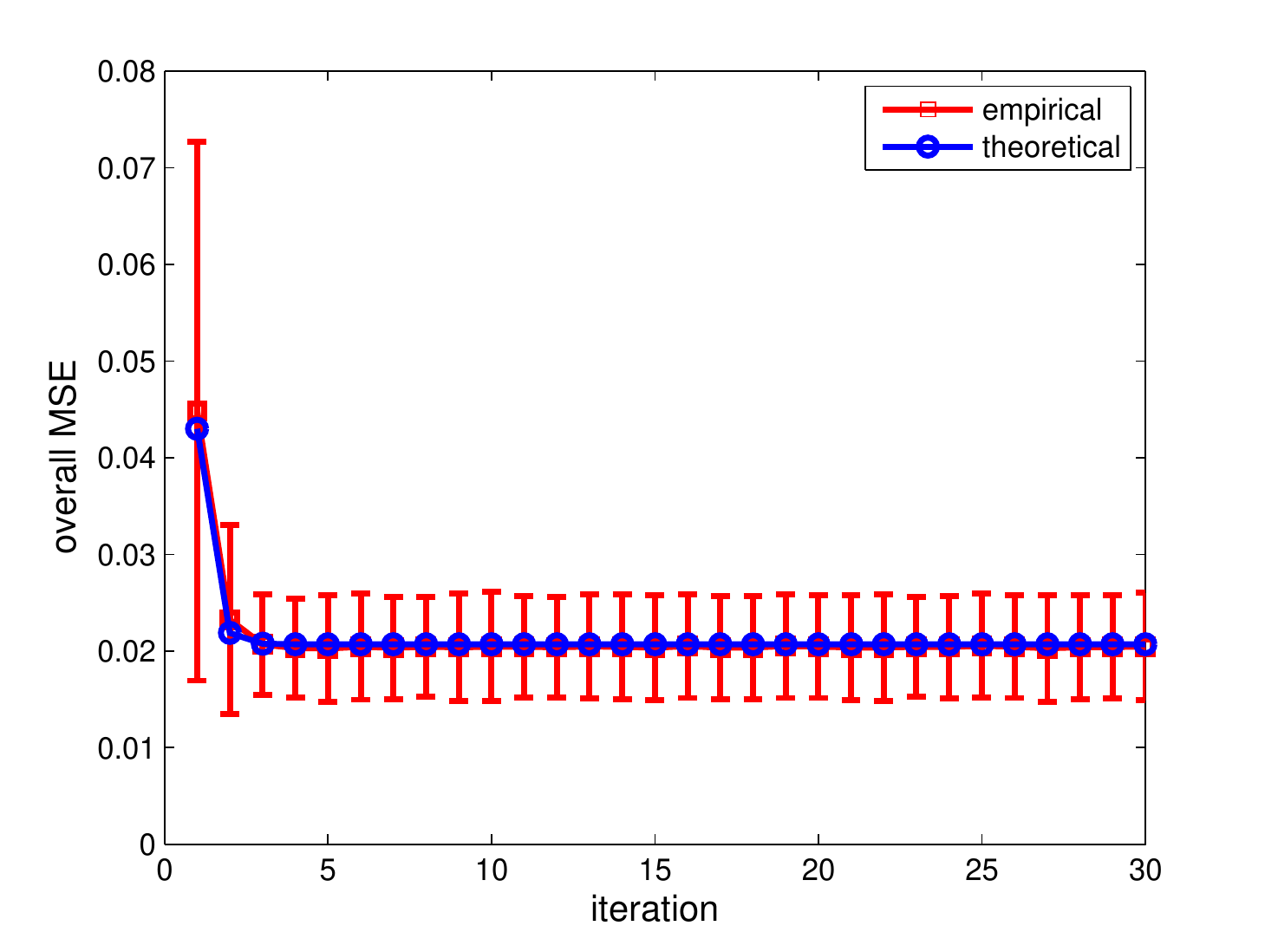}\label{fig:sim1_2a}}
  \subfloat[][]{\includegraphics[width=2.5in]{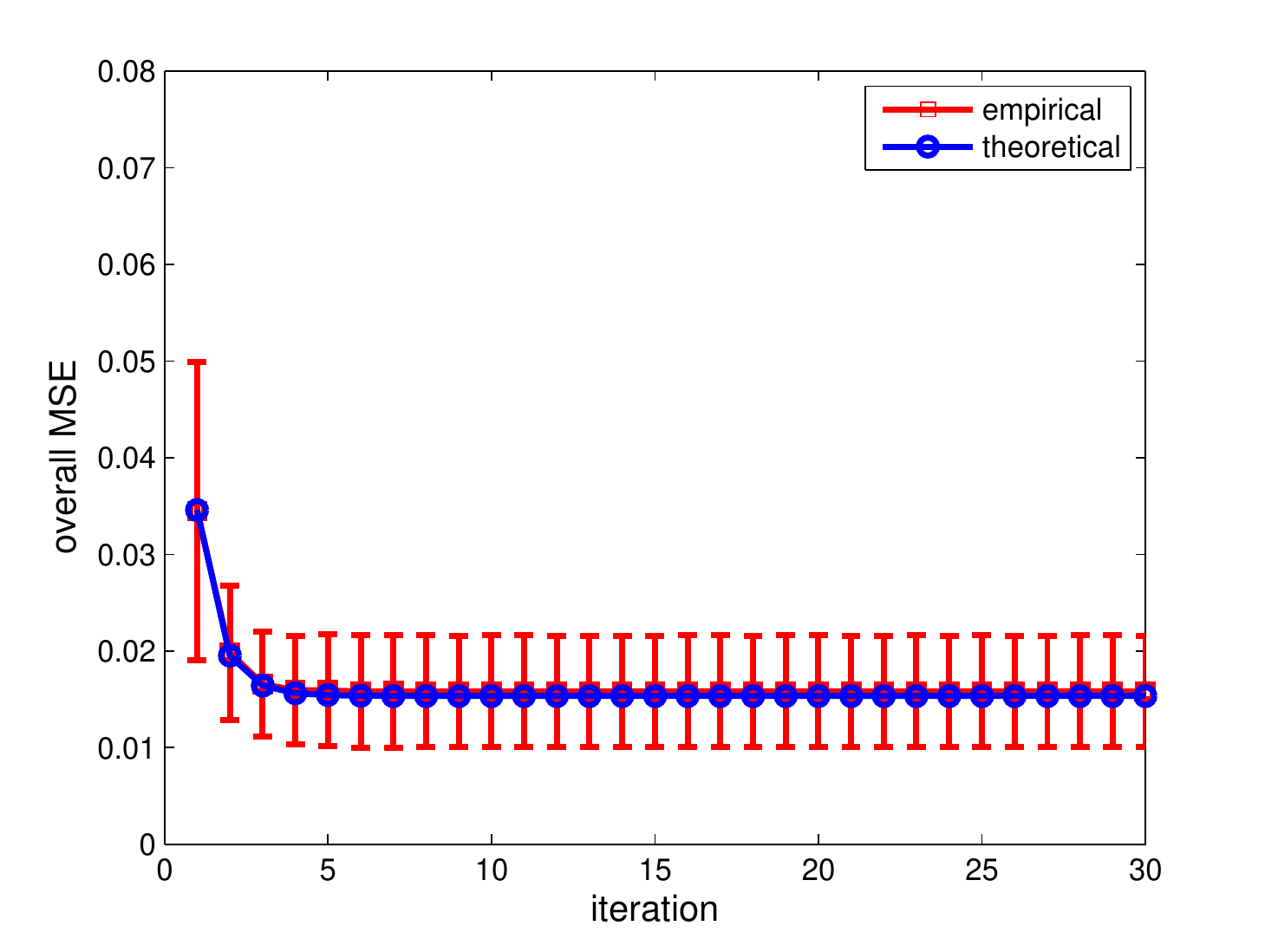}\label{fig:sim1_2b}}

  \subfloat[][]{\includegraphics[width=2.5in]{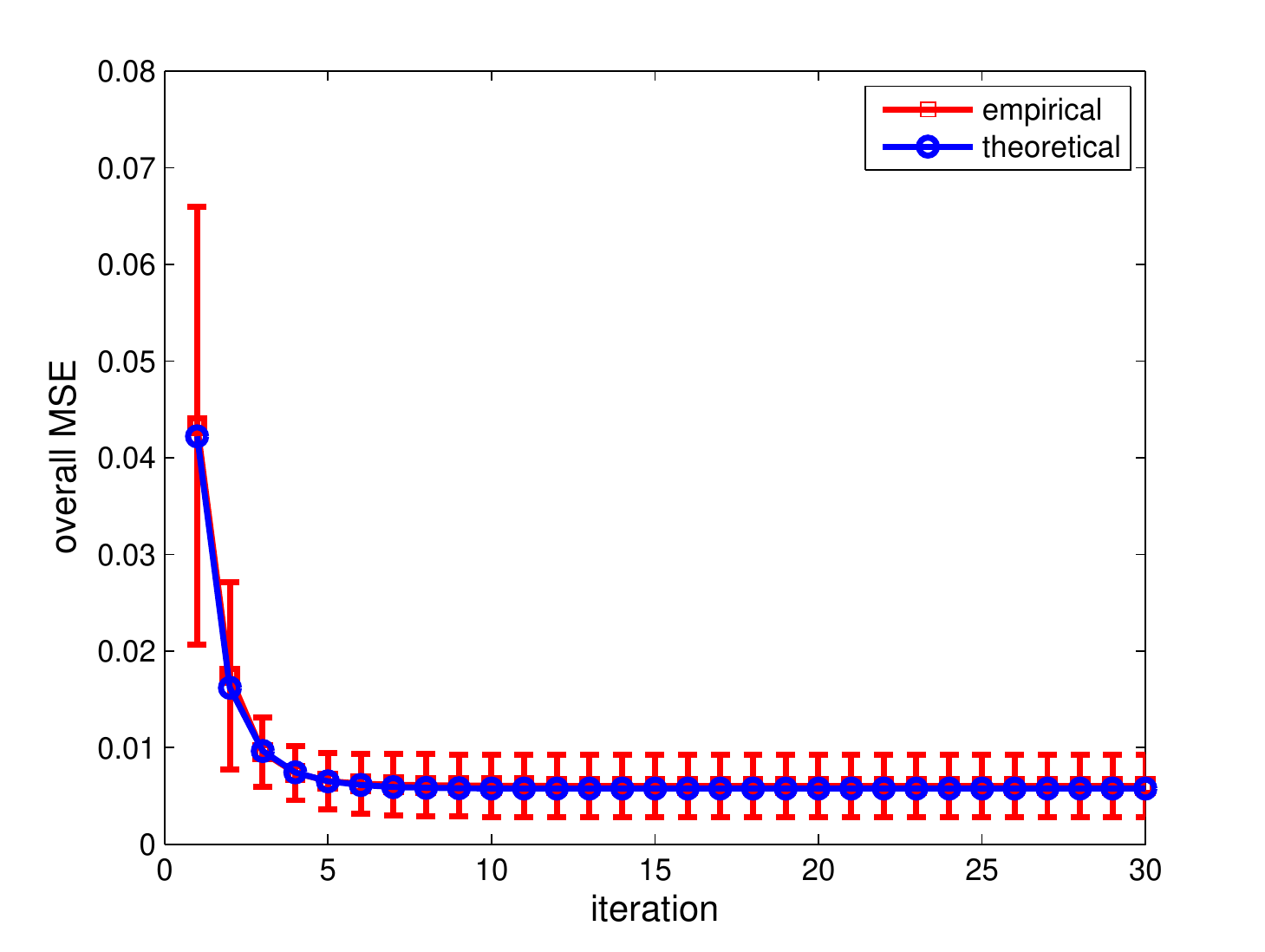}\label{fig:sim1_2c}}
  \subfloat[][]{\includegraphics[width=2.5in]{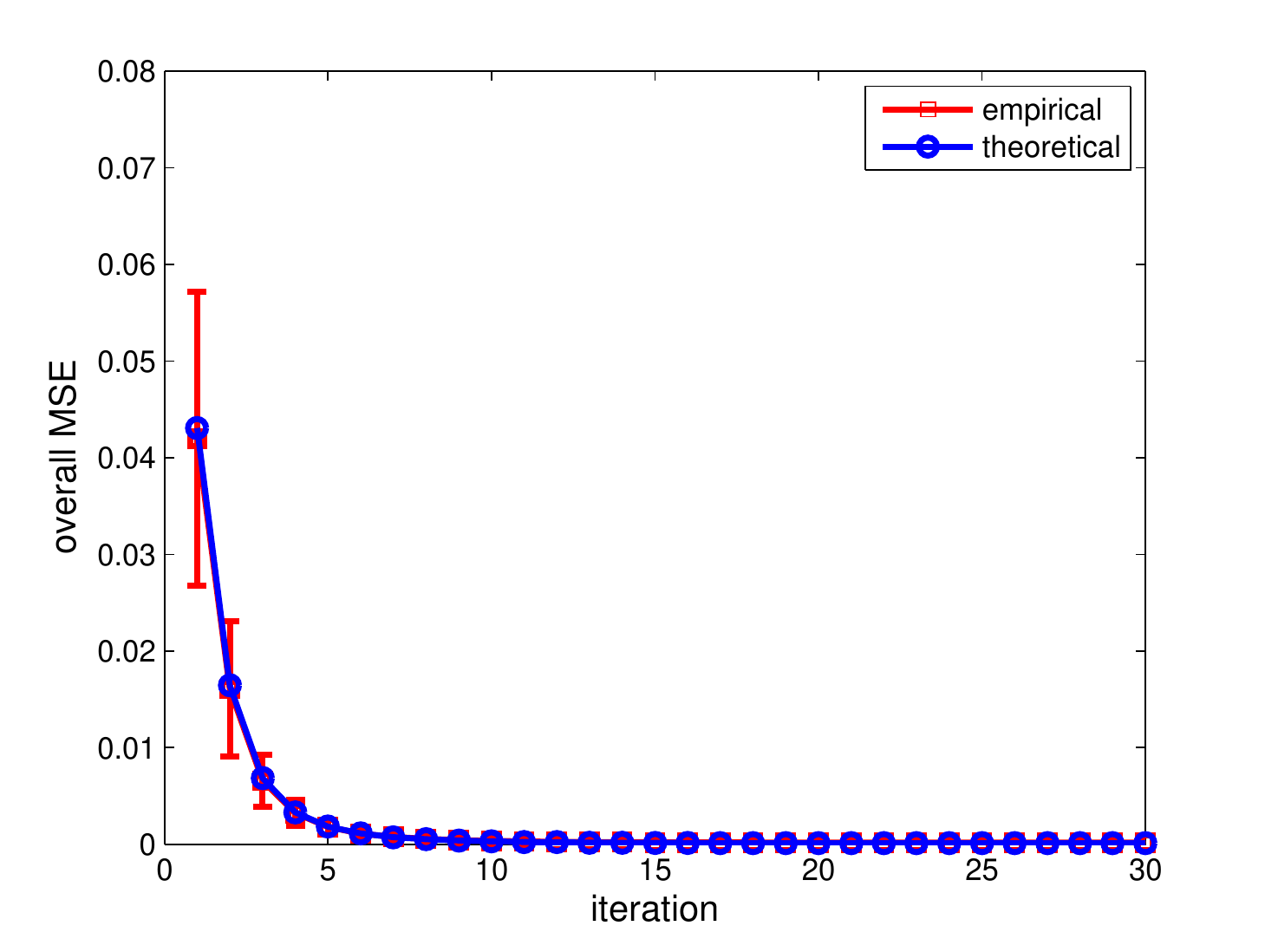}\label{fig:sim1_2d}}
  \caption{Comparison of the theoretical prediction and the Monte Carlo simulation result for (a) $p=0$, (b) $p=0.3$, (c) $p=0.5$ and (d) $p=0.8$. In the four cases, parameter $\tau$ is respectively set as 0.18, 0.4, 0.5 and 1. The nonzero elements of the sparse vector are Gaussian distributed in this scenario. In Figures (a), (b) and (c), the error of the recovery does not converge to 0 under the parameter setting of the scenario, but SE still gives a reasonably accurate prediction to the performance of the $\ell_p$-AMP algorithm.}
  \label{fig:sim1_2}
\end{figure}

In this section, the predictions given by the state evolution are compared with the performance of Monte Carlo simulations. For Monte Carlo simulations, the dimension of the sparse vector $x_0$ is set to $N=5000$ which is relatively large. The measurement matrix $A$ is iid Gaussian distributed and the number of measurements is set to $n=1000$, i.e., $\delta=0.2$. There are 40 nonzero elements in $x_0$. We run $\ell_p$-AMP for $T=30$ iterations. We set the thresholding policy to $\lambda(\sigma)=\tau \sigma^p$ where $\tau$ is a fixed number. The value of $\tau$ may differ in different simulations and will be mentioned below each figure. The empirical MSE reported in the figures is the average of 100 Monte Carlo simulations. Finally $h$ is set to $h=\sigma_t/N^{1/3}$ at iteration $t$. We have not optimized over the parameter $h$. We have empirically noticed that $\ell_p$-AMP with this choice of $h$ has a good performance. Accurate analysis of the effect of $h$ on the performance of $\ell_p$-AMP is left for a future research. 

Figures \ref{fig:sim1_1} and \ref{fig:sim1_2} compare the result of Monte Carlo simulation with the SE when the nonzero elements of the sparse vector are $\pm 1$ equiprobable and Gaussian respectively. The bars show 95\% confidence intervals. As can be seen from the figures, the empirical results are reasonably close to the theoretical result that we obtained from the state evolution.

\vspace{.2cm}

\subsubsection{Optimal tuning of $\lambda$}\label{sec:noisefree}
Our goal in this section is to show the accuracy of the parameter selection technique we proposed in Section \ref{sec:ampsure} for finite sample sizes. As we discussed in Section \ref{sec:ampsure}, for the optimal tuning of $\lambda_t$ we can employ the following estimate:
\begin{eqnarray*}
\hat{\lambda }_t \in \arg\min_{\lambda \geq 0} \frac{1}{N}  \| \tilde{\eta}_{p,h}({x}^t+A^T z^t; \lambda)-{x}^t-A^T z^t\|^2- \sigma_t^2 + \frac{2\sigma_t^2}{N} {\rm div} (\tilde{\eta}_{p,h}({x^t+A^Tz^t} ; \lambda)).
\end{eqnarray*}
This requires an estimate of $\sigma_t$ at every iteration. It is straightforward to use the results of \cite{bayati2011dynamics} and prove that as $N \rightarrow \infty$, $\|z^t\|_2^2/n \rightarrow \sigma_t^2$. Hence, in our simulations we will employ the estimate $\|z^t\|_2^2/n$. 

In this section, we would like to show that even in moderately large sample sizes $N = 5000$, $\hat{\lambda}_t$ is close enough to $\lambda_t^*$ introduced in \eqref{eq:optlambdadef}. Here is our simulation settings. The dimension of the sparse vector $x_o$ is set to be $N=5000$. The elements of the measurement matrix $A$ are iid Gaussian. The dimension of measurements is set to be $n=1000$, i.e., $\delta=0.2$. $x_o$ has only 40 nonzero elements. The nonzero elements of the sparse vector are $\pm 1$ equiprobable. The variance of the measurement noise is set to be $\sigma_w^2=0.01$. We have run the simulations for $p=0, 0.3, 0.5, 0.8$. In each iteration, $\lambda_t=\tau_t \sigma_t^p$ is optimized by using the empirical risk function.

The risk and its SURE estimation in the third iteration are plotted against $\tau_t$ in Figure \ref{fig:sim2_1}. As can be seen from the figure, the SURE estimate is reasonably close to the true value of the risk function. 

\begin{figure}
  \centering
  \subfloat[][]{\includegraphics[width=2.5in]{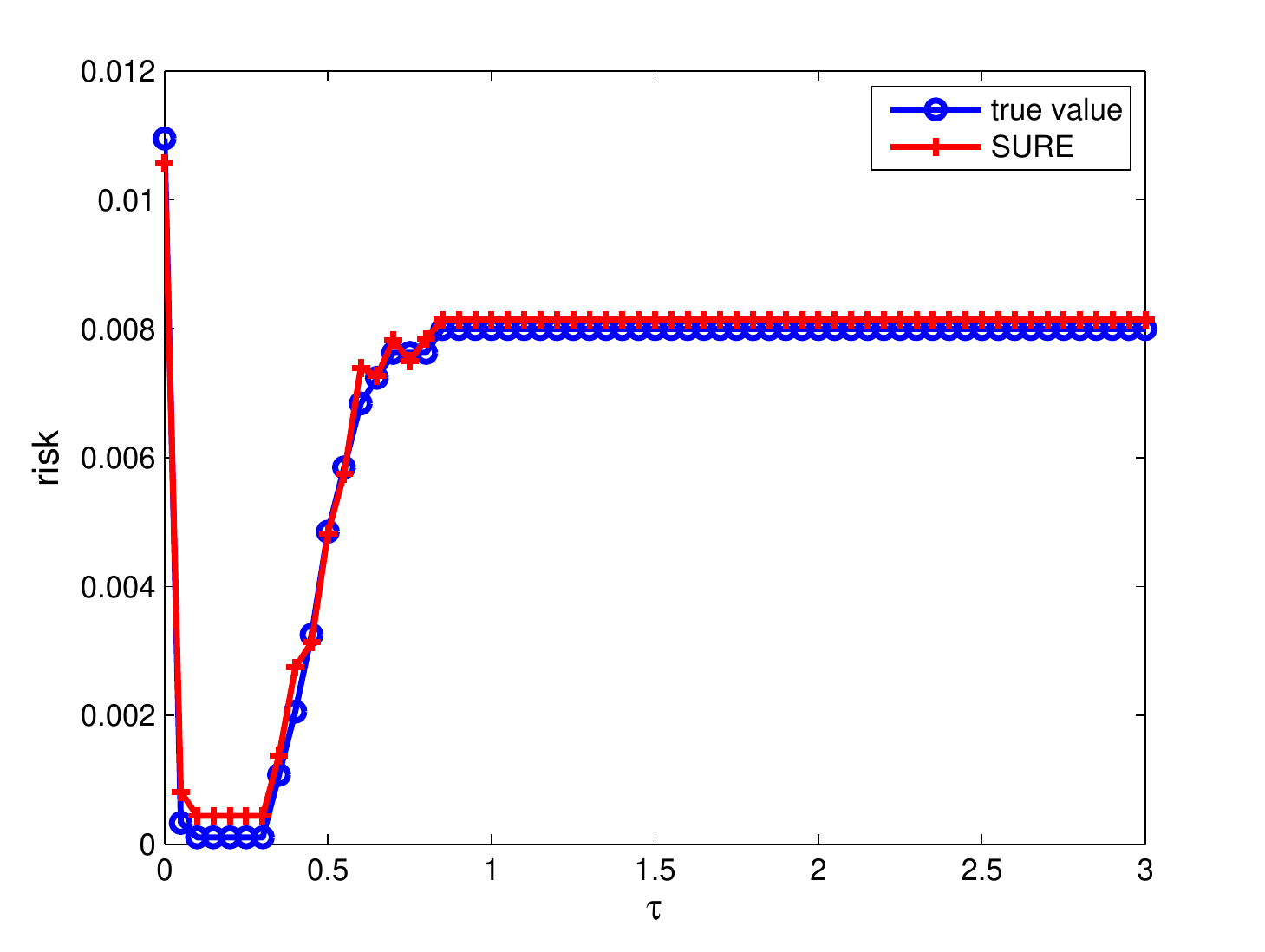}}
  \subfloat[][]{\includegraphics[width=2.5in]{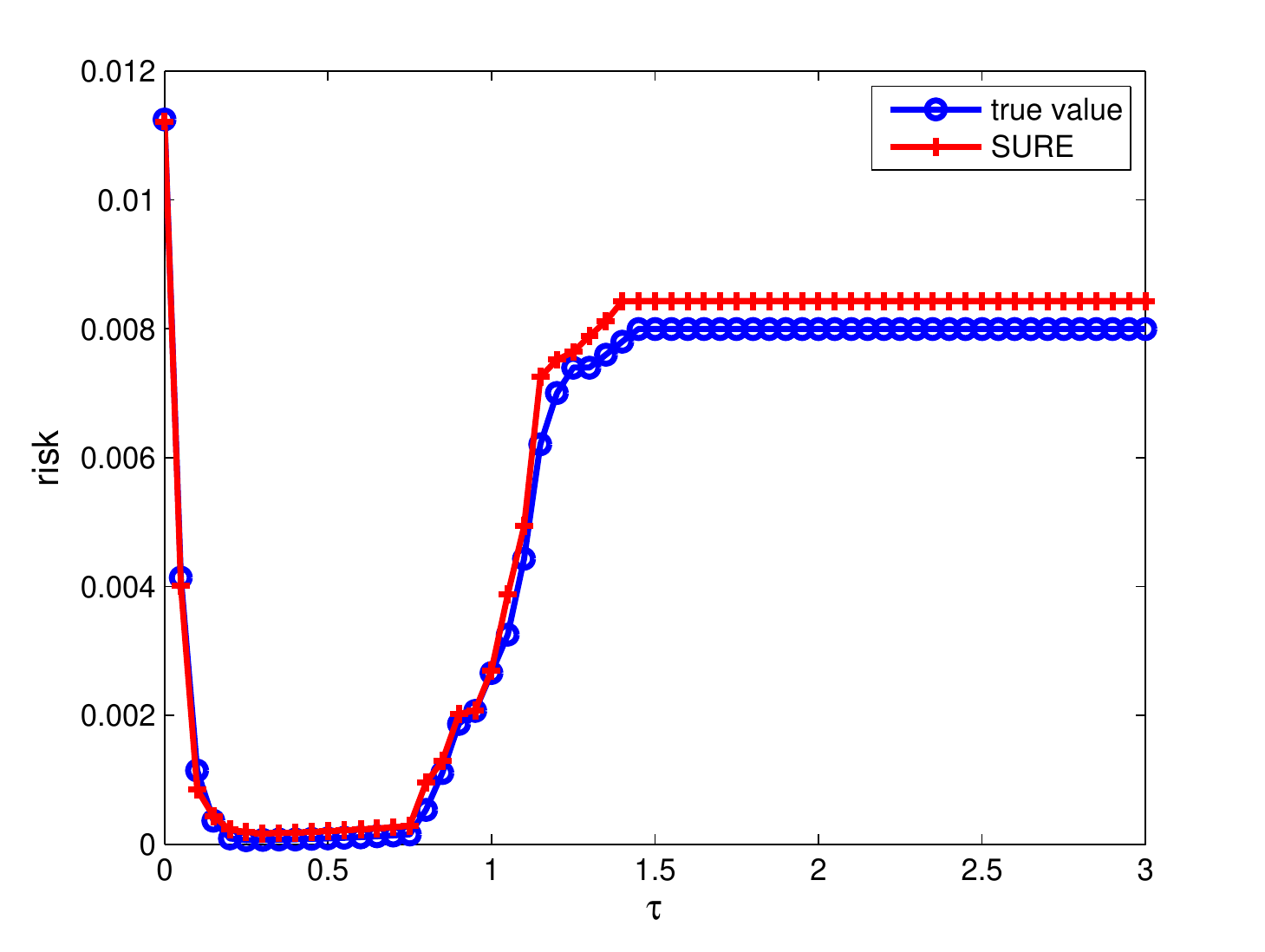}}

  \subfloat[][]{\includegraphics[width=2.5in]{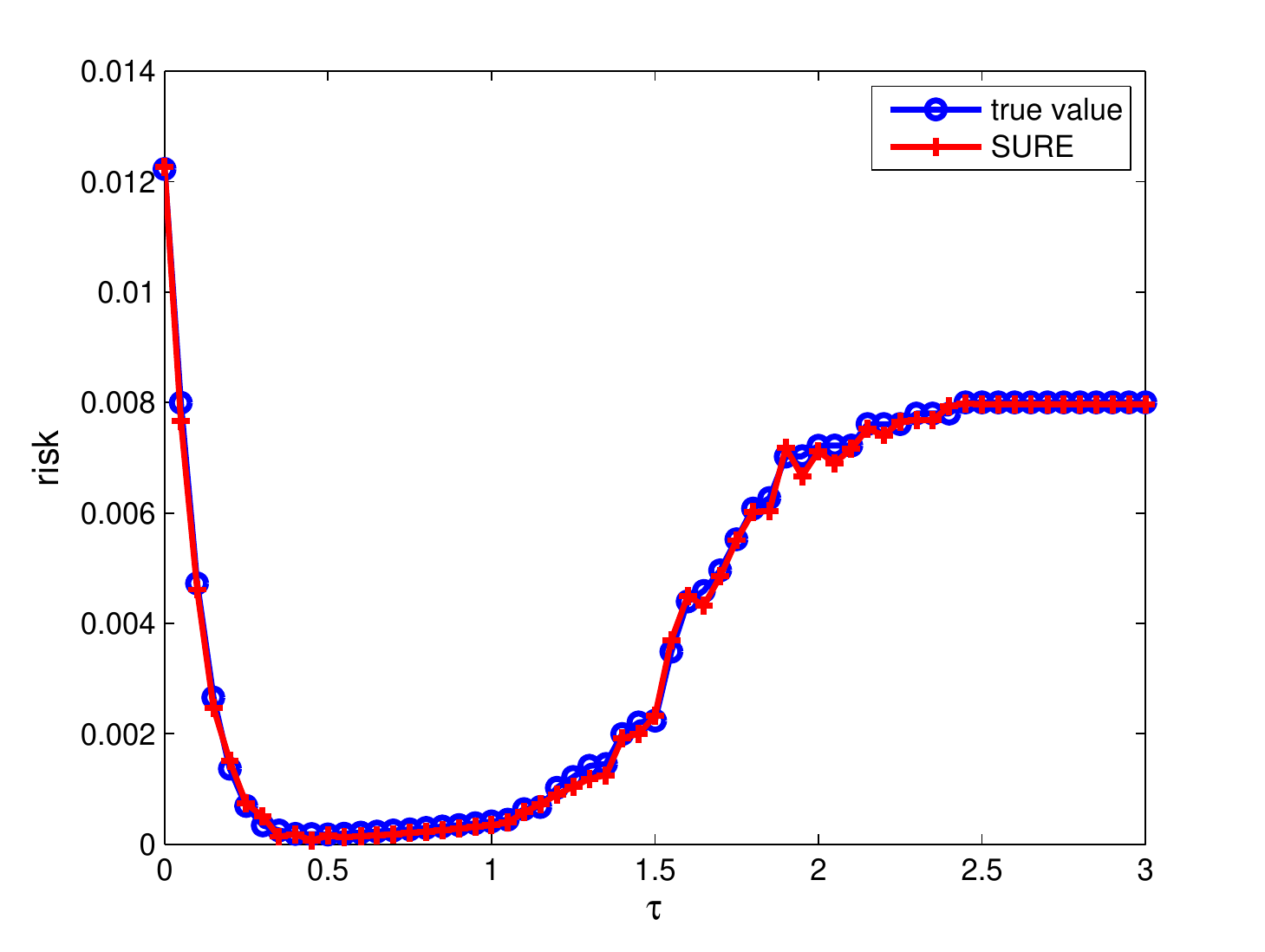}}
  \subfloat[][]{\includegraphics[width=2.5in]{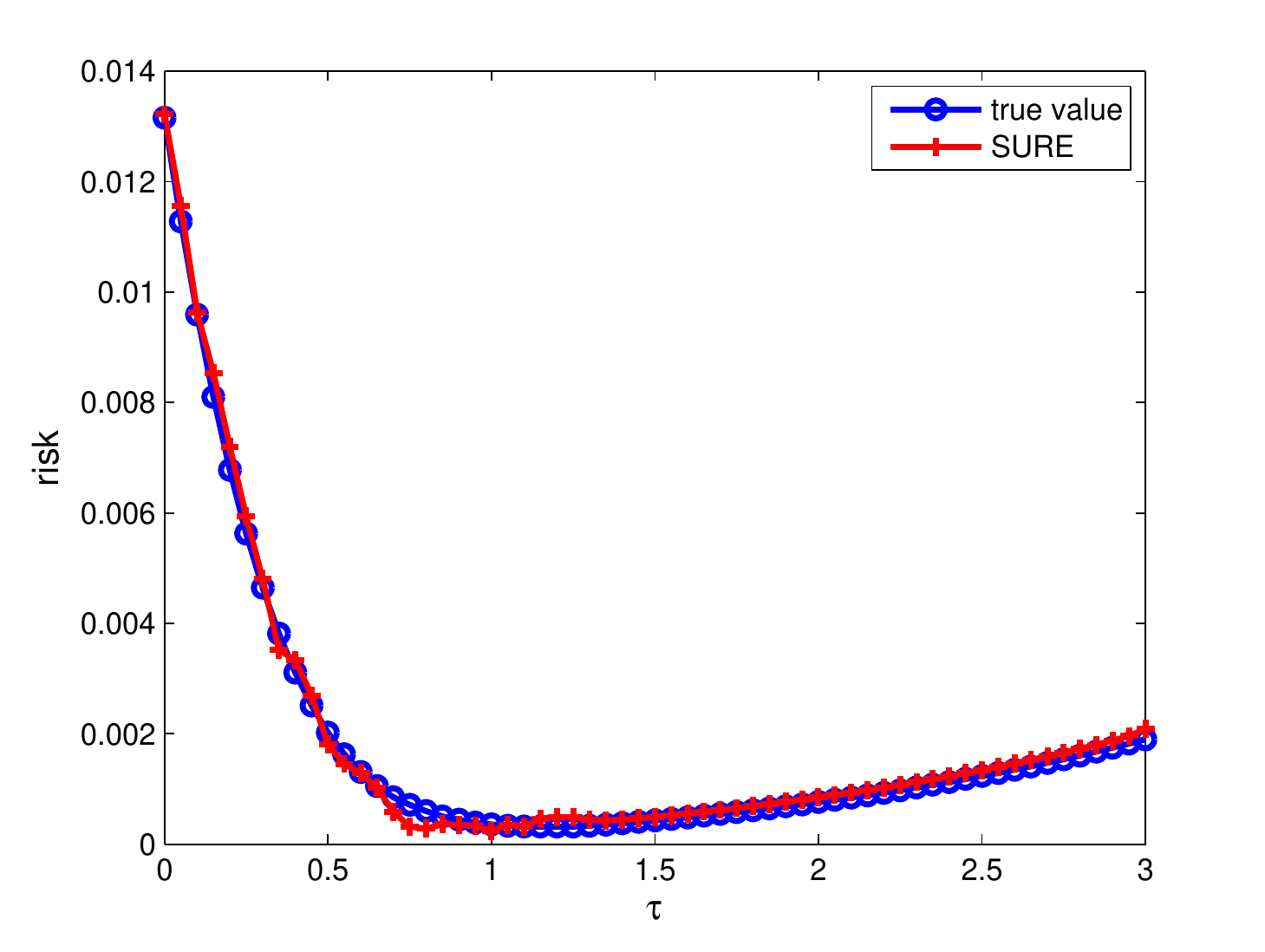}}

  \caption{The actual risk function and its unbiased estimate (SURE) as a function of $\tau_t$. We set (a) $p=0$, (b) $p=0.3$, (c) $p=0.5$ and (d) $p=0.8$. The figure is the result for the third iteration. The power of measurement noise is set to be $\sigma_w^2=0.01$.}
  \label{fig:sim2_1}
\end{figure}

\section{Conclusion and future work}
\label{sec:conclusion}
We have studied the performance of both $\ell_p$-regularized least squares (LPLS) problem and approximate message passing (AMP) that aims to solve LPLS. Employing the state evolution framework, we have derived conditions under which $\ell_p$-AMP for $0 \leq p<1$  outperforms $\ell_1$-AMP. It turns out that in the noiseless setting if the algorithm is initialized properly, it can outperform $\ell_1$-AMP by a large margin. We applied the Replica method to connect our results to LPLS. It turns out that, in the noiseless regime, the phase transitions of LPLS are exactly the same for every $p<1$. We also studied the performance of these algorithm in the presence of the measurement noise. We showed that for small values of measurement noise, $p=0$ outperforms the other values of $p$. However, when the measurement noise is large, $p=1$ outperforms the other values of $p$. 

There are many questions that we have left for future research. For instance, extensions of this approach to other types of structure, such as group-sparsity or low-rankness, is an open direction that needs to be explored in the future. Such research may shed some light on the benefit of non-convex penalties for these popular structures. Finally, it is not yet clear if it is possible to design algorithms that can find the global minima of LPLS in the asymptotic settings. As we have shown in this paper, below certain sparsity level, $\epsilon^*(\delta)$, message passing algorithms may recover the global minima of LPLS. But, beyond this level they may  be trapped at a fixed point that is different from the global minima of LPLS. Unfortunately, $\epsilon^*(\delta)$ is much below the actual phase transition of LPLS for $p<1$. Whether we can find an algorithm that is better than message passing for these problems is a major open question that may have major impact in the field of compressed sensing.

\appendix

\section{Proof of Theorem \ref{thm:empiricalriskconvergence}}\label{app:proof}
Our main  objective in this section is to prove Theorem \ref{thm:empiricalriskconvergence}. We start with the following lemmas that will be used later in the proof. 

\vspace{.2cm}

\begin{lemma}\label{lem:forsimplelip}
Let $f: \mathbb{R} \rightarrow \mathbb{R}$ denote a differentiable function with bounded derivative, i.e., $|f'(u)|<M$ for every $u$. Then,
\begin{equation*}
|f(s_1+\vartheta_1) - f(s_0+\vartheta_0)|\leq \sqrt{2}M \sqrt{(s_1-s_0)^2+ (\vartheta_1-\vartheta_0)^2}. 
\end{equation*}
\end{lemma}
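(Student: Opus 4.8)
The lemma claims: if $f:\mathbb{R}\to\mathbb{R}$ is differentiable with $|f'(u)|<M$ for all $u$, then
$$|f(s_1+\vartheta_1)-f(s_0+\vartheta_0)|\le \sqrt{2}M\sqrt{(s_1-s_0)^2+(\vartheta_1-\vartheta_0)^2}.$$

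This is really a statement about a single-variable function composed with the map $(s,\vartheta)\mapsto s+\vartheta$. The key observation is that $f(s+\vartheta)$ evaluated at two points differs only through the scalar argument $s+\vartheta$. Let me think about the plan.

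**The plan.**

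First I would reduce everything to the one-dimensional mean value theorem. Write $a_1 = s_1+\vartheta_1$ and $a_0 = s_0+\vartheta_0$. Since $f$ is differentiable with bounded derivative, $f$ is Lipschitz with constant $M$, so by the mean value theorem there exists $\xi$ between $a_0$ and $a_1$ with $f(a_1)-f(a_0)=f'(\xi)(a_1-a_0)$, giving $|f(a_1)-f(a_0)|\le M|a_1-a_0|=M|(s_1-s_0)+(\vartheta_1-\vartheta_0)|$. Then the only remaining task is to bound the scalar $|(s_1-s_0)+(\vartheta_1-\vartheta_0)|$ by $\sqrt{2}\sqrt{(s_1-s_0)^2+(\vartheta_1-\vartheta_0)^2}$.

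The second and final step is the elementary inequality $|u+v|\le\sqrt{2}\sqrt{u^2+v^2}$ for real $u,v$, which is just Cauchy--Schwarz (or equivalently $(u+v)^2=u^2+2uv+v^2\le 2u^2+2v^2$ since $2uv\le u^2+v^2$). Applying this with $u=s_1-s_0$ and $v=\vartheta_1-\vartheta_0$ and combining with the mean value theorem bound yields exactly the claimed inequality.

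**Main obstacle.** Honestly there is no real obstacle here; the lemma is a routine consequence of the mean value theorem plus a two-term Cauchy--Schwarz estimate. The only thing to be slightly careful about is that the inequality $|f'(u)|<M$ is strict, but this causes no difficulty since we only need $|f'(\xi)|\le M$, which follows immediately. If one preferred to avoid the mean value theorem (e.g.\ to handle merely weakly differentiable $f$, which is the relevant case for the smoothed thresholding functions $\tilde\eta_{p,h}$ in the paper), one could instead write $f(a_1)-f(a_0)=\int_{a_0}^{a_1}f'(u)\,du$ and bound the integral by $M|a_1-a_0|$; this is the form that integrates cleanly with the Lipschitz continuity of $\tilde\eta_{p,h}$ already established in the proof of Theorem \ref{conj:se}. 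Either route concludes in two short lines, so I would simply present the mean value theorem version and append the Cauchy--Schwarz step.
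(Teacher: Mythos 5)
Your proof is correct and follows essentially the same route as the paper's: mean value theorem followed by a Cauchy--Schwarz estimate. The only cosmetic difference is that you apply the one-dimensional mean value theorem to the scalar argument $s+\vartheta$ and then use the elementary inequality $|u+v|\le\sqrt{2}\sqrt{u^2+v^2}$, whereas the paper invokes the two-variable mean value theorem and applies Cauchy--Schwarz to the gradient (whose two components are both equal to $f'(s^*+\vartheta^*)$); the two computations are identical in substance.
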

\textit{Proof:}
According to the mean value theorem, we have
\begin{eqnarray}\label{eq:differentiabletoLip}
|f(s_1+\vartheta_1)-f(s_0+\vartheta_0)| = \left. \left[\frac{\partial f(s+\vartheta)}{\partial s} , \frac{\partial f(s+\vartheta)}{\partial \vartheta}  \right] \right|_{(s^*,\vartheta^*)} [s_1-s_0, \vartheta_1-\vartheta_0]^T, 
\end{eqnarray}
where $(s^*,\vartheta^*)$ is a point on the line that connects $(s_0,\vartheta_0)$ and $(s_1,\vartheta_1)$. By applying Cauchy-Schwartz inequality to \eqref{eq:differentiabletoLip} and using the fact that  $\left. \frac{\partial f(s+\vartheta)}{\partial s}\right|_{(s^*,\vartheta^*)} = f'(s^*+\vartheta^*)$ and  $\left.\frac{\partial f(s+\vartheta)}{\partial \vartheta}\right|_{(s^*,\vartheta^*)} = f'(s^*+\vartheta^*)$, we can finish the proof. 
$\hfill \Box$

\vspace{.2cm}

We now turn to the proof of Theorem \ref{thm:empiricalriskconvergence}. The proof employs Theorem 2 of \cite{bayati2011dynamics}. This theorem confirms that under the conditions we presented in Theorem \ref{thm:empiricalriskconvergence}, we have
\[
\lim_{N \rightarrow \infty} \frac{1}{N} \sum_{i=1}^N J(v_i^t, x_{o,i}) \overset{a.s.}{=} \mathbb{E} J(\sigma_t Z, X_o),
\]
where $Z \sim N(0,1)$ and $X \sim p_X$ are two independent random variables, and $J$ is a Lipschitz function of $v_i^t$ and $x_{o,i}$.\footnote{In fact, the result of Theorem 2 of \cite{bayati2011dynamics} considers more general pseudo-Lipschitz function $J$.} Note that 
\begin{equation}\label{eq:empiricalrisk1}
\hat{r}^t_{p,h, \lambda} = \frac{1}{N} \|\tilde \eta_{p,h}(x_o+v^t; \lambda) -x_o-v^t\|_2^2 - \sigma_t^2 + \frac{2 \sigma_t^2}{N} {\rm div} (\tilde \eta_{p,h}(x_o+v^t; \lambda)).  
\end{equation}
If we prove that both $ |\tilde \eta_{p,h} (x_{o,i}+v^t_i;\lambda) -x_{o,i}-v_i^t|^2$ and $\tilde \eta_{p,h}'(x_{o,i}+ v_i^t ; \lambda)$ are Lipschitz functions of $(x_{o,i},v_i^t)$, then we can characterize the limit of $\hat{r}^t_{p,h, \lambda}$. Hence as the next step, we prove these two quantities are Lipschitz.  We proved in Section \ref{proof:thm1} that  $\tilde \eta_{p,h}(u; \lambda)$ is a differentiable function of $u$ with $\sup_{u} |\tilde \eta_{p,h}'(u; \lambda)|$ bounded. Furthermore, it is straightforward to show that $|\tilde \eta_{p,h}(u; \lambda)-u|$ is bounded. Hence, $|\tilde \eta_{p,h}(u; \lambda)-u|^2$ has a bounded derivative. If we combine this fact with  Lemma \ref{lem:forsimplelip}, we know that $|\tilde \eta_{p,h} (x_{o,i}+v^t_i; \lambda) -x_{o,i}-v^t_i|^2$ is a Lipschitz function of $(x_{o,i}, v^t_i)$.  Hence, by Theorem 2 of \cite{bayati2011dynamics} we conclude that
\begin{equation}\label{eq:empiricalrisk2}
\lim_{N \rightarrow \infty} \frac{1}{N} \sum_{i=1}^N |\tilde \eta_{p,h} (x_{o,i}+v_i ; \lambda) -x_{o,i}-v_i|^2 \overset{a.s.}{=} \mathbb{E} (\tilde \eta_{p,h}(X+\sigma_t Z; \lambda)-X-\sigma_tZ)^2.
\end{equation}
Moreover, according to the proof of Theorem \ref{conj:se} that we presented in Section \ref{proof:thm1}, we know $\tilde{\eta}'_{p,h}(\cdot;\lambda)$ is bounded and has finite discontinuity points. We can then apply the arguments for proving Equation (4.11) in \cite{bayati2012lasso} to conclude that
\begin{equation}\label{eq:empiricalrisk3}
\lim_{N \rightarrow \infty} \frac{2 \sigma_t^2}{N} {\rm div} (\tilde \eta_{p,h}(x_o+v^t; \lambda))  \overset{a.s.}{=}  2 \sigma_t^2  \mathbb{E} (\tilde \eta'_{p,h} (X + \sigma_t Z; \lambda))
\end{equation}
Hence, if we combine \eqref{eq:empiricalrisk1}, \eqref{eq:empiricalrisk2}, and \eqref{eq:empiricalrisk3} we obtain
\[
\lim_{N \rightarrow \infty} \hat{r}^t_{p,h, \lambda}\overset{a.s.}{=}  \mathbb{E} (\tilde \eta_{p,h}(X+\sigma_t Z; \lambda)-X-\sigma_tZ)^2- \sigma_t^2 + 2 \sigma_t^2 \mathbb{E} (\tilde \eta'_{p,h} (X + \sigma_t Z; \lambda)). 
\]
Finally, Lemma \ref{lem:stein} (with $N=1$) shows that
\[
 \mathbb{E} (\tilde \eta_{p,h}(X+\sigma_t Z; \lambda)-X-\sigma_tZ)^2- \sigma_t^2 + 2 \sigma_t^2  \mathbb{E} (\tilde \eta'_{p,h} (X + \sigma_t Z; \lambda)) = \mathbb{E}  (\tilde \eta_{p,h}(X+\sigma_t Z; \lambda)-X)^2 .
\]

\bibliographystyle{IEEEtran}
\bibliography{database} 
\end{document}